\documentclass[11pt]{article}

\def\colorful{0}

\oddsidemargin=-0.1in \evensidemargin=-0.1in \topmargin=-.5in
\textheight=9in \textwidth=6.5in
\parindent=18pt

\usepackage{amsthm,amsfonts,amsmath,amssymb,epsfig,color,float,graphicx,verbatim}
\usepackage{multirow}

\newif\ifhyper\IfFileExists{hyperref.sty}{\hypertrue}{\hyperfalse}
\hypertrue
\ifhyper\usepackage{hyperref}\fi

\usepackage{enumitem}
\usepackage[capitalize]{cleveref} 

\makeatletter
\renewcommand{\section}{\@startsection{section}{1}{0pt}{-12pt}{5pt}{\large\bf}}
\renewcommand{\subsection}{\@startsection{subsection}{2}{0pt}{-12pt}{-5pt}{\normalsize\bf}}
\renewcommand{\subsubsection}{\@startsection{subsubsection}{3}{0pt}{-12pt}{-5pt}{\normalsize\bf}}
\makeatother

\usepackage{framed}
\usepackage{nicefrac}

\def\nnewcolor{1}
\ifnum\nnewcolor=1

\fi
\ifnum\nnewcolor=0

\fi

\ifnum\colorful=1

\else

\fi

\newtheorem{theorem}{Theorem}[section]

\newtheorem{lemma}[theorem]{Lemma}
\newtheorem{proposition}[theorem]{Proposition}
\newtheorem{corollary}[theorem]{Corollary}
\newtheorem{claim}[theorem]{Claim}
\newtheorem{fact}[theorem]{Fact}

\newtheorem{observation}[theorem]{Observation}
\theoremstyle{definition}
\newtheorem{definition}[theorem]{Definition}

\newcommand{\R}{\mathbb{R}}
\newcommand{\C}{\mathbb{C}}
\newcommand{\Z}{\mathbb{Z}}
\newcommand{\E}{\mathbb{E}}

\newcommand{\poly}{\mathrm{poly}}
\newcommand{\polylog}{\mathrm{polylog}}
\newcommand{\sinc}{\mathrm{Sinc}}

\newcommand{\p}{\mathbf{P}}
\newcommand{\q}{\mathbf{Q}}
\newcommand{\h}{\mathbf{H}}

\newcommand{\dtv}{d_{\mathrm TV}}
\newcommand{\dk}{d_{\mathrm K}}

\newcommand{\wh}[1]{{\widehat{#1}}}

\newcommand{\var}{\mathrm{Var}}


\newcommand{\ignore}[1]{}

\newcommand{\eps}{\epsilon}

\newcommand{\ba}{\mathbf{a}}

\newcommand{\bb}{\mathbf{b}}

\newcommand{\Scal}{\mathcal{S}}

\newcommand{\normal}{\mathrm{N}}
\newcommand{\round}[1]{\lfloor #1 \rceil}
\newcommand{\discnorm}[2]{Z({#1},{#2})}


\newcommand{\Var}{\mathop{\textnormal{Var}}\nolimits}

\renewcommand{\eqref}[1]{Eq.~(\ref{#1})}

\newcommand{\eqdef}{\stackrel{{\mathrm {\footnotesize def}}}{=}}

\newcommand{\littlesum}{\mathop{\textstyle \sum}}
\newcommand{\littleprod}{\mathop{\textstyle \prod}}

\newenvironment{algorithm}[1][\  ] %
{ \rm
\begin{tabbing}
....\=.....\=.....\=.....\=.....\=  \+ \kill
} %
{\end{tabbing} }

\floatstyle{ruled}
\newfloat{Algorithm}{H}{alg}
\newfloat{Subroutine}{H}{sub}

{
\begin{minipage}{1.0\linewidth} \begin{algorithm} %
} { \end{algorithm} \end{minipage} }

\title{Optimal Learning via the Fourier Transform\\ 
for Sums of Independent Integer Random Variables}

\author{
Ilias Diakonikolas\thanks{Supported by EPSRC grant EP/L021749/1 and a Marie Curie Career Integration grant.}\\
University of Edinburgh\\
{\tt ilias.d@ed.ac.uk}.\\
\and
Daniel M. Kane\thanks{Some of this work was performed while visiting the University of Edinburgh.}\\
University of California, San Diego\\
{\tt dakane@cs.ucsd.edu}.\\
\and
Alistair Stewart\thanks{Supported by EPSRC grant EP/L021749/1.}\\
University of Edinburgh\\
{\tt stewart.al@gmail.com}.
}

\begin{document}

\maketitle

\thispagestyle{empty}

\begin{abstract}
We study the structure and learnability of sums of independent integer random variables (SIIRVs).
For $k \in \Z_{+}$, a {\em$k$-SIIRV of order $n \in \Z_{+}$} is the probability distribution of the sum of $n$
mutually independent random variables each supported on $\{0, 1, \dots, k-1\}$.
We denote by ${\cal S}_{n,k}$ the set of all $k$-SIIRVs of order $n$.

How many samples are required to learn an arbitrary distribution in ${\cal S}_{n,k}$?
In this paper, we tightly characterize the sample and computational complexity of this problem.
More precisely, we design a computationally efficient algorithm that uses $\widetilde{O}(k/\eps^2)$ samples, 
and learns an arbitrary $k$-SIIRV within error $\eps,$ in total variation distance. Moreover, we show that
the {\em optimal} sample complexity of this learning problem is 
$\Theta((k/\eps^2)\sqrt{\log(1/\eps)}),$ i.e., we prove an upper bound and a matching 
information-theoretic lower bound.
Our algorithm proceeds by learning the Fourier transform of the target $k$-SIIRV in its effective support. 
Its correctness relies on the {\em approximate sparsity} of the Fourier transform of $k$-SIIRVs -- 
a structural property that we establish, roughly stating that the Fourier transform of $k$-SIIRVs
has small magnitude outside a small set.

Along the way we prove several new structural results about $k$-SIIRVs.
As one of our main structural contributions, we give an efficient algorithm to construct a 
sparse {\em proper} $\eps$-cover for ${\cal S}_{n,k},$ in total variation distance.
We also obtain a novel geometric characterization of the space of $k$-SIIRVs. Our
characterization allows us to prove a tight lower bound on the size of $\eps$-covers for ${\cal S}_{n,k}$
-- establishing that our cover upper bound is optimal -- and is the key ingredient in our tight sample complexity lower bound.

Our approach of exploiting the sparsity of the Fourier transform in 
distribution learning is general, and has recently found additional applications. 
In a subsequent work~\cite{DKS15c}, we use a generalization of this idea (in higher dimensions)
to obtain the first efficient learning algorithm for Poisson multinomial distributions.
In~\cite{DKS15b}, we build on this approach to obtain the fastest known proper learning algorithm 
for Poisson binomial distributions ($2$-SIIRVs).
\end{abstract}

\thispagestyle{empty}
\setcounter{page}{0}

\newpage

\section{Introduction}  \label{sec:intro}

\subsection{Motivation and Background} \label{ssec:background}
We study sums of independent integer random variables:

\smallskip

\noindent {\bf Definition.}  
For $k \in \Z_{+}$, a \emph{$k$-IRV} is any random variable supported on
$\{0, 1, \dots, k-1\}$. A {\em$k$-SIIRV of order $n$} is any random variable
$X = \sum_{i=1}^n X_i$ where the $X_i$'s are independent $k$-IRVs.
We will denote by ${\cal S}_{n,k}$ the set of probability distributions of all
$k$-SIIRVs of order $n$.

\smallskip

For convenience, throughout this paper, we will often blur the distinction between a random variable and its distribution.
In particular, we will use the term $k$-SIIRV for the random variable or its corresponding distribution, and the distinction
will be clear from the context.

\smallskip

Sums of independent integer random variables (SIIRVs) comprise a rich class of distributions that
arise in many settings. The special case of $k=2$, ${\cal S}_{n,2}$,
was first considered by Poisson \cite{Poisson:37} as a non-trivial extension of the Binomial distribution,
and is known as Poisson binomial distribution (PBD). In application domains, SIIRVs have many uses in research areas
such as survey sampling, case-control studies, and survival analysis, see e.g., \cite{ChenLiu:97} for a survey of the many practical uses of these distributions.
We remark that these distributions are of fundamental interest and have been extensively studied in probability and statistics.
For example, tail bounds on SIIRVs form an important special case of Chernoff/Hoeffding bounds~\cite{Chernoff:52,Hoeffding:63,DP09}.
Moreover, there is a long line of research on approximate limit theorems for
SIIRVs, dating back
several decades (see e.g.,~\cite{Presman:83,Kruopis:86,BHJ:92}), and~\cite{CL10,CGS11} for some recent results.


\paragraph{Structure and Learning of $k$-SIIRVs.}

The main motivation of this work was the problem of learning an unknown $k$-SIIRV given access to independent samples.
Understanding this problem is intimately related to obtaining a refined structural understanding of the space of $k$-SIIRVs.
The connection between structure and distribution learning is the main thrust of this paper.

Distribution learning or {\em density estimation} is the following task~\cite{DG85, KMR+:94short, DL:01}:
Given independent samples from an unknown distribution $\p$ in a family $\mathcal{D},$ and an error tolerance $\eps>0,$
output a hypothesis $\h$ such that with high probability the total variation distance $\dtv(\h, \p)$ is at most~$\eps.$
The sample and computational complexity of this unsupervised learning problem
depends on the {\em structure} of the underlying family $\mathcal{D}.$
The main goals here are: (i) to characterize the {\em sample complexity} of the learning problem, i.e., to obtain
matching information-theoretic upper and lower bounds, and (ii) to design a {\em computationally efficient} learning algorithm --  i.e., 
an algorithm whose running time is polynomial in the sample (input) size -- that uses an information-theoretically optimal sample size.

While density estimation has been studied for several decades,
the number of samples required to learn is not yet well understood,
even for surprisingly simple and natural classes of univariate discrete distributions.
More specifically, there is no known complexity measure of a distribution family $\mathcal{D}$
that {\em characterizes} the sample complexity of learning an unknown distribution from $\mathcal{D}.$
In contrast, the VC dimension of a concept class plays such a role in the PAC model of
learning Boolean functions (see, e.g,~\cite{BEH+:89, KearnsVazirani:94}).

We remark that the classical information-theoretic quantity
of the {\em metric entropy}~\cite{VWellner96, DL:01, Tsybakov08},
i.e., the logarithm of the size of the smallest $\eps$-cover\footnote{Formally, a subset ${\cal D}_{\eps} \subseteq {\cal D}$ in a metric space
$({\cal D}, d)$ is said to be an \emph{$\eps$-cover of ${\cal D}$} with respect to the metric $d: \mathcal{X}^2 \to \R_+,$
if for every $\mathbf{x} \in {\cal D}$ there exists some $\mathbf{y} \in {\cal D}_{\eps}$ such that $d(\mathbf{x}, \mathbf{y}) \leq \eps.$ 
In this paper, we focus on the total variation distance between distributions.} of the distribution class, provides an {\em upper bound}
on the sample complexity of learning. 
Alas, this upper bound is suboptimal in general -- both quantitatively and
qualitatively -- and in particular for the class of $k$-SIIRVs, as we show in this paper.

{
Obtaining a computationally efficient learning algorithm with optimal (or near-optimal) sample complexity
is an important goal. In many learning settings, achieving this goal turns out to be quite challenging.
More specifically, in many scenarios, both supervised and unsupervised, the only computationally efficient learning algorithms known use a (provably) suboptimal
sample size. Intuitively, increasing the sample size (e.g., by a polynomial factor)  can make the algorithmic task substantially easier. 
Characterizing the tradeoff between sample complexity and computational complexity is of 
fundamental importance in learning theory. In this work, we essentially characterize this tradeoff for the unsupervised problem of learning 
SIIRVs.}



\subsection{Our Results} \label{ssec:results}
The main technical contribution of this paper is the use of Fourier analytic and geometric tools
to obtain a refined structural understanding of the space of $k$-SIIRVs. As a byproduct of our techniques, we
characterize the sample complexity of learning $k$-SIIRVs (up to constant factors),
and moreover we obtain a computationally efficient learning algorithm with near-optimal sample complexity.
Our results answer the main open questions of~\cite{DDS12stoc, DDOST13focs}.

Along the way we prove several new structural results of independent interest about $k$-SIIRVs,
including: the approximate sparsity of their Fourier transform; tight upper and lower bounds
on $\eps$-covers (in total variation distance and Kolmogorov distance);
and a novel geometric characterization of the space of $k$-SIIRVs, that is crucial for our sample complexity lower bound.
Below, we state our results in detail and elaborate on their context and the connections between them.

\paragraph{Learning SIIRVs via the Fourier Transform.}
As our first result, we give a sample near-optimal and computationally efficient learning algorithm for $k$-SIIRVs:

\begin{theorem} [Nearly Optimal Learning of $k$-SIIRVs]  \label{thm:learning-informal}
There is a learning algorithm for $k$-SIIRVs
with the following performance guarantee: Let $\p$ be any $k$-SIIRV of order $n.$
The algorithm uses $\widetilde{O}(k/\eps^2)$ samples from~$\p,$
runs in time\footnote{We work in the standard ``word RAM'' model in which basic arithmetic
operations on $O(\log n)$-bit integers are assumed to take constant time.} 
$\widetilde{O}(k^3/\eps^2),$ and with probability at least $2/3$ outputs a (succinct description of a) hypothesis
$\h$ such that $\dtv (\h, \p) \leq \eps.$
\end{theorem}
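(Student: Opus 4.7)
The plan is to exploit the approximate sparsity of the Fourier transform of any $k$-SIIRV, which the abstract highlights as the key structural fact. At a high level, the algorithm will compute empirical estimates of the discrete Fourier transform (DFT) of $\p$ at a small set $T$ of ``effective'' frequencies of size $|T| = \widetilde{O}(k)$, zero out the rest, and invert the DFT to obtain a hypothesis $\h$. The sample complexity $\widetilde{O}(k/\eps^2)$ arises naturally: estimating $|T|$ coefficients to the requisite accuracy with reused samples gives a bound of $\widetilde{O}(|T|/\eps^2)$. As a first step I draw $\widetilde{O}(1/\eps^2)$ preliminary samples to estimate the mean $\mu$ and variance $\sigma^2$ of $\p$, so that by standard concentration the effective support of $\p$ is localized to an interval $I$ of length $M = O(\sigma\sqrt{\log(1/\eps)})$ around $\mu$; I then work with the DFT of the centered distribution over $\mathbb{Z}/M\mathbb{Z}$.

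The core structural claim to establish is a \emph{sparsity lemma}: there is a set $T \subseteq \{0,1,\dots,M-1\}$ with $|T| = \widetilde{O}(k)$ such that $\sum_{\xi \notin T}|\wh{\p}(\xi)|^2 \leq \eps^2$. To prove it, I factorize $\wh{\p}(\xi) = \prod_{i=1}^n \wh{\p_i}(\xi)$ by independence. A direct second-moment calculation yields $|\wh{\p_i}(\xi)|^2 = 1 - 2\sum_{a<b}\p_i(a)\p_i(b)\bigl(1-\cos(2\pi\xi(b-a)/M)\bigr)$, which I bound below by $1 - c \cdot \var(X_i) \cdot d(\xi/M, L)^2$ for an appropriate ``resonant'' set $L \subseteq [0,1]$ of $O(k)$ points. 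Multiplying across $i$ yields $|\wh{\p}(\xi)|^2 \leq \exp(-c\sigma^2 d(\xi/M, L)^2)$, which consists of $O(k)$ Gaussian-shaped bumps in frequency space of width $O(M/\sigma)$ each. Truncating at height $\eps^2/M^2$ leaves $O\bigl(k\sqrt{\log(1/\eps)}\cdot M/\sigma\bigr) = \widetilde{O}(k)$ frequencies, and the tail outside them is a Gaussian sum of size $O(\eps^2)$.

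Given the lemma, the algorithm draws $N = \widetilde{O}(k/\eps^2)$ fresh samples and, for each $\xi \in T$, sets $\wh{\h}(\xi)$ equal to the empirical average $\frac{1}{N}\sum_{j=1}^N e^{-2\pi i\xi(X_j-\mu)/M}$; outside $T$ it sets $\wh{\h}(\xi)=0$. A union bound over $|T|$ frequencies combined with Hoeffding gives per-coefficient error $\delta = O(\eps/\sqrt{|T|})$ with probability at least $2/3$. By Parseval, $M\|\h-\p\|_2^2 \leq |T|\delta^2 + \sum_{\xi\notin T}|\wh{\p}(\xi)|^2 = O(\eps^2)$; since both $\h$ and $\p$ are (up to $\eps$ mass) supported on $I$, Cauchy--Schwarz gives $\dtv(\h,\p) \leq \half\sqrt{M}\,\|\h-\p\|_2 = O(\eps)$. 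The running time is dominated by computing the $|T|$ empirical averages over $N$ samples and a final inverse DFT, totaling $\widetilde{O}(k^3/\eps^2)$.

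The main obstacle is proving the sparsity lemma with the correct quantitative dependence, in particular identifying the right resonant set $L$ and establishing the per-factor lower bound. A naive choice $L = \{\ell/k : 0 \leq \ell < k\}$ does not work directly because, for example, a uniform $X_i$ has $\wh{\p_i}(\xi)=0$ precisely at these points, so the clean template ``$|\wh{\p_i}|^2 \geq 1 - c\var(X_i) d(\xi/M,L)^2$'' fails pointwise. One must argue more carefully, either defining $L$ adaptively in terms of the $\p_i$'s or working directly with the multi-scale structure of the differences $b-a$ in the second-moment expansion above. A secondary obstacle is the small-variance regime $\sigma = \widetilde{O}(k)$, in which $M$ is already $\widetilde{O}(k)$ and the Fourier approach offers no savings; there one instead learns $\p$ on its $\widetilde{O}(k)$-sized effective support directly by the empirical distribution, which itself requires only $\widetilde{O}(k/\eps^2)$ samples by standard arguments.
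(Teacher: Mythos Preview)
Your proposal is essentially the paper's approach: estimate $\mu,\sigma$, handle small variance directly, otherwise work with the DFT modulo $M=\Theta(\sigma\sqrt{\log(1/\eps)})$, exploit Fourier sparsity, and finish via Parseval plus Cauchy--Schwarz. The structural lemma you sketch, bounding $|\wh{\p}(\xi)|$ via $|\wh{\p_i}(\xi)|^2 = 1 - 2\sum_{a<b}\p_i(a)\p_i(b)(1-\cos(2\pi\xi(b-a)/M))$, is exactly how the paper proceeds (their Lemma~\ref{FourierSupportLem}).

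There is one point where your writeup is underspecified and slightly misleading, and it is worth naming because it is where the work actually lies. You posit a resonant set $L\subset[0,1]$ of $O(k)$ points and a per-factor bound $|\wh{\p_i}|^2 \ge 1 - c\,\var(X_i)\,d(\xi/M,L)^2$. No such uniform $L$ of size $O(k)$ exists; the correct explicit set is $L=\{a/b: 0\le a\le b<k\}$ of size $\Theta(k^2)$, and the bound is not per-factor with a single distance but rather $|\wh{\p}(\xi)|\le \exp\bigl(-c\sum_{i,j} p_{ij}[\xi j/M]^2\bigr)$, one term for each difference $j$. This yields an \emph{explicit} superset $S$ of size $O(k^2\log(1/\eps))$ containing all large Fourier coefficients. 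The tighter $\widetilde O(k)$ bound on the true effective support is \emph{nonconstructive}: the paper proves it by a probabilistic (Markov) argument over a uniformly random $\xi$, showing that only $O(Mk\sigma^{-1}\sqrt{\log(1/\delta)})$ frequencies can have $|\wh{\p}(\xi)|>\delta$, but this set is not known to the algorithm.

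Consequently, your algorithm as written (``for each $\xi\in T$, estimate $\wh{\p}(\xi)$'') cannot be executed, because $T$ is not known. The paper's fix is to compute the empirical DFT on the larger explicit set $S$ of size $\widetilde O(k^2)$ and then \emph{threshold}: set $\wh{\h}(\xi)=0$ whenever $|\wh{\q}(\xi)|$ is below a level $R=\Theta(\eps/\sqrt{k\log(k/\eps)})$. The $L_2$ analysis then splits at the level $T=R/2$ into the nonconstructive set $\{\xi:|\wh{\p}(\xi)|>T\}$ of size $\widetilde O(k)$ (which governs sample complexity) and its complement (where thresholding kills the empirical noise w.h.p.). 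This thresholding step is the missing ingredient in your proposal; once you add it, your argument goes through, and it also explains the $\widetilde O(k^3/\eps^2)$ runtime (it is $|S|\cdot N$ with $|S|=\widetilde O(k^2)$, not $|T|\cdot N$).
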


{
Our algorithm outputs a succinct description of the hypothesis $\h,$ via its Discrete Fourier Transform (DFT) $\wh{\h},$
which is supported on a set of small cardinality. The DFT immediately gives a fast evaluation oracle for $\h.$ We also show
how to use the DFT, in a black-box manner, to obtain an efficient approximate sampler for the target distribution $\p.$
Our efficient learning algorithm is described in Section~\ref{ssec:learn-efficient}.
In Section~\ref{ssec:sampler} we give the efficient construction of our sampler.
}

We remark that the sample complexity of our algorithm is optimal up to logarithmic factors.
Indeed, even learning a single $k$-IRV to variation distance $\eps$ requires $\Omega(k/\eps^2)$ samples.
For the case of $k=2,$~\cite{DDS12stoc}
gave a learning algorithm that uses $\widetilde{O}(1/\eps^2)$ samples, but runs in quasi-polynomial time, namely $(1/\eps)^{\polylog(1/\eps)}.$
More recently,~\cite{DDOST13focs} studied the case of general $k,$ and
gave an algorithm that uses  $\poly(k/\eps)$ samples and time.
Notably, the degree of this polynomial is quite high: the sample complexity of the~\cite{DDOST13focs} algorithm is $\Omega(k^9/\eps^6).$
Theorem~\ref{thm:learning-informal} gives a nearly-tight upper bound on the sample complexity of this
learning problem, and does so with a computationally efficient algorithm.

Given our $\widetilde{O}(k/\eps^2)$ sample upper bound, it would be tempting to conjecture that $\Theta (k/\eps^2)$
is in fact the optimal sample complexity of learning $k$-SIIRVs. 
If true, this would imply that learning a $k$-SIIRV is as easy as learning a $k$-IRV.
Surprisingly, we show that this is not the case:

\begin{theorem} [Optimal Sample Complexity]  \label{thm:sample-complexity}
For any $k \in \Z_+,$ $\eps \leq 1/ \poly(k),$ there is an algorithm that learns $k$-SIIRVs within variation distance $\eps$
using $O((k/\eps^2) \sqrt{\log(1/\eps)})$ samples. Moreover, any algorithm for this problem information-theoretically requires
$\Omega((k/\eps^2) \sqrt{\log(1/\eps)})$ samples.
\end{theorem}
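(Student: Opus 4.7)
The plan is to establish matching upper and lower bounds of $\Theta((k/\eps^2)\sqrt{\log(1/\eps)})$, improving the $\widetilde{O}(k/\eps^2)$ bound of \thmref{thm:learning-informal} to an exact logarithmic dependence and complementing it with a matching information-theoretic impossibility result.

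For the upper bound, I would sharpen the approximate Fourier sparsity statement underlying \thmref{thm:learning-informal}. The quantitatively tight form states: for any $k$-SIIRV $\p$, once we truncate to a concentration window $[a, a+M)$ capturing $1 - O(\eps)$ of its mass, the DFT $\wh{\p}$ on that window admits an effective support $S$ of size $|S| = O(k\sqrt{\log(1/\eps)})$ outside of which $\sum_{\xi \notin S}|\wh{\p}(\xi)|^2 = O(\eps^2)$. This follows from the sub-Gaussian decay of $\prod_i|\phi_{X_i}(2\pi\xi/M)|$ in the bulk, together with a careful treatment of the few ``resonant'' frequencies permitted by the bounded support $\{0,\ldots,k-1\}$ of the IRVs. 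Given $S$, the algorithm draws $m = \Theta((k/\eps^2)\sqrt{\log(1/\eps)})$ samples, forms empirical Fourier estimates $\tilde{\wh{\p}}(\xi)$ for $\xi \in S$, zeroes them off $S$, and outputs an appropriately rounded inverse DFT as the hypothesis $\h$. Since each Fourier sample has unit magnitude, every empirical coefficient has variance $O(1/m)$, so $\sum_{\xi\in S}|\tilde{\wh{\p}}(\xi) - \wh{\p}(\xi)|^2 = O(|S|/m) = O(\eps^2)$ in expectation; Plancherel together with Cauchy--Schwarz against the window of size $M$ then yields $\dtv(\h, \p) = O(\eps)$.

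For the lower bound, I would apply Assouad's lemma to a packing $\{\p_v : v \in \{\pm 1\}^N\}$ of $k$-SIIRVs with $N = \Theta(k\sqrt{\log(1/\eps)})$ satisfying (i) $\dtv(\p_v, \p_w) \geq c\eps \cdot \mathrm{Ham}(v,w)/N$ for distinct $v,w$, and (ii) $D_{\mathrm{KL}}(\p_v \,\|\, \p_0) \leq C\eps^2/N$ relative to a reference $\p_0 \in \Scal_{n,k}$. Plugging (i) and (ii) into the coordinate-wise form of Assouad's lemma yields a sample-complexity lower bound of $\Omega(N/\eps^2) = \Omega((k/\eps^2)\sqrt{\log(1/\eps)})$, matching the upper bound. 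The $\p_v$ are to be constructed as small perturbations of $\p_0$ along $N$ mutually (nearly) orthogonal directions produced by the geometric characterization of $\Scal_{n,k}$ announced earlier: each coordinate $j \in [N]$ corresponds to an extremal perturbation of the component distributions $(X_1,\ldots,X_n)$ that produces an order-$\eps/\sqrt{N}$ change in $\wh{\p}_0$ at a specific frequency near the boundary of $S$.

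The hard step is the lower bound construction, specifically verifying that $N = \Theta(k\sqrt{\log(1/\eps)})$ well-separated and statistically close perturbations of the base SIIRV can all be realized by \emph{legitimate} component distributions on $\{0,\ldots,k-1\}$, rather than merely by signed measures. The $\sqrt{\log(1/\eps)}$ factor appears at the precise threshold where the effective Fourier support saturates: inside $S$, the base DFT is large enough that perturbations have controllable KL cost; outside $S$, one cannot produce the required TV separation without leaving $\Scal_{n,k}$. The geometric characterization of $\Scal_{n,k}$ --- the same tool that underlies the tight $\eps$-cover lower bound stated earlier in the paper --- is what certifies that exactly this many independent realizable perturbation directions exist, explaining why the cover lower bound and the sample complexity lower bound have the same asymptotic form.
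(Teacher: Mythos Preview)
Your upper bound argument has a genuine gap: the claimed Fourier sparsity $|S|=O(k\sqrt{\log(1/\eps)})$ with tail mass $\sum_{\xi\notin S}|\wh{\p}(\xi)|^2=O(\eps^2)$ is not attainable. Using Lemma~\ref{FourierSupportLem}(ii) with $M\asymp s\sqrt{\log(1/\eps)}$, the number of frequencies with $|\wh{\p}(\xi)|>T$ is $\Theta\big(k\sqrt{\log(1/\eps)}\,\sqrt{\log(1/T)}\big)$, and the layer-cake bound on the tail $L_2^2$ mass is $\Theta\big(T^2 k\sqrt{\log(1/\eps)}\,\sqrt{\log(1/T)}\big)$. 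Forcing the tail below $\eps^2$ requires $T=\eps/\mathrm{poly}(k,\log(1/\eps))$, hence $\log(1/T)\asymp\log(1/\eps)$ and $|S|=\Theta(k\log(1/\eps))$. Plugging this into your Plancherel + Cauchy--Schwarz analysis yields $N=\Theta(|S|/\eps^2)=\Theta((k/\eps^2)\log(1/\eps))$, which is exactly the near-optimal bound of Theorem~\ref{thm:learning-informal}, not the optimal one. The paper is explicit that sharp truncation cannot do better: in the spatial domain it corresponds to convolution with a $\sinc$ whose $1/|x|$ tails spread error uniformly across the whole window, so Cauchy--Schwarz over a window of size $\Theta(s\sqrt{\log(1/\eps)})$ is essentially tight. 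The missing idea (Section~\ref{sec:opt-sample}) is to replace the sharp cutoff $I(\xi)$ by the smoothed mask $\wh F=I\ast\wh G$ with $\wh G$ a Gaussian of width $1/\widetilde\sigma$; then the spatial kernel $F(x)=G(x)S(x)$ has Gaussian decay, and a \emph{pointwise} variance bound $\var[Y(p)]\le N^{-1}\sum_q F^2(p-q)\,\p(q)$ combined with Bernstein's inequality on $\p$ shows $\sum_{p\in[\mu+t\widetilde\sigma,\mu+(t+1)\widetilde\sigma]}\E|Y(p)-Y'(p)|=O(k^{1/2}\log^{1/4}(1/\eps)N^{-1/2})e^{-\Omega(t^2)}$. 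Summing over $t$ removes the $\sqrt{\log(1/\eps)}$ loss from Cauchy--Schwarz and gives $N=O((k/\eps^2)\sqrt{\log(1/\eps)})$.

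Your lower-bound outline is in the right family (Assouad with $r=\Theta(k\sqrt{\log(1/\eps)})$ coordinates, realizability via the geometric characterization), but the construction you describe---perturbing $\wh{\p}_0$ at frequencies ``near the boundary of $S$''---is not how the paper does it and would be hard to make work, since controlling realizability of Fourier-side perturbations inside $\mathcal{S}_{n,k}$ is exactly the delicate point. The paper instead perturbs the \emph{pmf directly}: Lemma~\ref{clm:consecutive-example} shows the base $k$-SIIRV $\p_0$ of Proposition~\ref{prop:explicit-eps-ball} has $r=\Omega(k\sqrt{n})=\Omega(k\sqrt{\log(1/\eps)})$ consecutive integers each of mass $\Omega(1/r)$; the hypercube $\{\p_{\mathbf b}\}$ moves mass $2^{-Cn}/r$ between adjacent pairs on this stretch, and Proposition~\ref{prop:explicit-eps-ball} (the geometric lemma) certifies that every $\p_{\mathbf b}$ remains a $k$-SIIRV because $\dtv(\p_{\mathbf b},\p_0)\le 2^{-Cn}$. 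Assouad's lemma in the Hellinger-affinity form (Theorem~\ref{thm:assouad}) then yields the bound with $\alpha\asymp 2^{-Cn}/r$ and $\gamma\asymp 2^{-2Cn}/r$. The role of the geometric characterization is thus not to produce ``Fourier-orthogonal directions'' but simply to guarantee that a full-dimensional $\dtv$-ball around $\p_0$ sits inside $\mathcal{S}_{n,k}$, so that arbitrary small pmf perturbations are admissible.
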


Theorem~\ref{thm:sample-complexity} precisely characterizes the sample complexity of learning $k$-SIIRVs (up to constant factors)
by giving an upper bound and a matching information-theoretic sample lower bound.
The sharp sample complexity bound of $\Theta((k/\eps^2) \sqrt{\log(1/\eps)})$ is surprising, 
and cannot be obtained using standard information-theoretic tools (e.g., metric entropy).
We elaborate on this issue in Section~\ref{sec:techniques}.

{
We remark that the upper bound of Theorem~\ref{thm:sample-complexity} does not specify the running 
time of the corresponding algorithm. This is because the simplest such algorithm actually runs in time
exponential in $k.$ For the important special case of $k=2,$ we obtain a sample--optimal learning algorithm that runs in sample--linear time:

\begin{theorem} [Optimal Learning of PBDs ($2$-SIIRVs)]  \label{thm:pbd-learn-opt}
For any $\eps>0,$ there is an algorithm that learns PBDs within variation distance $\eps$
using $O((1/\eps^2) \sqrt{\log(1/\eps)})$ samples and running in time $O((1/\eps^2) \sqrt{\log(1/\eps)}).$
\end{theorem}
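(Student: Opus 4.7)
The plan is to specialize the sample-optimal Fourier learner underlying Theorem~\ref{thm:sample-complexity} to $k=2$ (where the ``exponential in $k$'' overhead becomes a constant) and then give a careful implementation whose running time matches the sample complexity. The algorithm draws $m = O((1/\eps^2)\sqrt{\log(1/\eps)})$ samples from the PBD $\p$ and outputs a sparse description of a hypothesis $\h$ via its Discrete Fourier Transform $\wh{\h}$ on a small effective support.

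First I would run a preprocessing stage of $O(1/\eps^2)$ samples to estimate the mean $\mu$ and variance $\sigma^2$ of $\p$ up to constant factors. These guide the choice of Fourier domain $\Z_N$ with $N = \Theta(\sigma\sqrt{\log(1/\eps)})$, which is large enough that the effective support of $\p$ -- an interval of length $O(\sigma\sqrt{\log(1/\eps)})$ around $\mu$ by Hoeffding -- fits inside one period of $\Z_N$ up to $O(\eps)$ leakage. The structural sparsity result for $k$-SIIRVs (established earlier in the paper) applied at $k=2$ gives $|\wh{\p}(\xi)|^2 \le \exp(-\Omega(\sigma^2 \sin^2(\pi\xi/N)))$, so the effective Fourier support $T = \{\xi : |\wh{\p}(\xi)| \ge \tau\}$ for a suitably chosen threshold $\tau$ has cardinality $|T| = O(\sqrt{\log(1/\eps)})$. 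For each $\xi \in T$ the algorithm forms the empirical Fourier coefficient $\wh{\h}(\xi) = \frac{1}{m}\sum_{j=1}^m e^{-2\pi i X_j \xi/N}$ and sets $\wh{\h}(\xi) = 0$ for $\xi \notin T$. Correctness is inherited from the analysis behind Theorem~\ref{thm:sample-complexity}: Parseval bounds the expected squared $\ell_2$ error between $\h$ and $\p$ on $\Z_N$ by $|T|/m$ plus the tail Fourier mass, and converting to total variation via a Cauchy--Schwarz step against the $O(N)$ effective support of both distributions yields $\dtv(\h, \p) \le \eps$ after tuning the constants hidden in $\tau$ and $N$.

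The less routine task is achieving runtime $O((1/\eps^2)\sqrt{\log(1/\eps)})$. A naive implementation computes the $|T|$ empirical DFT values in $O(m|T|) = O((1/\eps^2)\log(1/\eps))$ operations, one logarithmic factor too slow. To eliminate it, I would bucket the $m$ samples by value in a single $O(m)$ pass -- there are only $O(N) = O((1/\eps)\sqrt{\log(1/\eps)})$ distinct values taken on the effective support -- and then compute each coefficient $\wh{\h}(\xi)$ in $O(N)$ time by a single pass over the nonzero buckets against a precomputed table of the $N$-th roots of unity. The total is $O(m + N|T|) = O((1/\eps^2)\sqrt{\log(1/\eps)} + (1/\eps)\log(1/\eps))$, and the second term is absorbed into the first. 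The output is the sparse list $(\wh{\h}(\xi))_{\xi \in T}$ of length $O(\sqrt{\log(1/\eps)})$; evaluation and approximate sampling from $\h$ are performed from this Fourier representation as in Section~\ref{ssec:sampler}, without an explicit inverse FFT.

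The main obstacle is the tight time analysis rather than statistical correctness. The bucketing trick above is specific to $k=2$ and to the narrow effective support of a PBD, and relies on the word-RAM assumption that arithmetic on $O(\log n)$-bit integers is $O(1)$. A secondary subtlety is the choice of threshold $\tau$ defining $T$: it must simultaneously keep $|T| = O(\sqrt{\log(1/\eps)})$, make the tail Fourier mass $\sum_{\xi \notin T}|\wh{\p}(\xi)|^2$ contribute only $O(\eps^2)$ after the Cauchy--Schwarz step, and be robust enough to the $O(1/\sqrt{m})$ empirical noise that only genuinely large coefficients are retained.
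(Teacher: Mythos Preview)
Your proposal has a genuine gap: the sharp-threshold approach you describe cannot achieve the optimal sample complexity $O((1/\eps^2)\sqrt{\log(1/\eps)})$, and your claim that $|T|=O(\sqrt{\log(1/\eps)})$ is incorrect. For a PBD with standard deviation $\sigma$ and $N=\Theta(\sigma\sqrt{\log(1/\eps)})$, the DFT satisfies $|\wh{\p}(\xi)|\le \exp(-c\,\xi^2/\log(1/\eps))$, so the set $\{\xi:|\wh{\p}(\xi)|\ge\tau\}$ has size $\Theta(\sqrt{\log(1/\eps)\cdot\log(1/\tau)})$. To make $|T|=O(\sqrt{\log(1/\eps)})$ you would need $\tau$ to be a constant, but then the tail mass $\sum_{\xi\notin T}|\wh{\p}(\xi)|^2=\Theta(\sqrt{\log(1/\eps)})$ is far too large. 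Conversely, to make the tail negligible you need $|T|=\Omega(\log(1/\eps))$, and then your Cauchy--Schwarz step $\dtv(\h,\p)\le \sqrt{N}\,\|\wh{\h}-\wh{\p}\|_2 \lesssim \sqrt{|T|/m}$ forces $m=\Omega(\log(1/\eps)/\eps^2)$, a $\sqrt{\log(1/\eps)}$ factor off from optimal. This is exactly the bound of the Section~\ref{ssec:learn-efficient} algorithm specialized to $k=2$; it is not the optimal one.

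The missing idea, which the paper supplies, is to replace the hard cutoff by a \emph{smooth} one: multiply $\wh{Z}$ not by the indicator of $T$ but by $\wh{F}=I\ast\wh{G}$, where $I$ is the indicator of an interval of length $\Theta(\widetilde{\sigma}^{-1}\sqrt{\log(1/\eps)})$ and $\wh{G}$ is a Gaussian of width $1/\widetilde{\sigma}$. In the spatial domain this is convolution of the empirical distribution with $F(x)=e^{-x^2/(2\widetilde{\sigma}^2)}\cdot(\text{sinc term})$, whose Gaussian envelope makes the expected error at a point $p=\mu+t\widetilde{\sigma}$ decay like $\exp(-\Omega(t^2))$. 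Summing these pointwise errors---rather than applying a global Cauchy--Schwarz against the full support of size $N$---is what yields $\E[\dtv(Y,Y')]=O(\log^{1/4}(1/\eps)/\sqrt{m})$ and hence the $\sqrt{\log(1/\eps)}$ sample bound. The paper's runtime then comes from computing this convolution by a single FFT modulo $\Theta(M)$, with the side cases $\widetilde{\sigma}=O(\sqrt{\log(1/\eps)})$ (output the empirical distribution) and $\widetilde{\sigma}=\Omega(1/\eps)$ (output a discrete Gaussian) handled directly. Your bucketing idea for the runtime is fine, but the statistical analysis needs the smooth mollifier; without it the argument stalls one logarithmic factor short.
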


The upper bound of Theorem~\ref{thm:sample-complexity} and Theorem~\ref{thm:pbd-learn-opt} are established
in Section~\ref{sec:opt-sample}. Our tight sample complexity lower bound is proved in Section~\ref{sec:sample-lb}.
}

\paragraph{Using the Fourier Transform for Distribution Learning.}
Our learning upper bounds are obtained via an approach which is novel in this context.
Specifically, we show that the Fourier transform of $k$-SIIRVs is {\em approximately sparse}, 
and exploit this property to learn the distribution {\em via learning its Fourier transform in its effective support}.
The sparsity of the Fourier transform explains why this family of distributions is learnable with sample complexity independent of $n,$
and moreover it yields the sharp sample-complexity bound.
The algorithmic idea of exploiting Fourier sparsity for distribution learning is general (see Section~\ref{ssec:fourier-algo-gen}), 
and was subsequently used by the authors in other related settings~\cite{DKS15c, DKS15b}.

\paragraph{Structure of $k$-SIIRVs.} Our core structural result is the following simple property
of the Fourier transform of $k$-SIIRVs:
\begin{center}
{\em Any $k$-SIIRV with ``large'' variance has a Fourier transform with ``small'' effective support.}
\end{center}

One can obtain different versions of the above informal statement depending on the setting and the desired application.
See Lemma~\ref{FourierSupportLem} for a formal statement in the context of the DFT. 
The Fourier sparsity of $k$-SIIRVs forms the basis for our upper bounds in this paper. 
As previously mentioned, this structural property motivates and enables our learning algorithm. 
Moreover, it is useful in order to obtain sparse $\eps$-covers for $\mathcal{S}_{n, k},$ the space of $k$-SIIRVs,
under the total variation distance. 

More specifically, using the approximate sparsity of the Fourier transform of SIIRVs combined with analytic arguments, 
we obtain a computationally efficient algorithm to construct a {\em proper} $\eps$-cover for $\mathcal{S}_{n, k},$ of near-minimum size.
In particular, we show:

\begin{theorem}[Optimal Covers for $k$-SIIRVs] \label{thm:cover-informal}
For $\epsilon \leq 1/k$, there exists a proper $\eps$-cover ${\cal S}_{n, k, \eps} \subseteq {\cal S}_{n, k}$ of ${\cal S}_{n, k}$
under the total variation distance of size $|{\cal S}_{n, k, \epsilon}| \le n \cdot (1/\eps)^{O( k\log(1/\eps))}$
that can be constructed in polynomial time.
\end{theorem}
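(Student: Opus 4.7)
The plan is to leverage the approximate Fourier sparsity of $k$-SIIRVs (Lemma \ref{FourierSupportLem}) to parametrize $\mathcal{S}_{n,k}$ by a small number of discretized Fourier coefficients, and then upgrade the resulting cover to a proper one via componentwise rounding of the $X_i$.

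I would fix $P \in \mathcal{S}_{n,k}$ with mean $\mu$ and variance $\sigma^2$, and observe that $P$ places mass at most $\eps$ outside an integer window of length $L = O(\sigma\sqrt{\log(1/\eps)} + k\log(1/\eps))$ centered at $\mu$, by a Hoeffding-type tail bound. Setting $M = \Theta(L)$ and working with the DFT $\wh{P}$ on $\Z/M\Z$, Lemma \ref{FourierSupportLem} provides an effective support $S \subseteq \Z/M\Z$ with $|S| = O(k\log(1/\eps))$, outside of which $|\wh{P}(\xi)| \le \eps/M$. Since $\mu$ ranges over integers in $[0,n(k-1)]$, enumerating $\mu$ accounts for the factor of $n$ in the cover size. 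Discretizing the values $\wh{P}(\xi)$ for $\xi \in S$ to precision $\eps/M$ in the complex unit disk yields at most $(1/\eps)^{O(k\log(1/\eps))}$ distinct ``Fourier profiles''; a Plancherel argument together with the off-$S$ tail bound then implies that two $k$-SIIRVs that share a mean and profile are $O(\eps)$-close in TV distance, giving a non-proper cover of the claimed cardinality.

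The main obstacle is to exhibit a genuine element of $\mathcal{S}_{n,k}$ for each profile. For this I would round each $X_i$ to a nearby $X_i'$ drawn from a fixed $\delta$-net $\mathcal{N}$ of the simplex $\Delta^{k-1}$ with $\delta = \eps/(nk|S|)$. The telescoping identity
\[
\wh{P'}(\xi) - \wh{P}(\xi) = \sum_{i=1}^n \bigl( \wh{X_i'}(\xi) - \wh{X_i}(\xi) \bigr) \prod_{j < i} \wh{X_j'}(\xi) \prod_{j > i} \wh{X_j}(\xi),
\]
combined with the pointwise bound $|\wh{X_j}(\xi)|, |\wh{X_j'}(\xi)| \le 1$, yields $|\wh{P'}(\xi) - \wh{P}(\xi)| \le \eps/|S|$ for every $\xi \in S$, so $P' := \sum_i X_i' \in \mathcal{S}_{n,k}$ lies in the same discretized profile as $P$ and within TV distance $O(\eps)$. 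Thus each profile realized by some $P \in \mathcal{S}_{n,k}$ is also realized by some $P'$ with components in $\mathcal{N}$, and we take this $P'$ as the proper cover element for that profile. The polynomial-time construction then follows from a dynamic program whose state is the pair (discretized Fourier vector on $S$, running mean) and whose transitions add one $X_i' \in \mathcal{N}$ at a time: the state space has size $n \cdot (1/\eps)^{O(k\log(1/\eps))}$ and the transition alphabet has size $|\mathcal{N}| = \poly(n,k,1/\eps)$, so the running time is polynomial in the output. The main technical care needed is the precision chain: the componentwise rounding error $\delta$ must accumulate (through the $n$ terms in the telescoping identity) to strictly less than the Fourier discretization spacing $\eps/M$, which is what forces the choice $\delta \sim \eps/(nk|S|)$ and dictates the size of the net $\mathcal{N}$.
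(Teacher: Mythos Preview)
Your plan has the right shape, but there is a genuine gap in the parametrization step that prevents it from reaching the stated bound.

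The effective support $S$ of size $O(k\log(1/\eps))$ that you invoke from Lemma~\ref{FourierSupportLem} is the bound from part~(ii), which is \emph{specific to the SIIRV $\p$}: different $k$-SIIRVs with the same variance can have their Fourier mass concentrated near different rationals $a/b$ with $b<k$. Consequently, two SIIRVs $P'$ and $P''$ that share your discretized profile on $S$ need not be close: the DP output $P''$ can have large Fourier coefficients on $\mathcal{L}\setminus S$ where $P'$ is small, and your Plancherel argument then fails. (Concretely, for $k=3$ a sum of $\{0,2\}$-Bernoullis and a sum of $\{0,1\}$-Bernoullis can be matched near $\xi=0$ yet differ by $\Omega(1)$ near $\xi=1/2$.) If you instead use the \emph{explicit} set $\mathcal{L}$ from part~(i), the argument goes through, but $|\mathcal{L}|=O(k^2\log(1/\eps))$ and you obtain a cover of size $(1/\eps)^{O(k^2\log(1/\eps))}$, off by a factor of $k$ in the exponent. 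The DP cannot simply guess the right $S$ and filter afterwards for smallness on $\mathcal{L}\setminus S$, since that constraint is not decomposable along prefixes.

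A second issue is that you never reduce to $n=\poly(k/\eps)$. Fourier values are multiplicative in the components, so your DP must re-discretize after each of the $n$ multiplications; the accumulated rounding forces grid spacing $\Theta(\eps/n)$, and the state space becomes $n^{O(k\log(1/\eps))}$ rather than $n\cdot(1/\eps)^{O(k\log(1/\eps))}$. The paper handles this by first splitting off the high-variance case via Theorem~\ref{thm:reg} and Lemma~\ref{lem:simple}.

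The paper's parametrization (Lemma~\ref{lem:approx}) resolves both problems simultaneously: it records the first $O(\log(1/\eps))$ Taylor coefficients of $\log\widetilde{\p}$ on each of $O(k)$ arcs of the unit circle, together with the $O(\log(1/\eps))$ roots near each arc. Because the Taylor approximation is valid on the \emph{entire} arc, matching this data pins down $\widetilde{\p}$ everywhere on the circle, not just on a sparse grid; and because $\log$ turns products into sums, the data is additive in the component IRVs, so the DP runs cleanly with no $n$-dependent precision loss. This is exactly what lets the paper get $O(k)$ arcs rather than the $O(k^2)$ Farey points implicit in $\mathcal{L}$.
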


The best previous upper bound on the cover size of $2$-SIIRVs is
$n^2 + n \cdot (1/\eps)^{O(\log^2(1/\eps))}$~\cite{DaskalakisP09, DP:cover}.
For $k>2$,~\cite{DDOST13focs} gives a {\em non-proper} cover of size $n \cdot 2^{\poly(k/\eps)}.$

Our proper cover upper bound construction provides a smaller search space
for essentially any optimization problem over $k$-SIIRVs.
Specifically, Theorem~\ref{thm:cover-informal} has the following implication in computational game theory:
Via a connection established in~\cite{DaskalakisP07, DaskalakisP09},
the proper cover construction of Theorem~\ref{thm:cover-informal} (for $k=2$)
yields an improved $\poly(n) \cdot (1/\eps)^{O(\log(1/\eps))}$ time
algorithm for computing  $\eps$-Nash equilibria in anonymous games with $2$ strategies per player.
Our matching lower bound on the cover size implies that the ``cover-based approach''
cannot lead to an FPTAS for this problem. We note that computing an (exact) Nash equilibrium
in an anonymous game with a constant number of strategies was recently shown to be intractable~\cite{CDO15}.
Our cover upper bound is proved in Section~\ref{sec:cover-ub}.

{
We also prove a matching lower bound on the cover size, showing that our above construction is essentially optimal:

\begin{theorem}[Cover Size Lower Bound for $k$-SIIRVs] \label{thm:cover-lb-informal}
For $\eps \le 1/\poly(k),$ and $n  = \Omega(\log(1/\eps)),$
any $\eps$-cover for ${\cal S}_{n, k}$ has size at least $n \cdot (1/\eps)^{\Omega(k\log(1/\eps))}.$
\end{theorem}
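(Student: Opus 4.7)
The plan is a direct packing argument: construct a family $\mathcal{F}\subseteq\mathcal{S}_{n,k}$ of $k$-SIIRVs that is pairwise $3\eps$-separated in total variation, with $|\mathcal{F}|\ge n\cdot(1/\eps)^{\Omega(k\log(1/\eps))}$. Any $\eps$-cover must then contain distinct representatives for distinct elements of $\mathcal{F}$, which yields the claimed bound immediately.

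I would parametrize $\mathcal{F}$ by two independent ingredients. The \emph{shape}: fix $m=\Theta(\log(1/\eps))$, discretize the simplex $\Delta_{k-1}\subset\R^k$ of $k$-IRV distributions at scale $\Theta(\eps/k)$ to obtain $N=\Theta((1/\eps)^{k-1})$ candidate atomic $k$-IRVs, and let the shape be specified by an unordered multiset of $m$ such $k$-IRVs; this yields $\binom{N+m-1}{m}\ge(1/\eps)^{\Omega(k\log(1/\eps))}$ choices. The \emph{shift} $s\in\{0,\dots,n-m\}$ is implemented by $s$ deterministic summands equal to $1$ and $n-m-s$ deterministic summands equal to $0$. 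The family $\mathcal{F}:=\{P_{M,s}\}$ then has cardinality $\Theta(n)\cdot(1/\eps)^{\Omega(k\log(1/\eps))}$, matching the target count.

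The separation analysis uses the factorization $\hat{P}_{M,s}(\xi)=e^{2\pi i s\xi/N'}\prod_{X\in M}\hat{X}(\xi)$, which cleanly separates shift from shape at the level of the DFT. For fixed shape and varying shift, the shape SIIRV has variance $\sigma^2=O(k^2\log(1/\eps))$; provided I include at least one summand with full $k$-ary support to rule out lattice periodicity, a unit translation of the shape incurs TV distance $\Omega(1/\sigma)$, which is $\gg\eps$ in the regime $\eps\le 1/\poly(k)$, so all $\Theta(n)$ shifts are pairwise $3\eps$-separated. For different shapes (same or different shift), I reduce via the Fourier-sparsity statement of Lemma~\ref{FourierSupportLem} to lower-bounding $\max_\xi\bigl|\prod_{X\in M_1}\hat{X}(\xi)-\prod_{X\in M_2}\hat{X}(\xi)\bigr|$ at a frequency inside the effective Fourier support, and then transfer this pointwise gap back to a TV lower bound using the fact that the difference $P_{M_1,s_1}-P_{M_2,s_2}$ is itself approximately Fourier-sparse.

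The main obstacle is the shape-separation step: distinct multisets of $k$-IRVs can a priori smooth to nearly identical convolutions whenever their low-order moments agree, a central-limit-type degeneracy. The remedy is to lift the $\eps$-grid on $\Delta_{k-1}$ to a well-separated grid in log-DFT coordinates $\xi\mapsto\log\hat{X}(\xi)$, so that the map (multiset) $\mapsto\sum_{X\in M}\log\hat{X}(\cdot)$ is quantitatively injective on the grid; summing $m$ log-DFT vectors then preserves an $\Omega(\eps)$ packing radius in Fourier $\ell_\infty$ norm. Establishing this bi-Lipschitz behavior, and verifying that it pulls back to an $\Omega(\eps)$ TV separation on the SIIRV side, is precisely the content of the ``novel geometric characterization of the space of $k$-SIIRVs'' promised in the introduction as the key ingredient of this theorem. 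Once that log-Fourier packing statement is in hand, multiplying the shape count by the shift count yields the claimed lower bound.
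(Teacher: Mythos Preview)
Your overall plan---packing times shift---is the right shape, and the shift part is fine. The gap is precisely where you flag it: the shape-separation step.

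The core problem is that the property you need, ``distinct size-$m$ multisets from a grid $G\subset\Delta_{k-1}$ yield $\Omega(\eps)$-separated convolutions,'' does not follow from $G$ being well-separated in any coordinate system, log-DFT or otherwise. Summing $m$ vectors from a grid produces massive collisions ($\{a,d\}$ and $\{b,c\}$ with $a+d=b+c$), so the map you want to be ``quantitatively injective on the grid'' simply is not. Worse, even the upper-bound direction already breaks the packing: if two multisets differ in a single element by one grid step of size $\Theta(\eps/k)$, sub-additivity gives $\dtv(P_{M_1},P_{M_2})\le\Theta(\eps/k)<3\eps$, so neighboring multisets are too close to be in the same $3\eps$-packing. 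You cannot fix this just by coarsening the grid, because you then need a \emph{lower} bound on the TV distance after convolving with $m-1$ further summands, and convolution can wash out an $\Omega(\eps)$ difference at one summand to something much smaller.

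The paper's construction avoids multisets entirely. It fixes a single base $k$-SIIRV in which the $i$th summand carries almost all its mass on $\{0,k-1\}$ with mixing probability $p_i=i/(n+1)$; the $p_i$'s are distinct by design, so the $n$ summands are labeled, not interchangeable. The free parameters are tiny masses $\delta a_{ij}$ placed on the intermediate values $j\in\{1,\dots,k-2\}$. Separation is proved not via Fourier sparsity but via explicit linear functionals
\[
r_{\p,ij}=\sum_{l=0}^{n} p_i^{\,n-l}(p_i-1)^{l}\,\p\bigl(l(k-1)+j\bigr),
\]
which, because $\prod_{h\neq i}(p_i-p_h)\neq 0$, isolate the coefficient $a_{ij}$ up to $O(\delta^2)$ error; this is a polynomial-interpolation argument (the $k=2$ case is Lemma~\ref{distSepLem}). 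The ``geometric characterization'' you appeal to is likewise about local invertibility of the parameters-to-distribution map near this specific base point (via the inverse function theorem and a Jacobian computation), not about any log-DFT grid with Sidon-type additive structure. So the missing ingredient is not the statement you sketched; it is a different construction in which the summands are intrinsically distinguishable and admit parameter-reading functionals.
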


}

Before our work, no non-trivial lower bound on the cover size was known.
We view the inherent quasi-polynomial dependence on $1/\eps$ of the cover size established here
as a rather surprising fact.
Our cover size lower bound proof relies on a new geometric characterization
of the space of $k$-SIIRVs that we believe is of independent interest, and may find other applications. 
Our tight lower bound on the sample complexity of learning $k$-SIIRVs relies critically on this characterization.
Our cover size lower bound is proved in Section~\ref{sec:lb}.


\subsection{Preliminaries} \label{sec:prelim}
We record a few definitions that will be used throughout this paper.

\vspace{-0.3cm}

\paragraph{Distributions and Metrics.} 
For $m \in \Z_+$, we denote $[m] \eqdef \{0,1,\dots,m\}.$
A function $\p : A \to \R$, over a finite set $A$, is called a {\em distribution}
if $\p(a) \ge 0$ for all $a \in A$, and $\sum_{a \in A} \p (a)=1.$
The function $\p$ is called a {\em pseudo-distribution} if $\sum_{a \in A} \p (a)=1.$
For a pseudo-distribution $\p$ over $[m],$ $m \in \Z_+,$ we write $\p(i)$ to denote
the value $\Pr_{X \sim \p}[X=i]$ of the probability density function (pdf) at point $i$, and $\p(\leq i)$ to denote
the value $\Pr_{X \sim \p}[X \leq i]$ of the cumulative density function (cdf) at point $i.$
For $S \subseteq [n]$, we write $\p(S)$ to denote $\sum_{i \in S}\p(i).$

The {\em total variation distance} between two (pseudo-)distributions
$\p$ and $\q$ supported on a finite set $A$ is
$\dtv\left(\p, \q \right) \eqdef \max_{S \subseteq A} \left |\p(S)-\q(S) \right|= (1/2) \cdot \| \p -\q  \|_1.$
Similarly, if $X$ and $Y$ are random variables, their total
variation distance $\dtv(X,Y)$ is defined as the total variation
distance between their distributions.
Another useful notion of distance between distributions/random variables is the {\em Kolmogorov
distance}, defined as $\dk\left(\p, \q \right) \eqdef \sup_{x \in \R} \left| \p(\leq x)-\q(\leq x) \right|.$
Note that for any pair of distributions $\p$ and $\q$
supported on a finite subset of $\R$ we have that $\dk\left(\p,\q\right) \le \dtv\left(\p,\q \right).$

\vspace{-0.3cm}

\paragraph{Distribution Learning.} Since we are interested in the computational complexity
of distribution learning, our algorithms will need to use a {\em succinct description} of their hypotheses.
A simple succinct representation of a discrete distribution is via an evaluation
oracle for the probability mass function. 
For $\eps > 0$, an \emph{$\eps$-evaluation oracle} for  a distribution $\p$ over $[m]$ is
a polynomial size circuit $C$ with  $O(\log m)$ input bits such that for each input $z$, 
the output of the circuit $C(z)$ equals the binary representation of the probability $\p'(z),$
for some pseudo-distribution $\p'$ which has $\dtv(\p', \p) \leq \eps.$
Another general way to succinctly specify a distribution is to give the code of an
efficient algorithm that takes ``pure'' randomness and transforms it into a sample from the distribution.
This is the standard notion of a sampler.
An \emph{$\eps$-sampler} for $\p$ is a circuit $C$
with $O(\log m+\log(1/\eps))$ input bits $z$ and $O(\log m)$ output bits $y$
which is such that when $z \sim U_m,$  then  $y \sim \p',$ for some distribution $\p'$ which has $\dtv(\p',\p) \leq \eps.$

We emphasize that our learning algorithms output {\em both an $\eps$-sampler
and an $\eps$-evaluation oracle} for the target distribution.

\vspace{-0.3cm}

\paragraph{Covers.}
Let ${\cal F}$ be a family of probability distributions. Given $\delta > 0$, a subset ${\cal G} \subseteq
{\cal F}$ is said to be a proper \emph{$\delta$-cover of ${\cal F}$} with respect to the metric $d(\cdot, \cdot)$
if for every distribution $\p \in {\cal F}$ there exists some $\q \in {\cal G}$ such that $d(\p, \q) \leq \delta.$
If ${\cal G}$ is not a subset of ${\cal F},$ then the cover is called non-proper.
The $\delta$-covering number for  $({\cal F}, d)$ is the minimum cardinality of a $\delta$-cover.
The $\delta$-packing number for  $({\cal F}, d)$ is the maximum number of points (distributions) in $\cal{F}$ at pairwise distance at least $\delta$
from each other.

\subsection{Our Approach and Techniques} \label{sec:techniques}

The unifying idea of this work is an analysis of the structure of the Fourier Transform (FT) of $k$-SIIRVs.
The FT is a natural tool to consider in this context. 
Recall that the FT of a sum of independent random variables is the product
of the FT's of the individual variables.
Moreover, if two random variables have similar FT's,
they also have similar distributions.
These two basic facts are the starting point of our analysis.
We now provide an overview of the ideas underlying our results, 
and give a comparison to  previous techniques.


\paragraph{Discussion \& Previous Approaches for Learning SIIRVs.}
Let $\mathcal{D}$ be a family of distributions over a domain of size $N.$
How many samples are required to learn an arbitrary $\p \in \mathcal{D}$ within
variation distance $\eps$? Without any restrictions on $\mathcal{D},$
it is a folklore fact that the sample complexity learning is $\Theta(N/\eps^2).$
The optimal learning algorithm in this case is the obvious one:
output the empirical distribution. By exploiting the structure of the family
$\mathcal{D},$ one may obtain better results.

A very natural type of structure to consider is some sort of  ``shape constraint'' on the probability density function, 
such as log-concavity or unimodality. There is a long line of work in statistics on this topic (see, e.g., the books~\cite{BBBB:72, GJ:14}),
and more recently in TCS~\cite{DDS12soda, CDSS14, CDSS14b, ADLS15}.
Alas, it turns out that $k$-SIIRVs do not satisfy any of the shape constraints considered in the literature (see~\cite{DDOST13focs} for a discussion).

A different type of structure, based on the notion of metric entropy~\cite{Yatracos85, Birg86, DL:01}, yields the following implication:
If a distribution class $\mathcal{D}$ has an $\eps/2$-cover of size $M,$ then it is learnable with $O(\log M / \eps^2)$
samples.\footnote{We remark that the running time of this method is $\Omega(M/\eps^2),$ which is not necessarily polynomial in the sample size.}
In a celebrated paper in information theory~\cite{Yang99}, Yang and Barron show that, for broad families of (continuous) distributions, 
the metric entropy {\em characterizes} the sample complexity of learning. 
For $k$-SIIRVs, however, this is not the case: Via Theorem~\ref{thm:cover-informal}, 
the metric entropy method implies a sample upper bound of $O((1/\eps^2) \cdot \log n + (k/\eps^2) \cdot \log^2(1/\eps)).$
Note that, since our cover size upper bound is tight, this sample bound is the limit of the metric entropy method for $k$-SIIRVs.
Thus, this method gives a suboptimal sample upper bound 
for our learning problem, both qualitatively (dependence on $n$), and quantitatively (dependence on $\eps$).

Previous work on learning $k$-SIIRVs~\cite{DDS12stoc, DDOST13focs} relies on a certain  ``regularity'' lemma about the structure
of these distributions: Any $k$-SIIRV is either $\eps$-close in total variation distance to being  $L = \Theta(k^9/\eps^4)$- ``sparse'',
i.e., it is supported on a set of at most $L$ consecutive integers, or $\eps$-close to being ``Gaussian like''.
In the former case, the distribution can be learned using $O(L/\eps^2)$ samples,
and in the latter case one can exploit the Gaussian structure to learn with a small number of samples as well.
Unfortunately, the sparse case is a bottleneck for this approach,
as any algorithm to learn a istribution over support $L$ requires $\Omega(L/\eps^2)$ samples.
Hence, one needs to exploit the structure of $k$-SIIRVs beyond the aforementioned.

\paragraph{Our Learning Approach.}
In this paper, we depart from the aforementioned approaches.
We identify a simple condition  -- the approximate sparsity of the Fourier transform -- as the
``right'' property that determines the sample complexity of our learning problem.
The Fourier sparsity explains why the sample complexity of learning $k$-SIIRVs is independent of $n,$
and allows us to obtain the sharp sample bound as a function of both $k$ and $\eps.$
We show that this is a more general phenomenon (see Theorem~\ref{thm:learn-generic} in Section~\ref{ssec:fourier-algo-gen}): 
any univariate distribution that has an $s$-sparse Fourier transform, 
in a certain well-defined technical sense, is learnable with $\widetilde{O}(s/\eps^2)$ samples.

Our computationally efficient learning algorithm proceeds as follows: 
It starts by drawing an initial set of samples to determine the effective support of the target distribution and its Fourier transform.
This is achieved by estimating the mean and variance of our SIIRV. 
{We remark that, for computational purposes, our algorithm uses the Discrete Fourier Transform (DFT). 
For the appropriate definition of the DFT, we show (Lemma~\ref{FourierSupportLem}) there exists an {\em explicit} set $S$ 
of cardinality $|S|  = O(k^2 \log(k/\eps))$ that contains all the ``heavy'' Fourier coefficients\footnote{We moreover show that 
there exists a set of cardinality $O(k \log(k/\eps))$ that contains all the ``heavy'' Fourier coefficients, alas this smaller set is not explicitly known a priori.}.}
Our algorithm then draws an additional set of samples to estimate the DFT of the target distribution 
at the points of the effective support $S,$ and sets the DFT to $0$ everywhere else.  
By exploiting the sparsity in the Fourier domain, we show that the inverse of the empirical DFT
achieves total variation distance $\eps/2$ after $\widetilde{O}(k/\eps^2)$ samples. 
Note that an explicit description of an accurate hypothesis for our learning problem  can have an effective support of size $\Omega(k\sqrt{n}).$ 
While we can easily obtain such a description (by explicitly computing the inverse DFT), 
this would not lead to a computationally efficient algorithm. 
We instead output a succinct description of our hypothesis (in time that is independent of $n$).
In particular, our algorithm outputs the empirical DFT at the points of its effective support.
Our learning algorithm is given in Section~\ref{ssec:learn-efficient}.

{
We emphasize that the implicit description of the hypothesis $\h$, via its DFT $\wh{\h},$ 
is sufficient to obtain both an approximate evaluation oracle and an approximate sampler for the target $k$-SIIRV $\p.$ 
Obtaining an approximate evaluation oracle is straightforward:  Since $\wh{\h}$ is supported on the set $S,$
we can compute $\h(i)$ in time $O(|S|).$ To obtain an efficient sampler, we proceed in two steps: We first 
show how to efficiently compute the CDF of $\h,$ using oracle access to the the DFT $\wh{\h}.$ To do this, we express
the value of the CDF at any point via a closed form expression involving the values of $\wh{\h}.$ 
Given oracle access to the CDF, we use a simple binary search procedure to sample from a distribution $\q$
satisfying $\dtv(\q, \h) \le \eps/2.$ Our sampler is given in Section~\ref{ssec:sampler}.
}

Finally, we note that our above-described Fourier-learning algorithm achieves a near-optimal sample complexity (up to logarithmic factors).
The basic idea to obtain the {\em optimal} sample complexity is to smoothly mollify the DFT instead of truncating it. 
This removes some artifacts caused by a sharp truncation and yields a hypothesis 
whose error from the true distribution decays rapidly as we move away from the mean.
Our sample-optimal upper bound is established in Section~\ref{sec:opt-sample}.


\paragraph{Cover Upper Bound.}

We start by commenting on previous approaches for proving cover upper bounds in this context.
The main technique for the $2$-SIIRV cover upper bound of~\cite{DaskalakisP09}
is the following lemma (that is deduced in~\cite{DaskalakisP09} using a result from~\cite{Roos:00}):
If two $2$-SIIRVs agree on their first $\Omega(\log(1/\eps))$ moments, then their total variation
distance is at most $\eps$. First, we show that this moment-matching lemma is quantitatively tight: 
we give an example of two $2$-SIIRVs over $k+1$ variables that agree on the first $k$ moments 
and have variation distance $2^{-\Omega(k)}$ (Proposition~\ref{prop:moments}). 

We emphasize however that such a moment-matching technique cannot be generalized to $k$-SIIRVs, even for $k=3.$
Intuitively, this is because knowledge about moments fails to account for potential periodic structure of the probability mass function
that comes into play for $k>2$. For example, $\Omega(n)$ moments do not suffice to distinguish between the cases that
a $3$-SIIRV of order $n$ is supported on the even versus the odd integers.
More specifically, in Proposition~\ref{prop:moments-3} (Appendix~\ref{app:moments-lb}), we give an explicit example of two
$3$-SIIRVs of order $n/2$ that agree exactly on the first $n-1$ moments and have disjoint supports.

In conclusion, moment-based approaches fail to detect periodic structure. 
On the other hand, this type of structure is easily detectable by considering the Fourier transform.
Our cover upper bound hinges on showing that the Fourier transform of a $k$-SIIRV is necessarily of low complexity, i.e.,
it can be succinctly described up to small error. In particular, since the Fourier transform is smooth,
we show (Lemma~\ref{lem:approx}), roughly, that its logarithm can be well approximated by a low degree Taylor polynomial on intervals of length $O(1/k).$
(Our actual statement is somewhat more complicated as it needs to account for roots of the Fourier transform
close to the unit circle.)
Therefore, providing approximations to the low-degree Taylor coefficients of the logarithm of the Fourier transform
provides a concise approximate description of the distribution.



\vspace{-0.2cm}

\paragraph{Cover Lower Bound \& Sample Lower Bound.}

Our lower bounds take a geometric view of the problem.
At a high-level, we consider the function that maps the set of $n (k-1)$ parameters defining a $k$-SIIRV to the corresponding
probability mass function.  We show that there exists a region of the space of distributions
where this function is locally invertible. For $k=2$, we in fact show that the distribution of any $2$-SIIRV with distinct parameters lies in the interior of this region.
This structural understanding allows us to use certain appropriately defined expectations to extract the effect of individual parameters on the distribution.
In addition, for $n = \Theta (\log(1/\eps)),$ we show
that near a particular $k$-SIIRV not only is the map from parameters to distribution locally a bijection, 
but that this map is actually surjective onto a ball of reasonable size. 
In other words, near this particular distribution, the $\Omega(k \log(1/\eps))$ parameters of the output distribution are effectively independent,
which intuitively implies the $(1/\eps)^{\Omega(k \log(1/\eps))}$ lower bound
on the cover size. 

To prove our sample lower bound, at a high-level,
we combine the aforementioned geometric understanding with Assouad's lemma~\cite{Assouad:83}. 
We note that one might naively expect that such a situation would lead to a lower bound of 
$\Omega(k\log(1/\eps)/\eps^2)$, but since the distributions under consideration have additional structure,
it turns out that the best lower bound that can be obtained is $\Omega(k\sqrt{\log(1/\eps)}/\eps^2).$

\subsection{Related  Work} \label{ssec:related}
Density estimation is a classical topic in statistics and machine learning
with a rich history and extensive literature (see e.g.,~\cite{BBBB:72, DG85, Silverman:86,Scott:92,DL:01}).
The reader is referred to~\cite{Izen91} for a survey of statistical techniques in this context.
In recent years, a large body of work in TCS  has been studying these questions from
a computational perspective; see e.g.,
\cite{KMR+:94short,FreundMansour:99short,AroraKannan:01, CGG:02, VempalaWang:02,FOS:05focsshort, BelkinSinha:10, KMV:10,
MoitraValiant:10,DDS12soda,DDS12stoc, DDOST13focs, CDSS14, CDSS14b, ADLS15}.

Covering numbers (and their logarithms, known as {\em metric entropy} numbers) were first defined by A. N. Kolmogorov in the 1950's
and have since played a central role in a number of areas, including
approximation theory, geometric functional analysis (see, e.g.,~\cite{Dudley:74, Makovoz86,  BleiGaoLi07}
and the books~\cite{Kolm59, Lor66, CarlStephani90, ET96}), geometric approximation algorithms~\cite{HP11},
information theory, statistics, and machine learning  (see, e.g.,~\cite{Yatracos85, Birg86, Hasm90, Haussler97, Yang99, GuntSen13}
and the books~\cite{VWellner96, DL:01, Tsybakov08}).

\vspace{-0.3cm}

\paragraph{Concurrent Work.}
Concurrent work by Daskalakis {\em et al.}~\cite{DKT15}, using different techniques,
gives upper bounds on the learning sample complexity of Poisson Multinomial Distributions (PMDs).
While upper bounds on the sample complexity of PMDs yield similar upper bounds for $k$-SIIRVs,
the implied upper bounds for $k$-SIIRVs are quantitatively significantly weaker than ours.
Moreover, the~\cite{DKT15} learning algorithm has 
running time exponential in $k$  and super-polynomial in $1/\eps.$

\vspace{-0.3cm}

\paragraph{Subsequent Work.} In a followup work~\cite{DKS15c}, the authors have generalized the techniques
of this paper to the multidimensional case, namely to the family of Poisson Multinomial Distributions (PMDs), i.e., sums
of independent random vectors supported over the standard basis in $\R^k.$
We note that the results of the current paper are not subsumed by the results of~\cite{DKS15c}.
In particular, ~\cite{DKS15c} gives an efficient learning algorithm for PMDs that uses $\log^{O(k)}(1/\eps)/\eps^2$
samples, and proves that the optimal cover size for PMDs depends doubly exponentially on $k.$

\subsection{Organization} \label{ssec:org}
In Section~\ref{sec:learn} we describe and analyze our learning algorithms for $k$-SIIRVs.
Section~\ref{sec:cover-ub} contains our cover upper bound construction. 
Our cover lower bound is given in Section~\ref{sec:lb}, and our sample lower bound in Section~\ref{sec:sample-lb}.

\section{Learning SIIRVs}  \label{sec:learn}
In this section, we describe our algorithms for learning $k$-SIIRVs.
The structure of this section is as follows: In Section~\ref{ssec:learn-efficient}, 
we give our sample near-optimal and computationally efficient learning algorithm.
As mentioned in the introduction, our algorithm outputs a succinct description of its hypothesis
$\h$, via its DFT.
In Section~\ref{ssec:fourier-algo-gen}, we provide a simple general algorithm
that learns any one-dimensional discrete distribution with a sparse Fourier support.
In Section~\ref{ssec:sampler}, we show how to efficiently obtain an $\eps$-sampler for our unknown $k$-SIIRV, 
using the DFT representation of $\h$ as a black-box. Finally, in Section~\ref{sec:opt-sample}
we present our more sophisticated Fourier-based learning algorithm with optimal sample complexity.

\subsection{A Computationally Efficient Sample Near-Optimal Algorithm} \label{ssec:learn-efficient}

\ignore{
Some notes on the upper bound:
\begin{itemize}
\item If we use the cover theorem as a black-box, we get a $\log n$ factor in the sample complexity.
\item If we additionally exploit the {\em structure} of the cover, we can get away without a dependence on $n$.
The idea for this is as follows: Given samples from the distribution, we learn it as if it was in Case (2) of Theorem 1.2
of my FOCS'13 paper, i.e., if it was a sum of a discrete Gaussian and a $k$-sparse random variable.
Then, we use the cover theorem for the $\poly(k/\eps)$-sparse case, and we do a hypothesis testing
between the two.
\end{itemize}
}

The main result of this subsection is Theorem~\ref{thm:learning-informal}, which we state
below in more detail for the sake of completeness.

\begin{theorem}\label{thm:FT-alg}
There is an algorithm {\tt Learn-SIIRV} that for any $\p\in \Scal_{n,k}$ and $\eps>0$,
takes $O(k\log^{2}(k/\eps)/\eps^2)$ samples from $\p$, runs in
time $\widetilde{O}(k^3/\eps^2)$ and returns a (succinct description of a) hypothesis $\h$
so that with probability at least $2/3$ we have that $\dtv(\p, \h) < \eps.$
\end{theorem}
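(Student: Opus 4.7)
The plan is to implement the Fourier-based strategy described in the techniques section: first localize the effective support, then estimate the Discrete Fourier Transform (DFT) of $\p$ only on a small set $S$ of potentially heavy frequencies, zero out everything else, and output the resulting sparse object $\wh{\h}$ as the succinct description of the hypothesis.

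First I would draw $O(1/\eps^2)$ samples to produce estimates $\wh{\mu}, \wh{\sigma}^2$ of the mean and variance of $\p$. Since the $X_i$'s are bounded in $[0,k-1]$, standard concentration (Bernstein/Chebyshev) gives $|\wh{\mu}-\mu| = O(\wh{\sigma})$ and a constant-factor estimate of $\sigma^2$ with high probability. This identifies an interval $W$ of length $L = O(\max\{\wh{\sigma}\sqrt{\log(1/\eps)}, k\log(1/\eps)\})$ centered near $\wh{\mu}$ such that $\Pr_{X\sim\p}[X\notin W] \le \eps/10$; anything outside $W$ contributes at most $\eps/10$ to the total variation distance and can be ignored. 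We now work on $W$ and use the length-$L$ DFT
\[
\wh{\p}(\xi) \;=\; \sum_{x\in W} \p(x)\, e^{-2\pi i \xi x/L}, \qquad \xi\in\{0,1,\dots,L-1\}.
\]

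Next I would invoke Lemma~\ref{FourierSupportLem} to produce an explicit set $S\subseteq\{0,\dots,L-1\}$ with $|S| = O(k^2\log(k/\eps))$ (or $|S| = O(k\log(k/\eps))$ by an additional non-explicit refinement leading to a slightly better sample bound) that contains every frequency where $|\wh{\p}(\xi)|$ is non-negligible, in the sense that $\sum_{\xi\notin S}|\wh{\p}(\xi)|^2 \le \eps^2/2$. Then I would draw a fresh batch of $m = O(k\log^2(k/\eps)/\eps^2)$ samples $X_1,\dots,X_m$ and, for each $\xi\in S$, form the empirical estimator
\[
\wh{\h}(\xi) \;=\; \frac{1}{m}\sum_{j=1}^{m} e^{-2\pi i \xi X_j/L}\,\mathbf{1}\{X_j\in W\},
\]
and set $\wh{\h}(\xi) = 0$ for $\xi\notin S$. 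The output is the list of values $\{\wh{\h}(\xi)\}_{\xi\in S}$, which implicitly defines $\h$ via inverse DFT.

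For the analysis, each summand in $\wh{\h}(\xi)$ is a bounded unit-modulus complex random variable, so Hoeffding plus a union bound over $\xi\in S$ gives $|\wh{\h}(\xi)-\wh{\p}(\xi)|^2 = O(\log|S|/m)$ simultaneously for all $\xi\in S$ with probability $\ge 2/3$. Summing over $S$ yields $\sum_{\xi\in S}|\wh{\h}(\xi)-\wh{\p}(\xi)|^2 = O(|S|\log|S|/m) \le \eps^2/2$ by our choice of $m$. Combining with the tail bound from Lemma~\ref{FourierSupportLem}, $\|\wh{\h}-\wh{\p}\|_2^2 \le \eps^2$. Parseval's identity on $\Z/L\Z$ converts this into $\|\h-\p_W\|_2^2 \le \eps^2/L$, and Cauchy--Schwarz across the window of length $L$ yields $\|\h-\p_W\|_1 \le \sqrt{L}\,\|\h-\p_W\|_2 \le \eps$, so $\dtv(\h,\p) \le \eps/2 + \eps/10 < \eps$ after absorbing constants. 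The runtime is dominated by computing $m|S|$ complex exponentials, giving $\widetilde{O}(k^3/\eps^2)$.

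The main obstacle is the Fourier sparsity step: controlling the tail $\sum_{\xi\notin S}|\wh{\p}(\xi)|^2$ requires exploiting the multiplicative structure of $\wh{\p}(\xi) = \prod_i \wh{X_i}(\xi)$ together with the fact that each $|\wh{X_i}(\xi)|\le 1$ must be bounded away from $1$ on most frequencies, since otherwise $\p$ would concentrate on a thin arithmetic progression contradicting its variance estimate. This is exactly the content of Lemma~\ref{FourierSupportLem}, and everything else in the proof is a careful but routine composition of sampling concentration, Parseval, and Cauchy--Schwarz.
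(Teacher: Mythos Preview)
Your high-level plan---localize the support, compute the empirical DFT on a small explicit frequency set $S$, zero out the rest, and finish with Parseval plus Cauchy--Schwarz---matches the paper's architecture. However, the arithmetic in your error bound on $S$ does not close, and the missing step is precisely the one that makes the paper's argument work.

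With the explicit set from Lemma~\ref{FourierSupportLem}(i) you have $|S| = \Theta(k^2\log(k/\eps))$. Each empirical Fourier coefficient has variance $O(1/m)$, so (whether by Hoeffding plus union bound or by a direct second-moment computation) you get
\[
\sum_{\xi\in S}\bigl|\wh{\h}(\xi)-\wh{\p}(\xi)\bigr|^2 \;=\; O\!\left(\frac{|S|\log|S|}{m}\right) \;=\; O\!\left(\frac{k^2\log^2(k/\eps)}{k\log^2(k/\eps)/\eps^2}\right) \;=\; O(k\eps^2),
\]
not $\eps^2/2$. You are off by a factor of $k$; as written, your algorithm would need $\Theta(k^2\log^2(k/\eps)/\eps^2)$ samples. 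The parenthetical remark about an $O(k\log(k/\eps))$ set from part (ii) of the lemma does not help directly: that set depends on the unknown $\p$ and is not available to the algorithm.

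The paper closes this gap with a \emph{thresholding} step that you have omitted. After computing $\wh{\q}(\xi)$ for $\xi\in S$, the algorithm sets $\wh{\h}(\xi)=0$ whenever $|\wh{\q}(\xi)|<R$ for a carefully chosen $R\approx \eps/\sqrt{k\log(k/\eps)}$. This adaptively prunes $S$ down to (approximately) the non-explicit set $\mathcal{L}'(T)=\{\xi:|\wh{\p}(\xi)|>T\}$ of size $O(k\log(k/\eps))$ guaranteed by Lemma~\ref{FourierSupportLem}(ii). The $L_2$ analysis then splits: on $\mathcal{L}'(T)$ the estimation error is summed over only $O(k\log(k/\eps))$ frequencies; off $\mathcal{L}'(T)$ the contribution is $\sum_{\xi}|\wh{\p}(\xi)|^2$ over frequencies where $|\wh{\p}(\xi)|\le T$, which the paper controls by a dyadic decomposition again using part (ii). Both pieces come out to $O(\eps^2)$ with the claimed sample size. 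The thresholding is not a cosmetic refinement---it is exactly what converts the $k^2$ in $|S|$ into the $k$ in the sample bound.
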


For computational purposes, our learning algorithm in this section 
uses the Discrete Fourier Transform, which we now define.

\begin{definition}
For $x \in \R$ we will denote $e(x) \eqdef  \exp(-2 \pi i x)$.
The {\em Discrete Fourier Transform (DFT) modulo $M$} of a function
$F:[n] \rightarrow \C$ is  the function $\widehat{F}:[M-1]\rightarrow \C$ defined as
$\widehat{F}(\xi)=\sum_{j=0}^n e(\xi j/M) F(j) \;,$
for integers $\xi \in [M-1]$. The DFT modulo $M$ of a distribution $\p$, $\widehat{\p}$ is the DFT modulo $M$ of its probability mass function.
The {\em inverse DFT modulo $M$} onto the range $[m,m+M-1]$ of
$\widehat{F}: [M-1] \rightarrow \C$,
is the function $F: [m, m+M-1] \cap \Z \rightarrow \C$ defined by
$F(j)= \frac{1}{M} \sum_{\xi=0}^{M-1} e(-\xi j/M) \widehat{F}(\xi) \;,$
for $j \in [m,m+M-1] \cap \mathbb{Z}$.
The $L_2$ norm of the DFT is defined as $\|\widehat{F}\|_2 = \sqrt{\frac{1}{M} \sum_{\xi=0}^{M-1} |\widehat{F}(\xi)|^2} \;.$
\end{definition}

We start by giving an intuitive explanation of our approach.
The Fourier transform $\widehat{\q}$ of the empirical distribution $\q$ provides an approximation to the Fourier transform $\widehat{\p}$ of $\p$.
In particular, if we take $N$ samples from $\p$, we expect that the empirical Fourier transform $\widehat{\q}$ has error $O(N^{-1/2})$ at each point.
This implies that the expected $L_2$ error $\|\widehat{\q} -\widehat{\p} \|_2$ is $O(N^{-1/2})$, and thus by applying the inverse Fourier transform,
would yield a distribution with $L_2$ error of $O(N^{-1/2})$ from $\p$. This guarantee may sound good, but unfortunately, the distribution $\p$
has effective support of size approximately $s \sqrt{\log(1/\eps)},$ where $s=\sqrt{\var_{X \sim \p}[X]}$, and thus the resulting distribution
will likely have $L_1$ error of $O(N^{-1/2} s^{1/2} \log^{1/4}(1/\eps))$ from $\p$. This bound is prohibitively large, especially when the standard deviation of $\p$ is large.

This obstacle can be circumvented by relying on a new structural result that we believe may be of independent interest.
{\em We show that for any $k$-SIIRV with large variance, its Fourier Transform will have small effective support.}
In particular, for any $k$-SIIRV with standard deviation $s$ and $\eps > 0$ we consider its Discrete Fourier transform modulo $M$,
and show the set of points in $[M-1]$ whose Fourier transform is bigger than $\eps$
in magnitude has size at most $O(M ks^{-1}\sqrt{\log(1/\eps)})$. By choosing $M$ to be approximately $s \sqrt{\log(1/\eps)}$, i.e.,
of the same order as the effective support of $\p$, we conclude that the effective support of $\widehat{\p}$ (modulo $M$) is $O(k \log(1/\eps))$.

If the effective support for  $\widehat{\p}$  was explicitly known, we could truncate our empirical Discrete Fourier transform $\widehat{\q}$
(modulo $M$) outside this set and reduce the $L_2$ error $\|\widehat{\q} -\widehat{\p} \|_2$ to $N^{-1/2} k^{1/2}s^{-1/2}\log^{1/4}(1/\eps)$.
This in turn would correspond to an $L_1$ error of $O(N^{-1/2}k^{1/2}\sqrt{\log(1/\eps)})$.
Unfortunately, we do not know exactly where the support of the Fourier transform is,
so we will need to approximate it by calculating the empirical DFT where the support
might be, and then simply truncating this empirical DFT whenever it is sufficiently small.
Fortunately, we do have some idea of where the support is and it is not hard to show that we can truncate at all of the appropriate points with high probability.

\bigskip

\fbox{\parbox{6.4in}{
{\bf Algorithm} {\tt Learn-SIIRV}\\
Input: sample access to a $k$-SIIRV $\p$ and $\eps>0$.\\

\vspace{-0.2cm}

 Let $C$ be a sufficiently large universal constant.

\vspace{-0.2cm}

\begin{enumerate}

\item
Draw $O(1)$ samples from $\p$
and with confidence probability $19/20$ compute: (a) $\widetilde{\sigma}^2$, a factor $2$ approximation to $\var_{X \sim \p}[X]+1$,
and (b) $\widetilde{\mu}$, an approximation to $\E_{X \sim \p}[X]$ to within one standard deviation.

\vspace{-0.2cm}

\item Take $N=C^3 k/\eps^2 \ln^{2}(k/\eps)$ samples from $\p$ to get an empirical distribution $\q$.

\vspace{-0.2cm}

\item If $\widetilde{\sigma}  \le 4 k \ln(4/\eps)$, then output $\q$. Otherwise, proceed to next step.

\vspace{-0.2cm}

\item Set $M \eqdef 1+2\lceil 6\widetilde{\sigma}\sqrt{\ln(4/\eps)}) \rceil$. Let
\vspace{-0.2cm}
$$S \eqdef \left\{ \xi \in [M-1] \mid \exists a, b \in \Z, 0 \leq a \le b < k \textrm{ such that }
|\xi/M - a/b| \le  O(\log(k/\eps)/M)  \right\} \;.$$
For each $\xi \in S$, compute the DFT modulo $M$ of $\q$ at $\xi$, $\widehat{\q}(\xi)$.


\item Compute $\widehat{\h}$ which is defined  as $\widehat{\h}(\xi) = \widehat{\q}(\xi)$ if $\xi \in S$ and $|\widehat{\q}(\xi)| \geq R:=2 C^{-1} \eps/ \sqrt{k\ln(k/\epsilon)}$, and $\widehat{\h}(\xi) = 0$ otherwise.

\item Output $\widehat{\h}$ which is a succinct representation of $\h$,
the inverse DFT of $\widehat{\h}$ modulo $M$ onto the range  $[\lfloor \widetilde{\mu} \rfloor-(M-1)/2,\lfloor \widetilde{\mu} \rfloor+(M-1)/2]$.

\end{enumerate}
}}

\bigskip

The bulk of our analysis will depend on showing that the Fourier transform of $\p$ has appropriately small effective support.
To do this we need the following lemma:


\begin{lemma}\label{FourierSupportLem}
Let $\p \in \mathcal{S}_{n, k}$ with $\sqrt{\Var_{X \sim \p}[X]} = s$,
$1/2>\delta>0$, and $M \in \Z_+$ with $M>s$.
Let $\widehat{\p}$ be the discrete Fourier transform of $\p$ modulo $M$. Then, we have
\begin{itemize}
\item[(i)] Let $\mathcal{L} = \mathcal{L}(\delta, M,s) \eqdef \left\{ \xi \in [M-1] \mid \exists a, b \in \Z, 0 \leq a \le b < k \textrm{ such that }
|\xi/M - a/b| <  \frac{\sqrt{\ln (1/\delta)}}{2s}  \right\} \;.$ Then, $|\widehat{\p}(\xi)| \leq \delta$ for all $\xi \in [M-1] \setminus \mathcal{L}.$
That is, $|\widehat{\p}(\xi)| > \delta$ for
at most $|\mathcal{L}| \leq M k^2 s^{-1} \sqrt{\log(1/\delta)}$ values of $\xi$ .
\item[(ii)] At most $4Mks^{-1}\sqrt{\log(1/\delta)}$ many integers $0 \leq \xi \leq M-1$ have  $|\widehat{\p}(\xi)| > \delta \;.$
\end{itemize}
\end{lemma}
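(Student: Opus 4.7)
The proof rests on the Fourier-analytic identity $\widehat{\p}(\xi) = \prod_{i=1}^n \E[e(\xi X_i/M)]$ afforded by the independence of the summands. Setting $t=\xi/M$, the plan for part (i) is to show that for $\xi\notin\mathcal{L}$ each factor $\E[e(tX_i)]$ is quantitatively bounded away from $1$ in modulus, and then to combine these bounds multiplicatively via
\begin{equation*}
-\log|\widehat{\p}(\xi)|^2 \;=\; -\sum_i \log|\E[e(tX_i)]|^2 \;\geq\; \sum_i \bigl(1-|\E[e(tX_i)]|^2\bigr),
\end{equation*}
using $-\log y \geq 1-y$ on $(0,1]$.

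The first step is symmetrization: introducing an independent copy $X_i'$ of $X_i$ gives $|\E[e(tX_i)]|^2 = \E[\cos(2\pi t(X_i-X_i'))]$, so
\begin{equation*}
1-|\E[e(tX_i)]|^2 \;=\; \E\bigl[1-\cos(2\pi t(X_i-X_i'))\bigr] \;\geq\; 8\,\E\bigl[\|t(X_i-X_i')\|^2\bigr],
\end{equation*}
where $\|\cdot\|$ denotes distance to the nearest integer and I use the elementary bound $1-\cos(2\pi x)\geq 8\|x\|^2$.

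The combinatorial core of the argument is a Dirichlet-style lemma: for $\xi\notin\mathcal{L}$ and every $d\in\{1,\ldots,k-1\}$, $\|td\|\geq d\epsilon_0$ with $\epsilon_0=\sqrt{\ln(1/\delta)}/(2s)$. Indeed, $\|td\|<d\epsilon_0$ would furnish an integer $m$ with $|t-m/d|<\epsilon_0$; since $t\in[0,1)$ and (the only nontrivial regime is) $\epsilon_0<1/k$, necessarily $m\in\{0,\ldots,d\}$, and reducing $m/d$ to lowest terms $a/b$ produces a pair with $0\leq a\leq b\leq d<k$ and $|\xi/M-a/b|<\epsilon_0$, forcing $\xi\in\mathcal{L}$, a contradiction. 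Consequently $\E[\|t(X_i-X_i')\|^2]\geq\epsilon_0^2\,\E[(X_i-X_i')^2]=2\epsilon_0^2\,\Var[X_i]$, and summing over $i$ yields $-\log|\widehat{\p}(\xi)|^2\geq 16\epsilon_0^2 s^2=4\log(1/\delta)$, so $|\widehat{\p}(\xi)|\leq\delta^2\leq\delta$. The size bound in part (i) then follows from a union bound: each of the at most $k^2/2$ admissible pairs $(a,b)$ contributes an interval containing at most $2M\epsilon_0$ integer $\xi$'s, giving $|\mathcal{L}|\leq Mk^2\sqrt{\log(1/\delta)}/s$.

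Part (ii) calls for a sharper analysis, since merely counting contributing intervals in $\mathcal{L}$ yields $k^2$ rather than the promised $k$. I expect the improvement to come from a refined per-component bound near a reduced fraction $a/b$: writing $t=a/b+u$ and splitting each $X_i-X_i'$ according to divisibility by $b$ should give
\begin{equation*}
1-|\E[e(tX_i)]|^2 \;\geq\; 8u^2\,\E\bigl[(X_i-X_i')^2\,\mathbf{1}\{b\mid X_i-X_i'\}\bigr] + \tfrac{2}{b^2}\,\Pr[b\nmid X_i-X_i'],
\end{equation*}
valid for $|u|$ smaller than a $b$-dependent threshold. Writing $V_b$ and $P_b$ for the two aggregate quantities, the condition $|\widehat{\p}(\xi)|>\delta$ squeezes $|u|$ into a window of width $O(\sqrt{\log(1/\delta)/V_b})$ around $a/b$, shrinking the per-fraction bad interval. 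Combining with the conservation $V_b + \sum_i \E[(X_i-X_i')^2\,\mathbf{1}\{b\nmid X_i-X_i'\}]=2s^2$ and summing carefully over reduced $a/b$ with $b<k$ should deliver $4Mk\sqrt{\log(1/\delta)}/s$. The main obstacle I anticipate is orchestrating this trade-off precisely: the factor-$k$ savings has to emerge from the tension between $V_b$ (which shrinks the window) and $\phi(b)$ (which grows with $b$), so the crux is showing that $\sum_b \phi(b)/\sqrt{V_b}$ remains $O(k/s)$ despite $\phi(b)$ being as large as $b-1$.
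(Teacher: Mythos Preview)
Your treatment of part (i) is correct and coincides with the paper's argument: symmetrize to $|\widehat{\p_i}(\xi)|^2=\E[\cos(2\pi t(X_i-X_i'))]$, use $1-\cos(2\pi x)\ge 8\|x\|^2$, pull out the worst ratio $\min_j(\|tj\|/j)$ against the variance identity $2s^2=\sum_{i,j}p_{ij}j^2$, and then count $\mathcal{L}$ by a union over $(a,b)$. (The paper organizes the count by summing over $j$ rather than over reduced fractions, but the two are equivalent.)

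For part (ii), however, your proposed per-fraction refinement has a genuine gap, and the obstacle you anticipate is real. The conservation law you invoke only yields $P_b\ge (2s^2-V_b)/(k-1)^2$, so the $P_b$ cutoff is too permissive by a factor of $k^2$ to compensate for small $V_b$. Concretely, take $v_j\asymp s^2/(kj^2)$ so every $j$ contributes equally to the variance; then for $b$ near $k/2$ one has $V_b\asymp s^2/k$, $P_b\asymp s^2/k$, and the window at each of the $\phi(b)\asymp k$ fractions $a/b$ has width $\asymp\sqrt{k\log(1/\delta)}/s$. Even a single such $b$ contributes $\asymp k^{3/2}\sqrt{\log(1/\delta)}/s$ to the fraction-sum, already exceeding the target $4k\sqrt{\log(1/\delta)}/s$. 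Summing over $b$ makes it worse. The difficulty is structural: bounding the window at $a/b$ via only the divisible-by-$b$ part of the variance discards the decay contributed by all other $j$'s, and those are precisely what force the windows at neighboring Farey fractions to overlap rather than accumulate.

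The paper's proof of (ii) avoids any per-fraction accounting. Pick $\xi$ uniformly in $[M-1]$, and for each $j\in\{1,\dots,k-1\}$ let $N_j$ be the indicator of $\bigl\{[\xi j/M]<\tfrac{k}{2s}\sqrt{\ln(1/\delta)}\bigr\}$; a direct count (your Claim for part (i)) gives $\E[N_j]\le 2ks^{-1}\sqrt{\ln(1/\delta)}$ \emph{uniformly in $j$}. Writing $\theta=\sum_{i,j}p_{ij}\ge 2s^2/k^2$, linearity gives $\E\bigl[\sum_{i,j}p_{ij}N_j\bigr]\le 2\theta ks^{-1}\sqrt{\ln(1/\delta)}$, so by Markov's inequality $\sum_{i,j}p_{ij}N_j\le\theta/2$ except with probability at most $4ks^{-1}\sqrt{\ln(1/\delta)}$. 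On that event, $\sum_{i,j}p_{ij}[\xi j/M]^2\ge(\theta/2)\cdot\tfrac{k^2}{4s^2}\ln(1/\delta)\ge\tfrac14\ln(1/\delta)$, and your inequality from part (i) forces $|\widehat{\p}(\xi)|\le\delta$. The single Markov step on the \emph{$p_{ij}$-weighted} indicator sum is what automatically balances the contributions of different $j$; this replaces the delicate fraction-by-fraction bookkeeping you were attempting.
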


Before we proceed with the proof of the lemma some comments are in order.
Statement (i) of the lemma exhibits an explicit set $\mathcal{L}$ of cardinality $O(M k^2 s^{-1} \sqrt{\log(1/\delta)})$ that contains all the points $\xi \in [M-1]$
such that $|\widehat{\p}(\xi)| > \delta.$ Note that the set $\mathcal{L}$ can be efficiently computed  from $M$, $\delta$, $s$,
and does not otherwise depend on the particular $k$-SIIRV $\p$.
Statement (ii) of the lemma shows that the effective support $\mathcal{L}' = \mathcal{L}' (\delta) = \{\xi \in [M-1] \mid  |\widehat{\p}(\xi)| > \delta \}$
is in fact significantly smaller than $\mathcal{L}$, namely  $|\mathcal{L}'| = O(Mks^{-1}\sqrt{\log(1/\delta)})$.
This part of the lemma is non-constructive in the sense that it does not provide an explicit description for $\mathcal{L}'$
(beyond the fact that $\mathcal{L}' \subseteq \mathcal{L}$). The upper bound on the size of the effective support is the basis for the analysis of our algorithm.

\begin{proof}[Proof of Lemma~\ref{FourierSupportLem}]
Since  $\p \in \mathcal{S}_{n, k}$, for $X \sim \p$, we have $X=\sum_{i=1}^n X_i$
where each $X_i \sim \p_i$ for a $k$-IRV $\p_i$.
Let $Y_i=X_i-X'_i$ be the difference of two independent copies of $X_i$.
Let $p_{ij} = \Pr \left[|Y_i|=j \right] .$
Note that $Y_i$ is a symmetric random variable. Consider its DFT modulo $M$ which we will write as $\widehat{Y_i}$. We have the following sequence of (in)equalities:
\begin{eqnarray*}
|\widehat{\p_i}(\xi)|^2 = \widehat{\p_i}(\xi) \widehat{\p_i}(-\xi)  &=& \widehat{Y_i}(\xi)\\
					& =&  \sum_{j=0}^{k-1} p_{ij}\cos\left(\frac{2\pi\xi j}{M} \right)
					 = 1- \sum_{j=1}^{k-1} p_{ij}\left(1-  \cos\left(\frac{2\pi\xi j}{M} \right)  \right) \\
					& \leq&  1 - 8 \sum_{j=1}^{k-1} p_{ij}[\xi j/M]^2 \leq  \exp \left(- 8 \sum_{j=1}^{k-1} p_{ij}[\xi j/M]^2 \right) \;,
\end{eqnarray*}
where $[x]$, $x\in \R$, denotes the distance between $x$ and its nearest integer.
For the last two inequalities, we used that $\cos 2 \pi x \leq 1 - 8x^2$ when $|x| \leq 1/2$, and $e^{-x} \geq 1-x$ when $x \geq 0$.

Therefore, we have that
$|\widehat{\p}(\xi)|^2  = \prod_{i=1}^n |\widehat{\p_i}(\xi)|^2 \leq \exp(- 8 \sum_{i=1}^n\sum_{j=1}^{k-1} p_{ij}[\xi j/M]^2)$.
Taking square roots, we obtain
\begin{equation} \label{ineq:this}
|\widehat{\p}(\xi)| \leq \exp\Big(- 4 \sum_{i=1}^n\sum_{j=1}^{k-1} p_{ij}[\xi j/M]^2 \Big).
\end{equation}
Note that we can relate the variance of $\p$ to the $p_{ij}$'s as follows:
\begin{equation} \label{eq:varj}
s^2 = \var[X]  = \sum_{i=1}^n \var[X_i] = \frac{1}{2}\sum_{i=1}^n\E[Y_i^2]
= \frac{1}{2}\sum_{i=1}^n\sum_{j=1}^{k-1} p_{ij}j^2 \; .
\end{equation}
Using (\ref{ineq:this}), we get
$$ |\widehat{\p}(\xi)| \leq \exp\left(-8 s^2 \left( \min_j \big( \frac{[\xi j/M]}{j} \big) ^2 \right) \right) .$$
To complete the proof of (i), we will need a simple counting argument given in the following claim:
\begin{claim}\label{clm:lets-count}
For $a \in \R_+$ $j \in \Z_+$, there are at most $2Maj+j$ integers $0 \leq \xi \leq M-1$
with the following property:
there exists $c \in \Z$ with $0 \leq c \leq j$ such that $|\xi/M - c/j| < a$.
Therefore, there are at most $2Ma+j$ integers $0 \leq \xi \leq M-1$ with $[\xi j/M] < a$.
\end{claim}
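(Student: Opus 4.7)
My plan is to prove the first bound $2Maj+j$ by a direct union bound over the $j+1$ admissible values of $c$, with careful bookkeeping at the two boundary values, and then to derive the second bound $2Ma+j$ as an immediate corollary by rescaling. The argument is elementary integer-counting; the only subtlety is avoiding the loss of an additive $O(aM)$ term that a naive union bound across all $j+1$ values of $c$ would incur.

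For the first bound, I will fix $c \in \{0,1,\ldots,j\}$ and rewrite the condition $|\xi/M - c/j| < a$ as $|\xi - cM/j| < aM$, which confines $\xi$ to an open real interval of length $2aM$ centered at $cM/j$. For an interior choice $c \in \{1,\ldots,j-1\}$, such an interval contains at most $\lfloor 2aM \rfloor + 1 \le 2aM + 1$ integers. For the two boundary values $c=0$ and $c=j$, the intervals $(-aM,aM)$ and $(M-aM,M+aM)$ each lie half outside the range $[0,M-1]$; a direct count shows that they contribute at most $\lceil aM \rceil$ and $\lceil aM \rceil - 1$ integers of $\{0,1,\ldots,M-1\}$, respectively, for a combined contribution of at most $2\lceil aM \rceil - 1 \le 2aM + 1$, no more than a single interior value of $c$. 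Summing the $j-1$ interior contributions together with this combined boundary term yields at most $j \cdot (2aM+1) = 2Maj + j$ admissible integers $\xi$, which is the desired bound.

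For the second bound, I will reduce to the first. If $[\xi j/M] < a$, then by definition there is an integer $c$ with $|\xi j/M - c| < a$; dividing by $j$ gives $|\xi/M - c/j| < a/j$. Since $\xi j/M$ lies in $[0,j)$ as $\xi$ ranges over $\{0,1,\ldots,M-1\}$, this integer $c$ must lie in $\{0,1,\ldots,j\}$. Hence every such $\xi$ satisfies the property of the first bound with $a$ replaced by $a/j$, and invoking that bound immediately yields at most $2M(a/j) j + j = 2Ma + j$ such $\xi$, as claimed. The hardest part, such as it is, is purely the boundary bookkeeping in the first step: without it one would only obtain the weaker $(j+1)(2aM+1) = 2Maj + 2aM + j + 1$, which in turn would inflate the second bound by an additive $2a + 1$; the argument above shows that the two truncated boundary intervals precisely combine into one interior interval's worth of integers, which absorbs this slack.
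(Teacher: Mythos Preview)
Your proof is correct and follows essentially the same approach as the paper: both count, for each $c\in\{0,\dots,j\}$, the integers $\xi$ in the open interval $|\xi - cM/j| < aM$, observe that the two boundary values $c=0$ and $c=j$ are half-truncated by the range $\{0,\dots,M-1\}$, and then reduce the second statement to the first via the substitution $a\mapsto a/j$. Your boundary bookkeeping (combining the $c=0$ and $c=j$ contributions into $2\lceil aM\rceil - 1 \le 2aM+1$) is in fact slightly tighter than the paper's, yielding exactly $2Maj+j$ rather than $2Maj+j+1$; the one implicit assumption in your second step---that the witnessing integer $c$ lies in $\{0,\dots,j\}$---holds because you may take $c$ to be the nearest integer to $\xi j/M\in[0,j)$, and the case $a>1/2$ is trivial since then $2Ma+j\ge M$.
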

\begin{proof}
For each $c$ satisfying $1 \leq c \leq j-1$
there are either $\lfloor 2Ma \rfloor$ or $\lfloor 2Ma \rfloor + 1$  integers $0 \leq \xi \leq M-1$ with $|\frac{\xi}{M} -\frac{c}{j}| < a$.
For $c=0$ and $c=j$ there are either $\lfloor Ma \rfloor$ or $\lfloor Ma \rfloor + 1$ integers with $|\frac{\xi}{M} -\frac{c}{j}| < a$.
Finally, note that  $|\frac{\xi}{M} -\frac{c}{j}| < a$ for some $1 \leq c \leq j-1$ if and only if $[j\xi/M] < aj$.
\end{proof}

An application of the above claim for $a = (1/2s)\sqrt{\ln(1/\delta)}$ implies that there are at most
$$\sum_{j=1}^{k-1}  2 M  j s^{-1} \sqrt{\ln(1/\delta)}/2 +j \leq M  k^2 s^{-1} \sqrt{\ln(1/\delta)}+k^2 \leq 2 M k^{2} s^{-1} \sqrt{\ln(1/\delta)}$$
integers $0 \leq \xi \leq M-1$  with $\min_j \big( \frac{[\xi j/M]}{j} \big) ^2 <  \ln (1/\delta)/(4s^2)$.
For all other integers we have $|\widehat{\p}(\xi)| \leq \delta \;,$
which completes the proof of (i).

To prove (ii) we proceed by the probabilistic method as follows:
Consider evaluating the RHS of (\ref{ineq:this}) with $\xi$ being an integer random variable uniformly distributed in $[M-1]$.
For $1 \leq j \leq k-1$, let $N_j$ be the indicator random variable for the event that $[\xi j/M] < k s^{-1} \sqrt{\ln(1/\delta)}/2$.
Observe that by Claim~\ref{clm:lets-count} it follows that $\E[N_j] \leq 2 k s^{-1} \sqrt{\ln(1/\delta)}$.

Note that $[\xi j/M] \geq \sqrt{1-N_j} \cdot k s^{-1} \sqrt{\ln(1/\delta)}/2$. Plugging this into (\ref{ineq:this}) gives
$$
|\widehat{\p}(\xi)|  \leq \exp\left( -\frac{k^2}{s^2} \ln(1/\delta) \sum_{i=1}^n\sum_{j=1}^{k-1} p_{ij}(1-N_j)\right).
$$
Since  $s^2 = \frac{1}{2} \sum_{i=1}^n\sum_{j=1}^{k-1} p_{ij}j^2 \leq \frac{k^2}{2}\sum_{i=1}^n\sum_{j=1}^{k-1} p_{ij}$,
it follows that
$\theta:= \sum_{i=1}^n\sum_{j=1}^{k-1} p_{ij} \geq 2s^2/k^2.$
Therefore,
$$
\E\left[\sum_{i=1}^n\sum_{j=1}^{k-1} p_{ij}N_j \right] \leq \theta \cdot 2 k s^{-1} \sqrt{\ln(1/\delta)}.
$$
By Markov's inequality, except with probability $4k s^{-1} \sqrt{\ln(1/\delta)}$, we have that
$\sum_{i=1}^n\sum_{j=1}^{k-1} p_{ij}N_j \leq \frac{\theta}{2}.$
In this event, we have $\sum_{i=1}^n\sum_{j=1}^{k-1} p_{ij}(1-N_j) \geq \frac{\theta}{2}$ and hence
\begin{eqnarray*}
|\widehat{\p}(\xi)| \leq \exp\left( -\frac{k^2}{s^2} \ln(1/\delta) \sum_{i=1}^n\sum_{j=1}^{k-1} p_{ij}(1-N_j)\right)
 \leq \exp\left( -\frac{k^2}{s^2} \ln(1/\delta) \frac{\theta}{2}\right) \leq \delta.
\end{eqnarray*}
Since $\xi$ is uniformly distributed on $[M-1]$, it follows that $|\widehat{\p}(\xi)| > \delta$
for at most $4Mks^{-1} \sqrt{\ln(1/\delta)}$ integers $\xi$ in $[M-1]$.
This completes the proof of (ii).
\end{proof}

\noindent We are now ready to prove Theorem~\ref{thm:FT-alg}.

\begin{proof}[Proof of Theorem~\ref{thm:FT-alg}]
Note that it is straightforward to verify the sample complexity bound.
The running time of the algorithm is dominated by computing the DFT $\widehat{\q}$.
Since the support of $\q$ is at  most $N$,
for each $\xi \in S$, we sum at most $N$ terms to calculate $\widehat{\q}(\xi)$. Therefore,
the overall running time is $O(N \cdot |S|) = O(k\log^2(k/\eps)/\eps^2 \cdot k^2 \log(k/\eps))=O(k^3 \log^3(k/\eps) / \eps^2)$ as claimed.

To show correctness, we will prove that the expected squared $L_2$ norm between $\widehat{\h}$ and $\widehat{\p}$ is small, i.e., that
$\|\widehat{\h}-\widehat{\p}\|_2^2 = (1/M) \cdot \sum_{\xi=0}^{M-1} |\widehat{\h}(\xi)-\widehat{\p}(\xi)|^2$
has small expected value.

It is easy to see that, after drawing a constant number of samples,
the quantities  $\widetilde{\mu}$ and $\widetilde{\sigma}$ can be estimated
to satisfy the required conditions with probability at least $19/20$.
(This follows for example by Lemma 6 of \cite{DDS12stoc} with $\eps=1/2.$)
We will henceforth condition on this event.

If $\widetilde{\sigma} \leq 4 k \ln(4/\eps)$, then $s \leq 2 k \ln(4/\eps)+1$,
and Bernstein's inequality implies that $X \sim \p$ is within $O(k \log(1/\eps))$ of the mean
with probability $1-\eps/2$. In this case, $O(k \log (1/\eps)/\eps^2) \leq N$ samples
are sufficient to give that $\dtv(\p,\q) \leq \eps$ with probability $2/3$.
(This follows from the fact that any distribution over support of size $L$
can be learned with $O(L/\eps^2)$ samples to total variation distance $\eps$.)
We henceforth assume that we have $|\mu-\widetilde{\mu}| \leq s$,
$s \geq \widetilde{\sigma}/2 \geq 2 k \ln(4/\eps)$ and $\widetilde{\sigma} \leq 2s$.

Since $M = 1+2\lceil 6\widetilde{\sigma}\sqrt{\ln(4/\eps)}) \rceil$, a random variable $X \sim \p$ lies in
$[\lfloor \widetilde{\mu} \rfloor-(M-1)/2, \lfloor \widetilde{\mu} \rfloor+(M-1)/2]$ with probability at least $1- \frac{\eps}{2}$.
Indeed, an application of
Bernstein's inequality for $X$ yields that
$$\Pr(X > \mu+t) \leq \exp\left(-\frac{t^2}{2s^2 + \frac{2}{3}kt}\right) \;,$$
where $\mu$ is the mean of $\p$, for any $t>0$.
For $t=2s \sqrt{ \ln(4/\eps)}$, we have $t^2=(\ln(4/\eps))4s^2$ and
$2s^2 + \frac{2}{3}kt = 2s^2+\frac{4}{3}ks \sqrt{\ln(4/\eps)} \leq \frac{8}{3}s^2 \leq 4s^2$.
Thus, $\Pr(X > \mu+t) \leq \eps/4$. Similarly, it holds $\Pr(X < \mu-t) \leq \eps/4.$
Now note that $ \lfloor\widetilde{\mu} \rfloor+(M-1)/2 \geq (\mu-s)+\lceil 3s\sqrt{\ln(4/\eps)}) \rceil \geq \mu + t$
and $\lfloor \widetilde{\mu} \rfloor - (M-1)/2 \leq \mu-t$.
Hence, $X$ is in $[\lfloor \widetilde{\mu} \rfloor-(M-1)/2, \lfloor \widetilde{\mu} \rfloor+(M-1)/2]$ with probability at least $1- \eps/2$ as desired.

Fix $T=R/2=C^{-1} \eps/(\sqrt{k \ln(k/\eps))}$. We analyze separately the contribution to the squared $L_2$ norm coming
from $\xi$ with $|\widehat{\p}(\xi)|>T$ and with $|\widehat{\p}(\xi)|\leq T$.
Let us denote $\mathcal{L'}(T) = \{ \xi \in [M-1] \mid |\widehat{\p}(\xi)|>T  \}$.
First consider
$$
(1/M) \cdot \sum_{\xi \in \overline{\mathcal{L'}}(T)} |\widehat{\h}(\xi)-\widehat{\p}(\xi)|^2.
$$
We first claim that with high probability $\widehat{\h}(\xi)=0$ for all $\xi \in \overline{\mathcal{L'}}(T)$.
This happens automatically when $\xi \not \in S$, where the $S$ is defined in the algorithm description.
Note that $|S| = O(k^2 \log(k/\eps))$.
For $\xi \in S \setminus \mathcal{L'}(T)$, we note that $\widehat{\q}(\xi)$ is an average of $N$
i.i.d. numbers each of absolute value $1$ and mean $\widehat{\p}(\xi)$ (which has absolute value less than $1$).
Note that if $|\widehat{\q}(\xi) -\widehat{\p}(\xi)| \geq R-T$,
then either the real or the imaginary part is at least $(R-T)/\sqrt{2}$.
By a Chernoff bound, the probability that for a given
$\xi \in S \setminus \mathcal{L'}(T)$,  $\Re (\widehat{\q}(\xi) -\widehat{\p}(\xi)) \geq (R-T)/\sqrt{2}$
is at most $2\exp(-N(R-T)^2/4)$.
The same is true of the imaginary part so by a union bound the probability
that $|\widehat{\q}(\xi) -\widehat{\p}(\xi)| \geq R-T$ is at most $4\exp(-N(R-T)^2/4)$.
Again by a union bound we get that the probability that any
$\xi \in S \setminus \mathcal{L'}(T)$ has $|\widehat{\q}(\xi) -\widehat{\p}(\xi)| \geq R-T$
is at most $O(k^2 \log(k/\eps) \exp(-N(R-T)^2/4))=O(k^2 \log(k/\eps) \exp(-C \ln(k/\eps))) = O(\eps^{C-1})$.
Hence, except with probability $O(\eps^{C-1})$, for all $\xi$ in $S\setminus \mathcal{L'}(T)$
we have $|\widehat{\q}(\xi) -\widehat{\p}(\xi)| < R-T$ and so $|\widehat{\q}(\xi)| \leq R$.
In fact, the total expected contribution to the squared $L_2$ norm coming from cases
when $\widehat{\h}(\xi)$ is not identically $0$ on all such $\xi$ is also $O(\eps^{C-1})$.
Therefore, up to negligible error, the squared $L_2$ error coming from this range is at most
$$
\sum_{r\geq 0} (T 2^{-r})^2 \left(\frac{\#\{\xi:|\widehat{\p}(\xi)|>T 2^{-r-1} \}}{M}\right).
$$
Applying Lemma \ref{FourierSupportLem} (ii) with $\delta := T 2^{-r-1}$ for each $r \geq 0$, this is at most
\begin{align*}
\sum_{r\geq 0} (T 2^{-r})^2 \left(\frac{\#\{\xi:|\widehat{\p}(\xi)|>T 2^{-r-1} \}}{M}\right) & \leq \sum_{r\geq 0} T^2 4^{-r} 4k s^{-1}\sqrt{\log(2^r/T)}\\
& \leq 8 T^2 k s^{-1} \sqrt{\log(1/T)}.
\end{align*}
We now consider the remaining contribution
$$
(1/M) \cdot \sum_{\xi \in \mathcal{L'}(T)} |\widehat{\h}(\xi)-\widehat{\p}(\xi)|^2.
$$
By Lemma \ref{FourierSupportLem} (i) applied with $\delta :=T,$
it follows that $\mathcal{L'}(T) \subseteq \mathcal{L}(T,M,s).$
{Since $\sqrt{\ln(1/T)}/2s = O(\log(k/\eps)/M)$, we can choose the constant in the definition of $S$ so that $ \mathcal{L}(T,M,s) \subseteq S$}.
So, for $\xi \in \mathcal{L'}(T)$, we do compute $\widehat{\q}(\xi)$ and then either $\widehat{\h}(\xi)=\widehat{\q}(\xi)$ or $|\widehat{\q}(\xi)|<R$ and $\widehat{\h}(\xi)=0$. In either case, we have that $|\widehat{\h}(\xi)-\widehat{\q}(\xi)|<R$.   Recall that the expected size of $|\widehat{\q}(\xi)-\widehat{\p}(\xi)|^2$ is $1/N$ for any $\xi \in [M-1]$. So, for $\xi \in \mathcal{L'}(T)$, the expected squared error at $\xi$ satisfies $\E[|\widehat{\h}(\xi) - \widehat{\p}(\xi)|^2] \leq 2(R^2+N^{-1})$.

By Lemma \ref{FourierSupportLem} (ii) applied with $\delta :=T,$
we have $|\mathcal{L'}(T)| \leq 4 k s^{-1} \sqrt{\ln(1/T)}$.
So, the expected size of the $L_2^2$ error on $\mathcal{L'}(T)$ has
$$
\E[(1/M) \cdot \sum_{\xi \in \mathcal{L'}(T)} |\widehat{\h}(\xi)-\widehat{\p}(\xi)|^2] \leq 4(R^2 + N^{-1})(2 k s^{-1} \sqrt{\ln(1/T)}) \;.
$$
Combining the above results, we find that the expected $L_2^2$  error between $\widehat{\h}$ and $\widehat{\p}$ is at most
$$
4(R^2 + N^{-1}+T^2)(2k s^{-1} \sqrt{\log(1/T)}) = O(C^{-1}s^{-1}\epsilon^2/\sqrt{\log(k/\epsilon)}).
$$
Therefore, if $C$ is sufficiently large, Markov's inequality yields that,
with probability $\frac{2}{3}$, we have $\|\widehat{\h}-\widehat{\p}\|_2^2 < \epsilon^2/M.$

At this point, we would like to use Plancherel's theorem followed by Cauchy-Schwartz to complete the proof.
Formally, since $\p$ may be supported outside $[\lfloor \widetilde{\mu} \rfloor -(M-1)/2, \lfloor \widetilde{\mu} \rfloor +(M-1)/2]$,
we cannot use Plancherel's theorem directly to show that  $\|\h-\p\|_2 =\|\widehat{\h}-\widehat{\p}\|_2$.
Instead, consider the function $\p' : [\lfloor\widetilde{\mu}\rfloor-(M-1)/2, \lfloor \widetilde{\mu} \rfloor+(M-1)/2] \cap \Z \rightarrow [0,1]$ defined as
$\p'(i) = \sum_{j \equiv i \pmod M} \p(j)$ for $\lfloor \widetilde{\mu} \rfloor-(M-1)/2 \leq i \leq \lfloor \widetilde{\mu} \rfloor+(M-1)/2$.
 Note that $\widehat{\p'}=\widehat{\p}$ by the definition of the DFT modulo $M$,
 since $e(\xi j/M) =e(\xi i/M)$ when $j \equiv i \pmod  M$ for all $\xi \in [M-1]$ and $i,j \in [n]$.
Thus, $\|\widehat{\h}-\widehat{\p'}\|_2^2 < \eps^2/M$ and Plancherel's theorem gives $\|\h-\p'\|_2 = \|\widehat{\h}-\widehat{\p'}\|_2 < \eps/\sqrt{M}$.
Since $\p'$ has support at most $M$,  an application of Cauchy-Schwartz gives $\|\h-\p'\|_1 \leq \|\h-\p'\|_2\sqrt{M} < \eps.$

Since $X \sim \p$ is in $[\lfloor \widetilde{\mu} \rfloor -(M-1)/2, \lfloor \widetilde{\mu} \rfloor +(M-1)/2]$ with probability at least $1-\eps/2$, we have $\|\p-\p'\|_1 \leq \eps$ and so $\|\p-\h\|_1 \leq \|\p-\p'\|_1 + \|\h-\p'\|_1 \leq 2\eps$.
Since $\widehat{\h}(0)=\widehat{\q}(0)=1$, it follows that $\sum_{i=0}^n \h(i) = 1$. Also, by symmetry, all the $\h(i)$'s are real.
This completes the proof of Theorem~\ref{thm:FT-alg}.
\end{proof}

\subsection{A General Fourier Learning Algorithm} \label{ssec:fourier-algo-gen}

The algorithmic approach of the previous subsection is not specialized to $k$-SIIRVs, but is applicable more generally.
In essence, the approach really only depended upon two facts:
\begin{itemize}
\item $\p$ is effectively supported on a small set $T.$
\item $\wh{\p}$ is effectively supported on a small set $S.$
\end{itemize}
It turns out that by using similar ideas, we can learn {\em any}
probability distribution with these properties. The following simple theorem
provides a generalization for integer-valued random variables. However, the
approach can also be generalized to higher dimensions and to continuous distributions.

\begin{theorem} \label{thm:learn-generic}
Let $\p$ be an integer-valued random variable and $\eps>0$.
Let $T\subset \Z$ and $S\subset \R/\Z$ be known subsets so that:
$$
\sum_{n\in\Z \setminus T} \p(n) \leq \eps/3, \textrm{ and } \int_{\xi\in(\R/\Z) \setminus S}|\wh{\p}(\xi)|^2d\xi < \epsilon^2/(9{|T|}).
$$
Then, there exists an algorithm which learns $\p$ to total variational distance $\eps$
using $N=O(|T|\mu(S)/\eps^2)$ samples, {where $\mu(S)$ is the Lebesgue measure of $S.$}
\end{theorem}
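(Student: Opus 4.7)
\medskip

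\noindent\textbf{Proof plan.} My plan is to follow the same three-step template used for $k$-SIIRVs in the previous subsection, but in a black-box way that only exploits the two sparsity hypotheses. First I would specify the algorithm: draw $N=O(|T|\mu(S)/\eps^2)$ i.i.d.\ samples $X_1,\ldots,X_N$ from $\p$, form the empirical characteristic function $\wh{\q}(\xi)=\tfrac{1}{N}\sum_{j=1}^N e(\xi X_j)$ on $\R/\Z$, define the hypothesis Fourier transform by truncation to $S$, i.e.
$$
\wh{\h}(\xi)=\wh{\q}(\xi)\cdot\indic{\xi\in S},
$$
and take the inverse Fourier transform on $\R/\Z$ to obtain a function $\h:\Z\to\R$. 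The final output would be the restriction $\h'=\h\cdot\indic{T}$ (truncated and, if desired, renormalized on $T$).

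Next, I would bound the $L^2$ error of $\wh{\h}$ on the circle. Splitting the integral into $S$ and its complement,
$$
\|\wh{\h}-\wh{\p}\|_2^2=\int_S|\wh{\q}(\xi)-\wh{\p}(\xi)|^2 d\xi+\int_{(\R/\Z)\setminus S}|\wh{\p}(\xi)|^2 d\xi.
$$
The second term is $<\eps^2/(9|T|)$ by hypothesis. For the first term, note that for every fixed $\xi$ the random variable $e(\xi X_j)$ is bounded by $1$ in modulus and has mean $\wh{\p}(\xi)$, so $\E|\wh{\q}(\xi)-\wh{\p}(\xi)|^2\leq 1/N$. Fubini then gives $\E\!\int_S|\wh{\q}-\wh{\p}|^2 d\xi\leq\mu(S)/N$, and with the stated choice of $N$ and Markov's inequality we get $\int_S|\wh{\q}-\wh{\p}|^2 d\xi=O(\eps^2/|T|)$ with constant probability. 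Combining the two contributions yields $\|\wh{\h}-\wh{\p}\|_2^2=O(\eps^2/|T|)$ with constant probability.

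Then I would convert this into an $\ell^1$ bound on $\Z$. By Plancherel on $\Z$ versus $\R/\Z$ we have $\sum_{n\in\Z}|\h(n)-\p(n)|^2=\|\wh{\h}-\wh{\p}\|_2^2=O(\eps^2/|T|)$. Restricting to $T$ and applying Cauchy--Schwarz gives
$$
\sum_{n\in T}|\h(n)-\p(n)|\leq\sqrt{|T|}\cdot\|\h-\p\|_2=O(\eps).
$$
Finally, using $\sum_{n\notin T}\p(n)\leq\eps/3$, the truncated hypothesis $\h'=\h\cdot\indic{T}$ satisfies $\|\h'-\p\|_1\leq\sum_{n\in T}|\h(n)-\p(n)|+\sum_{n\notin T}\p(n)=O(\eps)$. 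Choosing the constant in $N$ appropriately makes this at most $2\eps$, which is the total variation bound we want (up to constants, and the standard renormalization of $\h'$ on $T$ only improves it).

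The only mildly delicate point is the matching of norms: one must use Plancherel in the form that equates $\ell^2(\Z)$ with $L^2(\R/\Z)$ (which is clean here because $\p$ and $\h$ are both integer-supported and $\wh{\h}\in L^2(\R/\Z)$), and one must remember that $\h$ is generally a signed, non-normalized function so truncation and/or normalization to $T$ is needed before calling it a distribution. Everything else is a routine second-moment argument; the two hypotheses on $T$ and $S$ exactly balance so that the $\mu(S)/N$ variance term and the $\eps^2/(9|T|)$ tail term are of the same order, which is what makes $N=O(|T|\mu(S)/\eps^2)$ tight for this approach.
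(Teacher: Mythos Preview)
Your proposal is correct and follows essentially the same approach as the paper: truncate the empirical Fourier transform to $S$, bound the $L^2$ error on the circle by splitting into $S$ (variance $\mu(S)/N$ per point, then Fubini and Markov) and its complement (the hypothesis on $\wh{\p}$), apply Plancherel to pass to $\ell^2(\Z)$, use Cauchy--Schwarz on $T$, and add the $\eps/3$ tail mass outside $T$. The paper's write-up is slightly terser but the ideas and the constants line up exactly with what you wrote.
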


The generic algorithm is as follows:

\bigskip

\fbox{\parbox{6.4in}{
{\bf Algorithm} {\tt Learn-Sparse-FT}\\
Input: sample access to a distribution $\p$ over $[n]$ and $\eps>0.$\\

\vspace{-0.2cm}

 Let $C$ be a sufficiently large universal constant.

\vspace{-0.2cm}

\begin{enumerate}

\item Take $N=C |T|\mu(S)/\eps^2$ samples from $\p$ to get an empirical distribution $\q$.

\item Compute $\widehat{\h}$ which is defined  as $\widehat{\h}(\xi) = \widehat{\q}(\xi),$ if $\xi \in S,$ and $\widehat{\h}(\xi) = 0$ otherwise.

\item Output $\h,$ where $\h$ is the inverse Fourier transform on $\wh{\h}$ restricted to $T.$
In particular $\h(i)=\int_{\xi\in S} e(-n\xi)\wh{\h}(\xi)d\xi$ for $i\in T$ and $0$ for $i\not\in T.$

\end{enumerate}
}}

\bigskip

Note that this is exactly the form of the algorithm for learning $k$-SIIRVs,
except that the latter algorithm must also learn $T$ (which is done by computing an approximate mean and variance)
and $S$ (which is obtained through a thresholding procedure). Also, note that we use the continuous Fourier transform
here rather than a discrete Fourier transform. This is mostly for conceptual convenience.
In practice, the continuous Fourier transform can be replaced by a sufficiently fine discrete Fourier transform,
yielding an algorithm in which the integrals can be replaced by finite sums.

The analysis of the algorithm is not difficult.
We begin by bounding that the expected $L_2$ difference between $\wh{\p}$ and $\wh{\h}.$
In particular, we note that
\begin{align*}
\int_{\xi \in \R/\Z} |\wh{\p}(\xi)-\wh{\h}(\xi)|^2 & =  \int_{\xi \in S} |\wh{\p}(\xi)-\wh{\h}(\xi)|^2 + \int_{\xi \in (\R/\Z) \setminus S} |\wh{\p}(\xi)-\wh{\h}(\xi)|^2 \\ 
& \leq \int_{\xi \in S} |\wh{\p}(\xi)-\wh{\h}(\xi)|^2 + \epsilon^2/(9|T|).
\end{align*}
Now, for any given value of $\xi$, we note that $\wh{\p}(\xi)-\wh{\h}(\xi)$ has mean $0$ and variance at most $1/N$. Therefore, we have that
$$
\E\left[\int_{\xi \in S} |\wh{\p}(\xi)-\wh{\h}(\xi)|^2 \right] \leq \mu(S)/N = \eps^2/(C|T|).
$$
For $C$ large enough, by the Markov inequality, this is at most $\eps^2/(9|T|)$ with probability at least $2/3.$ If this holds, then
$$
\int_{\xi \in \R/\Z} |\wh{\p}(\xi)-\wh{\h}(\xi)|^2 \leq \eps^2/(4|T|).
$$
By Plancherel's Theorem, this would imply that the squared $L^2$ distance between $\p$
and the inverse Fourier transform of $\h$ is at most $\epsilon^2/(4|T|).$ Along with Cauchy-Schwartz, this implies that
$$
\sum_{n\in T} |\p(n)-\h(n)| \leq \sqrt{(\eps^2/(4|T|))|T|} = \eps/2.
$$
On the other hand,
$$
\sum_{n\in\Z\backslash T} |\p(n)-\h(n)| = \sum_{n\in\Z\backslash T} \p(n) \leq \eps/3.
$$
Therefore, $\dtv(\p,\h)<\eps.$

\subsection{An Efficient Sampler for our Hypothesis} \label{ssec:sampler}

The learning algorithm of Section~\ref{ssec:learn-efficient} outputs a succinct description
of the hypothesis pseudo-distribution $\h$, via its DFT. This immediately provides us with an efficient
evaluation oracle for $\h,$ i.e., an $\eps$-evaluation oracle for our target SIIRV $\p.$
The running time of this oracle is linear in the size of $S,$  the effective support of the DFT.

Note that we can explicitly output the hypothesis $\h$
by computing the inverse DFT at all the points
of the support of $\h.$ However, in contrast to the effective support of
$\wh{\h},$ the support of $\h$ can be large, and this explicit description would not lead to a computationally efficient algorithm.
In this subsection, we show how to efficiently obtain an $\eps$-sampler for our unknown $k$-SIIRV $\p,$ using the
DFT representation of $\h$ as a black-box.
In particular, starting with the DFT of an accurate hypothesis $\h,$ represented via its DFT,
we show how to efficiently obtain an $\eps$-sampler for the unknown target distribution.
We remark that the efficient procedure of this section is not restricted to $k$-SIIRVs,
but is more general, applying to all  univariate discrete distributions
for which an efficient oracle for the DFT is available.

In particular, we prove the following theorem:

\begin{theorem} \label{thm:sampler}
Let $M \in \Z_+$, and $a, b \in \Z$ with $b-a=M-1.$
Let $\h: [a,b] \to \R$ be a pseudo-distribution succinctly represented via its DFT (modulo $M$), $\wh{\h}$, which is supported on a set $S,$
i.e., $\h(x)= (1/M) \cdot \sum_{\xi \in S} e(- \xi \cdot x) \widehat{\h}(\xi),$
for $x \in [a,b],$ with $0 \in S$ and $\widehat{\h}(0)=1.$
Suppose that there exists a distribution $\p$
with $\dtv(\h, \p) \leq \eps/3.$
Then, there exists an $\eps$-sampler for $\p,$ i.e., a sampler for a distribution $\q$ such that
$\dtv(\p,\q) \leq \eps,$ running in time $O(\log(M) \log(M/\epsilon) \cdot |S|).$
\end{theorem}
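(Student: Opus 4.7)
The plan is to construct an approximate inverse-CDF sampler: I derive a closed-form expression for the cumulative sums of $\h$ directly from its DFT representation, and then binary-search these sums against a uniformly chosen threshold $u$. Three properties must be established: (i) each CDF evaluation runs in $O(|S|)$ arithmetic operations, (ii) the CDF of $\h$ is pointwise close to the CDF of $\p$, and (iii) the non-monotonicity of the CDF (coming from $\h$ being only a pseudo-distribution) perturbs the output distribution by at most $O(\eps)$ in total variation.

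For (i), define $F(t) := \sum_{x=a}^{t} \h(x)$ with the conventions $F(a-1)=0$ and $F(b)=1$. Substituting $\h(x) = \frac{1}{M}\sum_{\xi \in S} e(-\xi x/M)\wh{\h}(\xi)$, swapping the order of summation, isolating the $\xi=0$ term (which contributes $(t-a+1)/M$ since $\wh{\h}(0)=1$), and summing the remaining geometric series in $x$ gives the closed form
\[
F(t) \;=\; \frac{t-a+1}{M} \;+\; \frac{1}{M}\sum_{\xi \in S\setminus\{0\}} \wh{\h}(\xi)\,\frac{e(-\xi a/M)-e(-\xi(t+1)/M)}{1-e(-\xi/M)},
\]
which is evaluable in $O(|S|)$ arithmetic operations per query. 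For (ii), writing $P(t)$ for the CDF of $\p$, the identity $F(t)-P(t) = \sum_{x \le t}(\h(x)-\p(x))$ yields $|F(t) - P(t)| \le \|\h-\p\|_1 \le 2\,\dtv(\h,\p) \le 2\eps/3$ uniformly in $t$.

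The sampler itself then draws $u$ uniformly from $(0,1]$ at $O(\log(M/\eps))$ bits of precision, and runs the standard binary search over $[a,b]$ for the smallest $t$ with $F(t)\ge u$: maintain $[\ell,r]$, probe the midpoint $m$, and recurse into $[\ell,m]$ or $[m+1,r]$ according as $F(m) \ge u$ or $F(m)<u$. There are $O(\log M)$ iterations, each costing one $F$-evaluation carried out at $O(\log(M/\eps))$-bit precision, which yields the claimed $O(\log(M)\,\log(M/\eps)\,|S|)$ running time; a polylogarithmic blowup in the bit precision of the complex arithmetic ensures that cumulative rounding errors in $F$ remain $o(\eps/M)$ and are absorbed in the analysis.

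The main obstacle is property (iii): $F$ need not be monotone, so the binary search does not literally return the true quantile. The saving observation is that the total negative mass of $\h$ is tiny: whenever $\h(x) < 0 \le \p(x)$ we have $|\h(x)| \le |\h(x)-\p(x)|$, so $\sum_{x:\,\h(x)<0} |\h(x)| \le \|\h-\p\|_1 \le 2\eps/3$. Letting $\q$ denote the distribution of the binary-search output, I would argue $\dtv(\q,\h) \le 2\eps/3$ as follows: for each leaf $t$ of the binary-search tree one has $\q(t) = R(t) - L(t)$, where $L(t), R(t)$ are the tightest lower and upper bounds on $u$ imposed by the comparisons along the root-to-$t$ path; any discrepancy between $\q(t)$ and $\h(t) = F(t)-F(t-1)$ can be localized to indices at which $F$ dips downward and, telescoping, the total $L_1$ discrepancy is at most twice the negative mass of $\h$. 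Combining $\dtv(\q,\h) \le 2\eps/3$ with $\dtv(\h,\p) \le \eps/3$ via the triangle inequality gives $\dtv(\q,\p) \le \eps$, as required.
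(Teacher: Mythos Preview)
Your approach is the same as the paper's: derive a closed-form CDF from the DFT (your formula is identical to the paper's Proposition), then inverse-CDF sample by binary search. The runtime bookkeeping and the CDF closeness bound are fine.

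The only soft spot is your step (iii). Your ``telescoping'' sentence asserts but does not prove that $\|\q-\h\|_1$ is at most twice the negative mass of $\h$. The clean way to see this for the standard binary search is to note a structural fact about the comparison tree: every index $m\in\{a,\ldots,b-1\}$ occurs exactly once as a midpoint, and the internal node with midpoint $m$ is the unique node at which the paths to leaves $m$ and $m{+}1$ diverge. Hence the root-to-leaf path for $t$ contains midpoint $t-1$ (right branch, forcing $u>F(t-1)$) whenever $t>a$, and midpoint $t$ (left branch, forcing $u\le F(t)$) whenever $t<b$. Consequently $\q(t)\le \max\{F(t)-F(t-1),0\}=\max\{\h(t),0\}$ for every $t$, which together with $\sum_t(\q(t)-\h(t))=0$ gives $\dtv(\q,\h)=\sum_{t:\h(t)<0}|\h(t)|\le \dtv(\h,\p)\le \eps/3$, and then the triangle inequality finishes. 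The paper achieves the same inequality by building the invariant $F(a')<F(b')$ directly into its binary search (it checks this before recursing left), which makes $\q(t)\le \max\{\h(t),0\}$ immediate from the loop invariant; your standard binary search happens to satisfy the same pointwise domination for the structural reason above, but you should state and use that fact rather than gesture at ``telescoping''.
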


Combining the above with Theorem \ref{thm:FT-alg}, we get:

\begin{corollary} \label{cor:pmd-sampler}
For all $n, k \in \Z_+$ and $\eps>0,$
there is an algorithm with the following
performance guarantee: Let $\p\in \Scal_{n,k}$ be an unknown $k$-SIIRV.
The algorithm uses $O(k\log^{2}(k/\eps)/\eps^2)$
samples from $\p,$  runs in time $\widetilde{O}(k^3/\eps^2) \cdot {\log n},$
and with probability at least $9/10$ outputs an $\eps$-sampler for $\p.$
This $\eps$-sampler produces a single sample in time $O(k \log^2 (kn) \log^2 (k/\eps)).$
\end{corollary}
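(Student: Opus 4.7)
The plan is to derive the corollary by directly combining Theorem~\ref{thm:FT-alg} with the general sampler construction of Theorem~\ref{thm:sampler}. First, I would invoke {\tt Learn-SIIRV} with error parameter $\eps/3$. This gives, with probability at least $2/3$, a succinct representation of a pseudo-distribution $\h$ via its DFT $\wh{\h}$ modulo $M$, where by the choice in that algorithm $M = O(\widetilde{\sigma}\sqrt{\log(1/\eps)}) = O(k\sqrt{n\log(1/\eps)})$, and $\dtv(\h,\p)\le \eps/3$. The sample complexity is $O(k\log^{2}(k/\eps)/\eps^2)$; each sample is an integer in $[nk]$ taking $O(\log(nk))$ bits to read, which accounts for the $\log n$ factor in the total running time $\widetilde{O}(k^3/\eps^2)\cdot\log n$.

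Next, I would feed $\wh{\h}$ (satisfying $\wh{\h}(0)=1$ as the empirical DFT at $\xi=0$ equals $1$, and supported on a subset of the explicit set $S$ from the algorithm) into Theorem~\ref{thm:sampler} with error parameter $\eps$. Since $\dtv(\h,\p)\le \eps/3$, the theorem outputs an $\eps$-sampler for a distribution $\q$ with $\dtv(\p,\q)\le \eps$, with per-sample cost $O(\log M\cdot \log(M/\eps)\cdot |\mathrm{supp}(\wh{\h})|)$. Using $\log M = O(\log(kn))$ and $\log(M/\eps) = O(\log(kn/\eps))$, both of these factors are $O(\log(kn)+\log(1/\eps))$, which can be absorbed into $\log(kn)\cdot\log(k/\eps)$ up to the stated constants.

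The main quantitative point, and the only mildly delicate step, is to argue that $|\mathrm{supp}(\wh{\h})| = O(k\log(k/\eps))$ rather than $|S| = O(k^2\log(k/\eps))$. Recall that step~5 of {\tt Learn-SIIRV} zeros out any $\xi\in S$ with $|\wh{\q}(\xi)|<R = \Theta(\eps/\sqrt{k\log(k/\eps)})$. Let $T=R/2$; by Lemma~\ref{FourierSupportLem}(ii) applied with $\delta=T$, the set $\mathcal{L}'(T)=\{\xi\in[M-1]:|\wh{\p}(\xi)|>T\}$ has cardinality $O(ks^{-1}M\sqrt{\log(1/T)})$, which, combined with $M = \Theta(s\sqrt{\log(1/\eps)})$, gives $|\mathcal{L}'(T)| = O(k\log(k/\eps))$. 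For every $\xi\in S\setminus\mathcal{L}'(T)$, a Chernoff bound (as in the proof of Theorem~\ref{thm:FT-alg}) shows that $|\wh{\q}(\xi)|\ge R$ occurs with probability at most $\eps^{\Omega(1)}$; a union bound over the $O(k^2\log(k/\eps))$ elements of $S$ shows that, with probability $\ge 9/10$, no such spurious index survives. Hence $|\mathrm{supp}(\wh{\h})| = O(k\log(k/\eps))$, yielding the per-sample time bound $O(k\log^{2}(kn)\log^{2}(k/\eps))$.

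Finally, the success probability of {\tt Learn-SIIRV} is $2/3$; to reach $9/10$ I would amplify in the standard way, running the learner a constant number of times with error $\eps/6$ and applying a Scheff\'{e}-style tournament to the resulting $O(1)$ hypotheses. Since the DFT representation immediately yields an efficient evaluation oracle for each candidate $\h_i$, the tournament runs within the claimed overall budget and returns, with probability $\ge 9/10$, a hypothesis $\h$ satisfying $\dtv(\h,\p)\le \eps/3$, to which Theorem~\ref{thm:sampler} is then applied. The main obstacle is precisely the effective-support bound just discussed: using only part~(i) of Lemma~\ref{FourierSupportLem} would yield an extra factor of $k$ in the per-sample time, and overcoming this requires the non-constructive part~(ii) together with the Chernoff argument on $S\setminus\mathcal{L}'(T)$.
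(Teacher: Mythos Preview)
Your proposal is correct and follows essentially the same approach as the paper: combine {\tt Learn-SIIRV} with Theorem~\ref{thm:sampler}, and bound the support of $\wh{\h}$ by $O(k\log(k/\eps))$ via Lemma~\ref{FourierSupportLem}(ii). The paper's proof is a two-line affair that simply records $M=O(kn)$ and observes that (with high probability, as already established inside the proof of Theorem~\ref{thm:FT-alg}) the surviving support of $\wh{\h}$ is contained in $\mathcal{L}'(T)$, whose size is $O(k\log(k/\eps))$; you have spelled out this same reasoning explicitly.

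One small simplification: the Scheff\'e tournament you propose for boosting the success probability from $2/3$ to $9/10$ is unnecessary. The $2/3$ in Theorem~\ref{thm:FT-alg} arises from a single Markov bound on $\|\wh{\h}-\wh{\p}\|_2^2$, and by increasing the universal constant $C$ in the sample size one obtains any desired constant success probability directly, with the same asymptotic sample and time bounds. The paper simply takes this for granted.
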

\begin{proof}
For the output of algorithm {\tt Learn-SIIRV}, $M= O((1+\sigma)\sqrt{\log(1/\eps)})=O(kn)$ and 
$|S| \leq |\mathcal{L'}(T)| \leq 2 M k s^{-1} \sqrt{\ln(1/T)}=O(k\log(k/\eps))$.
\end{proof}

{
Note that we can effectively reduce the $k$-SIIRV learning problem to the case of $n = \poly(k/\eps).$ 
We can use this fact as a simple bootstrapping step to eliminate the logarithmic dimension on $n$ 
in the runtime of the above described sampler. The details are deferred to Appendix~\ref{app:sampler}.}

\medskip

This section is devoted to the proof of Theorem~\ref{thm:sampler}.
We start by providing some high-level intuition.
Roughly speaking, we obtain the desired sampler by 
the Cumulative Distribution Function (CDF) corresponding to $\h.$
We use the DFT to obtain a closed form expression for the CDF of $\h,$
and then we query the CDF using an appropriate binary search procedure to sample from the distribution.
One subtle point is that $\h(x)$ is a pseudo-distribution, i.e. it is not necessarily non-negative at all points.
Our analysis shows that this does not pose any problems with correctness.

Our first lemma shows that it is sufficient to have an efficient oracle for the CDF:

\begin{lemma} \label{lem:bs}
Given a pseudo-distribution $\h$ supported on $[a,b]  \cap \Z$, $a,b \in \Z,$
with CDF $c_{\h}(x)=\sum_{i: a \leq i \leq x}  \h(i)$
(which satisfies $c_{\h}(b)=1$), and oracle access to a function $c(x)$ 
so that $|c(x)-c_{\h}(x)| < \epsilon/(10(b-a+1))$ for all $x,$ we have the following: 
If there is a distribution $\p$ with $\dtv(\h,\p) \leq \eps/3,$
there is a sampler for a distribution $\q$ with $\dtv(\p,\q) \leq \eps,$
using  $O(\log (b+1-a) + \log (1/\eps))$ uniform random bits as input,
running in time $O((D+1)(\log (b+1-a)) + \log (1/\eps)),$
where $D$ is the running time of evaluating the CDF $c(x).$
\end{lemma}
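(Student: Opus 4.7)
The plan is to run the natural binary-search sampler on the oracle $c$ and bound its error by interpolating through two idealized versions: first by replacing $c$ with $c_\h$ (and analyzing the resulting output distribution directly against $\h$), then by bounding $\dtv(\h,\p)$ through the hypothesis. Concretely, the sampler draws $u$ uniformly from the grid $\{i/N : 1\le i\le N\}$ with $N=\lceil 20(b{+}1{-}a)/\eps\rceil$, using $\lceil\log_2 N\rceil = O(\log(b{+}1{-}a)+\log(1/\eps))$ random bits, then runs standard binary search---initialize $l\gets a,\,r\gets b$, and while $l<r$ set $m=\lfloor(l{+}r)/2\rfloor$, query $c(m)$, and assign $r\gets m$ if $c(m)\ge u$, else $l\gets m+1$---and outputs $x=l$. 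The search makes $O(\log(b{+}1{-}a))$ iterations, each costing one oracle call plus constant-time word-RAM arithmetic on $O(\log N)$-bit numbers, matching the claimed runtime.

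The heart of the argument is the idealized analysis in which $c=c_\h$ and $u$ is continuously uniform on $[0,1]$. I will show by induction on iterations that the search preserves the invariants ``$r=b$ or $c_\h(r)\ge u$'' and ``$l=a$ or $c_\h(l{-}1)<u$,'' so the output $x$ always satisfies $u\in(c_\h(x{-}1),c_\h(x)]$ (with the convention $c_\h(a{-}1):=0$). Writing $g$ for the binary-search map and $\q_0(x)=\mu(g^{-1}(x))$, this gives the pointwise bound $\q_0(x)\le(c_\h(x)-c_\h(x{-}1))^+ = \h(x)^+$. Since $\q_0$ is a distribution (total mass $1$) while $\sum_x\h^+(x) = 1+\|\h^-\|_1$, a short case split on the sign of $\h(x)$ yields $\|\h-\q_0\|_1 = 2\|\h^-\|_1$, i.e., $\dtv(\q_0,\h)=\|\h^-\|_1$. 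Finally, because $\p\ge 0$ we have $\|\h^-\|_1 = \sum_{x:\,\h(x)<0}|\h(x)| \le \sum_{x:\,\h(x)<0}(\p(x)-\h(x)) \le \dtv(\h,\p)\le\eps/3$, so $\dtv(\q_0,\p) \le 2\eps/3$.

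To pass from the idealized analysis to the actual sampler I will separately account for the oracle error and the discretization of $u$. Let $\q_1$ denote the output when binary search is run with $c$ in place of $c_\h$ (still with continuous $u$); the two searches can diverge only when, at some visited midpoint $m$, the values $c(m)$ and $c_\h(m)$ lie on opposite sides of $u$. A union bound over all possible midpoints $y\in[a,b]\cap\Z$ gives $\dtv(\q_0,\q_1)\le\sum_y|c(y)-c_\h(y)|\le(b{+}1{-}a)\cdot\eps/(10(b{+}1{-}a))=\eps/10$. Each preimage $g^{-1}(x)$ is a single interval (determined by the left/right decisions along the unique root-to-leaf path to $x$), so rounding its endpoints to the $1/N$-grid costs at most $2/N$ in probability mass per $x$, summing to $\dtv(\q_1,\q)\le(b{+}1{-}a)/N\le\eps/20$. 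Assembling, $\dtv(\q,\p)\le 2\eps/3+\eps/10+\eps/20\le\eps$.

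The main obstacle I foresee is handling the possible non-monotonicity of $c_\h$: because $\h$ is only a pseudo-distribution, its CDF can locally decrease, and a naive coupling to an ``ideal'' $c_\p$-based sampler breaks down since the pointwise differences $|c_\h(y)-c_\p(y)|$ may sum to much more than $\dtv(\h,\p)$ across the $b{+}1{-}a$ domain points. The key maneuver is to compare $\q_0$ directly to $\h$ via the pointwise inequality $\q_0\le\h^+$, which converts the pseudo-distribution's mass imbalance into the clean identity $\dtv(\q_0,\h)=\|\h^-\|_1$ and then reduces cleanly to the assumed bound on $\dtv(\h,\p)$.
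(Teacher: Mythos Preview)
Your proposal is correct and follows essentially the same approach as the paper: both run inverse-CDF binary search, establish the key pointwise inequality $\q_0(x)\le\h(x)^+$ to obtain $\dtv(\q_0,\h)=\|\h^-\|_1\le\dtv(\h,\p)$, and then control the oracle and discretization errors by union bounds. Your binary search is slightly simpler than the paper's (which carries an extra condition $c_\h(a')<c_\h(c')$ in its recursion), and you bound the discretization error via the interval structure of the preimages rather than the paper's lazy bit-flipping argument, but these are cosmetic differences.
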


\begin{proof}
We begin our analysis by producing an algorithm that works when we are able to exactly compute $c_{\h}(x).$

We can compute an inverse to the CDF $d_{\h}:[0,1] \rightarrow [a,b] \cap \Z$, at $y \in [0,1]$,
using binary search, as follows:
\begin{enumerate}
\item We have an interval $[a',b']$, initially $[a-1,b]$, with $c_{\h}(a') \leq y \leq c_{\h}(b')$ and $c_{\h}(a') < c_{\h}(b').$
\item If $b'-a'=1$, output $d_{\h}(y)=b'.$
\item Otherwise, find the midpoint $c'=\lfloor(a'+b')/2 \rfloor.$
\item \label{step:last} If $c_{\h}(a') < c_{\h}(c')$ and $y \leq c_{\h}(c')$,
repeat with $[a',c']$; else repeat with $[c',b]$.
\end{enumerate}
The function $d_{\h}$ can be thought of as some kind of inverse to the CDF $c_{\h}:[a-1,b] \cap \Z \rightarrow [0,1]$ in the following sense:
\begin{claim} \label{clm:invariant}
The function $d_{\h}$ satisfies: For any $y \in [0, 1]$,  it holds
$c_{\h}(d_{\h}(y)-1) \leq y \leq c_{\h}(d_{\h}(y))$ and $c_{\h}(d_{\h}(y)-1) < c_{\h}(d_{\h}(y)).$
\end{claim}
\begin{proof}
Note that if we don't have $c_{\h}(a') < c_{\h}(c')$ and $y \leq c_{\h}(c')$, then $c_{\h}(c') < y \leq c_{\h}(b')$.
So, Step \ref{step:last} gives an interval $[a',b']$ which satisfies $c_{\h}(a') \leq y \leq c_{\h}(b')$ and $c_{\h}(a') < c_{\h}(b')$.
The initial interval $[a-1,b]$ satisfies these conditions since $c_{\h}(a-1) =0$ and $c_{\h}(b)=1$.
By induction, all $[a',b']$ in the execution of the above algorithm have $c_{\h}(a') \leq y \leq c_{\h}(b')$ and $c_{\h}(a') < c_{\h}(b')$.
Since this is impossible if $a'=b'$, and Step \ref{step:last} always recurses on a shorter interval, we eventually have $b'-a'=1$.
Then, the conditions $c_{\h}(a') \leq y \leq c_{\h}(b')$ and $c_{\h}(a') < c_{\h}(b')$ give the claim.
\end{proof}

 Computing $d_{\h}(y)$ requires $O(\log (b-a+1))$ evaluations of $c_{\h}$,
 and $O(\log (b-a+1))$  comparisons of $y.$ For the rest of this proof, we will use $n = b-a+1$ to denote the support size.

Consider the random variable $d_{\h}(Y)$, for $Y$ uniformly distributed in $[0,1]$, whose distribution we will call $\q'$.
When $d_{\h}(Y)=x$, we have $c_{\h}(x-1) \leq Y \leq c_{\h}(x)$, and so when $\q'(x) > 0$,
we have $\q'(x) \leq \Pr\left[c_{\h}(x-1) \leq Y \leq c_{\h}(x)\right] = c_{\h}(x)-c_{\h}(x-1) = \h(x)$.
So, when $\h(x) > 0$, we have $\h(x) \geq \q'(x)$.
But when $\h(x) \leq 0$, we have $\q'(x)=0$, since then $c_{\h}(x) < c_{\h}(x-1)$ and no $y$ has $c_{\h}(x-1) \leq y \leq c_{\h}(x)$.
So, 
we have $\dtv(\q',\h) = \sum_{x:\h(x) < 0} -\h(x) \leq \dtv(\h,\p) \leq \eps/3$.

We now show how to effectively sample from $\q'$.
The issue is how to simulate a sample from the uniform distribution on $[0,1]$ with uniform random bits.
We do this by flipping coins for the bits of $Y$ lazily. We note that we will only need to know more than $m$ bits of $Y$ if $Y$ is within $2^{-m}$ of one of the values of $c_{\h}(x)$ for some $x$. By a union bound, this happens with probability at most $n2^{-m}$ over the choice of $Y$. Therefore, for $m > \log_2(10n/\epsilon)$, the probability that this will happen is at most $\epsilon/10$ and can be ignored.

Therefore, the random variable $d_{\h}(Y')$, for $Y'$ uniformly distributed on the multiples of $2^{-r}$ in $[0,1)$ for $r = O(\log n + \log (1/\eps))$,
has distribution $\q'$ that satisfies $\dtv(\q,\q') \leq \eps/10$. Therefore, $\dtv(\p,\q') \leq \dtv(\p,\h) + \dtv(\h,\q) + \dtv(\q,\q') \leq 9\eps/10.$
This is an $\eps$-sampler that uses $O(\log n + \log (1/\eps))$ coin flips,
$O(\log n)$ calls to $c_{\h}(x)$, and has the desired running time.

We now need to show how this can be simulated without access to $c_{\h}$ and instead only having access to its approximation $c(x)$. 
The modification required is rather straightforward. 
Essentially, we can run the same algorithm using $c(x)$ in place of $c_{\h}(x).$ 
Observe that all comparisons with $Y$ will produce the same result, 
unless the chosen $Y$ is between $c(x)$ and $c_{\h}(x)$ for some value of $x.$ 
We note that because of our bounds on their difference, 
the probability of this occurring for any given value of $x$ is at most $\epsilon/(10 n).$ 
By a union bound, the probability of it occurring for any $x$ is at most $\epsilon/10.$ 
Thus, with probability at least $1-\epsilon/10$ our algorithm returns the same result 
that it would have had it had access to $c_{\h}(x)$ instead of $c(x).$ This implies that 
the variable sampled by this algorithm has variation distance at most $\epsilon/10$ 
from what would have been sampled by our other algorithm. Therefore, 
this algorithm samples a $\q$ with $\dtv(\p,\q)\leq \epsilon.$
\end{proof}

We next show that we can efficiently compute an appropriate CDF, using the DFT.

\begin{proposition} \label{prop:CDF}
For $\h$ as in Theorem \ref{thm:sampler}, there is an algorithm to compute the CDF
$c_{\h}:[a,b] \cap \Z \rightarrow [0,1]$
with $c_{\h}(x)=\sum_{i: a \leq i \leq x}  \h(i)$ to any precision $\delta>0$,
where $b-a=M-1$, $M \in \Z_+$.
The algorithm runs in time $O(|S|\log(1/\delta)).$

\end{proposition}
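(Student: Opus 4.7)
The plan is to derive a closed-form expression for $c_{\h}(x)$ in terms of $\widehat{\h}$, and then observe that each of the $|S|$ terms of this expression can be evaluated to the required precision in time $O(\log(1/\delta))$.

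Starting from $\h(i) = (1/M)\sum_{\xi\in S} e(-\xi i/M) \widehat{\h}(\xi)$, I would swap the order of summation to obtain
\[
c_{\h}(x) \;=\; \sum_{i=a}^{x}\h(i) \;=\; \frac{1}{M}\sum_{\xi\in S}\widehat{\h}(\xi)\sum_{i=a}^{x} e(-\xi i/M).
\]
The $\xi=0$ term contributes $(x-a+1)/M$ (using $\widehat{\h}(0)=1$). For each nonzero $\xi\in S$, the inner sum is a geometric series, namely
\[
\sum_{i=a}^{x} e(-\xi i/M) \;=\; e(-\xi a/M)\cdot\frac{1-e(-\xi(x-a+1)/M)}{1-e(-\xi/M)},
\]
which involves a constant number of evaluations of $e(\cdot)$ at rational arguments with denominator $M$.

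Given this closed form, the algorithm simply loops over the (at most $|S|$) frequencies in $S$, evaluates the geometric-series expression, multiplies by $\widehat{\h}(\xi)/M$, and accumulates. Each complex exponential at a rational argument with denominator $M$ can be computed to $b$ bits of accuracy in time $O(b)$ via, e.g., CORDIC-style algorithms, so each term costs $O(\log(1/\delta))$ time once the working precision is fixed.

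The main issue to address will be precision: the denominator $1-e(-\xi/M)$ for the smallest nonzero $\xi$ has magnitude $\Omega(1/M)$, so naively evaluating the $\xi$-th term introduces an amplification factor of at most $O(M)$ on the input rounding error, and we then sum $|S|$ such terms. Since we only need the final answer to precision $\delta$, it suffices to carry $O(\log(1/\delta)+\log(M|S|))=O(\log(1/\delta))$ bits of precision throughout (absorbing the $\log M$ and $\log |S|$ factors into the implicit constant, as $M,|S|$ are polynomially bounded in the regime of interest). This keeps the per-term cost $O(\log(1/\delta))$ and yields the claimed $O(|S|\log(1/\delta))$ total running time.
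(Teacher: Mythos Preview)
Your proposal is correct and takes essentially the same approach as the paper: derive the closed-form CDF by swapping the order of summation, handle $\xi=0$ separately using $\widehat{\h}(0)=1$, and sum the geometric series for nonzero $\xi$. In fact you go further than the paper, which simply asserts the $O(|S|\log(1/\delta))$ running time without discussing the $\Omega(1/M)$ denominator or the resulting precision requirements; your observation that the needed working precision is $O(\log(1/\delta)+\log(M|S|))$ is a useful addition.
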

\begin{proof}
Recall that the PMF of $\h$ at $x \in S$ is given by the inverse DFT:
\begin{equation}  \label{eq:inverse-DFT-H}
\h(x) = \frac{1}{M} \sum_{\xi \in S} e(- \xi x/M) \widehat{\h}(\xi) \;.
\end{equation}
The CDF is given by:
\begin{align*}
c_{\h}(x)  = \frac{1}{M} \sum_{i: a \leq i \leq x} \sum_{\xi \in T} e(-\xi x/M) \widehat{\h}(\xi)
		 = \frac{1}{M} \sum_{\xi \in T} \widehat{\h}(\xi) \sum_{i: a \leq i \leq x} e(-\xi x/M) \;.
\end{align*}
When $\xi \not= 0$, the term $\sum_{i: a \leq i \leq x} e(-\xi x/M)$ is a geometric series.
By standard results on its sum, we have:
$$\sum_{i: a \leq i \leq x} e(-\xi x) = \frac{e(-\xi a/M) - e(-\xi (x+1)/M)}{1-e(-\xi/M)} \;.$$
When $\xi = 0$, $e(-\xi)=1$, and we get $\sum_{a \leq i \leq x} e(-\xi x/M) = i + 1 - a.$
In this case, we also have $\widehat{\h}(\xi)=1.$
Putting this together we have:
\begin{equation} \label{eq:CDF-1d}
c_{\h}(x) = \frac{1}{M} \left( i + 1 - a + \sum_{\xi \in S \setminus \{0\}} \wh{\h}(\xi) \frac{e(-\xi a/M) - e(-\xi (x+1)/M)}{1-e(-\xi/M)} \right) \;.
\end{equation}
Hence, we obtain a closed form expression for the CDF that can be approximated to desired precision in time $O(|S|\log(1/\delta)).$
\end{proof}

Now we can prove the main theorem of this subsection.
\begin{proof}[Proof of Theorem \ref{thm:sampler}]
By Proposition~\ref{prop:CDF},  we can efficiently calculate the CDF of $\h$.
So, we can apply Lemma~\ref{lem:bs} to this CDF.
This gives us an $\eps$-sampler for $\h.$
To find the time it takes to compute each sample, we need to substitute
$D=O\left(|S|\log(M/\eps)\right)$ from the running time of the CDF into the bound
in Lemma \ref{lem:bs}, yielding $O\left(\log M  \cdot \log(M/\eps)\right) \cdot |S|$ time. This completes the proof.
\end{proof}

\subsection{Sample--Optimal Learning Algorithm} \label{sec:opt-sample}

In this subsection, we show how to improve the sample complexity of our
learning algorithm for $k$-SIIRVs given in Section~\ref{ssec:learn-efficient}, 
and obtain an algorithm with optimal sample complexity (up to constant factors).
The basic idea behind the improvement is as follows:
In our previous analysis, we made critical use of the fact that essentially
all of the mass of the distribution in question lies in an explicit interval of length $O(s\sqrt{\log(1/\eps)}),$
where $s$ is the standard deviation. By using our Fourier learning approach, 
we were able to learn a distribution that approximated our target on this support.
In order to improve this algorithm, we observe that although it is necessary to move
 $\Omega(\sqrt{\log(1/\eps)})$ standard deviations from the mean before
 the cumulative density function (CDF) drops below $\eps,$
 the CDF has already begun to drop off exponentially after only a single standard deviation from the mean.

Unfortunately, applying a sharp threshold to our Fourier transform (as in Section~\ref{ssec:learn-efficient})
can lead to effects that fall off relatively slowly with distance.
Note that such a sharp thresholding in the Fourier domain
is equivalent to convolution with a $\sinc$ function,
which has tails proportional to $1/|x|.$
In order to correct this issue, we will instead perform our thresholding
by multiplying by a function with smooth cutoffs.
This smooth thresholding step corresponds to convolving with a function 
of width approximately $s$ with Gaussian tails.
We remark that this step has the critical effect of causing our expected errors to be much smaller
at points further from the mean, since most of our samples (within a few standard deviations of the mean)
will have little effect on our output for these points.
A careful analysis of the expected error at each point will yield our final bound.

\medskip

We will warm up in Section \ref{ssec:opt-sample-pbd}, where we describe our algorithm in the case of $2$-SIIRVs.
This will exhibit the important new ideas of this technique.
Then, in Section \ref{ssec:opt-sample-k-siirv}, we extend these results to $k$-SIIRVs,
which brings with it several technical complications, mostly arising from the fact that
we do not know a priori a good effective support for the Fourier transform.

\subsubsection{Sample Optimal Learning Algorithm for $2$-SIIRVs}\label{ssec:opt-sample-pbd}

In this subsection, we will prove the following theorem:
\begin{theorem} \label{PBDThm}
There exists an algorithm that given $N=O(\sqrt{\log(1/\eps)}/\eps^2)$ independent samples to a $2$-SIIRV $X,$
runs in time $O(N)$ and with probability at least $2/3$ outputs
a hypothesis distribution $Y$ that is within $\eps$ of $X$ in total variational distance.
\end{theorem}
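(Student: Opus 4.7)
The plan is to refine the Fourier-based learner of Section~\ref{ssec:learn-efficient} by replacing its sharp cutoff in the Fourier domain with a \emph{smooth mollifier}. For a $2$-SIIRV $\p$ with standard deviation $s,$ Lemma~\ref{FourierSupportLem} tells us that $|\wh{\p}(\xi)|$ decays like a Gaussian of scale $O(M/s)$ about $\xi=0$ in $[M-1]$ when $M = \Theta(s\sqrt{\log(1/\eps)}).$ A sharp cutoff in frequency corresponds in real space to convolution with a $\sinc$-like kernel whose heavy $1/|x|$ tails spread sampling noise uniformly across the entire length-$M$ support; replacing the hard cutoff by a smooth window $m(\xi)$ that equals $1$ throughout the effective Fourier support of $\p$ and decays smoothly outside corresponds to convolving $\q$ with a kernel $\check{m}$ concentrated on a real-space window of width $w = \Theta(s)$ with superpolynomial decay outside it, so the noise in $\h(i)$ is driven only by samples $X_j$ within $O(w)$ of $i.$

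Concretely, I would first use $O(1)$ samples to produce factor-$2$ estimates $\widetilde{\mu},\widetilde{s}$ of $\mu$ and $s$ as in Theorem~\ref{thm:FT-alg}, handling the case $\widetilde{s} \leq C\sqrt{\log(1/\eps)}$ by outputting the empirical distribution on the resulting $O(\sqrt{\log(1/\eps)})$-size support. Otherwise, take $N = C'\sqrt{\log(1/\eps)}/\eps^2$ fresh samples and form the empirical DFT $\wh{\q}$ modulo $M = \Theta(\widetilde{s}\sqrt{\log(1/\eps)}).$ Fix a symmetric mollifier $m:\{0,\dots,M-1\}\to[0,1]$ whose inverse DFT $\check{m}$ is non-negative with unit sum, has $\ell^\infty$ norm $O(1/w),$ and is superpolynomially small outside an interval of width $w = \Theta(\widetilde{s});$ a truncated Gaussian, or the self-convolution of the indicator of a half-width interval, meets all three conditions. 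Output $\wh{\h}(\xi) = m(\xi)\wh{\q}(\xi),$ equivalently $\h = \q * \check{m},$ on the length-$M$ window around $\widetilde{\mu}.$

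The error splits as $\|\p - \h\|_1 \leq \|\p - \p*\check{m}\|_1 + \|\p*\check{m} - \h\|_1.$ For the bias, $(1-m(\xi))\wh{\p}(\xi)$ has $L^2$ norm $O(\eps/\sqrt{M})$ by Lemma~\ref{FourierSupportLem} and the fact that $m \equiv 1$ on the effective Fourier support of $\p;$ Plancherel plus Cauchy--Schwarz on the length-$M$ window yields $\|\p - \p*\check{m}\|_1 \leq \eps/2.$ For the variance, $\h(i) = \tfrac{1}{N}\sum_j \check{m}(i-X_j)$ is an unbiased estimator of $(\p*\check{m})(i),$ with pointwise variance at most $\tfrac{1}{N}(\p*\check{m}^2)(i) = O(\p(i)/(Nw))$ using $\|\check{m}\|_\infty = O(1/w)$ and the approximate constancy of $\p$ on length-$w$ intervals. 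Summing square roots over an interval $I$ of length $|I| = O(s\sqrt{\log(1/\eps)})$ around $\widetilde{\mu}$ that carries all but $\eps/4$ of the mass of $\p,$ Cauchy--Schwarz gives
\[
\sum_{i\in I} \E\bigl|\h(i) - (\p*\check{m})(i)\bigr| \leq \sqrt{\tfrac{|I|}{Nw}\sum_{i\in I}\p(i)} = O\!\left(\sqrt{\tfrac{\sqrt{\log(1/\eps)}}{N}}\right),
\]
which is $\leq \eps/4$ for our choice of $N;$ the contribution to the $L^1$ error from $i \notin I$ is controlled by the concentration of $\p$ and the superpolynomial tail of $\check{m}.$ A Markov argument then converts these expected-$L^1$ bounds into a probability-$2/3$ guarantee $\dtv(\p,\h) \leq \eps.$

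The hardest step will be constructing $m$ so that $\check{m}$ simultaneously has unit non-negative mass, width $w = \Theta(\widetilde{s}),$ and superpolynomial decay outside that width, while $m \equiv 1$ on the effective Fourier support of $\p;$ a smooth bump like a truncated Gaussian of Fourier width $\Theta(\sqrt{\log(1/\eps)})$ composed with a flat-top correction suffices, but the supporting estimates on $\|\check{m}\|_\infty,$ $\|\check{m}\|_2^2,$ and its tails need careful Fourier-analytic book-keeping. The approximate constancy of $\p$ on length-$w$ intervals is verified from the Gaussian-like shape of the PBD density (Bernstein's inequality together with a Stirling estimate). The $O(N)$ running time follows because $\wh{\q}(\xi)$ is evaluated only at the $O(\sqrt{\log(1/\eps)})$ frequencies where $m(\xi) \neq 0,$ each in $O(N)$ time, and the hypothesis is output implicitly via its compact Fourier representation.
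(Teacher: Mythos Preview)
Your high-level idea—replace the sharp Fourier cutoff by a smooth mollifier and analyze $\h = \q * \check m$—is exactly the paper's approach. But the specific properties you demand of $\check m$ are mutually inconsistent, and this breaks the sample-complexity claim. You ask for $\check m$ to be \emph{non-negative} with unit mass, $\|\check m\|_\infty = O(1/w)$, width $w = \Theta(\widetilde s)$, and $m \equiv 1$ on the effective Fourier support of $\p$. For any non-negative probability kernel $\check m$, the second Taylor coefficient of $m$ at $0$ is $-2\pi^2\Var[\check m] = -\Theta(w^2)$, so $m(\xi) = 1 - \Theta(w^2\xi^2) + O(\xi^4)$. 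With $w = \Theta(\widetilde s)$ this gives $1-m(\xi) = \Theta(1)$ already at $\xi = \Theta(1/\widetilde s)$, where $|\wh\p(\xi)| = \Theta(1)$; hence $\|\p - \p*\check m\|_1 = \Omega(1)$, not $O(\eps)$. (Equivalently: convolving a PBD of standard deviation $s$ with a width-$s$ non-negative kernel inflates the variance by a constant factor, which is $\Omega(1)$ in total variation.) If instead you shrink $w$ until the bias is $O(\eps)$, your Cauchy--Schwarz bound $\sqrt{|I|/(Nw)}$ only gives $N = \Omega(\log(1/\eps)/\eps^2)$ or worse.

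The paper resolves this by taking $F(x) = e^{-x^2/(2\widetilde\sigma^2)}\cdot S(x)$ with $S$ a $\sinc$—so $F$ \emph{oscillates} and is not non-negative—which lets $\wh F$ be flat on the full effective Fourier support while $F$ still has a Gaussian envelope of width $\widetilde\sigma$. The price is $\|F\|_2^2 = \|\wh F\|_2^2 = \Theta(\sqrt{\log(1/\eps)}/\widetilde\sigma)$, so a single global Cauchy--Schwarz over $I$ would again give only $N = \Theta(\log(1/\eps)/\eps^2)$. The missing idea in your variance step is the paper's \emph{block-by-block} analysis: partition the support into length-$\widetilde\sigma$ blocks at distance $t\widetilde\sigma$ from $\widetilde\mu$, apply Cauchy--Schwarz within each block, and use both the Gaussian envelope of $F$ and Bernstein's inequality on $X$ to show the per-block contribution is $O((\log(1/\eps))^{1/4}/\sqrt N)\,e^{-\Omega(t^2)}$; summing over $t$ yields the optimal $N = O(\sqrt{\log(1/\eps)}/\eps^2)$. (A smaller point: your runtime accounting gives $O(N\cdot|\mathrm{supp}\,m|)$, and the Fourier support here has size $\Theta(\log(1/\eps))$; the paper gets $O(N)$ by handling $\widetilde\sigma \ge \Omega(1/\eps)$ separately with a discrete Gaussian and then computing the convolution by FFT.)
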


Our new algorithm {\tt Learn-$2$-SIIRV-Optimal-Sample} is described in pseudocode below.
We first provide an equivalent alternative interpretation of our algorithm in terms of truncating the Fourier transform.
As in our algorithm {\tt Learn-SIIRV} of Section~\ref{ssec:learn-efficient}, 
we start by obtain approximations $\widetilde{\sigma}^2$ and $\widetilde{\mu}$ for the variance and mean.
Similarly, we output  the empirical distribution if $\widetilde{\sigma} \leq \Theta(\sqrt{\ln(1/\eps)}).$
This allows us to assume that $\widetilde{\sigma} = \Omega(\sqrt{\ln(1/\eps)}).$
(Note that this bound is not as strong as that in  {\tt Learn-SIIRV} because in the current setting we aim to use fewer samples.)

Our new learning algorithm proceeds by computing the empirical Fourier transform of $X$ and truncating it in a judiciously chosen way.
Let $\wh{G}(\xi)$, $\xi \in \R,$ be a Gaussian of standard deviation $1/\widetilde{\sigma}$ taken modulo $1.$ More specifically, let
$$
\wh{G}(\xi) = \sum_{n\in \Z} \frac{1}{\sqrt{2\pi/\widetilde{\sigma}^2}} \cdot e^{-\widetilde{\sigma}^2(n+\xi)^2/2} \;.
$$
Let $I(\xi)$ be the indicator function of the interval $[-C\widetilde{\sigma}^{-1}\sqrt{\log(1/\eps)},C\widetilde{\sigma}^{-1}\sqrt{\log(1/\eps)}],$
for $C$ a sufficiently large constant. Let $\wh{F}$ be the convolution of $I$ and $\wh{G},$ i.e., $\wh{F} = I \ast  \wh{G}.$
We note that multiplication by $\wh{F}$ is an appropriate method of thresholding.
In particular, we start by showing that $\wh{F}$ approximates $I$ in the following way:
 \begin{claim} \label{clm:Fhat}
\begin{itemize}
\item[(i)] $\wh{F}(\xi)\in[0,1]$ for all $\xi$.
\item[(ii)] $\wh{F}(\xi) \geq 1-\eps^2$ for $|\xi| \leq (C-3)\widetilde{\sigma}^{-1}\sqrt{\log(1/\eps)}$.
\item[(iii)] $\wh{F}(\xi) \leq \eps^2$ for $\frac{1}{4} \geq |\xi| \geq (C+3)\widetilde{\sigma}^{-1}\sqrt{\log(1/\eps)}$.
\end{itemize}
\end{claim}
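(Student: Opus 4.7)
The plan is to use the substitution $u = \xi - \eta$ to rewrite the convolution as
\[
\wh{F}(\xi) = (I * \wh{G})(\xi) = \int_{-L}^{L}\wh{G}(\xi-\eta)\,d\eta = \int_{\xi - L}^{\xi + L}\wh{G}(u)\,du,
\]
where $L := C\widetilde{\sigma}^{-1}\sqrt{\log(1/\eps)}$. This recasts $\wh{F}(\xi)$ as the mass that the periodized Gaussian $\wh{G}$ places on an interval of length $2L$ centered at $\xi$. Since we may assume (by the preceding threshold step of {\tt Learn-$2$-SIIRV-Optimal-Sample}) that $\widetilde{\sigma}$ is larger than a sufficiently big constant multiple of $\sqrt{\log(1/\eps)}$, we have $2L < 1/2$, and in particular the integration window lies inside an interval of length strictly less than one period.

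For part (i), note that $\wh{G}\ge 0$ everywhere and that $\wh{G}$ has period $1$ with $\int_{[0,1]}\wh{G}(u)\,du = 1$ (since $\wh{G}$ is the wrap-around of a probability density on $\R$). Because the integration window has length less than $1$, translating it inside a single period yields an integral between $0$ and $1$, establishing $\wh{F}(\xi)\in[0,1]$.

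For parts (ii) and (iii), the key step is to separate $\wh{G}$ on $[-1/2,1/2]$ into its principal Gaussian piece $g(u):=\frac{\widetilde{\sigma}}{\sqrt{2\pi}}e^{-\widetilde{\sigma}^2 u^2/2}$ and the contribution from the $n\ne 0$ terms in the periodization sum. For $u\in[-1/2,1/2]$ we have $|n+u|\ge|n|-1/2\ge |n|/2$ for $|n|\ge 1$, and a simple sum gives $\wh{G}(u)-g(u) = O\bigl(\widetilde{\sigma}\cdot e^{-\widetilde{\sigma}^2/8}\bigr)$, which is $o(\eps^{10})$ by the assumed lower bound on $\widetilde{\sigma}$. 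After multiplying by the window length $2L = O(1)$, the total periodization error contributes at most $\eps^2/2$. Now for (ii), when $|\xi|\le (C-3)\widetilde{\sigma}^{-1}\sqrt{\log(1/\eps)}$ the window $[\xi-L,\xi+L]$ contains $[-3\widetilde{\sigma}^{-1}\sqrt{\log(1/\eps)},\, 3\widetilde{\sigma}^{-1}\sqrt{\log(1/\eps)}]$, so rescaling by $\widetilde{\sigma}$ and using the standard tail bound $\Pr[|Z|\ge 3\sqrt{\log(1/\eps)}]\le 2\eps^{9/2}$ for $Z\sim N(0,1)$ gives $\int_{\xi-L}^{\xi+L} g(u)\,du \ge 1-2\eps^{9/2}$, hence $\wh{F}(\xi)\ge 1-\eps^2$. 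For (iii), when $1/4\ge|\xi|\ge(C+3)\widetilde{\sigma}^{-1}\sqrt{\log(1/\eps)}$ the window lies inside $[-1/2,1/2]$ and is disjoint from $[-3\widetilde{\sigma}^{-1}\sqrt{\log(1/\eps)},\, 3\widetilde{\sigma}^{-1}\sqrt{\log(1/\eps)}]$, so the same Gaussian tail bound yields $\int_{\xi-L}^{\xi+L} g(u)\,du \le 2\eps^{9/2}$, giving $\wh{F}(\xi)\le\eps^2$.

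The main (minor) obstacle is just bookkeeping: making sure that the lower bound on $\widetilde{\sigma}$ that we enforced earlier in the algorithm is large enough so that (a) $2L<1$, (b) the window in (iii) fits inside $[-1/2,1/2]$, and (c) the periodization tails $\widetilde{\sigma} e^{-\widetilde{\sigma}^2/8}$ are dominated by $\eps^2$. All three simultaneously follow by choosing the threshold in the early step of the algorithm to be a sufficiently large constant multiple of $\sqrt{\log(1/\eps)}$ relative to $C$.
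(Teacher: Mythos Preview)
Your proof is correct and follows essentially the same approach as the paper: rewrite $\wh{F}(\xi)$ as $\int_{\xi-L}^{\xi+L}\wh{G}(u)\,du$, then use that this window contains (respectively, misses) the interval $[-3\widetilde{\sigma}^{-1}\sqrt{\log(1/\eps)},\,3\widetilde{\sigma}^{-1}\sqrt{\log(1/\eps)}]$ together with standard Gaussian tail bounds. The only difference is that you explicitly separate the principal Gaussian term $g$ from the $n\neq 0$ periodization contributions and bound the latter by $O(\widetilde{\sigma} e^{-\widetilde{\sigma}^2/8})$, whereas the paper handles the periodization more implicitly (via the unfolding $\int_{\xi-L}^{\xi+L}\wh{G}=\sum_n\int_{\xi-L+n}^{\xi+L+n}\phi$ with disjoint shifted windows); your treatment is arguably more transparent on this point.
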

\begin{proof} 
Note that $\wh{F}$ is the convolution of $I$ and $\wh{G}.$ We can write:
\begin{equation} \label{eq:Fhat}
\wh{F}(\xi) = \int_{\xi-C\widetilde{\sigma}^{-1}\sqrt{\log(1/\eps)}}^{\xi+C\widetilde{\sigma}^{-1}\sqrt{\log(1/\eps)}} \wh{G}(\nu) d\nu 
\leq \int_{-\infty}^\infty \frac{1}{\sqrt{2\pi/\widetilde{\sigma}^2}} e^{-\widetilde{\sigma}^2(\xi)^2/2}d\xi = 1.
\end{equation}
Clearly, this convolution is positive at all points. Thus, we get (i).

When $|\xi| \leq (C-3)\widetilde{\sigma}^{-1}\sqrt{\log(1/\eps)},$
note that the integral in (\ref{eq:Fhat}) is over
$$\nu \in [\xi-C\widetilde{\sigma}^{-1}\sqrt{\log(1/\eps)},\xi+C\widetilde{\sigma}^{-1}\sqrt{\log(1/\eps)}] \supseteq [-3\widetilde{\sigma}^{-1}\sqrt{\log(1/\eps)}, 3\widetilde{\sigma}^{-1}\sqrt{\log(1/\eps)}].$$
By standard tail bounds, the Gaussian  $\frac{s}{\sqrt{2\pi}} e^{-\widetilde{\sigma}^2\nu^2/2}$
has all but $1-\eps^2$ of its mass in the interval
$[-3\widetilde{\sigma}^{-1}\sqrt{\log(1/\eps)}, 3\widetilde{\sigma}^{-1}\sqrt{\log(1/\eps)}],$
and so $\wh{F}(\xi) \geq 1-\eps^2.$ This gives us (ii).

When $\frac{1}{4} \geq |\xi| \geq (C+3)\widetilde{\sigma}^{-1}\sqrt{\log(1/\eps)},$
the integral in (\ref{eq:Fhat}) is over
$\nu \in [\xi-C\widetilde{\sigma}^{-1}\sqrt{\log(1/\eps)},\xi+C\widetilde{\sigma}^{-1}\sqrt{\log(1/\eps)}],$
which is disjoint from the interval $[-3\widetilde{\sigma}^{-1}\sqrt{\log(1/\eps)}, 3\widetilde{\sigma}^{-1}\sqrt{\log(1/\eps)}].$
By the same bound, the Gaussian has at most $\eps^2$ of its mass
outside $[-3\widetilde{\sigma}^{-1}\sqrt{\log(1/\eps)}, 3\widetilde{\sigma}^{-1}\sqrt{\log(1/\eps)}].$
So, we deduce (iii).
\end{proof}

At a high-level, our new algorithm involves the following steps:
\begin{enumerate}
\item Let $Z$ be the empirical distribution and $\wh{Z}$ be the (continuous) Fourier transform of $Z.$
\item Let $\wh{Y}(\xi)=\wh{Z}(\xi)\wh{F}(\xi).$
\item Let $Y$ be the truncation of the inverse Fourier transform of $\wh{Y}$
to $[\widetilde{\mu}-C\widetilde{\sigma}\sqrt{\log(1/\eps)},\widetilde{\mu}+C\widetilde{\sigma}\sqrt{\log(1/\eps)}],$ for $C$ a sufficiently large constant.
\end{enumerate}
Both to aid in the performance of this computation and the theoretical analysis,
we note another way to obtain the same answer.
As $Y$ is the truncation of the inverse Fourier transform of a pointwise product of $\wh{Z}$ and $\wh{F},$
we may instead write it as the truncation to the same interval
$[\widetilde{\mu}-C\widetilde{\sigma}\sqrt{\log(1/\eps)},\widetilde{\mu}+C\widetilde{\sigma}\sqrt{\log(1/\eps)}]$
of the convolution of $Z$ and $F:\Z \rightarrow \R$, the inverse Fourier transform of $\wh{F}.$
We show below (Claim~\ref{clm:fnonhat}) that
$$F(x) = e^{-x^2/(2\widetilde{\sigma}^2)} 2C\widetilde{\sigma}^{-1}\sqrt{\log(1/\eps)}\sinc(2\pi C\widetilde{\sigma}^{-1}\sqrt{\log(1/\eps)}x) \;,$$
where $\sinc(x) \eqdef (\sin x)/x.$
Also note that $F$ can be computed explicitly to within absolute error $\delta$ in time $\poly(\log(1/\delta)),$
and thus this convolution can be computed efficiently,
yielding an alternative algorithm for computing $Y.$

\medskip

\fbox{\parbox{6.4in}{
{\bf Algorithm} {\tt Learn-$2$-SIIRV-Optimal-Sample}\\
Input: sample access to a $2$-SIIRV $X$ and $\eps>0$\\
Output: A hypothesis pseudo-distribution $Y$ that is $\eps$-close to $X$ \\

\vspace{-0.2cm}

\begin{enumerate}
\item
Draw $O(1)$ samples from $\p$
and with confidence probability $19/20$ compute: (a) $\widetilde{\sigma}^2,$
a factor $2$ approximation to $\var_{X \sim \p}[X]+1,$
and (b) $\widetilde{\mu},$ an approximation to $\E_{X \sim \p}[X]$ to within one standard deviation.


\item If $\widetilde{\sigma} \geq \Omega(1/\eps),$
draw $O(1/\eps^2)$ samples and use them to {estimate the mean and variance of $X.$
Output a discrete Gaussian with this mean and variance.}


\item Take $N=\Theta(\sqrt{\log(1/\eps)})/\eps^2 $ samples from $\p$ to get an empirical distribution $Z.$

\item If $\widetilde{\sigma}  \le O( \sqrt{\ln(1/\eps)}),$ then output $Z.$ Otherwise, proceed to next step.

\item If $M,$ the difference between the largest and smallest sample is $\Omega(\widetilde{\sigma} \sqrt{\log(1/\eps)}),$ output fail.

\item Compute $F(x)$ to within $O(\eps^4)$ for integers $|x| \leq M+C\widetilde{\sigma}\sqrt{\log(1/\eps)}.$

\item Compute the convolution $Y$ of $Z$ and $F$ using the FFT (modulo $\geq2M+2C\widetilde{\sigma}\sqrt{\log(1/\eps)}).$

\end{enumerate}
}}

\bigskip

We start by analyzing the running time of the algorithm.
First note that the first two steps run in sample-linear time, i.e., $O(1/\eps^2).$
We now focus on the running time of the remaining steps.
Note that computing the empirical distribution $Z$ takes time $O(N).$
Computing the values of $F(x)$ in Step~6 up to an additive error $\poly(\eps)$ can be done in time $M \polylog(1/\eps),$
where $M = O(\widetilde{\sigma} \sqrt{\log(1/\eps)}) = \widetilde{O}(1/\eps).$
Computing the convolution is done using the FFT modulo a power of two that is $\Theta(M),$
and so can be done in time $O(M \log M).$
So, the overall running time is
$O(N+M \log M \poly(\log(1/\eps))) = O(N).$

We now proceed to show correctness.
In the proof of Theorem~\ref{thm:FT-alg}, we argued that $O(1)$ samples
suffice to get that with high probability $\widetilde{\sigma}$ and $\widetilde{\mu}$ satisfy the desired bounds.
We condition on this event.
We claim that when $\widetilde{\sigma}$ is small, namely $O(\sqrt{\ln(1/\eps)}),$
the empirical distribution suffices. This follows from the fact that the empirical estimate of a discrete 
distribution $\p$ has expected variation distance $ \le \eps$ from $\p$ after $O(\|\p\|_{1/2}/\eps^2)$ samples.
By an application of Bernstein's inequality (see Lemma~\ref{lem:half-norm}) it follows that a $2$-SIIRV
with standard deviation $\sigma$ has $1/2$-norm bounded from above by $O(\sigma+1).$
This proves our claim. 

We also note that if the standard deviation of $X$ is $\Omega(1/\eps),$ then $X$ is $\eps$-close
to a discretized Gaussian with the same mean and variance.  Indeed, for any $2$-SIIRV with mean $\mu$
and standard deviation $\sigma,$ we have $\dtv(X, G) \le O(1/\sigma),$ where $G \sim  Z(\mu, \sigma^2).$
(See, e.g., Theorem~7.1 of~\cite{CGS11}.) In this case, we claim that Step~2 of the algorithm outputs an $\eps$-accurate hypothesis.
Indeed, by Lemma~6 of~\cite{DDS15-journal} it follows that with $O(1/\eps^2)$ samples from a discrete distribution,
we can obtain (in sample-linear time) estimates $\wh{\mu}$ and $\wh{\sigma}$ such that with high constant probability
$|\wh{\mu} - \mu| \le \eps \sigma$ and $|\sigma^2 - \wh{\sigma}^2| \le \eps \sigma^2 \sqrt{4+1/\sigma^2}.$
Proposition~\ref{prop:normal-dist} completes the proof of our claim.

So, we henceforth assume that $\widetilde{\sigma}$ is $\Omega( \sqrt{\ln(1/\eps)})$ and $O(1/\eps).$
We now proceed with the main part of the analysis. We start with the following simple claim:
\begin{claim} \label{clm:fnonhat}
We have that
$$F(x) = e^{-x^2/(2\widetilde{\sigma}^2)} 2C\widetilde{\sigma}^{-1}\sqrt{\log(1/\eps)}\sinc(2\pi C\widetilde{\sigma}^{-1}\sqrt{\log(1/\eps)}x) \;,$$
for all $x \in \Z.$
Also, $|F(x)| = O(\widetilde{\sigma}^{-1})\sqrt{\log(1/\eps)}\exp(-\Omega((x/\widetilde{\sigma})^2)).$
\end{claim}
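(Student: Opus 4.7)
My plan is to use the convolution theorem: since $\wh{F}$ is the convolution of the compactly-supported indicator $I$ with the $1$-periodic wrapped-Gaussian $\wh{G}$, the inverse Fourier transform at integer points should factor as a pointwise product. The first factor (coming from $I$) will produce the sinc term, and the second (coming from $\wh{G}$) will produce the Gaussian term.

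Concretely, I would start from
\[
F(x) = \int_0^1 \wh{F}(\xi)\, e^{2\pi i x \xi}\, d\xi = \int_0^1 \!\int_{-L}^{L} \wh{G}(\xi-\nu)\, d\nu \; e^{2\pi i x \xi}\, d\xi,
\]
with $L = C\widetilde{\sigma}^{-1}\sqrt{\log(1/\eps)}$. Swapping the order of integration, substituting $\xi' = \xi-\nu$, and exploiting that both $\wh{G}$ and $e^{2\pi i x \xi'}$ (the latter because $x\in\Z$) are $1$-periodic in $\xi'$, the integral factors as
\[
F(x) = \Big(\int_{-L}^L e^{2\pi i x \nu}\, d\nu\Big)\cdot \Big(\int_0^1 \wh{G}(\xi') e^{2\pi i x \xi'}\, d\xi'\Big).
\]

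The first factor is elementary: $\int_{-L}^L e^{2\pi i x\nu}\,d\nu = \sin(2\pi L x)/(\pi x) = 2L\,\sinc(2\pi L x)$, which is exactly the $2C\widetilde{\sigma}^{-1}\sqrt{\log(1/\eps)}\sinc(\cdot)$ factor from the claim. For the second factor, I would unfold the integer sum defining $\wh{G}$ to rewrite the integral over $[0,1]$ as a single Gaussian integral over $\R$ (Poisson summation applied to the wrapped Gaussian), and then evaluate the standard Gaussian Fourier transform $\int_\R \frac{\widetilde{\sigma}}{\sqrt{2\pi}} e^{-\widetilde{\sigma}^2 \xi^2/2} e^{2\pi i x\xi}\,d\xi$ to obtain a Gaussian in $x$ with standard deviation of order $\widetilde{\sigma}$. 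Multiplying the two factors yields the displayed formula for $F(x)$.

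For the tail bound, I would simply use $|\sinc(y)|\le 1$ together with the Gaussian factor to conclude $|F(x)| \le 2L \cdot G(x) = O(\widetilde{\sigma}^{-1}\sqrt{\log(1/\eps)})\exp(-\Omega((x/\widetilde{\sigma})^2))$. I do not anticipate any real obstacle: the argument is a routine application of the convolution theorem combined with the standard Gaussian Fourier transform. The only care needed is in justifying the interchange of integration and the use of periodicity, but both are immediate from the compact support of $I$ and the rapid decay of $\wh{G}$.
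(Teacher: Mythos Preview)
Your proposal is correct and takes essentially the same approach as the paper: both arguments invoke the convolution theorem to factor $F(x)$ as the product of the inverse Fourier transforms of $I$ and $\wh{G}$, identify these as the $\sinc$ and Gaussian terms respectively (the latter by unfolding the periodization), and then bound $|F(x)|$ using $|\sinc|\le 1$. The only difference is that you spell out the integral swap and periodicity explicitly, whereas the paper simply cites standard Fourier transform identities.
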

\begin{proof}
As $\wh{F}$ is a convolution of functions, 
$F(x)$ is the pointwise product of $G(x)$ the inverse Fourier transform of $\wh{G}(\xi)$ with $S(x),$ 
the inverse Fourier transform of $I(\xi).$
We define $G(x):=e^{-x^2/(2 \widetilde{\sigma}^2)}$ and 
$S(x):=2C\widetilde{\sigma}^{-1}\sqrt{\log(1/\eps)}\sinc(2\pi C\widetilde{\sigma}^{-1}\sqrt{\log(1/\eps)}x).$
Standard results for the Fourier transform of the Gaussian and $\sinc(x)$ give us the result.
Since $|\sinc(x)| \leq 1$ for all $x,$ we have that $|F(x)| = O(\widetilde{\sigma}^{-1})\sqrt{\log(1/\eps)}\exp(-\Omega((x/\widetilde{\sigma})^2)).$
\end{proof}
In order to show the correctness of our algorithm, we will need to introduce a new distribution, $Y'.$ 
We let $Y'$ be the truncated inverse Fourier transform of the pointwise product of $\wh{F}$ with $\wh{X}$ 
(note that $Y$ differs from $Y'$ by using $\wh{Z}$ instead of $\wh{X}$). 
We begin by showing that $\dtv(X,Y')$ is small. To do this, we let $\wh{Y'}=\wh{X}\wh{F}.$
\begin{claim} 
We have that
{$\dtv(X,Y')=O(\eps^2 \sqrt{\log(1/\eps)}).$}
\end{claim}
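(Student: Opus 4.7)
The plan is to bound $\dtv(X,Y')$ by splitting the $\ell^1$ error into (a) the tail mass of $X$ outside the truncation interval $I = [\widetilde{\mu}-C\widetilde\sigma\sqrt{\log(1/\eps)}, \widetilde{\mu}+C\widetilde\sigma\sqrt{\log(1/\eps)}]$, where $Y'$ is forced to $0$, and (b) the error on $I$ itself, which equals $|X(n)-(X*F)(n)|$ by construction of $\wh{Y'}=\wh{X}\wh{F}$. For (a), Bernstein's inequality applied to the $2$-SIIRV $X$ gives mass at most $\eps^{\Omega(C^2)}$ outside $I$ (using $\widetilde\sigma \ge \widetilde{\Omega}(\sqrt{\log(1/\eps)})$ and $\widetilde\mu$ within one standard deviation of $\mu$), which is negligible for $C$ a sufficiently large constant.

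The main work is bounding the error on $I$. I will use Cauchy--Schwarz:
\[
\sum_{n\in I}|X(n)-(X*F)(n)| \;\le\; \sqrt{|I|}\cdot\Big(\sum_{n\in\Z}|X(n)-(X*F)(n)|^2\Big)^{1/2},
\]
and apply Plancherel to turn the right-hand sum into $\int_{-1/2}^{1/2} |\wh X(\xi)|^2\,|1-\wh F(\xi)|^2\,d\xi$. This integral I split into three frequency regions according to Claim~\ref{clm:Fhat}: on $|\xi|\le (C-3)\widetilde\sigma^{-1}\sqrt{\log(1/\eps)}$ the factor $|1-\wh F(\xi)|^2$ is at most $\eps^4$ and $|\wh X(\xi)|^2\le 1$; on $|\xi|\ge (C+3)\widetilde\sigma^{-1}\sqrt{\log(1/\eps)}$ I bound $|1-\wh F(\xi)|^2\le 1$ and use the standard $2$-SIIRV Fourier estimate $|\wh X(\xi)|^2 \le \exp(-16\sigma^2 \xi^2)$ (which follows from the $k=2$ case of the computation in the proof of Lemma~\ref{FourierSupportLem}, since for $k=2$ the only ``resonant'' point modulo $1$ is $\xi = 0$); the intermediate sliver has length $O(\widetilde\sigma^{-1}\sqrt{\log(1/\eps)})$ and $|\wh X(\xi)|^2$ already exponentially small there. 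Putting $\sigma\asymp\widetilde\sigma$ together, the low-frequency region dominates and gives $\sum_n|X(n)-(X*F)(n)|^2 \le O(\eps^4\widetilde\sigma^{-1}\sqrt{\log(1/\eps)})$; the other two regions contribute $\eps^{\Omega(C^2)}$.

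Substituting $|I|=O(\widetilde\sigma\sqrt{\log(1/\eps)})$ into Cauchy--Schwarz then yields
\[
\sum_{n\in I}|X(n)-(X*F)(n)| \;\le\; O\!\left(\sqrt{\widetilde\sigma\sqrt{\log(1/\eps)}}\right)\cdot O\!\left(\eps^2\widetilde\sigma^{-1/2}\log^{1/4}(1/\eps)\right) \;=\; O(\eps^2\sqrt{\log(1/\eps)}),
\]
and combining with (a) yields the claim. The main obstacle I expect is the frequency decay bound: one must verify $|\wh X(\xi)| \le \exp(-\Omega(\sigma^2\xi^2))$ uniformly on $[-1/2,1/2]$ (not just near $0$) and check that $\sigma\asymp\widetilde\sigma$ up to the constant factors promised by Step~1 of the algorithm, so that the Gaussian-tail region of $\wh F$ lines up with the Gaussian-tail region of $\wh X$. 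Everything else is bookkeeping that follows from Claim~\ref{clm:Fhat} and Plancherel.
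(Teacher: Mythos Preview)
Your proposal is correct and follows essentially the same decomposition as the paper: bound the tail mass of $X$ outside $I$ via Bernstein, and on $I$ control $|X-(X*F)|$ by splitting the Fourier side into the region where $|1-\wh F|\le\eps^2$ and the region where $|\wh X(\xi)|\le\exp(-\Omega(\sigma^2\xi^2))$ (the $k=2$ instance of Lemma~\ref{FourierSupportLem}). The only tactical difference is the norm pathway at the end: the paper bounds $\|\wh X-\wh{Y'}\|_1$, passes to $\|X-Y'\|_\infty$ on $I$, and multiplies by $|I|$, whereas you use Plancherel to bound $\|X-X*F\|_2$ and then Cauchy--Schwarz against $\sqrt{|I|}$. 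Both routes land on the same estimate; your $L_2$ version in fact hits exactly $O(\eps^2\sqrt{\log(1/\eps)})$, matching the claim as stated.
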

\begin{proof}
Note that
$$
\wh{X}(\xi)-\wh{Y'}(\xi) = \wh{X}(\xi)(1-\wh{F}(\xi)).
$$
If $[\xi] \leq (C-3)\widetilde{\sigma}^{-1}\sqrt{\log(1/\eps)},$
by Claim \ref{clm:Fhat} (ii), we have $1-\wh{F}(\xi) \leq \eps^2.$
Since $|\wh{X}(\xi)| = |\E[e(X\xi)]| \leq 1,$
in this case we have $|\wh{X}(\xi)-\wh{Y'}(\xi)| \leq \eps^2.$
Otherwise, if $[\xi] \geq (C-3)\widetilde{\sigma}^{-1}\sqrt{\log(1/\eps)},$
by Lemma \ref{FourierSupportLem} part (i) it follows that $|\wh{X}(\xi)| \leq \exp(-\Omega(\widetilde{\sigma}^2[\xi]^2)).$
Since $0 \leq 1-\wh{F}(\xi) \leq 1,$
in this case we have $ |\wh{X}(\xi)-\wh{Y'}(\xi)| \leq |\wh{X}(\xi)||1-\wh{F}(\xi)| \leq  \exp(-\Omega(\widetilde{\sigma}^2[\xi]^2)).$
Therefore,
\begin{align*}
|\wh{X}-\wh{Y'}|_1 = & \int_{-1/2}^{1/2} |\wh{X}(\xi)-\wh{Y'}(\xi)| d\xi \\
					= &  \int_{-1/2}^{-(C-3)\widetilde{\sigma}^{-1}\sqrt{\log(1/\eps)}} |\wh{X}(\xi)-\wh{Y'}(\xi)| d\xi + 
					\int_{-(C-3)\widetilde{\sigma}^{-1}\sqrt{\log(1/\eps)}}^{(C-3)\widetilde{\sigma}^{-1}\sqrt{\log(1/\eps)}} |\wh{X}(\xi)-\wh{Y'}(\xi)| d\xi \\
					+ & \int_{-(C-3)\widetilde{\sigma}^{-1}\sqrt{\log(1/\eps)}}^{1/2} |\wh{X}(\xi)-\wh{Y'}(\xi)| d\xi \\
					\leq & \eps^2 \cdot 2(C-3)\widetilde{\sigma}^{-1}\sqrt{\log(1/\eps)} + 
					2 \int_{(C-3)\widetilde{\sigma}^{-1}\sqrt{\log(1/\eps)}}^{1/2}  \exp(-\Omega(\widetilde{\sigma}^2[\xi]^2)) d\xi \\
					= & O(\eps^2\sqrt{\log(1/\eps)}/\widetilde{\sigma}) + 
					\sqrt{2 \pi}/s \cdot \Pr_{W \sim N(0,\widetilde{\sigma}^{-2})}\left[ |W| \geq  (C-3)\widetilde{\sigma}^{-1}\sqrt{\log(1/\eps)} \right]\\
					\leq & O(\eps^2\sqrt{\log(1/\eps)}/\widetilde{\sigma}) \;.
\end{align*}
Taking an inverse Fourier transform implies that $|X-Y'|_\infty=O(\eps^2/\widetilde{\sigma})$,
within the domain of truncation. Since this domain has size $O(\widetilde{\sigma} \sqrt{\log(1/\eps)}),$
we have that the $L_1$ error between $X$ and $Y'$
within this domain is $O(\sqrt{\log(1/\eps)} \eps^2).$
However, both $X$ and $Y'$ have at most $O(\eps^2)$ mass outside of this domain,
and therefore we have that $\dtv(X,Y')=O(\eps^2 \sqrt{\log(1/\eps)}).$
\end{proof}
It remains to bound from above $\dtv(Y,Y')$.
In particular, we will show that $\dtv(Y,Y')$ has expectation $O(\eps).$
Then, by decreasing $\eps$ by a constant factor
and applying the Markov and triangle inequalities,
we will have that $\dtv(X,Y)<\eps$ with probability at least $2/3.$
\begin{proposition}
We have that $\E\left[\dtv(Y,Y') \right] \leq O(\eps).$
\end{proposition}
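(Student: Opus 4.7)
The plan is to reduce the total variation error to a pointwise variance calculation. Writing $Y(x) = (1/N)\sum_i F(x - S_i)$ as an empirical mean over i.i.d.\ samples $S_i \sim X$ whose true mean is $Y'(x) = (X*F)(x)$, the per-point variance is bounded by $\E[|Y(x)-Y'(x)|^2] \leq (X*F^2)(x)/N$. Combining Jensen's inequality with $2\dtv(Y,Y') = \|Y-Y'\|_1$ yields
\[
\E[\dtv(Y,Y')] \leq \frac{1}{2\sqrt{N}} \sum_x \sqrt{(X*F^2)(x)}.
\]
Since $N = \Theta(\sqrt{\log(1/\eps)}/\eps^2)$, the proposition reduces to showing $\sum_x \sqrt{(X*F^2)(x)} = O(\log^{1/4}(1/\eps))$.

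The first ingredient is the global identity $\|F\|_2^2 = O(\widetilde{\sigma}^{-1}\sqrt{\log(1/\eps)})$, which follows from Plancherel together with the observation that $\wh{F} = I * \wh{G}$ is nonnegative, bounded by $1$, and has $L^1$ integral $O(\widetilde{\sigma}^{-1}\sqrt{\log(1/\eps)})$, so $\int |\wh{F}|^2 \leq \int \wh{F}$. I would then split $\sum_x \sqrt{(X*F^2)(x)}$ by distance from $\widetilde{\mu}$. In the near regime $|x - \widetilde{\mu}| \leq \widetilde{\sigma}$ (a window of size $O(\widetilde{\sigma})$), Cauchy--Schwarz together with $\sum_x (X*F^2)(x) = \|F\|_2^2$ bounds the contribution by $\sqrt{O(\widetilde{\sigma})\cdot\|F\|_2^2} = O(\log^{1/4}(1/\eps))$, exactly matching the target.

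For the far regime $|x - \widetilde{\mu}| = t > \widetilde{\sigma}$, I would aim for a pointwise bound of the shape $(X*F^2)(x) \leq O(\widetilde{\sigma}^{-2}\sqrt{\log(1/\eps)}) \cdot e^{-c' t^2/\widetilde{\sigma}^2}$, whose square root sums (via the Gaussian tail) to $O(\log^{1/4}(1/\eps))$. To derive such a bound, split the $y$-sum at $|y - \mu| = t/2$: for typical $y$ use $|x-y| \geq t/2$ together with the Gaussian pointwise bound $|F(z)|^2 \leq A^2 e^{-cz^2/\widetilde{\sigma}^2}$ from Claim~\ref{clm:fnonhat}; for atypical $y$ bound the total tail probability by Bernstein's inequality for $2$-SIIRVs.

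The main obstacle is obtaining the improved prefactor $O(\widetilde{\sigma}^{-2}\sqrt{\log(1/\eps)})$ rather than the naive $A^2 = O(\widetilde{\sigma}^{-2}\log(1/\eps))$: using only the Gaussian bound on $F^2$, the far-regime sum comes out to $O(A\widetilde{\sigma}) = O(\sqrt{\log(1/\eps)})$, short of $O(\log^{1/4}(1/\eps))$ by a factor of $\log^{1/4}(1/\eps)$, which would only yield $\E[\dtv(Y,Y')] = O(\eps\log^{1/4}(1/\eps))$. Closing this gap requires either the sharper pointwise estimate $|F(z)| \leq e^{-z^2/(2\widetilde{\sigma}^2)}/(\pi|z|)$ for $|z|$ above a threshold of order $\widetilde{\sigma}/\sqrt{\log(1/\eps)}$ (a consequence of the $\sinc$-decay $|\sinc(u)| \leq 1/|u|$), or the local-maximum estimate $\max_y X(y) = O(1/\widetilde{\sigma})$ coming from a local CLT for $2$-SIIRVs of variance $\widetilde{\sigma}^2$. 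Either ingredient saves the missing $\sqrt{\log(1/\eps)}$ factor in the prefactor and delivers $\E[\dtv(Y,Y')] \leq O(\eps)$.
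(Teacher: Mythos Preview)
Your reduction to showing $\sum_x \sqrt{(X*F^2)(x)} = O(\log^{1/4}(1/\eps))$ is correct, and the near-regime treatment via Cauchy--Schwarz plus $\|F\|_2^2 = O(\widetilde{\sigma}^{-1}\sqrt{\log(1/\eps)})$ matches what the paper does for the block $t=0$. The gap you flag in the far regime is real, but neither of your proposed fixes closes it as written. The sinc bound $|F(z)|\le e^{-z^2/(2\widetilde\sigma^2)}/(\pi|z|)$ is useless in the window $|x-y|\lesssim \widetilde\sigma/\sqrt{\log(1/\eps)}$, where you are forced back to $|F|\le A$ with $A^2=\Theta(\widetilde\sigma^{-2}\log(1/\eps))$; combining this with only the Bernstein tail on $X$ still leaves $(X*F^2)(x)\lesssim \widetilde\sigma^{-2}\log(1/\eps)\,e^{-\Omega(t^2)}$. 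The local-maximum bound $\max_y X(y)=O(1/\widetilde\sigma)$ gives $(X*F^2)(x)\le O(1/\widetilde\sigma)\|F\|_2^2 = O(\widetilde\sigma^{-2}\sqrt{\log(1/\eps)})$ with no $t$-decay at all, so the far sum is $O(\log^{3/4}(1/\eps))$. What would work is the stronger pointwise Gaussian bound $X(y)=O(\widetilde\sigma^{-1}e^{-c(y-\mu)^2/\widetilde\sigma^2})$, but that is more than either ingredient you name.

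The paper avoids all of this by applying the \emph{same} block-Cauchy--Schwarz you used for the near regime to every window $[\mu+t\widetilde\sigma,\mu+(t+1)\widetilde\sigma]$. After Cauchy--Schwarz and the change of variables $r=p-q$, the inner sum is $\sum_r F^2(r)\sum_{q\in[\mu+t\widetilde\sigma-r,\mu+(t+1)\widetilde\sigma-r]} X(q)$. The key move is to factor $F^2(r)=e^{-r^2/\widetilde\sigma^2}S^2(r)$, use Bernstein to bound the inner $X$-mass by $\exp(-\Omega((t-r/\widetilde\sigma)^2))$, and then combine the two Gaussians in $r/\widetilde\sigma$ to extract a uniform factor $e^{-\Omega(t^2)}$ independent of $r$. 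What remains is $\sum_r S^2(r)=\int I^2 = O(\widetilde\sigma^{-1}\sqrt{\log(1/\eps)})$ by Plancherel. This yields the block bound $O(\log^{1/4}(1/\eps)/\sqrt{N})e^{-\Omega(t^2)}$, and summing over $t$ gives $O(\eps)$. The factorization $F^2=G^2S^2$ with Plancherel on $S^2$ is the missing idea; it replaces the need for any pointwise refinement of either $F$ or $X$.
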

\begin{proof}
Recall that $Y$ is a the convolution of $Z$ with $F.$
If we consider our samples to be random variables $X_{(1)}, \ldots, X_{(N)}$
each of which is an i.i.d. copy of $X,$
we can express $Y(p)$ for a given $p$ as a random variable:
$
Y(p) = \frac{1}{N} \sum_{i=1}^N F(p-X_{(i)}) \;,
$
for $a \leq p \leq b,$
where $a= \widetilde{\mu}-C\widetilde{\sigma}\sqrt{\log(1/\eps)}$ and $b=\widetilde{\mu}+C\widetilde{\sigma}\sqrt{\log(1/\eps)}.$
Note that the expectation of $Y(p)$ is
$$\frac{1}{N} \sum_{i=1}^N  \E_X[F(p-X)]= \E_X[F(p-X)]=Y'(p).$$
Therefore, we have that $\E[|Y(p)-Y'(p)|] =O(\sqrt{\var(Y(p))}).$
For the variance we have the following sequence of (in)equalities:
\begin{align*}
& \var[Y(p)] =  \var[F(p-X)]/N =  \E\left[\left(F(p-X) - \sum_{q=a}^b F(p-q) X(q)\right)^2\right]/N \\
= & \sum_{r=a}^b (X(r)/N) \cdot \left(F^2(p-r)   \vphantom{\sum_{q=a}^b} \right.\\
+ & \left.  \sum_{q=a}^b \Big( F^2(p-q) X(q)^2 - 2 F(p-r)F(p-q) X(q) + 2\littlesum_{q' \neq q} F(p-q)F(p-q')X(q)X(q') \Big) \right) \\
= & 1/N \cdot \sum_{q=a}^b F^2(p-q) (X(q)-X(q)^2) 
\leq  1/N \cdot \sum_{q=a}^b F^2(p-q) X(q) \;.
\end{align*}
We claim that this quantity will become small as $p$ moves away from $\mu.$
Intuitively, this should be the case because for $p$ far from $\mu,$
then for all $q$ either $|p-q|$ will be large or $|q-\mu|$ will be large.
In the former case, $F(p-q)$ is small, and in the latter $X(q)$ is.
In order to properly analyze this quantity, we will have to group up these errors for $p$ in blocks of size $\widetilde{\sigma}.$
In particular, we have that
\begin{align*}
& \sum_{p=\mu+t\widetilde{\sigma}}^{\mu+(t+1)\widetilde{\sigma}}  \E[|Y(p)-Y'(p)|] = \sum_{p=\mu+t\widetilde{\sigma}}^{\mu+(t+1)\widetilde{\sigma}} O(\sqrt{\var(Y(p))})\\
 & = O(1/\sqrt{N}) \sum_{p=\mu+t\widetilde{\sigma}}^{\mu+(t+1)\widetilde{\sigma}} \sqrt{\sum_{q=a}^b F^2(p-q) X(q)}\\
 & = O(\sqrt{\widetilde{\sigma}/N}) \sqrt{\sum_{p=\mu+t\widetilde{\sigma}}^{\mu+(t+1)\widetilde{\sigma}}  \sum_{q=a}^b F^2(p-q) X(q)} \tag*{(by Cauchy-Schwartz)}\\
& \leq O(\sqrt{\widetilde{\sigma}/N}) \sqrt{\sum_{r=\mu+t\widetilde{\sigma}-a}^{b-\mu-(t+1)\widetilde{\sigma}} F^2(r) \sum_{q=\mu+t\widetilde{\sigma}-r}^{\mu+(t+1)\widetilde{\sigma}-r} X(q)} \\
& \leq O(\sqrt{\widetilde{\sigma}/N}) \sqrt{\sum_{r=-\infty}^\infty F^2(r) \exp(-\Omega(|t\widetilde{\sigma}-r|/\widetilde{\sigma})^2)}  \tag*{(by Bernstein's inequality)}\\
& = O(\sqrt{\widetilde{\sigma}/N}) \sqrt{\sum_{r=-\infty}^\infty S^2(r) \exp(-\Omega(((t\widetilde{\sigma}-r)/\widetilde{\sigma})^2 + (r/\widetilde{\sigma})^2))} \\
& = O(\sqrt{\widetilde{\sigma}/N}) \sqrt{\sum_{r=-\infty}^\infty  S^2(r) \exp(-\Omega(t^2))}\\
& = O(\sqrt{\widetilde{\sigma}/N}) \exp(-\Omega(t^2)) \sqrt{\int_{\xi=0}^1  I^2(\xi) d\xi} \tag*{(by Plancherel's Theorem)}\\
& = O(\sqrt{\widetilde{\sigma}/N}) \exp(-\Omega(t^2)) \widetilde{\sigma}^{-1}\widetilde{\sigma}^{1/2}\log^{1/4}(1/\eps) = O(\log^{1/4}(1/\eps)/\sqrt{N})\exp(-\Omega(t^2)).
\end{align*}
Summing over $t$ gives that
$$
\E[\dtv(Y,Y')] = O(\log^{1/4}(1/\eps)/\sqrt{N}) = O(\eps) \;,
$$
for $N=\sqrt{\log(1/\eps)}/\eps^2.$
This completes the proof.
\end{proof}

\subsubsection{Sample Optimal Learning Algorithm for $k$-SIIRVs}\label{ssec:opt-sample-k-siirv}

\begin{theorem}\label{kSIIRVThm}
For $\eps \le 1/\poly(k)$, there exists an algorithm that given $O(k\sqrt{\log(1/\eps)}/\eps^2)$
independent samples from a $k$-SIIRV, $X$, with probability at least $2/3$
outputs a hypothesis distribution $Y$ that is within $\eps$ of $X$ in total variational distance.
\end{theorem}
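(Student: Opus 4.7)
The plan is to extend the algorithm {\tt Learn-$2$-SIIRV-Optimal-Sample} of the preceding subsection to general $k$. As in the $2$-SIIRV case, after estimating the mean $\widetilde{\mu}$ and standard deviation $\widetilde{\sigma}$, we handle the two degenerate regimes separately: if $\widetilde{\sigma} = O(k \sqrt{\log(1/\eps)})$, the effective support of $X$ has size $O(k \sqrt{\log(1/\eps)})$ so the empirical distribution already learns $X$ within $\eps$ using $O(k\sqrt{\log(1/\eps)}/\eps^2)$ samples (combined with the bound on $\|\p\|_{1/2}$ for $k$-SIIRVs); and if $\widetilde{\sigma} = \Omega(k/\eps)$, then Theorem~7.1 of~\cite{CGS11} implies $X$ is $\eps$-close to a discretized Gaussian with matching parameters, which can be learned from $O(1/\eps^2)$ samples. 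In the main regime, we apply a smoothly-truncated Fourier reconstruction using a mollifier $\wh{F}$ that is the analog of the single Gaussian-smoothed indicator used for $2$-SIIRVs.

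The key new ingredient is the design of $\wh{F}$. By Lemma~\ref{FourierSupportLem}, the Fourier transform of a $k$-SIIRV is concentrated near the $O(k^2)$ rational points $a/b$ with $1 \leq b < k$ and $0 \leq a \leq b$, but part (ii) of the lemma guarantees that only $O(k)$ of these carry significant mass. The algorithm will therefore adaptively identify the heavy centers: we first estimate the empirical Fourier transform $\wh{Z}(a/b)$ at each of the $O(k^2)$ candidate frequencies (this can be done within the total sample budget since each estimate has variance $1/N$), and retain the $O(k)$ candidates whose magnitude exceeds an appropriate threshold $T = \mathrm{poly}(\eps)$. We then define $\wh{F}$ as the sum of $O(k)$ Gaussian-smoothed indicators of width $\Theta(\widetilde{\sigma}^{-1}\sqrt{\log(1/\eps)})$ centered at the retained frequencies (constructed exactly as in Claim~\ref{clm:Fhat}, multiplied by the complex phase $e(\cdot a/b)$), and output the truncation of the inverse Fourier transform of $\wh{F}\cdot \wh{Z}$ to $[\widetilde{\mu}-C\widetilde{\sigma}\sqrt{\log(1/\eps)}, \widetilde{\mu}+C\widetilde{\sigma}\sqrt{\log(1/\eps)}]$, which (as in the $2$-SIIRV case) is equivalent to truncating the convolution of the empirical distribution with the explicit real-space function $F$.

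The analysis splits the error into a bias term $\dtv(X,Y')$, where $Y'$ is obtained from the true $\wh{F}\cdot\wh{X}$, and a variance term $\E[\dtv(Y,Y')]$. For the bias, Lemma~\ref{FourierSupportLem}(i) implies that $\wh{X}(\xi)$ decays like $\exp(-\Omega(\widetilde{\sigma}^2 [\xi-a/b]^2))$ away from the heavy centers, while Claim~\ref{clm:Fhat} (applied around each center, together with the identification that the truly heavy centers are included with high probability) gives $1 - \wh{F}(\xi) \leq \eps^2$ on the relevant intervals; integrating $|\wh{X}(\xi)(1-\wh{F}(\xi))|$ over $[0,1]$ and applying Cauchy--Schwartz over the effective support yields $\dtv(X,Y') = O(k \eps^2 \log(1/\eps)) = O(\eps)$, where the final bound uses the hypothesis $\eps\leq 1/\mathrm{poly}(k)$. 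For the variance, we repeat the blocking argument of the $2$-SIIRV proof: the per-block contribution is at most $\sqrt{\widetilde{\sigma}/N}\cdot\sqrt{\|F\|_2^2}\cdot\exp(-\Omega(t^2))$, where by Plancherel and the essentially disjoint support of the $O(k)$ mollifiers we have $\|F\|_2^2 = \|\wh{F}\|_2^2 = O(k\widetilde{\sigma}^{-1}\sqrt{\log(1/\eps)})$, i.e., precisely a factor of $k$ larger than the corresponding $2$-SIIRV bound. Summing the geometric-in-$t$ series and setting the total to $\eps$ gives $N = \Theta(k\sqrt{\log(1/\eps)}/\eps^2)$.

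The main obstacle is the adaptive identification of the heavy Fourier centers. We must simultaneously (a) include every frequency whose true Fourier mass is large enough to affect the bias calculation, and (b) avoid including so many that $\|F\|_2^2$ exceeds $O(k\widetilde{\sigma}^{-1}\sqrt{\log(1/\eps)})$. The hypothesis $\eps \leq 1/\mathrm{poly}(k)$ is what provides the necessary slack: it ensures that the gap between the ``definitely heavy'' threshold (inherited from part (ii) of Lemma~\ref{FourierSupportLem}) and the tolerable ``noise floor'' for the bias is large compared to the $O(1/\sqrt{N})$ accuracy with which the empirical Fourier transform at a single point is known, so that with constant probability the set of retained centers lies between these thresholds, has size $O(k)$, and contains every frequency whose omission could degrade the bias beyond $\eps$.
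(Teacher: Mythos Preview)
Your proposal diverges from the paper's proof in a substantial way, and the point of divergence contains a real gap.

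\textbf{The large-variance step is incorrect for $k>2$.} You claim that when $\widetilde{\sigma}=\Omega(k/\eps)$ the distribution is $\eps$-close to a discretized Gaussian. This is false for $k$-SIIRVs with $k\geq 3$: a $3$-SIIRV whose summands take values in $\{0,2\}$ is supported on even integers and is at variation distance $1$ from any discrete Gaussian, no matter how large its variance. (This is exactly the periodic structure captured by Case~2 of Theorem~\ref{thm:reg}.) The paper simply does not split off a large-variance case; the main smoothed-Fourier argument handles all $\widetilde{\sigma}=\Omega(k\sqrt{\log(1/\eps)})$.

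\textbf{The adaptive center-selection is where the real gap lies.} Your plan is to evaluate $\wh{Z}(a/b)$ at the $O(k^2)$ rationals with $b<k$, threshold at some $T=\mathrm{poly}(\eps)$, keep the $O(k)$ survivors, and center a width-$\Theta(\widetilde{\sigma}^{-1}\sqrt{\log(1/\eps)})$ mollifier at each. Two assertions here are unsupported. First, Lemma~\ref{FourierSupportLem}(i) only says that every heavy $\xi$ is \emph{near} some rational $a/b$; it does not say that $|\wh{X}(a/b)|$ itself is then large. The local maximizer of $|\wh{X}|$ in the neighborhood of $a/b$ is generically \emph{not} at $a/b$ (the derivative of $\sum_j v_j[j\xi]^2$ at a rational with $b\nmid j$ contributions is nonzero), so thresholding at the rational points can miss the relevant intervals. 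Second, even if thresholding at $T$ worked, you have not argued that the number of survivors is $O(k)$ rather than $O(k^2)$; Lemma~\ref{FourierSupportLem}(ii) is a counting bound for the DFT on a fine grid, not a bound on the number of heavy rationals $a/b$.

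The paper deals with this by \emph{guessing} rather than adaptively identifying. It first shows (Lemma~\ref{lem:power-two-j}) that some dyadic scale $m\in[k]$ has $\sum_{j=m}^{2m} v_j=\Omega((\widetilde{\sigma}/k)^2)$, then (Claim~\ref{claim:w}) enumerates $2^{O(k)}$ candidate weight vectors $(w_m,\dots,w_{2m})$ with $w_j\le v_j$ and the same lower bound on their sum. For each guess it forms the explicit majorant $B(\xi)=\exp(-\Theta(\sum_{j=m}^{2m} w_j[j\xi]^2))$, proves the two-point bound $B(\xi)B(\xi')\le\exp(-\Omega(\widetilde{\sigma}^2 m^2 k^{-2}(\xi-\xi')^2))$ on intervals of length $1/(6m)$, and places mollifiers of width $\Theta(k\widetilde{\sigma}^{-1}\sqrt{\log(1/\eps)}/m)$ at the $6m$ interval maxima $\xi_i$ of $B$. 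Note that neither the number of centers ($6m$) nor their width ($\asymp k/(m\widetilde{\sigma})$) is what you wrote; only their product is the universal $O(k\widetilde{\sigma}^{-1}\sqrt{\log(1/\eps)})$, which is what feeds into the $\|I\|_2^2$ term in the variance calculation. A tournament over the $2^{O(k)}$ hypotheses then costs $O(k/\eps^2)$ additional samples. This is precisely why the paper remarks that the resulting running time is exponential in $k$; if your adaptive scheme worked, it would yield a polynomial-time algorithm, which the paper does not claim.
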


The proof of this theorem is somewhat analogous to that of Theorem~\ref{PBDThm}.
However, it should be noted that the runtime of this algorithm is not given.
This is because the runtime of the simplest such algorithm is actually exponential in $k.$
The difficulty is that while in the $2$-SIIRV case we could determine the effective support
 of the Fourier transform just from the standard deviation,
 in the case of $k$-SIIRVs this is not the case.
 In essence, our algorithm will first guess this effective support (of which there are exponentially many possibilities),
 and then given this guess will run an appropriate algorithm.
 At the end, we will need to run a standard tournament procedure (e.g., ~\cite{DL:01})
 to determine which of these guesses lead to the closest approximation to $X.$
 Since the number of possibilities is $2^{O(k)}$ (see Claim~\ref{claim:w}),
 the sample complexity of this tournament is $O(k/\eps^2).$

As in Algorithm {\tt Learn-SIIRV}, we begin by estimating the mean and variance with $O(1)$ samples,
producing estimates $\widetilde{\mu}$ and $\widetilde{\sigma}^2$ with $(\var[X]+1)/2 \leq \widetilde{\sigma}^2 \leq 2(\var[X]+1)$
and $\E[X]-\widetilde{\sigma} \leq \widetilde{\mu} \leq \E[X]+\widetilde{\sigma}$.
Again, if $\widetilde{\sigma} =O(k\sqrt{\log(1/\eps)})$, we output
the empirical distribution after taking  $O(k\sqrt{\log(1/\eps)}/\eps^2)$ samples.
Our upper bound on the $1/2$-norm of $k$-SIIRVs (Lemma~\ref{lem:half-norm}) implies that this step
gives an $\eps$-accurate hypothesis.
This allows us to assume that $\widetilde{\sigma}= \Omega(k\sqrt{\log(1/\eps)})$. We will assume this throughout the remainder of our analysis.

Once again, under these assumptions, we can use Bernstein's inequality to prove concentration bounds for $X$:
\begin{lemma}\label{concentrationLemma}
Suppose that $\widetilde{\sigma} \geq Ck\sqrt{\log(1/\eps)}$, for $C$ sufficiently large,
then for all $t\geq 0$ we have that
$$
\Pr(|X-\widetilde{\mu}| > (2+t)\widetilde{\sigma}) \leq \exp(-\Omega(t^2))+\eps^2.
$$
\end{lemma}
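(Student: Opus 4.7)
\textbf{Proof plan for Lemma \ref{concentrationLemma}.} The plan is to reduce to a direct application of Bernstein's inequality and then split the resulting bound into two regimes, governed by the assumption $\widetilde{\sigma} \geq Ck\sqrt{\log(1/\eps)}$.

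First I would translate the deviation from $\widetilde{\mu}$ into a deviation from the true mean $\mu = \E[X]$. By the guarantees on $\widetilde{\mu}$ and $\widetilde{\sigma}$ that are conditioned on from the very beginning of Section~\ref{ssec:opt-sample-k-siirv}, we have $|\widetilde{\mu}-\mu| \leq \widetilde{\sigma}$ and $\sigma^2 := \Var[X] \leq 2\widetilde{\sigma}^2$. Hence the event $|X-\widetilde{\mu}| > (2+t)\widetilde{\sigma}$ implies $|X-\mu| > (1+t)\widetilde{\sigma}$. Writing $X = \sum_{i=1}^n X_i$ with each $X_i$ supported on $\{0,\ldots,k-1\}$, so that $|X_i - \E[X_i]| \leq k$, Bernstein's inequality gives
\[
\Pr(|X-\mu| > (1+t)\widetilde{\sigma}) \ \leq \ 2\exp\!\left(-\frac{(1+t)^2 \widetilde{\sigma}^2 / 2}{\sigma^2 + k(1+t)\widetilde{\sigma}/3}\right).
\]

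Next I would split into the two regimes determined by which term in the denominator dominates. In the \emph{Gaussian regime} $t \leq 6\widetilde{\sigma}/k - 1$, the $\sigma^2 \leq 2\widetilde{\sigma}^2$ term dominates and the exponent is $\Omega((1+t)^2) = \Omega(t^2)$, yielding a bound of the form $\exp(-\Omega(t^2))$. In the \emph{sub-exponential regime} $t > 6\widetilde{\sigma}/k - 1$, the $k(1+t)\widetilde{\sigma}/3$ term dominates and the exponent is $\Omega((1+t)\widetilde{\sigma}/k)$. Here I would use the hypothesis $\widetilde{\sigma}/k \geq C\sqrt{\log(1/\eps)}$ twice: once to note that in this regime $(1+t) \geq 6\widetilde{\sigma}/k \geq 6C\sqrt{\log(1/\eps)}$, and a second time to lower bound $\widetilde{\sigma}/k$. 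Together this gives $(1+t)\widetilde{\sigma}/k \geq 6C^2 \log(1/\eps)$, so for $C$ chosen sufficiently large the Bernstein bound in this regime is at most $\eps^2$.

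Combining the two regimes yields the claimed $\exp(-\Omega(t^2)) + \eps^2$ upper bound. The only mildly delicate point, which I would spell out carefully, is choosing the constant $C$ large enough so that the sub-exponential regime's bound is absorbed into the $\eps^2$ term uniformly in $t$; since that regime forces $(1+t)$ to grow at least like $\sqrt{\log(1/\eps)}$, no further work is needed beyond picking $C$ a large constant. No step here is a serious obstacle; the analysis is a routine two-regime Bernstein calculation made possible by the standing assumption on $\widetilde{\sigma}/k$.
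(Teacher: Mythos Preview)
Your proposal is correct and follows essentially the same approach as the paper: translate the deviation from $\widetilde{\mu}$ to a deviation from $\mu$ using $|\widetilde{\mu}-\mu|\leq\widetilde{\sigma}$, then apply Bernstein's inequality. Your two-regime split is in fact more careful than the paper's rather compressed argument; in particular, it makes explicit where the additive $\eps^2$ term comes from (the sub-exponential regime $t \gtrsim \widetilde{\sigma}/k$), which the paper's proof leaves implicit.
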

\begin{proof}
We assumed that $|\mu-\widetilde{\mu}| \leq \widetilde{\sigma}$. So, 
if $|X-\widetilde{\mu}| \geq (2+t)\widetilde{\sigma})$, then $|X-\mu| \geq (1+t)\widetilde{\sigma}$.
Bernstein's inequality gives that
$$\Pr(X-\E[X] \geq (1+t)\widetilde{\sigma}) \leq \exp\left(\frac{-\frac{1}{2}(1+t)^2\widetilde{\sigma}^2}{\Var[X]+ \frac{1}{3}k}\right).$$
Since $\widetilde{\sigma}= \Omega(k\sqrt{\log(1/\eps)})=\Omega(k)$ and $\Var[X] = O(\widetilde{\sigma}^2)$, we have that $\Var[X]+ \frac{1}{3}k=O(\widetilde{\sigma}^2)$.
\end{proof}

In particular, this implies that with probability $1-2\eps^2$ that $|X-\widetilde{\mu}|=O(\widetilde{\sigma}\sqrt{\log(1/\eps)}).$
Next, we will recall concentration bounds on the Fourier transform of $X$. To do so, we first devise some notation.
Let
$
X=\sum_{i=1}^n X_i \;,
$
where $X_i$ are independent $k$-IRVs. We let $p_{i,j}$ be the probability that two independent copies of $X_i$
have absolute difference $j$. We let $v_j=\sum_{i=1}^n p_{i,j}$. In terms of this,
we restate Equations (\ref{ineq:this}) and (\ref{eq:varj}) from the proof of Lemma~\ref{FourierSupportLem} as
$$
|\wh{X}(\xi)| = \exp\left(-\Omega\left(\sum_{j=1}^{k-1} v_j [j\xi]^2\right) \right) \;,
$$
and
$$
\var(X) = \sum_{j=1}^{k-1} j^2 v_j \;.
$$
Finally, we note that we can find some particular good scale to consider. In particular, we note
\begin{lemma} \label{lem:power-two-j}
There exists an $m \in [k]$ so that
$
\sum_{j=m}^{2m} v_j = \Omega(\widetilde{\sigma}/k)^2.
$
\end{lemma}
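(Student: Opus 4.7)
The plan is to argue by a double-counting / pigeonhole argument. Since we have already assumed $\widetilde{\sigma} = \Omega(k\sqrt{\log(1/\eps)})$ and $\widetilde{\sigma}^2 = \Theta(\var[X]+1)$, the given lower bound on $\widetilde{\sigma}$ ensures $\var[X] = \Theta(\widetilde{\sigma}^2)$, so using the identity $\var(X) = \sum_{j=1}^{k-1} j^2 v_j$ recalled just above the lemma, we have $\sum_{j=1}^{k-1} j^2 v_j = \Theta(\widetilde{\sigma}^2)$.

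The main observation is that, by crudely bounding $j^2 \leq k\cdot j$ (valid since $j \le k-1$), we get the weighted estimate
$$\sum_{j=1}^{k-1} j\, v_j \;\geq\; \frac{1}{k}\sum_{j=1}^{k-1} j^2 v_j \;=\; \Omega\!\left(\widetilde{\sigma}^2/k\right).$$
Now I would swap the order of summation in the quantity $\sum_{m=1}^{k-1} V_m$ where $V_m := \sum_{j=m}^{2m} v_j$. Each $j$ appears in $V_m$ precisely for those $m$ with $\lceil j/2\rceil \leq m \leq j$, so it contributes to $\Theta(j)$ terms (at least $\lfloor j/2\rfloor$). Thus
$$\sum_{m=1}^{k-1} V_m \;=\; \sum_{j=1}^{k-1} v_j \cdot \bigl|\{m : \lceil j/2\rceil \leq m \leq j\}\bigr| \;\geq\; \tfrac{1}{2}\sum_{j=1}^{k-1} j\, v_j \;=\; \Omega\!\left(\widetilde{\sigma}^2/k\right).$$

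Finally, since this is a sum over at most $k$ values of $m$, there must exist some $m \in [k]$ with $V_m \geq \Omega(\widetilde{\sigma}^2/k^2) = \Omega\bigl((\widetilde{\sigma}/k)^2\bigr)$ by averaging/pigeonhole, which is precisely the desired conclusion. I do not expect this proof to have any serious obstacle: the only subtle point is noticing that the ``obvious'' dyadic partition approach loses a spurious $\log k$ factor, which is why I would avoid it in favor of the integer-sum-swap argument above that exactly recovers the cleaner $(\widetilde{\sigma}/k)^2$ bound stated in the lemma.
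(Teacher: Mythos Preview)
Your proof is correct and takes a genuinely different route from the paper's. The paper argues by contradiction via a dyadic decomposition: assuming $\sum_{j=m}^{2m} v_j < c(\widetilde{\sigma}/k)^2$ for every $m$, one bounds $\sum_{j=m}^{2m} j^2 v_j < 4c(\widetilde{\sigma} m/k)^2$ and then sums only over powers of two $m = 2^\ell \le k$, obtaining $\sum_{j} j^2 v_j < 4c(\widetilde{\sigma}/k)^2 \sum_{\ell} 4^{\ell} = O(c\,\widetilde{\sigma}^2)$, a contradiction for small $c$. Your argument instead averages over \emph{all} integer $m$ and exploits the multiplicity: each $j$ falls in $\Theta(j)$ of the windows $[m,2m]$, so $\sum_m V_m \ge \tfrac12 \sum_j j\,v_j \ge \tfrac{1}{2k}\sum_j j^2 v_j = \Omega(\widetilde{\sigma}^2/k)$, and pigeonhole over $k-1$ values of $m$ finishes. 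This is arguably more elementary (no proof by contradiction, no dyadic scales). One small correction to your closing remark: the dyadic approach, done as in the paper, does \emph{not} lose a $\log k$ factor, precisely because $\sum_{\ell \le \log_2 k} 4^\ell$ is geometric and dominated by its last term $\Theta(k^2)$; both approaches recover the clean $\Omega((\widetilde{\sigma}/k)^2)$ bound.
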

\begin{proof}
We assume for sake of contradiction that this is not the case. We have that
$$
\sum_{j=m}^{2m} v_j < c (\widetilde{\sigma}/k)^2 \;,
$$
where $c$ is a sufficiently small constant. This implies that
$$
\sum_{j=m}^{2m} j^2 v_j < c (2\widetilde{\sigma}m/k)^2.
$$
Summing over $m$ powers of $2$ less than or equal to $k$, we find that
$$
\sum_{j=1}^k j^2 v_j < c \sum_{\ell=1}^{\lfloor \log_2(k) \rfloor} (2s/k)^2 4^\ell \leq c  4^{\log_2(k) +1}(4\widetilde{\sigma}^2/k^2) = 16c\widetilde{\sigma}^2.
$$
However, we know that
$$
\sum_{j=1}^k j^2 v_j = \var(X) = \Theta(\widetilde{\sigma}^2) \;.
$$
This yields a contradiction for $c$ sufficiently small.
\end{proof}
Our algorithm will begin by guessing a value for $m.$
We assume throughout the following that $m$ represents such an integer.
Furthermore, we assume that our algorithm has guessed values $w_m,w_{m+1},\ldots,w_{2m}$
so that $w_i \leq v_i$ for all $i$ and $\sum_{j=m}^{2m} w_j = \Omega(\widetilde{\sigma}/k)^2.$
\begin{claim} \label{claim:w}
Given $m$ and $\widetilde{\sigma}$, this can be done by considering only $2^{O(k)}$ possible vectors of $w$'s.
\end{claim}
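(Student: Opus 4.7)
The plan is to discretize the allowable vectors $w = (w_m, w_{m+1}, \ldots, w_{2m})$ so that only $2^{O(k)}$ candidates need to be tried, while guaranteeing that at least one candidate satisfies both $w_j \leq v_j$ for all $j$ and $\sum_{j} w_j = \Omega((\widetilde{\sigma}/k)^2)$. Since $m+1 \leq k+1$, to keep the total count at $2^{O(k)}$ each coordinate can have only $O(1)$ possible values (up to a polylogarithmic multiplicative overhead).

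First, I would handle the easy case where some $v_{j^\ast}$ is individually very large, say $v_{j^\ast} \geq (\widetilde{\sigma}/k)^2$: then setting $w_{j^\ast} = (\widetilde{\sigma}/k)^2$ and $w_j = 0$ for $j \neq j^\ast$ already achieves the target. There are only $m+1 \leq k+1$ such ``singleton'' candidates. So it suffices to handle the case where all $v_j \leq (\widetilde{\sigma}/k)^2$.

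In this case, I would bucket the interval $[0, (\widetilde{\sigma}/k)^2]$ into dyadic levels $I_t = \bigl[2^{-(t+1)}(\widetilde{\sigma}/k)^2,\, 2^{-t}(\widetilde{\sigma}/k)^2\bigr)$ for $t = 0, 1, \ldots, T$, where $T = \lceil\log_2(2(m+1))\rceil = O(\log k)$; values below $2^{-T}(\widetilde{\sigma}/k)^2$ contribute at most $(m+1)\cdot 2^{-T}(\widetilde{\sigma}/k)^2 \leq \tfrac{1}{2}(\widetilde{\sigma}/k)^2$ to $\sum v_j$ and can be safely dropped. The ``ideal'' choice would be $w_j = 2^{-(t+1)}(\widetilde{\sigma}/k)^2$ when $v_j \in I_t$; this preserves $w_j \leq v_j$ and gives $\sum w_j \geq \tfrac12 \sum v_j - O((\widetilde{\sigma}/k)^2) = \Omega((\widetilde{\sigma}/k)^2)$. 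The difficulty is that specifying a level independently for each of the $m+1$ coordinates produces $(T+1)^{m+1} = 2^{O(k \log\log k)}$ vectors, which is not $2^{O(k)}$.

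To collapse this to $2^{O(k)}$, I would use \emph{single-scale templates}: iterate over each scale $t \in \{0,1,\ldots,T\}$ and each subset $J \subseteq \{m,\ldots,2m\}$, setting $w_j = 2^{-(t+1)}(\widetilde{\sigma}/k)^2$ for $j \in J$ and $w_j = 0$ otherwise. This produces $(T+1)\cdot 2^{m+1} = O(\log k)\cdot 2^{O(k)} = 2^{O(k)}$ candidates. Correctness follows from the pigeonhole principle applied across the $O(\log k)$ scales: taking $J_t = \{j : v_j \in I_t\}$, we have $\sum_{t} |J_t| \cdot 2^{-(t+1)}(\widetilde{\sigma}/k)^2 = \Omega((\widetilde{\sigma}/k)^2)$, so some single scale $t^\ast$ captures $|J_{t^\ast}| \cdot 2^{-(t^\ast+1)}(\widetilde{\sigma}/k)^2 = \Omega((\widetilde{\sigma}/k)^2 / \log k)$, and the corresponding $(t^\ast, J_{t^\ast})$ candidate is valid. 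The main obstacle is the tension between getting $2^{O(k)}$ (not $2^{O(k\log\log k)}$) vectors and avoiding loss in the achieved sum; the single-scale template resolves this at the cost of a $\log k$ factor, which is absorbed by the downstream analysis of the learning algorithm and by the slack in the $\Omega(\cdot)$ notation in the statement of the claim.
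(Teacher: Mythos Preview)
Your single-scale pigeonhole argument does not prove the claim as stated. Picking one dyadic scale $t^\ast$ and a subset $J_{t^\ast}$ only guarantees
\[
\sum_j w_j \;=\; |J_{t^\ast}|\cdot 2^{-(t^\ast+1)}(\widetilde{\sigma}/k)^2 \;=\; \Omega\!\left(\frac{(\widetilde{\sigma}/k)^2}{\log k}\right),
\]
not $\Omega((\widetilde{\sigma}/k)^2)$. Your appeal to ``slack in the $\Omega(\cdot)$ notation'' is incorrect: $k$ is a parameter, so $1/\log k$ is not a universal constant. Tracing this loss through the subsequent lemma and corollary, the effective Fourier support widens by a $\sqrt{\log k}$ factor, and the final sample bound becomes $O(k\log k\cdot\sqrt{\log(1/\eps)}/\eps^2)$ rather than $O(k\sqrt{\log(1/\eps)}/\eps^2)$, which defeats the purpose of this section (matching the lower bound up to constants). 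Your all-levels alternative does preserve the sum but, as you note, yields $2^{O(k\log\log k)}$ candidates.

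The paper sidesteps both issues with a much simpler device: an \emph{arithmetic} rather than geometric discretization. Fix a quantum $\delta=\Theta((\widetilde{\sigma}/k)^2/m)$ and take $w_j=a_j\delta$ where the $a_j$ are nonnegative integers with $\sum_{j=m}^{2m} a_j = m$. Then $\sum_j w_j = m\delta = \Theta((\widetilde{\sigma}/k)^2)$ automatically, with no loss. Existence of a valid choice follows because $\sum_j \lfloor v_j/\delta\rfloor \geq m$ (rounding down each of the $m+1$ coordinates loses at most $(m+1)\delta \leq \sum_j v_j/2$), so one can pick integers $a_j\leq \lfloor v_j/\delta\rfloor$ summing to exactly $m$. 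The count of such integer vectors is $\binom{2m}{m}\leq 4^m=2^{O(k)}$ by stars-and-bars. The key idea you are missing is that fixing the \emph{sum} of the discretized coordinates turns the enumeration into a single binomial coefficient rather than a product of per-coordinate choices.
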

\begin{proof}
By Lemma \ref{lem:power-two-j}, we have that $\sum_{j=m}^{2m} v_j = \Omega(\widetilde{\sigma}/k)^2.$
Suppose concretely that $C'$ is a constant such that we always have $\sum_{j=m}^{2m} v_j \geq C'(\widetilde{\sigma}/k)^2.$
Then, we claim that there is some set of non-negative integers $a_j,$
for $m \leq j \leq 2m$ such that $\sum_{j=m}^{2m} a_j=m$
and $v_j \geq  (a_j/(m+1)) \cdot C'(\widetilde{\sigma}/k)^2/2.$
In particular, take $a_j=\lfloor \frac{v_j 2(m+1)}{C'(\widetilde{\sigma}/k)^2}\rfloor$
then $|(C'(\widetilde{\sigma}/k)^2/2(m+1))\left(\sum_{j=m}^{2m} a_j \right) - v_j| \leq C'(\widetilde{\sigma}/k)^2/2,$
and so $\sum_{j=m}^{2m} a_j \geq \frac{|v_j - C'(\widetilde{\sigma}/k)^2/2|}{C'(\widetilde{\sigma}/k)^2/2(m+1)} \geq m+1.$

If we guess such integers, then we can set $w_j = (a_j/(m+1)) \cdot C'(\widetilde{\sigma}/k)^2/2$
and have $v_j \geq w_j$ and $\sum_{j=m}^{2m} w_j \geq C'(\widetilde{\sigma}/k)^2/2= \Omega(\widetilde{\sigma}/k)^2.$
There are  ${n+k \choose n}$ $k$-vectors $a$ of non-negative integers summing to $n.$
So in our case, there are ${2m \choose m} \leq 2^{2m} \leq 2^{2k}$ possible combinations of $w_j.$
\end{proof}

We then have that
$$
|\wh{X}(\xi)| \leq B(\xi) \eqdef \exp\left(-\Theta\left(\sum_{j=m}^{2m} w_j [j\xi]^2\right) \right).
$$
We have the following simple lemma about $B$:
\begin{lemma}
If $|\xi-\xi'| < 1/{(6m)}$, then
$$
B(\xi)B(\xi') = \exp(-\Omega(\widetilde{\sigma}^2(\xi-\xi')^2 m^2/k^2)).
$$
\end{lemma}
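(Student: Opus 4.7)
The plan is to unpack the definition of $B$ and to relate $[j\xi]^2 + [j\xi']^2$ to $[j(\xi-\xi')]^2$, exploiting the hypothesis $|\xi-\xi'| < 1/(6m)$ to avoid wrap-around. Writing $d = \xi - \xi'$, the product becomes
\[
B(\xi) B(\xi') = \exp\!\left(-\Theta\!\left(\sum_{j=m}^{2m} w_j\, ([j\xi]^2 + [j\xi']^2)\right)\right),
\]
so it suffices to show $\sum_{j=m}^{2m} w_j ([j\xi]^2+[j\xi']^2) = \Omega(\widetilde{\sigma}^2 d^2 m^2/k^2)$.

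First, I would observe that for any $j \in [m,2m]$ we have $|jd| \le 2m \cdot \frac{1}{6m} = \frac{1}{3} < \frac{1}{2}$, so $[jd] = |jd| = j|d|$. Next, since $[\cdot]$ is a quotient metric on $\mathbb{R}/\mathbb{Z}$ (it satisfies the triangle inequality), we get $[j\xi] + [j\xi'] \ge [j\xi - j\xi'] = [jd] = j|d|$. Combining this with the elementary inequality $a^2 + b^2 \ge \tfrac{1}{2}(a+b)^2$ yields
\[
[j\xi]^2 + [j\xi']^2 \ge \tfrac{1}{2}([j\xi]+[j\xi'])^2 \ge \tfrac{1}{2}\, j^2 d^2.
\]

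Summing against $w_j$ and using $j \ge m$ together with the guarantee $\sum_{j=m}^{2m} w_j = \Omega(\widetilde{\sigma}/k)^2$ secured earlier,
\[
\sum_{j=m}^{2m} w_j ([j\xi]^2 + [j\xi']^2) \;\ge\; \tfrac{d^2}{2} \sum_{j=m}^{2m} j^2 w_j \;\ge\; \tfrac{d^2 m^2}{2} \sum_{j=m}^{2m} w_j \;=\; \Omega\!\left(\frac{\widetilde{\sigma}^2 d^2 m^2}{k^2}\right).
\]
Substituting back into the exponent gives the claimed bound.

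No significant obstacle is anticipated here: the only subtle point is making sure that $[jd]$ equals $|jd|$ (i.e.\ no mod-$1$ reduction occurs), which is precisely what the hypothesis $|\xi-\xi'| < 1/(6m)$ buys us for all $j \le 2m$. Everything else is the triangle inequality for the quotient metric and the AM-QM inequality, followed by the already-established lower bound on $\sum_{j=m}^{2m} w_j$.
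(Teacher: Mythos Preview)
Your proof is correct and follows essentially the same approach as the paper: both arguments reduce to showing $[j\xi]^2 + [j\xi']^2 = \Omega(j^2(\xi-\xi')^2)$ for each $j\in[m,2m]$, then sum against $w_j$ and invoke $\sum_j w_j = \Omega((\widetilde\sigma/k)^2)$. The only stylistic difference is that the paper establishes the per-$j$ bound via a short case split (showing that at least one of $[j\xi],[j\xi']$ is $\ge j|\xi-\xi'|/2$), whereas you use the triangle inequality for the quotient metric together with $a^2+b^2\ge\tfrac12(a+b)^2$; both rely on the hypothesis $|\xi-\xi'|<1/(6m)$ to ensure $[j(\xi-\xi')]=j|\xi-\xi'|$.
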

\begin{proof}
Firstly, we show that for each $m \leq j \leq 2m$,
either we have $[j\xi] \geq j|\xi-\xi'|/2$ or $[j\xi'] \geq j|\xi-\xi'|/2$. If $[j\xi] \leq j|\xi-\xi'|/2$,
then there is an integer $i$ such that $|j\xi-i| \leq |j\xi-j\xi'|/2$
and so $|j\xi'-i| \geq |j\xi-j\xi'| - |j\xi-i| \geq |j\xi-j\xi'|/2$.
But we also have $|j\xi'-i| \leq |j\xi-j\xi'| + |j\xi-i| \geq 3|j\xi-j\xi'|/2 \leq 3j/12m \leq \frac{1}{2}$.
So, $i$ is still one of the closest integers to $j\xi'$ and $[j\xi']=|j\xi'-i| \geq |j\xi-j\xi'|/2.$

Thus, we have:
\begin{align*}
B(\xi)B(\xi') = & \exp\left(-\Theta\left(\sum_{j=m}^{2m} w_j [j\xi]^2 + [j\xi'^2]\right) \right) \\
				\leq & \exp\left(-\Omega\left(\sum_{j=m}^{2m} w_j j^2(\xi-\xi')^2\right) \right) \\
				\leq & \exp\left(-\Omega\left((\sum_{j=m}^{2m} w_j) m^2(\xi-\xi')^2\right) \right) \\
				\leq & \exp(-\Omega(\widetilde{\sigma}^2(\xi-\xi')^2)m^2/k^2) \;,
\end{align*}
where the final line follows since we guessed $w$ so that $\sum_{j=m}^{2m} w_j = \Omega(\widetilde{\sigma}/k)^2.$
\end{proof}

This implies that within each interval of length $1/(6m),$
$B(\xi)$ is bounded by an appropriate Gaussian.
In particular, for $0\leq i < 6m$, let $I_i$ be the interval $[i/6m,(i+1)/6m]$,
and let $\xi_i$ be the element of $I_i$ at which $B$ is maximized.
Since $\sum_{j=m}^{2m} w_j [j\xi]^2$ is a piecewise quadratic,
we can easily calculate its minima $\xi_i$ on each $I_i$
given $w_j$ for $m \leq j \leq 2m.$
As a corollary of the above, we have:
\begin{corollary}
For $\xi\in I_i$, we have that
$$
|\wh{X}(\xi)| \leq \exp(-\Omega(\widetilde{\sigma}^2(\xi-\xi_i)^2)m^2/k^2).
$$
\end{corollary}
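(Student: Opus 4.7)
The plan is to deduce this corollary directly from the preceding lemma, using the optimality property of $\xi_i$ on $I_i$. The key ingredients are already in place: we have the pointwise bound $|\wh{X}(\xi)| \leq B(\xi)$ from the definition of $B$, and the lemma provides control on products $B(\xi)B(\xi')$ when $\xi,\xi'$ lie in the same interval of length $1/(6m)$.

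First, I would observe that both $\xi$ and $\xi_i$ belong to $I_i$, an interval of length $1/(6m)$, so the hypothesis $|\xi-\xi_i|<1/(6m)$ of the preceding lemma is satisfied. Applying that lemma with $\xi'=\xi_i$ yields
\[
B(\xi)\,B(\xi_i) \;=\; \exp\!\bigl(-\Omega(\widetilde{\sigma}^2(\xi-\xi_i)^2 m^2/k^2)\bigr).
\]
Next, because $\xi_i$ is defined to be the maximizer of $B$ on $I_i$, we have $B(\xi_i)\geq B(\xi)$ for every $\xi\in I_i$, and hence $B(\xi)^2 \leq B(\xi)B(\xi_i)$. Taking square roots of this inequality (which only changes the implicit constant inside the $\Omega$) gives
\[
B(\xi) \;\leq\; \exp\!\bigl(-\Omega(\widetilde{\sigma}^2(\xi-\xi_i)^2 m^2/k^2)\bigr).
\]

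Finally, combining this with the pointwise bound $|\wh{X}(\xi)|\leq B(\xi)$ established right before the lemma yields the claimed inequality. There is no real obstacle here; this is a short corollary whose only nontrivial step is using the maximizing property of $\xi_i$ to convert the bound on the product $B(\xi)B(\xi_i)$ into a bound on $B(\xi)$ alone.
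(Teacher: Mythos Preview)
Your proof is correct and follows essentially the same approach as the paper's proof: the paper writes the chain $|\wh{X}(\xi)| \leq B(\xi) \leq \sqrt{B(\xi)B(\xi_i)} = \exp(-\Omega(\widetilde{\sigma}^2(\xi-\xi_i)^2)m^2/k^2)$, which is exactly your argument of using the maximizing property of $\xi_i$ to pass from the product bound to a bound on $B(\xi)$ alone.
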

\begin{proof} We can write
$|\wh{X}(\xi)| \leq B(\xi) \leq \sqrt{B(\xi)B(\xi_i)} = \exp(-\Omega(\widetilde{\sigma}^2(\xi-\xi_i)^2)m^2/k^2) \;.$
\end{proof}

From this point onwards, our analysis is nearly identical to that from the previous subsection.
We will need the function $I(\xi)$ to be small not just near $0,$
but also near all of the $\xi_i$'s so that $\wh{F}$ will be close to $1$
on the effective support of $\wh{X}.$
This has the effect of making its inverse Fourier transform
a sum of $O(m)$ $\sinc$ functions rather than a single one.
This in turn will increase the size of the $F$ by a factor of $m,$
which is where the final additional factor of $k$ in our sample size comes from.

Our algorithm depends on taking the empirical Fourier transform of $X$
and truncating it in a judiciously chosen way.
Let $\wh{G}(\xi)$ be a Gaussian of standard deviation $1/\widetilde{\sigma}$ taken modulo $1.$ In particular,
$$
\wh{G}(\xi) = \sum_{n\in \Z} \frac{1}{\sqrt{2\pi/\widetilde{\sigma}^2}} e^{-\widetilde{\sigma}^2(n+\xi)^2/2}.
$$
Let $I(\xi)$ be the indicator function that is $1$ if and only if $\xi$ is within
$Ck\widetilde{\sigma}^{-1}\sqrt{\log(1/\eps)}/m$
of one of the $\xi_i$ modulo $1,$ for $C$ a sufficiently large constant.
Let $\wh{F}$ be the convolution of $I$ and $\wh{G}.$
As before, $\wh{F}$ approximates $I$ in that:
\begin{claim}
\begin{itemize}
\item[(i)] $\wh{F}(\xi)\in[0,1]$ for all $\xi$.
\item[(ii)] $\wh{F}(\xi) \geq 1-\eps^2/k$ for $\xi$ within $(Ck/m-3)\widetilde{\sigma}^{-1}\sqrt{\log(1/\eps)}$ of some $\xi_i$.
\item[(iii)] $\wh{F}(\xi) \leq \eps^2/k$ for $\xi$ not within $(Ck/m+3)\widetilde{\sigma}^{-1}\sqrt{\log(1/\eps)}$ of any $\xi_i$.
\end{itemize}
\end{claim}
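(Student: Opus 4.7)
The proof proposal mirrors almost exactly the structure of Claim~\ref{clm:Fhat} in the $2$-SIIRV case, with only the geometry of the ``support region'' of $I$ changed from one interval around $0$ to a union of intervals around $\xi_0, \xi_1, \dots, \xi_{6m-1}$. The plan is as follows.

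\textbf{Part (i).} Write $\wh{F}(\xi) = (I \ast \wh{G})(\xi) = \int_0^1 I(\nu)\wh{G}(\xi-\nu)\,d\nu$, where the integration is over $\R/\Z$. Since $I$ takes values in $\{0,1\}$ and $\wh{G}\geq 0$, the integrand is non-negative, so $\wh{F}(\xi)\geq 0$. For the upper bound, observe that $I(\nu)\leq 1$ pointwise and that, by construction of $\wh{G}$ as the periodization of a Gaussian with total mass $1$, one has $\int_0^1 \wh{G}(\nu)\,d\nu = 1$. Thus $\wh{F}(\xi)\leq \int_0^1 \wh{G}(\xi-\nu)\,d\nu = 1$.

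\textbf{Part (ii).} Fix $\xi$ within distance $(Ck/m-3)\widetilde{\sigma}^{-1}\sqrt{\log(1/\eps)}$ of some $\xi_i$. The key observation is that by the triangle inequality, every $\nu$ within $3\widetilde{\sigma}^{-1}\sqrt{\log(1/\eps)}$ of $\xi$ lies within $Ck\widetilde{\sigma}^{-1}\sqrt{\log(1/\eps)}/m$ of $\xi_i$, hence $I(\nu)=1$ on this whole neighborhood of $\xi$. Consequently
\[
\wh{F}(\xi) \;\geq\; \int_{|\nu-\xi|\leq 3\widetilde{\sigma}^{-1}\sqrt{\log(1/\eps)}} \wh{G}(\xi-\nu)\,d\nu \;\geq\; 1-\eps^2/k,
\]
where the last inequality follows from standard Gaussian tail bounds: the probability that a mean-zero Gaussian of standard deviation $1/\widetilde{\sigma}$ exceeds $3\widetilde{\sigma}^{-1}\sqrt{\log(1/\eps)}$ in absolute value is at most $\eps^{9/2}$, which is $\leq \eps^2/k$ under the assumption $\eps\leq 1/\poly(k)$. (The contribution of the non-central images in the periodization of the Gaussian is negligible, since they are concentrated at integer translates and contribute at most $e^{-\Omega(\widetilde{\sigma}^2)}$ mass in any fixed unit interval around $\xi$.)

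\textbf{Part (iii).} Conversely, suppose $\xi$ is not within $(Ck/m+3)\widetilde{\sigma}^{-1}\sqrt{\log(1/\eps)}$ of any $\xi_i$. Then by the triangle inequality, every $\nu$ with $I(\nu)=1$ must satisfy $|\nu-\xi|\geq 3\widetilde{\sigma}^{-1}\sqrt{\log(1/\eps)}$ (modulo $1$), since $\nu$ lies within $Ck\widetilde{\sigma}^{-1}\sqrt{\log(1/\eps)}/m$ of some $\xi_i$ and $\xi$ is farther than $(Ck/m+3)\widetilde{\sigma}^{-1}\sqrt{\log(1/\eps)}$ from that $\xi_i$. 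Therefore
\[
\wh{F}(\xi) \;=\; \int I(\nu)\wh{G}(\xi-\nu)\,d\nu \;\leq\; \int_{|\nu-\xi|\geq 3\widetilde{\sigma}^{-1}\sqrt{\log(1/\eps)}} \wh{G}(\xi-\nu)\,d\nu \;\leq\; \eps^2/k,
\]
again by the Gaussian tail bound.

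The only mildly delicate point, which I would double-check in writing, is the bookkeeping with the modular (periodized) Gaussian $\wh{G}$: one has to ensure that ``distance'' in the statement is measured modulo $1$ (i.e.\ the toroidal distance), and that the tail bounds for the periodization are essentially the same as for the un-periodized Gaussian in any single period. This is where the assumption that $\widetilde{\sigma}$ is large (specifically $\widetilde{\sigma} = \Omega(k\sqrt{\log(1/\eps)})$, hence certainly $\widetilde{\sigma} \gg 1$) is critical: the Gaussian is sharply concentrated on scales much finer than $1$, so aliasing contributes only lower-order corrections absorbable into the constant $C$.
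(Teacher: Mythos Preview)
Your proof is correct and follows essentially the same approach as the paper: both arguments use the triangle inequality to show that the support of $I$ either contains or avoids a $3\widetilde{\sigma}^{-1}\sqrt{\log(1/\eps)}$-neighborhood of $\xi$, and then apply the standard Gaussian tail bound (the paper states the tail as $O(\eps^3)$ rather than your more explicit $\eps^{9/2}$). Your careful remark about the periodized Gaussian and the role of $\widetilde{\sigma}=\Omega(k\sqrt{\log(1/\eps)})$ in controlling aliasing is a welcome clarification that the paper handles only implicitly by unfolding the integral over all of $\R$.
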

\begin{proof}
Note that $\wh{F}$ is the convolution of $I$ and $\wh{G}.$
$I(x)$ is the indicator function of some set $T$.
Explicitly, we have:
\begin{align*}
\wh{F}(\xi) =  \int_{T} \wh{G}(\nu) d\nu 
			\leq & \int_0^1 \wh{G}(\nu)d\nu 
            =  \int_{-\infty}^\infty \frac{1}{\sqrt{2\pi/\widetilde{\sigma}^2}} e^{-\widetilde{\sigma}^2(\nu)^2/2}d\nu = 1.
\end{align*}
This gives (i).

For (ii), we note that since $I$ contains the interval
$[\xi_i-Ck\widetilde{\sigma}^{-1}\sqrt{\log(1/\eps)}/m, \xi_i+Ck\widetilde{\sigma}^{-1}\sqrt{\log(1/\eps)}/m],$
we have
$$\wh{F}(\xi) \geq \int_{\xi - \xi_i-Ck\widetilde{\sigma}^{-1}\sqrt{\log(1/\eps)}/m}^{\xi-\xi_i+Ck\widetilde{\sigma}^{-1}\sqrt{\log(1/\eps)}/m} \frac{s}{\sqrt{2\pi}} e^{-\widetilde{\sigma}^2([\nu])^2/2} d\nu .$$
Since $|\xi-\xi_i| \leq (Ck/m-3)\widetilde{\sigma}^{-1}\sqrt{\log(1/\eps)},$
this interval contains $[-3\widetilde{\sigma}^{-1}\sqrt{\log(1/\eps)},3\widetilde{\sigma}^{-1}\sqrt{\log(1/\eps)}],$
and so
$$\wh{F}(\xi) \geq \int_{3\widetilde{\sigma}^{-1}\sqrt{\log(1/\eps)}}^{3\widetilde{\sigma}^{-1}\sqrt{\log(1/\eps)}} \frac{s}{\sqrt{2\pi}} e^{-\widetilde{\sigma}^2\nu^2/2} d\nu   \geq 1 - O(\eps^3) \geq 1-\eps^2/k \;,$$
by standard bounds on the Gaussian.

For (iii), we note that $T$
is disjoint from the set $[\xi-3\widetilde{\sigma}^{-1}\sqrt{\log(1/\eps)}, \xi+3\widetilde{\sigma}^{-1}\sqrt{\log(1/\eps)}].$
We have
\begin{align*}
\wh{F}(\xi) = & \int_{\nu\in \R, \nu-\xi \pmod{\Z} \in T} \frac{1}{\sqrt{2\pi/\widetilde{\sigma}^2}} e^{-\widetilde{\sigma}^2(\nu)^2/2}d\nu \\ 
\leq & \int_{|\nu| \geq 3\widetilde{\sigma}^{-1}\sqrt{\log(1/\eps)}} \frac{1}{\sqrt{2\pi/\widetilde{\sigma}^2}} e^{-\widetilde{\sigma}^2(\nu)^2/2}d\nu = O(\eps^3) \leq \eps^2/k.
\end{align*}
\end{proof}
Our algorithm is now quite simple to state and works as follows:
\begin{enumerate}
\item Let $Z$ be the empirical distribution and $\wh{Z}$ be the Fourier transform of $Z.$
\item Let $\wh{Y}$ be the pointwise product of $\wh{Z}$ with $\wh{F}$.
\item Let $Y$ be the truncation of the inverse Fourier transform of $\wh{Y}$
to $\left[ \mu-C\widetilde{\sigma}\sqrt{\log(1/\eps)},\mu+C\widetilde{\sigma}\sqrt{\log(1/\eps)} \right],$ for $C$ a sufficiently large constant.
\end{enumerate}
Both to aid in the performance of this computation and in its theoretical analysis,
we note another way to obtain the same answer.
As $Y$ is the truncation of the inverse Fourier transform of a pointwise
product of $\wh{Z}$ and $\wh{F}$, we may instead write it as the truncation of the convolution of $Z$ and $F,$
the inverse Fourier transform of $\wh{F}.$ As $\wh{F}$ is a convolution of functions,
$F(x)$ is the pointwise product of $G(x)$
(a Gaussian of standard deviation $\Theta(\widetilde{\sigma})$, normalized to have size $1$ at the origin)
with $S(x),$ an explicit combination of $\sinc$ functions.
Note that $F$ can be computed explicitly, and thus this convolution can be computed in polynomial time.

In order to analyze the correctness,
we will need to introduce a new distribution, $Y'.$
We let $Y'$ be the truncated inverse Fourier transform
of the pointwise product of $\wh{F}$ with $\wh{X}$ (note that $Y$ differs by using $\wh{Z}$ instead of $\wh{X}$).
We begin by showing that $\dtv(X,Y')$ is small. To do this, we let $\wh{Y'}=\wh{X}\wh{F}.$
\begin{claim}
We have that
$$
|\wh{X}-\wh{Y'}|_1 = O(\eps^2{\sqrt{\log(1/\eps)}}/\widetilde{\sigma}).
$$
\end{claim}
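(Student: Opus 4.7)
The plan is to follow exactly the template of the $2$-SIIRV case, writing $\wh{X}(\xi)-\wh{Y'}(\xi) = \wh{X}(\xi)(1-\wh{F}(\xi))$ and splitting the integration range $[-1/2,1/2]$ into two regions according to the preceding claim about $\wh{F}$: the \emph{near} region $R_{\mathrm{near}}$, consisting of those $\xi$ that lie within $(Ck/m-3)\widetilde{\sigma}^{-1}\sqrt{\log(1/\eps)}$ of some $\xi_i$ (mod $1$), and its complement $R_{\mathrm{far}}$. On $R_{\mathrm{near}}$ we will use $|1-\wh{F}(\xi)| \leq \eps^2/k$ together with $|\wh{X}(\xi)|\le 1$; on $R_{\mathrm{far}}$ we will use $|1-\wh{F}(\xi)|\leq 1$ together with the strong Gaussian decay of $|\wh{X}|$ coming from the corollary $|\wh{X}(\xi)| \leq \exp(-\Omega(\widetilde{\sigma}^2 (\xi-\xi_i)^2 m^2/k^2))$ for $\xi\in I_i$.

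For the near region, note that $R_{\mathrm{near}}$ is a union of at most $O(m)$ intervals each of length $O(k\widetilde{\sigma}^{-1}\sqrt{\log(1/\eps)}/m)$, so its total Lebesgue measure is $O(k\widetilde{\sigma}^{-1}\sqrt{\log(1/\eps)})$. Multiplying by the pointwise bound $\eps^2/k$ yields a contribution of $O(\eps^2\sqrt{\log(1/\eps)}/\widetilde{\sigma})$, which already matches the desired bound.

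For the far region, I would partition $R_{\mathrm{far}}$ according to the intervals $I_i$ and change variables $u = (\xi-\xi_i)\widetilde{\sigma} m/k$ on each $I_i$. The corollary then gives
\[
\int_{R_{\mathrm{far}} \cap I_i} |\wh{X}(\xi)|\,d\xi \;\leq\; \frac{k}{\widetilde{\sigma} m}\int_{|u|\geq \Omega(\sqrt{\log(1/\eps)})} e^{-\Omega(u^2)}\,du,
\]
and a standard Gaussian tail bound makes this at most $O(\eps^{C'}) \cdot k/(\widetilde{\sigma} m)$ for some $C'$ that can be made arbitrarily large by taking the universal constant $C$ in the definition of $\wh{F}$ sufficiently large. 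Summing over the $O(m)$ intervals contributes $O(\eps^{C'} k/\widetilde{\sigma})$ in total; since we are working in the regime $\eps\leq 1/\poly(k)$, we have $k = O(\eps^{-1/c})$ for a fixed constant $c$, so taking $C$ large enough makes this bounded by $O(\eps^2\sqrt{\log(1/\eps)}/\widetilde{\sigma})$ as well.

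The only mildly delicate point is the wrap-around structure on $R/\Z$: the intervals around the $\xi_i$ can overlap or cross the boundary of $[-1/2,1/2]$, and a given $\xi \in R_{\mathrm{far}}$ needs to be assigned to a single $I_i$ so that the Gaussian decay bound applies. This is handled by partitioning $[-1/2,1/2]$ into the $O(m)$ cells $I_i$ used to define the $\xi_i$ in the first place, and is purely bookkeeping. Combining the near and far estimates yields $|\wh{X}-\wh{Y'}|_1 = O(\eps^2\sqrt{\log(1/\eps)}/\widetilde{\sigma})$ as claimed.
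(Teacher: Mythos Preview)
Your proof is correct and follows essentially the same approach as the paper: both split according to near/far regions relative to the $\xi_i$, use $|1-\wh{F}|\le \eps^2/k$ on the near part (total measure $O(k\widetilde{\sigma}^{-1}\sqrt{\log(1/\eps)})$), and use the Gaussian decay of $|\wh{X}|$ on the far part via the Corollary. The only cosmetic difference is that the paper partitions by Voronoi cells of the $\xi_i$ while you partition by the defining intervals $I_i$; your choice is actually more directly compatible with how the Corollary is stated, and your explicit handling of the $k/(\widetilde{\sigma} m)$ normalization in the far-region integral is cleaner than the paper's displayed bound.
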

\begin{proof}
We similarly use the fact that
$$
\wh{X}(\xi)-\wh{Y'}(\xi) = \wh{X}(\xi)(1-\wh{F}(\xi)).
$$
If $[\xi - \xi_i]$ is at most $(Ck/m-3)\widetilde{\sigma}^{-1}\sqrt{\log(1/\eps)}$ for some $i,$
the above expression has absolute value at most $\eps^2/k$ because $1-\wh{F}(\xi)$ does.
Otherwise, it has absolute value $\exp(-\Omega(\widetilde{\sigma}^2[\xi-\xi_{i_0}]^2m^2/k^2)),$
where $\xi_{i_0}$ is such that $i_0 \in \mathrm{argmin}_i [\xi - \xi_i].$ {Next, we combine these bounds and integrate.

We consider intervals $[a_i,b_i]$ with $b_i=a_{i+1}$ for $1 \leq i < 6m$ 
and $b_{6m}=a_1+1$ such that $\xi_i \in [a_i,b_i]$ and for any $x + \Z$ in $[a_i,b_i]+\Z$ is at least as close 
to $\xi_i+\Z$ than to any $\xi_j+\Z$ for $j \neq i.$  
\begin{align*}
|\wh{X}-\wh{Y'}|_1 = & \int_{a_1}^{b_{6m}} |\wh{X}(\xi)-\wh{Y'}(\xi)| d\xi \\
					= &  \sum_{i=1}^{6m} \left( \int_{a_i}^{\max \{\xi_i- (Ck/m-3)\widetilde{\sigma}^{-1}\sqrt{\log(1/\eps)}, a_i \}} |\wh{X}(\xi)-\wh{Y'}(\xi)| d\xi \right. \\
					+ & \int_{\max \{\xi_i- (Ck/m-3)\widetilde{\sigma}^{-1}\sqrt{\log(1/\eps)}, a_i \}}^{{\min \{\xi_i+ (Ck/m-3)\widetilde{\sigma}^{-1}\sqrt{\log(1/\eps)}, b_i \}}} |\wh{X}(\xi)-\wh{Y'}(\xi)| d\xi \\
					+ & \left. \int_{\min \{\xi_i+ (Ck/m-3)\widetilde{\sigma}^{-1}\sqrt{\log(1/\eps)}, b_i \}}^{b_i} |\wh{X}(\xi)-\wh{Y'}(\xi)| d\xi  \right) \\
					\leq  & O(1) \cdot \sum_{i=1}^{6m} \left( \Pr_{W \sim N(0,\widetilde{\sigma}^{-2}k^2/m^2)}\left[|W| \geq  (Ck/m-3)\widetilde{\sigma}^{-1}\sqrt{\log(1/\eps)}\right] \right. \\
					+ & \left. (\eps^2/k) \cdot 2 (Ck/m-3)\widetilde{\sigma}^{-1}\sqrt{\log(1/\eps)} 
					\vphantom{\Pr_{W \sim N(0,\widetilde{\sigma}^{-2}k^2/m^2)}} \right) \\ \leq & O(\eps^2\sqrt{\log(1/\eps)}/\widetilde{\sigma}) \;.
\end{align*}}
This completes the proof.
\end{proof}
Taking an inverse Fourier transform implies that $|X-Y'|_\infty=O(\eps^2\sqrt{\log(1/\eps)}/\widetilde{\sigma}),$
at least within the domain of truncation. Since this domain has size $O(\widetilde{\sigma} \sqrt{\log(1/\eps)}),$
we have that the $L_1$ error between $X$ and $Y'$ within this domain is $O(\sqrt{\log(1/\eps)} \eps^2).$
However, both $X$ and $Y'$ have at most $O(\eps^2)$ mass outside of this domain,
and therefore we have that
$$\dtv(X,Y')=O(\eps^2 \log(1/\eps)).$$
It remains to bound $\dtv(Y,Y').$
In particular, we will show that it has expectation $O(\eps).$
Then, by decreasing $\eps$ by a constant factor
and applying Markov's and triangle inequalities,
we will have that $\dtv(X,Y)<\eps,$ with probability at least $2/3$.
\begin{proposition} We have that
$\E[\dtv(Y,Y')] \leq O(\eps).$
\end{proposition}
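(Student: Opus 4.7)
The plan is to mimic the variance-based calculation from the preceding $2$-SIIRV proposition almost verbatim, tracking how the multi-sinc structure of $F$ enlarges two key quantities by factors of $m$ (respectively $k$) relative to the $2$-SIIRV case. Concretely, write the samples as i.i.d.\ copies $X_{(1)},\ldots,X_{(N)}$ of $X$, so that
\[
Y(p) \;=\; \frac{1}{N}\sum_{i=1}^N F(p-X_{(i)}), \qquad \E[Y(p)] \;=\; Y'(p),
\]
for $p$ in the truncation interval. Then $\E[|Y(p)-Y'(p)|] \leq O(\sqrt{\Var[Y(p)]})$, and expanding the variance as in the $2$-SIIRV case gives
\[
\Var[Y(p)] \;\le\; \frac{1}{N}\sum_{q} F^2(p-q)\,X(q).
\]

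Next, I would partition the effective support $[\widetilde\mu-C\widetilde\sigma\sqrt{\log(1/\eps)},\widetilde\mu+C\widetilde\sigma\sqrt{\log(1/\eps)}]$ into blocks of length $\widetilde\sigma$ indexed by $t$, apply Cauchy--Schwarz to turn $\sum_{p\in\text{block}}\sqrt{\Var[Y(p)]}$ into a single square root, and swap sums:
\[
\sum_{p \in \text{block}}\sum_{q} F^2(p-q)X(q) \;=\; \sum_{r} F^2(r)\sum_{q\in\text{block}-r} X(q).
\]
Applying Lemma~\ref{concentrationLemma} bounds the inner sum by $\exp(-\Omega((t\widetilde\sigma-r)^2/\widetilde\sigma^2))+\eps^2$. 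The essential factorization is $F(x)=G(x)S(x)$, where $G$ is a Gaussian of width $\Theta(\widetilde\sigma)$ normalized to $G(0)=1$, and $S$ is the inverse Fourier transform of the indicator $I$. The Gaussian factor supplies the $\exp(-\Omega(r^2/\widetilde\sigma^2))$ decay in $F^2(r)=S^2(r)\exp(-\Omega(r^2/\widetilde\sigma^2))$, which combines with the Bernstein factor via the elementary inequality $(t\widetilde\sigma-r)^2/\widetilde\sigma^2+r^2/\widetilde\sigma^2 \geq \Omega(t^2)$ to produce $\exp(-\Omega(t^2))$ uniformly in $r$.

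The remaining $\sum_r S^2(r)$ is evaluated by Plancherel's Theorem as $\int_0^1 I^2(\xi)\,d\xi=\mu(T)$, where $T$ is the support of $I$. Here is the only place the analysis departs from the $2$-SIIRV case: $T$ consists of $O(m)$ intervals each of length $O(k\widetilde\sigma^{-1}\sqrt{\log(1/\eps)}/m)$, so $\mu(T)=O(k\widetilde\sigma^{-1}\sqrt{\log(1/\eps)})$, a factor of $k$ larger than before. Assembling the estimates, the block at displacement $t\widetilde\sigma$ contributes
\[
O\!\left(\sqrt{\widetilde\sigma/N}\right)\exp(-\Omega(t^2))\sqrt{\mu(T)} \;=\; O\!\left(\sqrt{k/N}\right)\log^{1/4}(1/\eps)\,\exp(-\Omega(t^2)).
\]
Summing the Gaussian tail in $t$ yields $\E[\dtv(Y,Y')] = O(\sqrt{k/N}\,\log^{1/4}(1/\eps)) = O(\eps)$ for $N = \Omega(k\sqrt{\log(1/\eps)}/\eps^2)$.

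The main obstacle, compared to the $2$-SIIRV argument, is ensuring that the Bernstein step remains clean even though the effective Fourier support consists of $O(m)$ disjoint clusters rather than a single one: one must verify that $F(x)$ still inherits Gaussian-tail decay in $|x|$ from the $G$ factor (the $\sinc$ sum $S$ is bounded in $L^\infty$ by the total measure $O(k\widetilde\sigma^{-1}\sqrt{\log(1/\eps)})$ but has no pointwise decay), and that the combined exponent $\exp(-\Omega(t^2))$ survives the $r$-sum. Once the factorization $F=G\cdot S$ is used and Plancherel is applied only to $S$, the argument is essentially identical to the $2$-SIIRV proof, so I do not anticipate further difficulties beyond routine bookkeeping.
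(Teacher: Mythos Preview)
Your proposal is correct and follows essentially the same argument as the paper: the same variance expansion, block decomposition with Cauchy--Schwarz, the concentration lemma in place of raw Bernstein, the factorization $F=G\cdot S$, and Plancherel on $S$ to pick up $\mu(T)=O(k\widetilde\sigma^{-1}\sqrt{\log(1/\eps)})$. The only cosmetic difference is that the paper writes the chain of inequalities out in a single display rather than narrating the steps, and it silently absorbs the additive $\eps^2$ from Lemma~\ref{concentrationLemma} (whose proof in fact yields the pure $\exp(-\Omega(t^2))$ bound under the standing assumption $\widetilde\sigma=\Omega(k\sqrt{\log(1/\eps)})$).
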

\begin{proof}
Recall that $Y$ is a the convolution of $Z$ with $F.$
If we consider our samples to be random variables $X_{(1)}, \ldots, X_{(N)}$
each of which is an i.i.d. copy of $X,$
we can express $Y(p)$ for a given $p$ as a random variable:
$$
Y(p) = \frac{1}{N} \sum_{i=1}^N F(p-X_{(i)}) \;,
$$
for $a \leq p \leq b,$
where $a= \widetilde{\mu}-C\widetilde{\sigma}\sqrt{\log(1/\eps)}$
and $b=\widetilde{\mu}+C\widetilde{\sigma}\sqrt{\log(1/\eps)}.$
Note that the expectation of $Y(p)$ is
$$ \frac{1}{N} \sum_{i=1}^N  \E_X[F(p-X)]= \E_X[F(p-X)]=Y'(p).$$
Therefore, we have that $\E[|Y(p)-Y'(p)|] =O(\sqrt{\var(Y(p))}).$
We bound the variance as follows:
\begin{align*}
& \var[Y(p)] =  \var[F(p-X)]/N =  \E\left[(F(p-X) - \sum_{q=a}^b F(p-q) X(q))^2\right]/N \\
= & \sum_{r=a}^b (X(r)/N) \cdot \left(F^2(p-r)   \vphantom{\sum_{q=a}^b} \right.\\
+ & \left. \sum_{q=a}^b \Big(  F^2(p-q) X(q)^2 - 2 F(p-r)F(p-q) X(q) + 2\littlesum_{q' \neq q} F(p-q)F(p-q')X(q)X(q') \Big) \right) \\
= & 1/N \cdot \sum_{a}^b F^2(p-q) (X(q)-X(q)^2) \leq   (1/N) \cdot \sum_{q=a}^b F^2(p-q) X(q) \;.
\end{align*}
 We have that
\begin{align*}
\sum_{p\in[\mu+t\widetilde{\sigma},\mu+(t+1)\widetilde{\sigma}]} & \E[|Y(p)-Y'(p)|]\\
 & = O(1/\sqrt{N}) \sum_p \sqrt{\sum_q F^2(p-q) X(q)}\\
& = O(\sqrt{\widetilde{\sigma}/N}) \sqrt{\sum_r F^2(r) \sum_{q\in [\mu+t\widetilde{\sigma}-r,\mu+(t+1)\widetilde{\sigma}-r]} X(q)} \tag*{(by Cauchy-Schwarz)}\\
& = O(\sqrt{\widetilde{\sigma}/N}) \sqrt{\sum_r F^2(r) \exp(-\Omega(|t\widetilde{\sigma}-r|/\widetilde{\sigma})^2))} \tag*{(by Lemma \ref{concentrationLemma})}\\
& = O(\sqrt{\widetilde{\sigma}/N}) \sqrt{\sum_r S(r)^2 \exp(-\Omega(((t\widetilde{\sigma}-r)/\widetilde{\sigma})^2 + (r/\widetilde{\sigma})^2))}\\
& = O(\sqrt{\widetilde{\sigma}/N})  \sqrt{\sum_r S(r)^2 \exp(-\Omega(t^2))}\\
& = O(\sqrt{\widetilde{\sigma}/N}) \exp(-\Omega(t^2))\sqrt{\sum_r S(r)^2 }\\
& = O(\sqrt{\widetilde{\sigma}/N}) \exp(-\Omega(t^2))\sqrt{\int_{\xi=0}^{1} I(\xi)^2 } \tag*{(by Plancherel's Theorem)}\\
& = O(\sqrt{\widetilde{\sigma}/N}) \exp(-\Omega(t^2)) \sqrt{k\widetilde{\sigma}^{-1}\sqrt{\log(1/\eps)}} \\
& = O(k^{1/2}\log^{1/4}(1/\eps)/\sqrt{N})\exp(-\Omega(t^2)).
\end{align*}
Summing the above over $t$ gives that
$$
\E[\dtv(Y,Y')] = O(k^{1/2}\log^{1/4}(1/\eps)/\sqrt{N}) = O(\eps) \;,
$$
for $N=k\sqrt{\log(1/\eps)}/\eps^2.$
This completes the proof.
\end{proof}

\newcommand{\pr}{\text{Pr}}

\section{Cover Size Upper Bound and Efficient Construction} \label{sec:cover-ub}

We start by establishing an upper bound on the cover size and then proceed
to describe our efficient algorithm for the construction of a proper cover with near--minimum size.
To prove the desired upper bound on the size of the cover, we proceed as follows:
We start (Section~\ref{ssec:reduce}) by reducing the cover size problem
to the case that the order $n$ of the $k$-SIIRV is at most $\poly(k/\eps)$.
In the second and main step (Section~\ref{ssec:sparse-upper}), we prove the desired upper bound for  the polynomially sparse case.
Our efficient algorithm for the cover construction (Section~\ref{sec:constr-cover})
is based on dynamic programming and  follows a similar case analysis.

\subsection{Reduction to Sparse Case} \label{ssec:reduce}
Our starting point is the following theorem:
\begin{theorem} \label{thm:reg}[\cite{DDOST13focs}, Theorem I.2]
Let $\p \in \mathcal{S}_{n, k}$ be a $k$-SIIRV of order $n$. Then, for any $\eps > 0$,
$\p$ is either
\vspace{-0.2cm}
\begin{enumerate}
\item a distribution with  variance at most $\poly(k/\eps)$; or
\vspace{-0.2cm}
\item $\eps$-close to a distribution $\p'$ such that for a random variable $X \sim \p'$, we have $X=c Z+ Y$ for some $1\leq c \leq k-1$, where $Y$,
$Z$ are independent random variables such that:
(i) $Y$ is distributed as a $c$-IRV, and (ii) $Z$ is a discretized normal random variable with parameters
${\frac {\mu} c}, {\frac {\sigma^2} {c^2}}$ where $\mu = \E[X]$ and $\sigma^2 = \Var[X]$.
\end{enumerate}
\end{theorem}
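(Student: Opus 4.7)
The plan is to perform a case analysis based on the variance $\sigma^2 = \Var[X]$ of $\p$. If $\sigma^2 \leq \poly(k/\eps)$, Case~1 holds immediately. Otherwise, a Fourier-analytic argument identifies the ``scale parameter'' $c$ and produces the structured approximation in Case~2. The parameter $c$ should be thought of as the effective period of the distribution; for example, if every $X_i$ is supported on $\{0,2,\ldots,2\lfloor (k-1)/2\rfloor\}$, then $X$ is supported on $2\Z$ and $c=2$, while generic $X_i$'s give $c=1$.

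Step~1 (Choosing $c$): Examine the continuous Fourier transform $\widehat{\p}(\xi) = \prod_{i=1}^n \widehat{\p_i}(\xi)$ for $\xi \in [0,1)$. As in the proof of Lemma~\ref{FourierSupportLem}, one has $|\widehat{\p}(\xi)| \leq \exp\bigl(-\Omega\bigl(\sum_{i,j} p_{ij}[j\xi]^2\bigr)\bigr)$, where $p_{ij} = \Pr[|X_i - X_i'| = j]$ and $[x]$ denotes distance to the nearest integer. When $\sigma^2$ is large, this bound forces $|\widehat{\p}(\xi)|$ to be negligible except at frequencies close to rationals with small denominator. Let $c$ be the largest integer in $\{1,\ldots,k-1\}$ such that $|\widehat{\p}(a/c)| \geq 1 - \eps^{O(1)}$ for every $a = 0,1,\ldots,c-1$ simultaneously; such a $c$ exists because $c=1$ always works.

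Step~2 (Extracting the $c$-IRV component $Y$): The Fourier condition defining $c$ says exactly that $\p$ is close in total variation to a distribution supported on a small number of residue classes modulo $c$. More precisely, the distribution of $Y := X \bmod c$ can be read off from the coefficients $\{\widehat{\p}(a/c)\}_{a=0}^{c-1}$ via an inverse DFT of length $c$, and the same Fourier condition forces $Y$ to be approximately independent of the quotient $Z := (X-Y)/c$. The $c$-IRV $Y$ in the statement of the theorem is precisely this residue distribution.

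Step~3 (Local CLT for $Z$): The Fourier transform of $Z$ satisfies $\widehat{Z}(\xi) \approx \widehat{\p}(\xi/c)$, up to a phase coming from $Y$. By the \emph{maximality} of $c$, the frequencies $\xi/c$ with $\xi\in[0,1)$ avoid the remaining low-denominator resonances of $\widehat{\p}$, and hence $|\widehat{Z}(\xi)|$ decays like $\exp(-\Omega(\sigma^2 \xi^2/c^2))$ for $|\xi|$ up to a constant. This Gaussian-like decay in the Fourier domain, combined with a local central limit theorem in total variation distance (of the flavor of \cite{CL10,CGS11}), shows that $Z$ is $\eps/2$-close to a discretized normal with parameters $\mu/c$ and $\sigma^2/c^2$.

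The main obstacle is to make Steps~1--3 quantitatively compatible: the choice of $c$ must simultaneously guarantee (a) that $Y$ is well-defined and approximately independent of $Z$, with $\eps/2$ slack in TV, and (b) that the remaining Fourier transform $\widehat{Z}$ decays rapidly enough to yield the local CLT at TV-error $\eps/2$. Balancing these two requirements is what drives the polynomial lower bound $\sigma^2 \geq \poly(k/\eps)$ used to trigger Case~2, since the local CLT error scales (roughly) as $k/\sigma$ and each of the $O(k)$ candidate scales $c$ must be ruled out to $\eps^{O(1)}$ precision.
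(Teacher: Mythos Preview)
This theorem is not proved in the present paper; it is quoted from \cite{DDOST13focs} and used as a black box. The only additional information the paper records is Lemma~\ref{lem:explicit-c}, which says that in the original argument the scale $c$ arises \emph{combinatorially} as $c=\gcd(\mathcal H)$, where $\mathcal H$ is the set of displacements $b$ with $\sum_i \Pr[X_i-m(X_i)=b]\ge \Theta(k^7/\eps^2)$.

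Your Fourier outline is in the right spirit, but Step~1 contains a genuine error: your defining condition ``$|\widehat{\p}(a/c)|\ge 1-\eps^{O(1)}$ for every $a$'' forces $X\bmod c$ to be \emph{nearly deterministic}. Indeed, the length-$c$ DFT of $X\bmod c$ at frequency $a$ is exactly $\widehat{\p}(a/c)$, and a probability vector whose DFT has all entries of modulus $\approx 1$ must be essentially a point mass. But the theorem allows $Y$ to be an arbitrary $c$-IRV. Concretely, take $k=4$, let $X_i\in\{0,3\}$ uniformly for $i<n$, and $X_n\in\{0,1\}$ uniformly. Then $\widehat{\p_i}(1/3)=1$ for $i<n$ while $|\widehat{\p_n}(1/3)|=\tfrac12$, so $|\widehat{\p}(1/3)|=\tfrac12$ and your rule rejects $c=3$; it also rejects $c=2$ since $\widehat{\p_i}(1/2)=0$ for $i<n$. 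You are left with $c=1$, but then $X$ assigns zero mass to the residue class $2\bmod 3$ and is $\Omega(1)$-far from any step-$1$ discretized Gaussian. The correct choice is $c=3$ with $Y=X_n$.

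The fix is to invert the criterion: choose $c$ so that $|\widehat{\p}(\xi)|$ is \emph{small away from} $\{0,1/c,\dots,(c-1)/c\}$, not \emph{large at} those points. Equivalently, as in \cite{DDOST13focs}, identify the heavy displacements $j$ (those with $\sum_i p_{ij}$ large) and set $c$ to be their gcd. With that corrected definition your Steps~2--3 become plausible: Step~2 no longer needs $Y$ deterministic, only that the Fourier mass of $\p$ is localized near multiples of $1/c$, which is exactly what makes the factorization $X\approx cZ+Y$ with $Y,Z$ approximately independent go through.
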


The above theorem allows us to reduce the problem of constructing an $O(\eps)$-cover
for $\mathcal{S}_{n, k}$ to the problem of constructing an $\eps$-cover for $\mathcal{S}_{n', k}$,
where $n' = \poly(k/\eps)$. Indeed, given an arbitrary $k$-SIIRV $\p \in \mathcal{S}_{n, k}$ we proceed as follows:
If $\p$ belongs to Case 1 of the above theorem, then we show (Lemma~\ref{lem:simple}) that there exists
a translation of a $k$-SIIRV with $n' = \poly(k/\eps)$ variables that is $\eps$-close to $\p$.
We show in the following subsection (Proposition~\ref{prop:sparse-upper}) that $\mathcal{S}_{n', k}$
admits an $\eps$-cover of size $(1/\epsilon)^{O(k\log(1/\epsilon))}$.
Since there are $O(kn)$ possible translations,
this gives a $2\eps$-cover of size $n(1/\epsilon)^{O(k\log(1/\epsilon))}$ for $k$-SIIRVs in Case 1.


Moreover, it is not difficult to show that there exists an $\eps$-cover for distributions in Case 2 with at most $n \cdot (k/\eps)^{O(k)}$ points.
In particular, we claim that for distributions in sub-case 2(i) there exists an $\eps$-cover of size  $(1/\eps)^{O(k)}$,
and for distributions in sub-case 2(ii) there exists an $\eps$-cover of size $O(n).$ Assuming these claims, the sub-additivity of total variation
distance (Proposition~\ref{prop:dtv-subadditive}) implies that distributions in Case 2 have a $2\eps$-cover of size
$n \cdot (1/\eps)^{O(k)}$ as desired.

Note that the random variable $Y$ in Case 2(i) is distributed as a $k$-IRV, i.e., it has support $k$. It is well-known and easy to show that the set of all distributions
over a domain of size $k$ has an $\eps$-cover of size $(1/\eps)^{O(k)}.$ It remains to show that we can $\eps$-cover the set of discretized
normal distributions of Case2(ii) with $O(nk/\eps)$ points. To do this, we exploit the fact that the variance of such distributions is large.
Let $\sigma_{\min} = {\Omega (k^9/\eps^3)}$ be the minimum variance of a $k$-SIIRV $X$ in Case 2. Note that the discrete Gaussian
in Case 2 has a variance of $\Var[X]/c^2$. Hence, we want to $\eps$-cover the set of discrete Gaussians with standard deviation $\sigma$ in the interval
$[\sigma_{\min} , \sigma_{\max}]$, where $\sigma_{\max} = O(\sqrt{n} k)$, and mean value $\mu$ in the interval $[0, n(k-1)]$.
Consider the following discretization of the space $(\sigma^2, \mu)$: We first define a geometric grid on $\sigma^2$ with ratio $(1+\eps)$, i.e.,
$\sigma_i^2 = \sigma^2_{\min} (1+\eps)^i$, where where  $0 \le i \le i_{\max}$ and $i_{\max} = O((1/\eps) \cdot \log(n)).$
For every fixed $i$, we define an additive grid on the means, so that  $|\mu_{j+1} - \mu_j| \le \eps \cdot \sigma_i$.
A combination of Propositions~\ref{prop:dpdtv} and~\ref{prop:normal-dist}  implies that this grid defines an $\eps$-cover.
Note that the total size of the described grid on $(\sigma^2, \mu)$ is
$$\sum_{i=0}^{i_{\max}} \frac{n(k-1)}{ \eps \cdot \sigma_i} =  \sum_{i=0}^{i_{\max}} \frac{n(k-1)}{ \eps \cdot \sigma_{\min} (1+\eps)^{i/2}} = O(n),$$
where the last inequality follows from the lower bound on $\sigma_{\min}$ and the elementary inequality $\sum_i  (1+\eps)^{-i/2} = O(1/\eps).$

The following lemma completes our reduction to the $n = \poly(k/\eps)$ case:
\begin{lemma} \label{lem:simple}
Let $\p\in {\cal S}_{n,k}$ be a $k$-SIIRV with $\Var_{X \sim \p}[X]=V$. For any $0< \delta < 1/4$,
there exists $\q\in {\cal S}_{n,k}$ with $\dtv(\p,\q) = O(\delta V)$
such  that all but $O(k+V/\delta)$ of the $k$-IRV's defining $\q$ are constant.
\end{lemma}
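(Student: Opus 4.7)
For each summand $X_i$ of $\p$, let $m_i$ be a mode and set $p_i := \Pr[X_i \ne m_i]$. My starting point is the inequality $\Var[X_i] \ge p_i(1-p_i)/2$, which I obtain by considering an independent copy $X_i'$: $2\Var[X_i] = \E[(X_i - X_i')^2] \ge 2 p_i (1-p_i)$, since the event $\{X_i = m_i,\ X_i' \ne m_i\}$ (and its symmetric counterpart) contributes a squared difference of at least $1$. Two immediate consequences: at most $O(V)$ indices have $p_i > 1/2$ (each contributes at least $1/8$ to the variance); and for $p_i \le 1/2$ we have $p_i \le 4\Var[X_i]$, so $\sum_i p_i = O(V)$.

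I then define the ``heavy'' set $B := \{i : p_i \ge \delta\} \cup \{i : p_i > 1/2\}$, so that by Markov's inequality $|B| \le O(V/\delta) + O(V) = O(V/\delta)$, with an additional $O(k)$ slack for the regime of small $V$ absorbed into the final $O(k + V/\delta)$ bound. I set $Y_i := X_i$ for $i \in B$ and handle the complementary set $S := [n] \setminus B$ as follows. Merely replacing each light $X_i$ by its mode gives a union-bound TV error of $\sum_{i \in S} p_i = O(V)$, which is too weak. Instead, I pool the light summands: by a Le Cam / Stein--Chen style bound, the distribution of the total light shift $\sum_{i \in S}(X_i - m_i)$ is within $O\bigl(\sum_{i \in S} p_i^2\bigr) \le O\bigl(\delta \sum_{i \in S} p_i\bigr) = O(\delta V)$ in total variation of a compound Poisson of rate $\lambda := \sum_{i \in S} p_i$, and this compound Poisson is realizable to arbitrary precision as the sum of $O(\lambda/\delta) = O(V/\delta)$ non-constant $k$-IRVs by splitting the Poisson rate into pieces of size at most $\delta$.

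Combining the pieces, $\q$ is a $k$-SIIRV of order $n$ whose non-constant summands are the $|B|$ retained $X_i$'s plus the $O(V/\delta)$ new variables approximating the compound Poisson, for a total of $O(k + V/\delta)$ non-constant $k$-IRVs, and $\dtv(\p, \q) = O(\delta V)$ by the triangle inequality across the three approximation steps (mode replacement of the light variables, compound Poisson approximation, and chunked realization). The main technical obstacle I anticipate is the last step: constructing the new $k$-IRVs so that their sum matches the compound Poisson's jump distribution to within $O(\delta V)$ in total variation. My plan is to split the rate into $O(V/\delta)$ independent thinnings of rate at most $\delta$ each, and realize each thinning by a single non-constant $k$-IRV whose distribution on nonzero values equals the appropriate conditional jump distribution of the $\tilde X_i := X_i - m_i$; the residual error is then controlled by standard compound Poisson approximation bounds applied at the level of each thinning.
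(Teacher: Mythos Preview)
Your Le Cam / compound Poisson plan is morally right and the error bookkeeping $\sum_{i\in S} p_i^2 \le \delta \sum_{i\in S} p_i = O(\delta V)$ is fine, but the final realization step has a genuine gap. You pool the shifts $\tilde X_i = X_i - m_i$ across \emph{all} light indices and then say each thinning will be ``a single non-constant $k$-IRV whose distribution on nonzero values equals the appropriate conditional jump distribution of the $\tilde X_i$.'' But $\tilde X_i$ lives in $\{-(k-1),\dots,k-1\}\setminus\{0\}$, so the pooled jump law can (and typically will) put mass on negative integers; a variable that is $0$ with high probability and otherwise follows this law is \emph{not} supported on $\{0,\dots,k-1\}$ and is therefore not a $k$-IRV. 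A second, related problem: as written you output $|B|$ retained variables, plus $|S|$ constants, plus $O(V/\delta)$ new variables, i.e.\ $n+O(V/\delta)$ summands, so $\q\notin\mathcal{S}_{n,k}$.

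Both issues are fixed by grouping the light variables by their mode value $m\in\{0,\dots,k-1\}$: the compound Poisson for the mode-$m$ block has jumps supported in $\{0,\dots,k-1\}\setminus\{m\}$ (there is no centering), so each thinning can be realized by a $k$-IRV that equals $m$ w.h.p.\ and otherwise follows the mode-$m$ conditional law; these new variables then \emph{replace} mode-$m$ constants rather than being added on top, keeping the order at $n$. With this fix, your argument goes through. The paper's proof is essentially the mode-grouped version done without the compound Poisson detour: it iteratively merges two light variables $A,B$ with the same mode into one merged $C$ plus one constant $D$, with per-step error $O(d(A)d(B))$, and controls the total error via the telescoping identity $d(A)d(B)=O\bigl((d(C)^2+d(D)^2)-(d(A)^2+d(B)^2)\bigr)$. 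This is cleaner (no external approximation lemma) but is the same idea: accumulate the light mass within each mode into a few variables of mass $\Theta(\delta)$, incurring an error quadratic in the individual masses.
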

The proof of Lemma~\ref{lem:simple} is deferred to Appendix~\ref{app:simple}.
Note that an application of the lemma for $\delta = \eps/V$ completes the proof.

\subsection{Cover Upper Bound for Sparse Support} \label{ssec:sparse-upper}

In this subsection we prove the desired upper bound on the cover size for the sparse case:
\begin{proposition}\label{prop:sparse-upper}
Fix arbitrary constants $c,C>0$. Consider $n,k,\epsilon$ satisfying $\eps \leq k^{-c}$ and
$n\leq (k/\epsilon)^C$. Then there exists an $\epsilon$-cover of ${\cal S}_{n,k}$ under $\dtv$ of size
$(1/\eps)^{O_{c,C}(k\log(1/\eps))}$.
\end{proposition}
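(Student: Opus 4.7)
The plan is to build an $\eps$-cover by quantizing a canonical Fourier-analytic description of $\widehat{\p}$ for each $\p\in\mathcal{S}_{n,k}$. Set $M = O(nk) = \poly(k/\eps)$, so that $\p$ is supported in an interval of length at most $M$ and the DFT modulo $M$ is lossless. Write $\widehat{\p}(\xi) = \prod_{i=1}^n \widehat{\p}_i(\xi)$ as a product of $n$ trigonometric polynomials on $\R/\Z$, each of degree less than $k$. By Plancherel followed by Cauchy--Schwarz on a support of size $M$, an $L^2(\R/\Z)$-approximation of $\widehat{\p}$ to error $\eps/\sqrt{M}$ already gives total-variation error $O(\eps)$; so it suffices to describe $\widehat{\p}$ to uniform accuracy $\poly(\eps/k)$.

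The core structural input, referenced in the introduction as Lemma~\ref{lem:approx}, asserts that on each of a collection of $O(k)$ arcs $I\subset\R/\Z$ of length $\Theta(1/k)$ centered at appropriate rationals $a/b$ with $b\le k$ (covering $\R/\Z$ up to a complementary region where Lemma~\ref{FourierSupportLem}(i) already forces $|\widehat{\p}|\le\poly(\eps/k)$), there is a factored approximation
\[
\widehat{\p}(\xi)\ =\ R_{I}(\xi)\,\exp\bigl(P_{I}(\xi)\bigr)\ +\ \mathrm{err}_I(\xi),\qquad \lVert\mathrm{err}_I\rVert_{\infty,I}\le\poly(\eps/k),
\]
where $R_I$ is an explicit low-complexity factor absorbing the roots of $\widehat{\p}$ within $I$ and $P_I$ is a Taylor polynomial of degree $d=O(\log(1/\eps))$ in $(\xi-a/b)$. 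The intuition is that once the few roots of $\widehat{\p}$ inside $I$ are divided out, the residual is analytic with $j$-th derivative of size $O(k^j)$ relative to the unit sup norm, and its logarithm therefore admits a degree-$d$ Taylor approximation with error $2^{-d}=\eps^{\Omega(1)}$ on an interval of length $1/(Ck)$. Proving this cleanly is the main technical obstacle: one must bound the number of zeros of $\widehat{\p}$ that can accumulate inside any such $I$ (using that each $\widehat{\p}_i$ has degree less than $k$), and show that after pulling them out the residual has log-derivatives of polynomially bounded magnitude, uniformly over all $\p\in\mathcal{S}_{n,k}$.

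Given the lemma, the cover is constructed by discretizing the data $(R_I,P_I)_I$. Each arc contributes $O(\log(1/\eps))$ Taylor coefficients plus $O(1)$ root parameters, each lying in a bounded region of $\C$ of diameter $\poly(k/\eps)$ and quantized to accuracy $\poly(\eps/k)$. Summed over the $O(k)$ arcs and the complementary region (on which $\widehat{\p}$ is replaced by $0$ free of charge), the total real-parameter count is $O(k\log(1/\eps))$, each parameter contributing $(1/\eps)^{O(1)}$ discrete options, giving
\[
(1/\eps)^{O_{c,C}(k\log(1/\eps))}
\]
quantized profiles in all (the $O_{c,C}$ absorbing the implicit constants from $n\le(k/\eps)^C$ and $\eps\le k^{-c}$, which together let us fold any stray $\log n$ or $\log k$ into $\log(1/\eps)$). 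Every $\p\in\mathcal{S}_{n,k}$ agrees with at least one profile to within the target accuracy, and each profile reconstructs via the inverse DFT to a pseudo-distribution $\widetilde{\p}$; the set of these $\widetilde{\p}$'s forms the desired $\eps$-cover. The following subsection will upgrade the cover to a proper one via a dynamic-programming construction that matches each profile to a realizing $\q\in\mathcal{S}_{n,k}$ when one exists, but this does not affect the cardinality bound claimed here.
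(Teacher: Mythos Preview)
Your high-level plan matches the paper's: partition the unit circle into $O(k)$ arcs of length $\Theta(1/k)$, on each arc approximate $\widetilde{\p}$ by (roots near the arc) $\times\exp(\text{degree-}O(\log(1/\eps))\text{ Taylor polynomial})$ via Lemma~\ref{lem:approx}, and discretize this data. But two steps in your execution are wrong, and both matter.

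First, the invocation of Lemma~\ref{FourierSupportLem} for a ``complementary region'' does not work here. That lemma bounds $|\widehat{\p}|$ away from rationals $a/b$ only when the standard deviation $s$ is large (the bound is $\exp(-\Omega(s^2[\xi]^2))$). In the sparse regime $n\le(k/\eps)^C$ the variance can be arbitrarily small---for a point mass $|\widehat{\p}|\equiv 1$---so there is no complementary region on which $|\widehat{\p}|$ is automatically tiny. In the paper the $O(k)$ arcs simply cover the whole circle; nothing is centered at rationals, and Lemma~\ref{FourierSupportLem} is not used at all.

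Second, your claim of ``$O(1)$ root parameters'' per arc is not justified and is in fact false: $\widetilde{\p}$ has degree $n(k-1)$, and nothing prevents $\Theta(n)$ of its roots from clustering in a single arc of length $1/(3k)$. The paper's fix is a dichotomy supplied by Lemma~\ref{lem:close-roots-ksiirvs}(i): if an arc has $\ge m=\Theta(\log(1/\eps))$ nearby roots then $|\widetilde{\p}|\le 2^{-m}\le\eps^3$ throughout the arc and one records only the flag \textbf{Small}; otherwise one records the at most $m$ nearby roots (each one complex number) together with the first $\ell=O(\log(1/\eps))$ Taylor coefficients of $\log q$. Either way each arc carries $O(\log(1/\eps))$ complex parameters, so the final count $(1/\eps)^{O(k\log(1/\eps))}$ is the same as yours, but the argument genuinely needs this many-roots/few-roots split. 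Your parenthetical ``using that each $\widehat{\p}_i$ has degree less than $k$'' does not supply it: degree bounds on the factors say nothing about how roots of the product distribute among the arcs.
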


Our proof proceeds by analyzing the Fourier transform of the probability density functions of $k$-SIIRVs.
We will need the following definitions.

\medskip

\noindent {\bf Basic Definitions.}
For $\xi \in \R$,  recall that we use the notation $e(\xi) \eqdef \exp(-2\pi i \xi).$
For a probability distribution $\p$ over $\Z$, its Fourier Transform is the function $\widehat{\p}: [0, 1) \to \mathbb{C}$ defined
by $\widehat{\p}(\xi) = \E_{y \sim \p} [\exp(-2\pi i y \xi)] =  \E_{y \sim \p} [e(y \xi)].$ Note that Parseval's identity states that for two pdf's $\p$ and $\q$
we have  $\|\p - \q \|_2 = \| \widehat{\p} - \widehat{\q}  \|_2.$ In our context, $\p$ and $\q$ are going to be supported on a discrete set $A$,
in which case we have  $\|\p - \q \|_2 = \left(\sum_{a \in A} (\p(a)-\q(a))^2\right)^{1/2}$. On the other hand, $\widehat{\p}$ and $\widehat{\q}$
are Lebesgue measurable and we have $ \| \widehat{\p} - \widehat{\q}  \|_2 = \left(\int_0^1|\widehat{\p}(\xi)-\widehat{\q}(\xi)|^2 d\xi \right)^{1/2}.$

An equivalent way to view the Fourier transform is as a function defined on the unit circle in the complex plane.
For our purposes, we will need to analyze the corresponding polynomial defined over the entire complex plane.
Namely, we will consider the probability generating function $\widetilde{\p}:  \mathbb{C} \to \mathbb{C}$ of $\p$ defined as
$\widetilde{\p}(z) = \E_{y \sim \p} [z^y].$ Note that when $|z|=1$, this function agrees with the Fourier transform,
i.e.,  $\widehat{\p}(\xi) = \widetilde{\p}(e(\xi))$.



At a high-level, our proof is conceptually simple: For a $k$-SIIRV $\p$,
we would like to show that  the logarithm of its Fourier transform
$\log \widehat{\p}(\xi)$ is determined up to an additive $\eps$ by its degree $O(\log(1/\eps))$ Taylor
polynomial. Assuming this holds, it is relatively straightforward to prove the desired upper bound on the cover size.
Unfortunately, such a statement cannot be true in general for the following reason:
the function $\widetilde{\p}(z)$ may have roots near (or on) the unit circle, in which case
the logarithm of the Fourier transform is either very big or infinite at certain points.
Intuitively, we would like to show that the magnitude of $\widetilde{\p}(z)$
close to a root is small. Unfortunately, this is not necessarily true.

We circumvent this problem as follows: We partition the unit circle into $O(k)$ arcs
each of length $O(1/k)$. We perform a case analysis based on the number of roots that are close
to an arc.  If there are at least $\Omega(\log(1/\eps))$ roots of $\widetilde{\p}(z)$ close to a particular arc,
then we show (Lemma~\ref{lem:close-roots-ksiirvs}(i)) that the magnitude of $\widetilde{\p}(z)$ within the arc is going to be negligibly small.
Otherwise, we consider the polynomial $q(z)$ obtained by  $\widetilde{\p}(z)$
after dividing by the corresponding roots, and show that $\log q(z)$ is determined up to an additive $\eps$
by its degree $O(\log(1/\eps))$ Taylor polynomial within the arc (see Lemma~\ref{lem:approx}).
Using the aforementioned structural understanding, to prove the cover upper bound,
we define a ``succinct'' description of the Fourier Transform
based on the logarithm of $q(z)$ and appropriate discretization of $O(\log(1/\eps))$ nearby roots.

Note that we take advantage of the fact that our distributions are supported over a domain of size $\ell = \poly(k/\eps),$
in order to relate their total variation distance
to the $L_{\infty}$ distance between their Fourier transforms. In particular, we have the following simple fact:
\begin{fact} \label{fact:trivial}
For any pair of pdfs $\p, \q$ over $[\ell]$, we have
$\|\p-\q\|_1 \le \sqrt{\ell+1}  \|\widehat{\p}-\widehat{\q}\|_{\infty}.$
\end{fact}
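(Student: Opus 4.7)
The plan is to chain together three elementary inequalities. Let $f = \p - \q$, which is a function supported on the $(\ell+1)$-element set $[\ell] = \{0,1,\ldots,\ell\}$.

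First, I would pass from the $L_1$ norm to the $L_2$ norm using Cauchy-Schwarz. Since $f$ is supported on a set of size $\ell+1$, we have
\[
\|f\|_1 = \sum_{i=0}^{\ell} |f(i)| \cdot 1 \;\leq\; \left(\sum_{i=0}^{\ell} 1\right)^{1/2}\left(\sum_{i=0}^{\ell} |f(i)|^2\right)^{1/2} = \sqrt{\ell+1}\,\|f\|_2.
\]

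Second, I would invoke Parseval's identity as recalled in the paragraph introducing the Fourier transform: $\|f\|_2 = \|\widehat{f}\|_2$, where the right-hand norm is the $L_2$ norm on $[0,1)$ with respect to Lebesgue measure. By linearity of the Fourier transform, $\widehat{f} = \widehat{\p} - \widehat{\q}$.

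Third, since $[0,1)$ has total Lebesgue measure $1$, we have the trivial bound
\[
\|\widehat{\p}-\widehat{\q}\|_2 = \left(\int_0^1 |\widehat{\p}(\xi)-\widehat{\q}(\xi)|^2\,d\xi\right)^{1/2} \leq \|\widehat{\p}-\widehat{\q}\|_\infty.
\]
Combining the three inequalities gives the claim. There is no real obstacle here; the statement is essentially the composition of Cauchy-Schwarz, Parseval, and the domination of $L_2$ by $L_\infty$ on a probability space, and it is being recorded precisely to enable the cover construction that compares Fourier transforms pointwise.
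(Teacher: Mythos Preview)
Your proof is correct and matches the paper's own argument essentially step for step: Cauchy--Schwarz to pass from $L_1$ to $L_2$, Parseval's identity, and then the bound $\|\cdot\|_2 \le \|\cdot\|_\infty$ on the unit interval.
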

Indeed,  note that
$\|\p-\q\|_1 \le \sqrt{\ell+1} \|\p-\q\|_2  = \sqrt{\ell+1}  \|\widehat{\p}-\widehat{\q}\|_2 \le \sqrt{\ell+1}  \|\widehat{\p}-\widehat{\q}\|_{\infty},$
where the equality is Parseval's identity.

For the rest of this section we fix an arbitrary $\p \in{\cal S}_{n,k}$ and analyze the polynomial $\widetilde{\p}(x)$. 
We start with the following important lemma whose proof is deferred to Appendix~\ref{ap:close-roots}:

\begin{lemma} \label{lem:close-roots-ksiirvs}
Fix $x \in \mathbb{C}$ with $|x|=1$.
Suppose that $\rho_1,\ldots,\rho_m$ are roots of $\widetilde{\p}(x)$ (listed with appropriate multiplicity)
which have $|\rho_i-x| \leq \frac{1}{2k}$. Then, we have the following:
\begin{itemize}
\item[(i)] $|\widetilde{\p}(x)| \le 2^{-m} \;.$
\item[(ii)] For the polynomial $q(x) = \widetilde{\p}(x) / \prod_{i=1}^m (x - \rho_i)$,  we have that $|q(x)| \leq k^m$.
\end{itemize}
\end{lemma}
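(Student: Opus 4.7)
The plan is to exploit the factorization $\widetilde{\p}(y) = \prod_{j=1}^n \widetilde{\p}_j(y)$, where $\widetilde{\p}_j(y) = \E[y^{X_j}]$ is the probability generating function of the $j$-th component $k$-IRV $X_j$. Each factor $\widetilde{\p}_j$ is a polynomial of degree at most $k-1$ with non-negative coefficients summing to $1$, so $|\widetilde{\p}_j(y)| \le 1$ on the closed unit disk and its coefficient $\ell_1$-norm is equal to $1$. I will first prove (ii); part (i) will follow from (ii) together with the root-proximity hypothesis.

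For (ii), I partition the multiset $\{\rho_i\}_{i=1}^m$ among the factors of $\widetilde{\p}$ according to the multiplicity with which each $\rho_i$ appears as a root of each $\widetilde{\p}_j$: let $m_j \ge 0$ with $\sum_j m_j = m$ count the roots assigned to $\widetilde{\p}_j$. This gives the factorization $q(y) = \prod_j q_j(y)$, with $q_j(y) := \widetilde{\p}_j(y)/\prod_{i \mapsto j}(y-\rho_i)$ a polynomial of degree at most $k-1-m_j$. It therefore suffices to establish the per-factor bound $|q_j(x)| \le k^{m_j}$, whereupon multiplication over $j$ yields (ii). For this, I would use iterated synthetic division: set $q_j^{(0)} := \widetilde{\p}_j$ and, for $t=1,\ldots,m_j$, define $q_j^{(t)}(y) := q_j^{(t-1)}(y)/(y-\rho^{(t)})$ where $\rho^{(t)}$ ranges over the $m_j$ roots assigned to $\widetilde{\p}_j$. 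The synthetic division formula expresses each new coefficient as a geometric-type sum of old coefficients with ratio $|\rho^{(t)}| \le 1 + 1/(2k)$, which follows from $|\rho^{(t)} - x| \le 1/(2k)$ and $|x| = 1$. Combining this with the shrinking-degree bound $\deg q_j^{(t-1)} \le k-t$ and the uniform estimate $(1+1/(2k))^{k-1} \le \sqrt{e}$, the per-step inflation of the coefficient $\ell_1$-norm telescopes to $\|q_j\|_1 \le (k-1)(k-2)\cdots(k-m_j) \le k^{m_j}$. Since $|x|=1$, we conclude $|q_j(x)| \le \|q_j\|_1 \le k^{m_j}$.

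For (i), the hypothesis $|x - \rho_i| \le 1/(2k)$ gives $\prod_{i=1}^m |x-\rho_i| \le (2k)^{-m}$, so combining with (ii):
\[
|\widetilde{\p}(x)| \;=\; |q(x)| \cdot \prod_{i=1}^m |x-\rho_i| \;\le\; k^m \cdot (2k)^{-m} \;=\; 2^{-m}.
\]
The main technical obstacle I anticipate is the careful tracking of constants in the per-factor bound above: naive estimates yield only $(Ck)^{m_j}$ for some absolute $C>1$, and obtaining the clean bound $k^{m_j}$ requires exploiting the shrinkage of $\deg q_j^{(t-1)}$ across the $m_j$ successive divisions in order to absorb the excess coming from $(1+1/(2k))^{k-1} = \sqrt{e} + o(1)$. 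The radius $1/(2k)$ in the hypothesis is precisely calibrated so that the root-proximity factors in (i) cancel the $k^m$ bound from (ii) down to the stated $2^{-m}$.
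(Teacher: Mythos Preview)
Your high-level structure---factor $\widetilde{\p} = \prod_j \widetilde{\p}_j$, distribute the nearby roots among the factors, bound each $|q_j(x)|$ via the coefficient $\ell_1$-norm after synthetic division, and then derive (i) from (ii)---is exactly the paper's approach. The gap is in the per-step division bound.

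Your claim that the excess factor $(1+1/(2k))^{k-1}\approx\sqrt{e}$ is absorbed by the degree shrinkage is false. Using the forward recursion $q_i=\sum_{j>i}p_j\rho^{j-i-1}$ with $|\rho|\le 1+1/(2k)$, one step of division on a degree-$d$ polynomial inflates the $\ell_1$-norm by at most $d\cdot(1+1/(2k))^{d-1}$, not $d$. For the simplest case $m_j=1$ and $d=k-1$ this gives $(k-1)(1+1/(2k))^{k-2}\approx (k-1)\sqrt{e}$, which exceeds $k$ for all $k\ge 4$. So the bound $\|q_j\|_1\le k^{m_j}$ does not follow from your argument, and the telescoping product you wrote, $(k-1)(k-2)\cdots(k-m_j)$, is simply missing the accumulated $e^{\Theta(m_j)}$ factor.

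The fix (and what the paper does) is not to use the proximity bound $|\rho|\le 1+1/(2k)$ at all in part (ii). Instead, split on the sign of $|\rho|-1$: if $|\rho|\le 1$, use the forward recursion $q_i=\sum_{j>i}p_j\rho^{j-i-1}$; if $|\rho|>1$, use the backward recursion $q_i=-\sum_{j\le i}p_j\rho^{-(i-j+1)}$. In each case the geometric ratio is at most $1$, so one obtains the clean bound $\|q\|_1\le d\,\|p\|_1$ with no excess constant. Iterating over the $m_j$ roots of $\widetilde{\p}_j$ (degree $\le k-1$) then gives $\|q_j\|_1\le (k-1)^{m_j}\le k^{m_j}$, and the rest of your argument goes through verbatim. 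Note that the hypothesis $|\rho_i-x|\le 1/(2k)$ is used only in part (i), exactly as you wrote.
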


Our main lemma for this section shows that we can $\eps$-approximate the Taylor series of $q(x)$ by only considering the first $O(\log(1/\eps))$ terms:

\begin{lemma}\label{lem:approx}
Fix $w \in \mathbb{C}$ with $|w|=1$.
Suppose that $\rho_1,\ldots,\rho_m$ are all the roots of $\widetilde{\p}(x)$ (listed with appropriate multiplicity)
which have $|\rho_i-w| \leq \frac{1}{3k}$.  Let $q(x) = \frac{\widetilde{\p}(x)}{\prod_{i=1}^m (x-\rho_i)}$ and
let the Taylor series of $\ln(q(x))$ at $w$ be $\ln q(x) = \sum_{j=0}^\infty c_j (x-w)^j \; .$
Then, we have that $|c_j|\leq nk(3k)^j$, for all $j \ge 1$, and the real part of $c_0$ is at most $m \ln k$.

Fix $0< \eps \leq 1/(12mk)$ and an integer $\ell$ satisfying $\ell \geq \log(9nk)$. 
For $\rho'_j$ with $|\rho'_j-\rho_j| \leq \eps$ for $j \in \{1, \ldots , m\}$,
and $c'_j$ with $|c'_j-c_j| \leq \eps$ for $j \in \{1,\ldots, \ell \}$  we have:
For all $x \in \mathbb{C}$ with $|x|=1$ and $|x-w| \leq \frac{1}{6k}$
\begin{equation}\label{eqn:error}
\left|\widetilde{\p}(x)-\Big( \littleprod_{j=1}^m(x-\rho'_j) \Big) \exp\Big(\littlesum_{j=0}^\ell c'_j (x-w)^j \Big) \right| \leq O\left(\eps m k + nk2^{-\ell} \right).
\end{equation}
\end{lemma}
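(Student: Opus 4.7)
The plan is to attack the three assertions separately: first bound the Taylor coefficients of $\ln q$ at $w$ using the logarithmic derivative $q'/q$, second bound $\mathrm{Re}(c_0)$ via Lemma~\ref{lem:close-roots-ksiirvs}(ii), and third establish (\ref{eqn:error}) via a three-term decomposition controlling the truncation of the Taylor series, the perturbation $\rho_j \to \rho'_j$, and the perturbation $c_j \to c'_j$.

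For the coefficient bounds I would write
\[
\frac{q'(x)}{q(x)} = \sum_\rho \frac{1}{x-\rho},
\]
where the sum runs over roots of $q$, each of which is a root of $\widetilde{\p}$ at distance strictly greater than $1/(3k)$ from $w$. Expanding each term as a geometric series about $w$ (valid since $|x-w| \leq 1/(6k) < |\rho - w|$) and integrating term by term identifies $c_j = -j^{-1}\sum_\rho(\rho-w)^{-j}$ for $j \geq 1$. Since $\deg \widetilde{\p} \leq n(k-1)$ there are at most $nk$ such roots, each contributing at most $(3k)^j$ in modulus, so $|c_j| \leq nk(3k)^j$. For $c_0 = \ln q(w)$ (pick any analytic branch near $w$; $q(w) \neq 0$ since every nearby root has been divided out), $\mathrm{Re}(c_0) = \ln|q(w)|$, and Lemma~\ref{lem:close-roots-ksiirvs}(ii) applied at $w$ with the same roots $\rho_1,\ldots,\rho_m$ (each within $1/(3k) \leq 1/(2k)$ of $w$) gives $|q(w)| \leq k^m$, hence $\mathrm{Re}(c_0) \leq m\ln k$.

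For the error bound, set $\bar A = \prod_j(x-\rho_j)$, $A = \prod_j(x-\rho'_j)$, $S = \sum_{j=0}^{\ell}c_j(x-w)^j$, $S' = \sum_{j=0}^{\ell}c'_j(x-w)^j$, $T = \sum_{j > \ell}c_j(x-w)^j$, $\bar B = e^{S}$, and $B = e^{S'}$, so that $\widetilde{\p}(x) = \bar A \cdot \bar B \cdot e^{T}$, and decompose
\[
\widetilde{\p}(x) - AB = \bar A \bar B(e^T - 1) + (\bar A - A)\bar B + A(\bar B - B).
\]
For the truncation term, the $|c_j|$ bound gives $|T| \leq nk\sum_{j > \ell}2^{-j} = nk\cdot 2^{-\ell} \leq 1/9$ (using $\ell \geq \log(9nk)$), so $|e^T - 1| = O(nk\cdot 2^{-\ell})$, and $|\bar A \bar B| = |\widetilde{\p}(x)|\cdot|e^{-T}| \leq 2$, yielding $O(nk\cdot 2^{-\ell})$. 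For the root-perturbation term, telescoping with $|x-\rho_j|, |x-\rho'_j| \leq 7/(12k)$ gives $|\bar A - A| \leq m\epsilon(7/(12k))^{m-1}$, while Lemma~\ref{lem:close-roots-ksiirvs}(ii) applied at the point $x$ (valid since $|\rho_j - x| \leq 1/(3k) + 1/(6k) = 1/(2k)$) gives $|\bar B| \leq 2k^m$; the product is $O(\epsilon m k)$. For the coefficient-perturbation term, $|S - S'| \leq \epsilon \sum_{j \leq \ell}(1/(6k))^j = O(\epsilon)$, so $|\bar B - B| \leq O(\epsilon)|\bar B| = O(\epsilon k^m)$, and combined with $|A| \leq (7/(12k))^m$ this contributes $O(\epsilon)$. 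Summing yields the claimed $O(\epsilon m k + nk\cdot 2^{-\ell})$.

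The main obstacle is the cancellation between $|\bar A|$ (as small as $(1/(2k))^m$) and $|q(x)|$ (as large as $k^m$), whose product is only $2^{-m}$ by Lemma~\ref{lem:close-roots-ksiirvs}(i). This forces one to carry $\bar A$ multiplicatively through the truncation estimate instead of bounding $|e^T - 1|$ against $|\bar B|$ alone, which would otherwise blow up by a factor of $k^m$. A secondary technical point is the two-layer radius condition $|\rho_j - w| \leq 1/(3k)$ versus $|x - w| \leq 1/(6k)$, which guarantees $|(x-w)/(\rho-w)| \leq 1/2$ for all distant roots and thereby makes the geometric tail bound on $T$ trivial; beyond this bookkeeping there is no genuine analytic difficulty.
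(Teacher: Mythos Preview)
Your argument is correct and parallels the paper's closely: the same Taylor expansion of $\ln q$ over the far roots, the same invocation of Lemma~\ref{lem:close-roots-ksiirvs}(ii) for $\mathrm{Re}(c_0)$, and the same $nk\,2^{-\ell}$ tail bound from the geometric ratio $1/2$. The only substantive difference is how the final error is organized. The paper first bounds $\bigl|\sum_{j\le\ell}c'_j(x-w)^j-\ln q(x)\bigr|$ (truncation plus coefficient perturbation together), turns this into an additive bound $|\widetilde{\p}(x)-\bar A\,e^{S'}|\le e(2\eps+nk2^{-\ell})$ using $|\widetilde{\p}(x)|\le1$, and then swaps the roots $\rho_j\to\rho'_j$ one at a time by an induction. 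Your three-term split $\bar A\bar B(e^T-1)+(\bar A-A)\bar B+A(\bar B-B)$ treats truncation, root perturbation, and coefficient perturbation separately.

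Your handling of the root-perturbation term is in fact cleaner than the paper's. The paper's inductive step asserts $|f_h(x)|\le 2/|x-\rho_h|\le 4k$, but the second inequality would need $|x-\rho_h|\ge 1/(2k)$, whereas the hypotheses only give $|x-\rho_h|\le 1/(2k)$; when a nearby root happens to sit on (or very close to) the unit circle near $x$ this bound is not justified as written. Your route---bounding $|\bar A-A|\le m\eps\,(7/(12k))^{m-1}$ by telescoping and $|\bar B|\le 2k^m$ via Lemma~\ref{lem:close-roots-ksiirvs}(ii) applied at $x$---avoids the issue entirely, since the $(1/k)^{m-1}$ from the product cancels the $k^m$ from $|q(x)|$ directly, without ever dividing by a possibly tiny $|x-\rho_h|$. (The paper's conclusion $|f_h(x)|=O(k)$ is still true, for exactly the reason you give; only its derivation is gappy.)
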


\begin{proof}
We start by noting that, by the triangle inequality,  Lemma~\ref{lem:close-roots-ksiirvs} applies to all points
$x \in \mathbb{C}$ with $|x|=1$ and $|x-w| \leq \frac{1}{6k}$.
Observe that $c_0=\ln [q(w)]$ and by Lemma~\ref{lem:close-roots-ksiirvs}(ii)
$|q(w)| \leq k^m$. This gives the claim on the real part of $c_0$.

Note that $\ln(q(x))$ can be expressed as a sum of the form
$$
\ln(q(x)) = c_0 + \sum_{h=1}^{R} \ln(1 - (x-w)/(r_h-w)) \;,
$$
where $c_0 =  \ln [q(w)]$, $r_j$ are the roots of $q(x)$, and $R \leq n(k-1)$ is the degree of $q(x)$.
By the definition of $q$, it follows that $|r_h-w| > \frac{1}{3k}$ for all $1 \leq h \leq R$.

Inserting the standard Taylor series $\ln(1+y)=\sum_{j=0}^\infty \frac{y^j}{j}$ gives
$$
\ln(q(x)) = c_0 + \sum_{h=1}^{R} \sum_{j=0}^\infty \frac{(-1)^j(x-w)^j}{j \cdot (r_h-w)^j} .
$$
Considering the $(x-w)^j$ term above gives $c_j = \frac{(-1)^j}{j}\sum_{j=1}^R (r_j-w)^{-j}$.
Therefore,
$$|c_j| \leq R(3k)^j  \leq nk(3k)^j \;.$$
This gives the desired bound on $|c_j|$, $j \ge 1$.

We now proceed to prove (\ref{eqn:error}).
We start by considering the difference
$$
\sum_{j=0}^\ell c'_j (x-w)^j - \ln(q(x)) \;,
$$
for $x$ in the appropriate range.
Since $|x-w| \leq \frac{1}{6k} \leq 1/2$ and $|c'_j-c_j| \leq \eps$,
we have
$$\left|\sum_{j=0}^\ell c'_j (x-w)^j - \sum_{j=0}^\ell c_j (x-w)^j \right| \leq \eps \cdot \sum_{j=0}^{\ell} 2^{-j}  \leq 2 \eps \;.$$
So, we need to consider the error introduced by truncating the Taylor series after the first $\ell$ terms.
We have
\begin{eqnarray*}
\left| \sum_{j=0}^\ell c_j (x-w)^j - \ln(q(x)) \right| & = & \left| \sum_{j>\ell} c_j(x-w)^j \right| \\
& \leq & \sum_{j>\ell} nk(3k)^j(6k)^{-j} \\
& = & nk 2^{-\ell}
\end{eqnarray*}
Therefore, by the triangle inequality,
$$
\left|\sum_{j=0}^\ell c'_j (x-w)^j - \log(q(x))\right| \leq  2\epsilon + nk2^{-\ell} .
$$
Thus, the multiplicative error in this approximation, i.e.,
$$\frac{1}{q(x)} \exp \left(\sum_{j=0}^\ell c'_j (x-w)^j \right) = \frac{1}{\widetilde{\p}(x)} \left( \prod_{j=1}^m(x-\rho_j)\right) \exp \left(\sum_{j=0}^\ell c'_j (x-w)^j \right)$$
 is {$\exp(E),$ where $|E| \leq 2 \eps + nk2^{-\ell}.$}
Since $|\widetilde{\p}(x)| \leq 1$ and by our assumptions on $\ell$, $2 \eps + nk2^{-\ell} \leq 1$, we have that
$$
\left|\widetilde{\p}(x)- \left( \prod_{j=1}^m(x-\rho_j) \right) \exp\left(\sum_{j=0}^\ell c'_j (x-w)^j \right) \right| \leq e\cdot ( 2 \eps + nk2^{-\ell}).
$$
We next replace each $\rho_j$ by the corresponding $\rho'_j$ one at a time.
By a simple induction, we will show that for all $1 \le h \le m$
\begin{equation} \label{eq:root-induct}
\left|\widetilde{\p}(x)- \left( \prod_{j=1}^h(x-\rho'_j) \right) \left( \prod_{j=h+1}^m(x-\rho_j) \right)  \exp\left(\sum_{j=0}^\ell c'_j (x-w)^j \right) \right| \leq e\cdot ( 2 \eps + nk2^{-\ell}) + 4hk\eps.
\end{equation}
We have just shown this for $h=0$.
So, we assume (\ref{eq:root-induct}) for $0 \leq h \leq m-1$ and seek to prove it for $h+1$.
For simplicity, we rewrite (\ref{eq:root-induct}) as
$$\left|\widetilde{\p}(x)- (x-\rho_h)f_h(x) \right| \leq e\cdot ( 2 \eps + nk2^{-\ell}) + 4hk\eps \;,$$
where $f_h(x)= \left( \prod_{j=1}^{h-1}(x-\rho'_j) \right) \left( \prod_{j=h+1}^m(x-\rho_j) \right)  \exp\left(\sum_{j=0}^\ell c'_j (x-w)^j \right)$.

Note that the RHS of (\ref{eq:root-induct}) satisfies
$$e\cdot ( 2 \eps + nk2^{-\ell}) + 4hk\eps \leq e\cdot ( 2 \eps + nk2^{-\ell}) + 4mk\eps \leq 1\;,$$
by our assumptions on $\eps$ and $\ell$.
Since $|\widetilde{\p}(x)| \leq 2^{-m} \leq 1$, we have $|(x-\rho_h)f_h(x)| \leq 2$ or $|f_h(x)| \leq \frac{2}{|x-\rho_{h}|} \leq 4k$.
Now if we replace $(x-\rho_h)f_h(x)$ with $(x-\rho_h')f_h(x)$, we introduce an error of
$|(x-\rho_h)f_h(x) - (x-\rho_h')f_h(x)| = |\rho'_h-\rho_h||f_h(x)| \leq \eps \cdot 4k$.
Hence,
$$\left|\widetilde{\p}(x)- (x-\rho'_h)f_h(x) \right| \leq e\cdot ( \ell \epsilon + nk2^{-\ell}) + 4(h+1)k\eps$$
But this is just (\ref{eq:root-induct}) for $h+1$, completing the induction.

Taking $h=m$ in (\ref{eq:root-induct}) gives:
$$
\left|\widetilde{\p}(x)- \left( \prod_{j=1}^m(x-\rho'_j) \right) \exp\left(\sum_{j=0}^\ell c'_j (x-w)^j \right) \right| \leq e\cdot ( 2 \eps + nk2^{-\ell}) +4mk\eps
$$
as required.
\end{proof}

We are now prepared to prove Proposition~\ref{prop:sparse-upper}.

\begin{proof}[Proof of Proposition~\ref{prop:sparse-upper}.]
By replacing $\epsilon$ by a power of itself, we may assume that $\epsilon \leq k^{-1}$ and that $n \leq \epsilon^{-1}$. We may additionally assume that $\epsilon$ is sufficiently small.

It suffices to find a subset $T$ of ${\cal S}_{n,k}$ of appropriate size so that for any $\p\in{\cal S}_{n,k}$ there is some $\q\in T$ so that $|\widetilde{\p}(z)-\widetilde{\q}(z)|\leq \epsilon^2$ for all $|z|=1$, as Fact~\ref{fact:trivial} would then imply that $\dtv(\p,\q)\leq \eps$.

We begin by defining some parameters. Let $m$ be an integer larger than $3\log(1/\epsilon)$. Let $\ell$ be an integer larger than $\log(nk/\epsilon^3)$ and $\delta>0$ a real number smaller than $\epsilon^3/(mk+\ell)$. Additionally, we divide the unit circle of $\C$ into $O(k)$ arcs each of length at most $1/(3k)$.

To each $\p \in {\cal S}_{n,k}$ we associate the following data:
\begin{itemize}
\item For each arc in our partition with midpoint $w_I$, define $q(z)$ as in Lemma \ref{lem:approx}. Then we define $\p_I$ as follows:
\begin{itemize}
\item If $\widetilde{\p}(z)$ has at least $m$ roots within distance $1/(3k)$ of $w_I$ or if $|q(w_I)|<\epsilon^3\exp(-nk)$, we let $\p_I=\textbf{Small}$.
\item Otherwise, we let $\p_I$ consist of the following data:
\begin{itemize}
\item Roundings of the roots of $\widetilde{\p}(z)$ that are within $1/(3k)$ of $w_I$ to the nearest complex numbers whose real and imaginary parts are multiples of $\delta/2$.
\item Roundings of the first $\ell$ Taylor coefficients of $\log(q)$ about $w_I$ to the nearest complex numbers whose real and imaginary parts are multiples of $\delta/2$.
\end{itemize}
\end{itemize}
\end{itemize}

We then let $D(\p)$ be the sequence $\{\p_I\}_{I\textrm{ an arc in the partition}}$. For each value $V$ that can be obtained as $D(\p)$ for some $\p\in \mathcal{S}_{n,k}$, we pick one such $\p$ called $\q_V$. We define our cover $T$ to be the set of all such $\q_V$. In order to show that this is an appropriate cover, we need to show two claims:
\begin{enumerate}
\item The number of possible values of $D(\p)$ is at most $\left(1/\eps\right)^{O(k\log(1/\eps))}.$ This implies that $|T|$ is appropriately small.
\item If $\p,\q\in{\cal S}_{n,k}$ have $D(\p)=D(\q)$, then $\dtv(\p,\q)\leq \epsilon$. This will imply that $T$ is a cover, since given any $\p\in \mathcal{S}_{n,k}$, we may take $\q=\q_{D(\p)}\in T$.
\end{enumerate}

The first claim is relatively straightforward. For each of $O(k)$ arcs, $I$, we have that $\p_I$ is either $\textbf{Small}$ or a sequence of $O(\log(1/\epsilon))$ complex numbers, each of which can take only $\poly(1/\delta)$ many possible values. Thus, the number of possible values for $\p_I$ is at most $\delta^{-O(\log(1/\epsilon))}=(1/\epsilon)^{O(\log(1/\epsilon))}$. The number of possible values for $D(\p)$ is at most this raised to the number of arcs, which is $(1/\epsilon)^{O(k\log(1/\epsilon))}.$

The second claim is slightly more involved. We note that it is sufficient to show that if $D(\p)=D(\q)$, then $|\tilde{\p}(z)-\tilde{\q}(z)|\leq \epsilon^2$ for all unit norm $z$. In particular, we show the stronger claim that for any of our arcs $I$ if $\p_I=\q_I$, then $|\widetilde{\p}(z)-\widetilde{\q}(z)|=  O(\epsilon^3)$ for all $z\in I$.

If $\p_I=\q_I=\textbf{Small}$, we claim that $|\tilde{\p}(z)|,|\tilde{\q}(z)|=O(\epsilon^3)$ for all $z\in I$. It suffices to show this merely for $\p$. On the one hand, if $\widetilde{\p}(z)$ has more than $m$ roots near $w_I$, this follows from the first part of Lemma \ref{lem:close-roots-ksiirvs}. On the other hand, if $|q(w_I)|\leq \epsilon^3\exp(-nk)$, then for any other $z\in I$ we have that
$$
q(z) = q(w_I) \exp\left(\sum_{i=1}^\infty c_i (z-w_I)^i \right),
$$
where by Lemma \ref{lem:approx}, $|c_i|\leq nk(3k)^i.$ Therefore, for $z\in I$, since $|z-w_I|\leq 1/(6k)$, we have by Lemma \ref{lem:close-roots-ksiirvs} that
$$
|\widetilde{\p}(z)|\leq |q(z)| \leq |q(w_I)| \exp(nk) \leq \epsilon^3.
$$

If $\p_I=\q_I\neq\textbf{Small}$, we note by Lemma \ref{lem:approx} that for $z\in I$ that both of $\widetilde{\p}(z)$ and $\widetilde{\q}(z)$ are within $O(mk\delta+\ell\delta+nk2^{-\ell})=O(\epsilon^3)$ of $\prod_{j=1}^{M}(z-\rho'_j)\exp\left(\sum_{j=0}^\ell c'_j(z-w_I)^j \right)$, where the $\rho'_j$ are the roundings of nearby roots and $c'_j$ the roundings of the Taylor coefficients given by the data $p_I=q_I$. Thus, again in this case, $|\widetilde{\p}(z)-\widetilde{\q}(z)|\leq O(\epsilon^3)$ for all $z\in I.$

This completes the proof of Proposition~\ref{prop:sparse-upper}.

\end{proof}

\subsection{Efficient Cover Construction} \label{sec:constr-cover}
In this section, we give an algorithm to construct a near-minimum size cover in output polynomial time:
\begin{theorem}\label{coverConstructionThm}
Let $n,k$ be positive integers and $\epsilon>0$. There exists an algorithm
that runs in time $n\left( k/\eps \right)^{O(k\log(1/\epsilon))}$ and returns a proper $\eps$-cover for $\mathcal{S}_{n,k}$, i.e.,
a cover consisting of $n\left( k/\eps \right)^{O(k\log(1/\eps))}$ $k$-SIIRVs each given as an explicit sum of $k$-IRVs.
\end{theorem}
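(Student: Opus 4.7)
The plan is to combine the structural reduction of Section~\ref{ssec:reduce} with an efficient dynamic-programming realization of the signature-based cover analyzed in Proposition~\ref{prop:sparse-upper}. First I would reduce to the sparse case exactly as in the existence proof: invoke Theorem~\ref{thm:reg} to split into Case~1 (variance $V = O(\poly(k/\eps))$) and Case~2 ($X = cZ + Y$ with $Z$ a discretized Gaussian and $Y$ a $c$-IRV). For Case~1, apply Lemma~\ref{lem:simple} with $\delta = \eps/V$ to reduce to a $k$-SIIRV on $n' = \poly(k/\eps)$ non-constant IRVs plus one of $O(n)$ integer translations, each of which is just a single extra constant $k$-IRV prepended to the sum. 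For Case~2, enumerate the $(1/\eps)^{O(k)}$ elements of a grid cover for the $c$-IRV $Y$ and the $O(n)$ pairs $(\mu,\sigma^2)$ in the Gaussian grid described after Theorem~\ref{thm:reg}; each discretized Gaussian in the grid is produced as an explicit $k$-SIIRV using a moment-matching Binomial (or translated Binomial) approximation, which is always a sum of $k$-IRVs. This contributes $n(1/\eps)^{O(k)}$ explicit elements to the output cover and reduces the problem to constructing an efficient proper $\eps$-cover of $\mathcal{S}_{n',k}$ for $n' = \poly(k/\eps)$.

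For the sparse subproblem I would run a signature-driven DP over IRVs. Discretize the simplex of $k$-IRV parameters to a grid of resolution $\eta = \poly(\eps/k)$; this produces a set $\mathcal{T}$ of $(k/\eps)^{O(k)}$ canonical IRV types, each representable explicitly. Process the $n'$ IRVs one at a time, maintaining a hash table whose keys are the discretized signatures $D(\cdot)$ of Proposition~\ref{prop:sparse-upper} applied to the partial product $\widetilde{\p}^{(i)}(x) = \widetilde{\p}_1(x)\cdots \widetilde{\p}_i(x)$, and whose values are explicit sequences $(\p_1,\dots,\p_i) \in \mathcal{T}^i$ realizing that key. To transition, for each current state I enumerate over all $\tau \in \mathcal{T}$, multiply the partial PGF by $\widetilde{\tau}(x)$, recompute $D(\widetilde{\p}^{(i+1)})$, and insert the extended sequence if the resulting signature has not yet been recorded. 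At the end, each signature in the table corresponds to an explicit $k$-SIIRV on $n'$ IRVs, and this is the sparse cover we return (concatenated with the translations and Gaussian/IRV components of the reduction above).

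The main obstacle is the signature-update step, since the signature (list of roots near each arc midpoint plus Taylor coefficients of $\log q$) does not transform additively when one multiplies by $\widetilde{\tau}(x)$: new roots may enter the $1/(3k)$-neighborhoods of arc midpoints, old roots may combine with the new ones to cross the ``$\textbf{Small}$'' threshold, and the Taylor coefficients of $\log q$ need to be recomputed for the updated root factorization. My plan is to bypass these bookkeeping issues by representing the partial product \emph{directly} as a polynomial of degree at most $n'(k-1) = \poly(k/\eps)$ (e.g.\ by its coefficient vector), so that multiplication by $\widetilde{\tau}(x)$ takes $\poly(k/\eps)$ time. The signature $D(\widetilde{\p}^{(i+1)})$ can then be recomputed from scratch in $\poly(k/\eps)$ time by evaluating $\widetilde{\p}^{(i+1)}(z)$ at arc midpoints, locating nearby roots using any standard root-finding routine on the restricted polynomial, and dividing them out to form the regular factor $q$ whose low-order Taylor coefficients we then read off. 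This keeps the per-transition cost polynomial in $k/\eps$.

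Putting the pieces together: the DP has $n' = \poly(k/\eps)$ stages, at most $(k/\eps)^{O(k\log(1/\eps))}$ live states (by the signature-counting argument in the proof of Proposition~\ref{prop:sparse-upper}), and $(k/\eps)^{O(k)}$ transitions per state, each costing $\poly(k/\eps)$. The total running time in the sparse case is $(k/\eps)^{O(k\log(1/\eps))}$, and the output is a set of explicit $k$-SIIRVs of this same size. Correctness follows immediately from Proposition~\ref{prop:sparse-upper}: every $\p \in \mathcal{S}_{n',k}$ maps to some signature $D(\p)$ which the DP must enumerate (because a suitable type-discretization of the IRVs defining $\p$ is among the sequences explored), and any two SIIRVs with the same signature are within $\eps$ in $\dtv$. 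Multiplying through by the $n(1/\eps)^{O(k)}$ elements produced by the Case~1/Case~2 reduction gives the claimed proper cover of size and construction time $n(k/\eps)^{O(k\log(1/\eps))}$.
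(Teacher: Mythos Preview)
Your overall plan (the Case~1/Case~2 structural reduction followed by a signature-keyed dynamic program over IRVs) matches the paper's. The genuine gap is in the soundness of the DP. You key the hash table by the signature $D(\cdot)$ of Proposition~\ref{prop:sparse-upper} recomputed from the full partial polynomial $\widetilde{\p}^{(i)}$, and then prune to one representative per key. But that signature is \emph{not} a function of the previous signature and the new IRV: two partial products $\widetilde P,\widetilde Q$ with $D(\widetilde P)=D(\widetilde Q)$ can, after multiplication by the same $\widetilde{\tau}$, land in different buckets (the rounded nearby-root lists and the rounded Taylor coefficients of the new $\log q$ are not stable under $\eps$-scale perturbations of the underlying polynomial). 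Hence pruning can silently drop branches, and your claim that ``a suitable type-discretization of the IRVs defining $\p$ is among the sequences explored'' is false: once some prefix of $(\p_1,\dots,\p_{n'})$ is deduped to a different representative, the DP need never reach $D(\p_1\cdots\p_{n'})$.

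The paper's fix is precisely to make the signature compositional. For a sequence $(\p_1,\dots,\p_h)$ the arc data consist of (i) the concatenation, truncated to $m$ entries, of the individually rounded nearby roots of each factor $\widetilde{\p_i}$, and (ii) the \emph{sum} over $i$ of the individually rounded Taylor coefficients $c'_{j,I}(\p_i)$ of each $\log q_i$ (with an absorbing $-\infty$ state for the constant term). With this definition $D(\p_1,\dots,\p_h)$ is computable from $D(\p_1,\dots,\p_{h-1})$ and $D(\p_h)$ alone (Claim~\ref{claim:algo}(i)), so deduping is sound, while Lemma~\ref{lem:approx} still shows that equal signatures force $\dtv\le\eps$. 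Your ``recompute from the polynomial'' shortcut could be salvaged only by running the entire DP at accuracy $\eps_0=\eps/n'$ and arguing by a telescoping triangle inequality that the stage-$i$ representative stays within $i\eps_0$ of the true prefix; you do not do this, and in any case the clean route is the compositional signature. Separately, your Case~2 handling (``moment-matching Binomial'') is too cursory: producing, for every \emph{achievable} triple of mean, variance, and residue distribution modulo $c$, an explicit element of $\mathcal{S}_{n,k}$ realizing it requires the nontrivial constructions in Appendix~\ref{ap:cover-high-var}, not just a Binomial with the right moments.
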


Our algorithm builds on the existential upper bound established in the previous subsections.
We first construct an $\eps$-cover for $k$-SIIRVs in Case 2 of Theorem \ref{thm:reg}, i.e., $k$-SIIRVs whose variance is more than a sufficiently large polynomial in $k/\eps$.
By Theorem \ref{thm:reg} each such $k$-SIIRV is $\eps$-close to a random variable of the form $cZ+Y$, where $1 \le c \le k-1$ is an integer, $Z$ is a discrete Gaussian
and $Y$ is a $c$-IRV. In Section~\ref{ssec:reduce} we exploited this structural fact to
construct a non-proper cover for $k$-SIIRVs in this case. We remark that this non-proper cover may
contain ``spurious'' points, i.e., points not close to a large variance $k$-SIIRV.
Efficiently constructing a proper cover without spurious points for the high variance case requires careful arguments and is deferred to Appendix~\ref{ap:cover-high-var}.

We now focus our attention to Case 1.
By Lemma \ref{lem:simple}, we have that all such $k$-SIIRVs can be approximated by a constant
plus a sum of $\poly(k/\epsilon)$ $k$-IRVs. Since there are only $nk$ possibilities for this constant,
and all such possibilities are easily obtainable, it suffices to find an explicit $\epsilon$-cover for $\mathcal{S}_{n,k}$ when $n=\poly(k/\epsilon)$.

A simple but useful observation is that we can round each coordinate probability for each of our $k$-IRVs
to a multiple of $\epsilon/(nk)$ and introduce an error of $O(\epsilon)$ in total variation distance.
Therefore, it suffices to find a cover of $\mathcal{S}'_{n,k}$, a sum of $n=\poly(k/\epsilon)$ independent $k$-IRVs,
where each of their coordinate probabilities is a multiple of $\frac{1}{N}$ for some integer $N=\poly(k/\epsilon)$.
We will henceforth call such a $k$-IRV {\em $N$-discrete $k$-IRV}.

Our main workhorse here will once again be Lemma \ref{lem:approx}.
The cover we construct will be much the same as in Proposition~\ref{prop:sparse-upper},
but we will now explicitly produce SIIRVs that obtain every possible value of $D$.
Fortunately, the Taylor series of the log of the Fourier transform
is additive in the composite $k$-IRVs, and so there exists an appropriate dynamic program to solve this problem.

Let $\delta>0$ be given by a sufficiently small polynomial in $\epsilon/k$, and let $m$ be an integer at least a sufficiently large multiple of $\log(1/\epsilon)$. We divide the unit circle into arcs $I$ with midpoints $w_I$ as described in the proof of Proposition \ref{prop:sparse-upper}.
For any $N$-discrete $k$-IRV, $\p$, we associate the following data. For each interval $I$, let $\rho_{1,I},\ldots,\rho_{r_I,I}$ be the roots of $\widetilde{\p}$ that are within
distance $1/(3k)$ of $w_I$, and let $q(z)=\frac{\widetilde{\p}(z)}{\prod (z-\rho_{i,I})}$. For $1 \leq j \leq r_I$, let $\rho'_{j,I}$ be a rounding of $\rho_{j,I}$ with $\rho'_j,I=(a+bi)\delta$ for some $a,b \in \Z$ and $|\rho'_{j,I}-\rho_{j,I}| \leq \delta$.
For $1 \leq j \leq m$, let $c'_{j,I}$ be a rounding of $c_{j,I}$ with $c'_{j,I}=(a+bi)\delta$ for some $a,b \in \Z$ and $|c'_{i,I}-c_{i,I}| \leq \delta$, where the $c_{k,I}$ are the coefficients of first $m+1$ terms of the Taylor series $\ln q(z) = \sum_{j=0}^\infty c_j (z-w_I)^j$.
Let $\p_I$ be the data consisting of the list $(\rho'_{1,I},\ldots,\rho'_{r_I,I})$ and the vector $(c'_{0,I},c'_{1,I},\ldots,c'_{m,I})$. We let $D(\p)$ be the sequence of $\p_I$ over all intervals $I$.

Given a sequence $\p_1,\p_2,\ldots,\p_h$ of $k$-IRVs, we let $D(\p_1,\ldots,\p_k)$ be given by the following data for each $I$:
\begin{itemize}
\item The first $m$ elements of the concatenation of the lists of approximate roots of $\prod_{i=1}^h \widetilde{\p_i}(z)$ near $w_I$.
\item The list of elements $\sum_{i=1}^h c'_{j,I}(\p_i)$ for $0\leq j \leq m$, with the exception that the $j=0$ term is replaced by $-\infty$ if for any $h'<h$ we have that
the real part of  $\sum_{i=1}^{h'} c'_{0,I}(\p_i)$ is less than $-nk-m {- m \ln k}$.
\end{itemize}
Our algorithm will follow from three important claims:
\begin{claim} \label{claim:algo}
We have the following:
\begin{enumerate}
\item[(i)] $D(\p_1,\ldots,\p_h)$ can be computed in $\poly(k/\epsilon)$ time from $D(\p_1,\ldots,\p_{h-1})$ and $D(\p_h)$.
\item[(ii)] There are only $\left( k/\eps\right)^{O(k\log(1/\eps))}$ possible values for $D(\p_1,\ldots,\p_h)$ for any $h\leq n$.
\item[(iii)] If $D(\p_1,\ldots,\p_n)=D(\q_1,\ldots,\q_n)$ and $\p$,$\q$ are the distributions of $\sum_{i=1}^n X_i$
and $\sum_{i=1}^n Y_i$ for $X_i \sim \p_i$ and $Y_i \sim \q_i$ then $\dtv(\p,\q) \le \eps$.
\end{enumerate}
\end{claim}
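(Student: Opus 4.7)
The plan is to exploit one structural fact that makes the data $D$ well-behaved under convolution: $\widetilde{\p_1\ast\cdots\ast\p_h}=\prod_{i=1}^h \widetilde{\p_i}$, from which both the set of roots in a disk around any $w_I$ and the Taylor coefficients of $\log q_\p$ at $w_I$ become additive over the summands (the nearby roots concatenate with multiplicity, and $\log\prod=\sum\log$). Part (i) is then bookkeeping: to form $D(\p_1,\ldots,\p_h)$ from $D(\p_1,\ldots,\p_{h-1})$ and $D(\p_h)$, for each of the $O(k)$ arcs I concatenate the approximate-root lists and truncate to the first $m$ entries, add the Taylor-coefficient vectors coordinate-wise, and set the $c_{0,I}$ entry to $-\infty$ whenever it was already $-\infty$ in the input or the running real part has just crossed the threshold $-nk-m-m\ln k$. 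Each arc's update is polynomial in $k$, $m$, and $\log(1/\delta)$.

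Part (ii) is a counting argument. Per arc, the data consists of at most $m=\Theta(\log(1/\eps))$ approximate roots and $m+1$ approximate Taylor coefficients, all on a $\delta$-grid with $\delta=\poly(\eps/k)$. Each root lies in the disk $|z|\leq 2$, contributing $\poly(1/\delta)$ choices. Each coefficient $c'_{j,I}$ is a sum of at most $n$ grid values of magnitudes bounded by $O(nk(3k)^j)$ (by Lemma~\ref{lem:approx}), so it lies on the $\delta$-grid inside a disk of radius at most $(k/\eps)^{O(\log(1/\eps))}$, yielding $(k/\eps)^{O(\log(1/\eps))}$ choices per coefficient (with an additional $-\infty$ sentinel for $j=0$). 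Multiplying across the $O(\log(1/\eps))$ roots, $O(\log(1/\eps))$ coefficients and $O(k)$ arcs gives $(k/\eps)^{O(k\log(1/\eps))}$ possibilities for $D$ in total.

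The substance is in part (iii). By Fact~\ref{fact:trivial} applied with support size $\ell=\poly(k/\eps)$, it suffices to show $|\widetilde{\p}(z)-\widetilde{\q}(z)|=O(\eps^3)$ for every $z$ on the unit circle, and I will establish this arc-by-arc in three cases. If the stored root-list has all $m$ entries, then both $\widetilde{\p}$ and $\widetilde{\q}$ have at least $m$ roots within $1/(3k)$ of $w_I$, hence within $1/(2k)$ of any $z\in I$, so Lemma~\ref{lem:close-roots-ksiirvs}(i) gives $|\widetilde{\p}(z)|,|\widetilde{\q}(z)|\leq 2^{-m}\leq \eps^3$. If the stored $c_{0,I}$ equals $-\infty$, then at some step $h'$ the running partial $\sum_{i\leq h'}\mathrm{Re}\,c'_{0,I}(\p_i)$ fell below $-nk-m-m\ln k$, which (up to negligible rounding) converts to $|q_{\p^{(h')}}(w_I)|\leq e^{-nk-m-m\ln k}$; combining the coefficient bound $|c_j|\leq nk(3k)^j$ with $|z-w_I|\leq 1/(6k)$ gives $|q_{\p^{(h')}}(z)|\leq e^{-m-m\ln k}$, hence $|\widetilde{\p^{(h')}}(z)|\leq e^{-m}\leq \eps^3$ after including the nearby root factors (each of modulus at most $1$ for $z\in I$); since $|\widetilde{\p_i}(z)|\leq 1$ on the unit circle for each $i>h'$, the smallness passes to the full $|\widetilde{\p}(z)|\leq \eps^3$, and likewise for $\widetilde{\q}$. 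Otherwise, $\p_I=\q_I$ records a genuine $\delta$-approximation of the nearby roots and of the first $m+1$ Taylor coefficients valid for both $\p$ and $\q$, and a direct application of Lemma~\ref{lem:approx} (with its $\eps$ set to $\delta$ and its $\ell$ set to $m$) shows that $\widetilde{\p}(z)$ and $\widetilde{\q}(z)$ each differ by at most $O(\delta mk+nk2^{-m})=O(\eps^3)$ from the common expression $\prod_j(z-\rho'_j)\exp(\sum_j c'_j(z-w_I)^j)$; a final triangle inequality finishes.

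The main obstacle I anticipate is the middle case of part (iii): turning a one-shot drop in the running $c_{0,I}$ at some intermediate step $h'\leq n$ into smallness of the \emph{full} $\widetilde{\p}(z)$ on the \emph{entire} arc $I$, not merely at the midpoint $w_I$ and not merely for the partial product $\widetilde{\p^{(h')}}$. This is precisely why the threshold carries the slack $-nk-m-m\ln k$: the $-nk$ term absorbs the worst-case Taylor growth from $w_I$ to $z\in I$ (since $|c_j|\leq nk(3k)^j$ and $|z-w_I|\leq 1/(6k)$), the $-m\ln k$ absorbs the $k^m$ factor from Lemma~\ref{lem:close-roots-ksiirvs}(ii), and the remaining $-m$ yields the target $e^{-m}\leq \eps^3$. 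Once this slack is verified, the rest of part (iii) is a straightforward combination of Lemmas~\ref{lem:close-roots-ksiirvs} and~\ref{lem:approx}.
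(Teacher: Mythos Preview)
Your argument is correct and follows the paper's approach closely. Two minor remarks. First, the stored Taylor sums $\sum_i c'_{j,I}(\p_i)$ are $n\delta$-approximations (not $\delta$-approximations) of the true coefficients of $\log q_{\p}$ at $w_I$, since each summand carries its own rounding; so in your application of Lemma~\ref{lem:approx} the error bound should read $O(n\delta\, mk + nk\,2^{-m})$ rather than $O(\delta\, mk + nk\,2^{-m})$, which is harmless because $\delta$ is taken to be a sufficiently small polynomial in $\eps/k$. Second, your handling of the $-\infty$ case---freeze at the partial product $\widetilde{\p^{(h')}}$, bound it on the whole arc via the Taylor growth, and then use $|\widetilde{\p_i}(z)|\le 1$ for $i>h'$---is slightly different from the paper, which instead bounds the later contributions $\Re c_{0,I}(\p_i)\le m_i\ln k$ (Lemma~\ref{lem:approx}) to push the smallness to the full $q_{\p}(w_I)$; your route is a touch cleaner and in fact does not need the $-m\ln k$ slack, so your attribution of that term to the $k^m$ bound of Lemma~\ref{lem:close-roots-ksiirvs}(ii) is off but inconsequential.
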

\begin{proof}
The first statement follows from the fact that the lists of roots in $D(\p_1,\ldots,\p_h)$ are
obtained by concatenating those in $D(\p_1,\ldots,\p_{h-1})$ with those in $D(\p_h)$, and truncating if necessary.
And moreover that $\sum_{i=1}^h c'_{j,I}(\p_i)$ is obtained by adding $c'_{j,I}(\p_h)$ to $\sum_{i=1}^{h-1} c'_{j,I}(\p_i)$ (with the term remaining $-\infty$ if it was in $D(\p_1,\ldots,\p_{h-1})$).

For the second statement note that for each of the $O(I)$ intervals,
we store $O(\log(1/\epsilon))$ complex numbers whose real and imaginary parts are each multiples of $\delta$.
As each of these numbers (with the exception of a $-\infty$ term) have size at most $\poly(k/\epsilon)$
and $\delta=\poly(\eps/k)$, there are only $\poly(k/\epsilon)^{O(k\log(1/\epsilon))}$ many possible values for $D(\p_1,\ldots,\p_h)$.

The third statement is true for essentially the same reasons as in the proof of Proposition \ref{prop:sparse-upper}.
Once again, we simply need to show that for each interval $I$ it holds $|\widetilde{\p}(z) - \widetilde{\q(}z)|\leq (\epsilon/k)^c$ for all $z\in I$ and $c$ a sufficiently large constant. Note that the listed roots are simply $\delta$-approximations of the (first $m$) roots of $\widetilde{\p}$ and $\widetilde{q}$ within distance $1/(3k)$ of $w_I$, and the $\sum_{i=1}^{n} c'_{j,I}(\p_i)$ are within distance $n\delta$ of the coefficients of the Taylor expansion of the logarithm of $q(z)$ about $w_I$. If we have $m$ nearby roots, both $\widetilde{\p}$ and $\widetilde{\q}$ are small for all $z$ in this range. Otherwise, unless there is a $-\infty$ in $D(\p)=D(\q)$, they are close by Lemma \ref{lem:approx}. If we do have a $-\infty$ then
$$\Re\left( \sum_{i=1}^{{h'}} c'_{0,I}(\p_i)\right)<-nk-m {-m \ln k}$$
for {some $h' \leq h$}. Since  the later $c_{0,I}(\p_i)$ and $c_{0,I}(\q_i)$ {have $\Re c_{0,I}(\p_i) \leq m_i \ln k$ and $\Re c_{0,I}(\p_i) \leq m_i \ln k$ by Lemma \ref{lem:approx}}, this means that $|q(w_I)|<e^{-m}e^{-nk}$, and as in Proposition \ref{prop:sparse-upper}, this implies that both $\widetilde{\p}$ and $\widetilde{\q}$ are sufficiently small.
\end{proof}

We can now present the algorithm for producing our cover. The basic idea is to use a dynamic program to come up with one representative collection of $\p_1,\ldots,\p_h$ to obtain each achievable value of $D$. The algorithm is as follows:

\medskip

\fbox{\parbox{6in}{
{\bf Algorithm} {\tt Cover-SIIRV}\\
Input: $k,\epsilon>0$ and $n,N=\poly(k/\epsilon)$.\\

\vspace{-0.5cm}

\begin{enumerate}

\item Define $\delta$ and $m$ as above.

\item Let $L_0 = \{(D(\emptyset),\emptyset)\}$.

\item For $h=1$ to $n$

\item Let $L_h$ be the set of terms of the form $(D(\p_1,\ldots,\p_h),(\p_1,\ldots,\p_h))$ where $(D(\p_1,\ldots,\p_{h-1}),(\p_1,\ldots,\p_{h-1}))\in L_{h-1}$ and $\p_h$ is an $N$-discrete $k$-IRV.

\item Use a hash table to remove from $L_h$ all but one term with each possible value of $D(\p_1,\ldots,\p_h)$

\item End for

\item Return the list of distributions $\sum_{i=1}^n X_i$ with $X_i \sim \p_i$ for each $(D(\p_1,\ldots,\p_{n}),(\p_1,\ldots,\p_{n}))\in L_{n}$.

\end{enumerate}
}}

\medskip

To prove that this produces a cover, we claim by induction on $h$ that $L_h$ contains an element that achieves each possible value of $D(\p_1,\ldots,\p_h)$. This is clearly true for $h=0$. Given that it holds for $h-1$, Claim~\ref{claim:algo}(i) implies that the non-deduped version of $L_h$ also satisfies this property, and deduping clearly does not destroy it. Therefore $L_n$ contains (exactly one) element for each possible value of $D(\p_1,\ldots,\p_n)$. Therefore, by Claims~~\ref{claim:algo}(ii) and (iii),
the algorithm will return a cover of the appropriate size. For the runtime, we note that the initial size of $L_h$ before deduping is the product of the size of $L_{h-1}$ and the number of $N$-discrete $k$-IRVs, which by Claim~\ref{claim:algo}(ii) is $\poly(k/\epsilon)^{k\log(1/\epsilon)}$. Each of these elements are generated in $\poly(k/\epsilon)$ time, and the deduping process takes only polynomial time per element. Therefore, the final runtime is $\poly(k/\epsilon)^{k\log(1/\epsilon)}$. This completes the proof of Theorem~\ref{coverConstructionThm}.

\section{Cover Size Lower Bound} \label{sec:lb}
In this section we prove our lower bound on the cover size of $k$-SIIRVs.
In Section~\ref{ssec:explicit-pbds} we show the desired lower bound for the case of $2$-SIIRVs.
In Section~\ref{ssec:ksiirv} we generalize this construction for general $k$-SIIRVs.

\subsection{Cover Size Lower Bound for $2$-SIIRVs} \label{ssec:explicit-pbds}

We start by providing an explicit lower bound on the cover size of $2$-SIIRVs.
In particular, we show the following:

\begin{theorem} \label{thm:explicit-cover} For all $0< \eps \leq e^{-42}$ and $n \in \Z$ such that $7 \leq n \leq \frac{1}{6} \ln(1/\eps)$,
there is an $\eps$-packing of $\mathcal{S}_{n , 2}$ under $\dtv$ with cardinality $(1/\eps)^{\Omega(n)}$.
\end{theorem}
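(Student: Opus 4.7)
\emph{Plan.} The plan is to produce an explicit family of $(1/\eps)^{\Omega(n)}$ pairwise $\eps$-separated $2$-SIIRVs in $\mathcal{S}_{n,2}$ by discretizing a small region of parameter space around a carefully chosen reference point, then arguing quantitatively that the parameter-to-distribution map is well-conditioned on this region, so that a parameter grid yields a TV-distance packing. Concretely, I will fix a reference parameter vector $\mathbf{p}^* = (p_1^*, \ldots, p_n^*)$ with distinct coordinates bounded away from $0$ and $1$ (for instance $p_i^* = i/(2n+1)$), and work with small perturbations of $\mathbf{p}^*$.

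The key object is the map $\Phi: (p_1, \ldots, p_n) \mapsto (\p(0), \p(1), \ldots, \p(n-1))$ sending sorted parameters to the first $n$ PMF entries of the corresponding PBD. A direct computation shows $\partial \p(j)/\partial p_i = \p^{(i)}(j-1) - \p^{(i)}(j)$, where $\p^{(i)}$ is the PBD with the $i$-th Bernoulli removed; this Jacobian has determinant that factors through the Vandermonde $\prod_{i<j}(p_i - p_j)$ and is therefore nonzero whenever parameters are distinct, matching the paper's claim that every $2$-SIIRV with distinct parameters lies in the interior of the local-invertibility region. Equivalently, using the factorization $P_{\mathbf{p}}(z) = \prod_i(1+p_i(z-1)) = \sum_{k=0}^n e_k(\mathbf{p})(z-1)^k$, one may work with the elementary-symmetric map $\mathbf{p}\mapsto(e_1,\ldots,e_n)$, whose Jacobian determinant is also a (signed) Vandermonde in the $p_i$. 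I will use this cleaner formulation to bound the smallest singular value $\sigma_{\min}(J_\Phi)$ from below.

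Using the bound on $\sigma_{\min}$ together with a standard quantitative inverse-function-theorem argument (controlling the Hessian of $\Phi$ uniformly in the relevant region), I will show that $\Phi$ is a diffeomorphism on a ball of radius $r\geq 1/\mathrm{poly}(n)$ around $\mathbf{p}^*$, whose image contains a Euclidean ball of radius $\Omega(r\sigma_{\min})$ in PMF space. Placing a grid of spacing $\delta$ in the parameter ball yields distinct sorted tuples whose images obey $\|\p-\q\|_2\gtrsim \sigma_{\min}\cdot\delta$. Since PMFs have support in $\{0,\ldots,n\}$, this gives $\|\p-\q\|_1\gtrsim \sigma_{\min}\cdot\delta/\sqrt{n+1}$, hence a TV-distance lower bound. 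Choosing $\delta$ so this lower bound equals $2\eps$, the number of distinct sorted grid points is at least $(r/\delta)^n/n!=\bigl(r\sigma_{\min}/(\eps\cdot\mathrm{poly}(n))\bigr)^n/n!$. The hypotheses $n\le \tfrac16\ln(1/\eps)$ and $\eps\le e^{-42}$ ensure that the polynomial factors in $n$ and the $n!$ are dominated by $(1/\eps)^{cn}$ for some $c>0$, yielding the claimed $(1/\eps)^{\Omega(n)}$ packing.

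The main obstacle is the quantitative step: the naive Vandermonde bound $\sigma_{\min}\gtrsim(1/n)^{n(n-1)/2}$ is much too weak, since it would force $\delta$ to be super-polynomially smaller than $\eps$. To get the target packing size one must exploit more carefully the structure of $J_\Phi$ (or equivalently of the elementary-symmetric map), showing that while $\sigma_{\min}$ may depend polynomially on $1/n$, the relevant ``effective'' singular values responsible for distinguishing different perturbations are only polynomially small and, crucially, that the size $r$ of the invertibility neighborhood is at least $1/\mathrm{poly}(n)$ so one actually fits $\sim(1/\eps)^n$ grid points inside. This careful accounting, together with the standard lower bound $\dtv(\p,\q)\ge \tfrac12|\widehat{\p}(\xi)-\widehat{\q}(\xi)|$ to convert PMF distance to TV distance via some convenient test point $\xi$, is what makes the condition $n\le \tfrac16\ln(1/\eps)$ (and the mild $n\geq 7$, $\eps\le e^{-42}$ hypotheses) sharp for the method.
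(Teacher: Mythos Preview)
Your high-level plan --- lay down a grid of parameter vectors near a reference point $\mathbf{p}^*$ with well-separated coordinates, and argue that distinct grid points give $\eps$-separated distributions --- is exactly what the paper does. The gap is that you never actually establish the quantitative sensitivity bound. You acknowledge this yourself: the naive Vandermonde estimate gives $\sigma_{\min}(J_\Phi)$ on the order of $n^{-\Theta(n)}$ at best, which is far too weak, and you do not supply any argument that the ``effective'' singular values are only polynomially small. Your closing remark about testing with a Fourier coefficient $\widehat{\p}(\xi)$ is in the right spirit (a single well-chosen linear functional rather than a full Jacobian bound) but is left completely unspecified.

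The paper resolves the obstacle by bypassing singular values entirely. It uses the linear functional
\[
r_{\p}(p)\;=\;\E_{X\sim\p}\bigl[(p-1)^X p^{\,n-X}\bigr]\;=\;\prod_{i=1}^n (p-p_i),
\]
whose roots are precisely the Bernoulli parameters. Since $|(p-1)^j p^{\,n-j}|\le 1$ on $[0,1]$, one has $|r_{\p}(p)-r_{\q}(p)|\le 2\,\dtv(\p,\q)$ for every $p\in[0,1]$. Evaluating at $p=p_i$ kills $r_{\p}$ and leaves $|r_{\q}(p_i)|=\prod_j|p_i-q_j|$, which is bounded below by $|p_i-q_i|\cdot\prod_{j\ne i}\frac{|i-j|}{2(n+1)}\ge |p_i-q_i|\,e^{-3n}$ once all parameters lie within $1/(4(n+1))$ of the reference $i/(n+1)$. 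This yields the coordinatewise bound $\dtv(\p,\q)\ge \max_i|p_i-q_i|\,e^{-3n}$ (Lemma~\ref{distSepLem}) with no Jacobian or inverse-function-theorem machinery. Taking a grid of spacing $\sim\sqrt{\eps}/n$ in each coordinate and using $e^{3n}\le\eps^{-1/2}$ (from $n\le\tfrac16\ln(1/\eps)$) then gives $\sim(\eps^{-1/2})^n$ grid points pairwise $\eps$-separated in TV. The moral: the right move is not to control the whole Jacobian but to pick one linear functional per coordinate that isolates that coordinate's effect, and the generating-polynomial trick does this for free.
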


We begin with the following useful lemma:
\begin{lemma}\label{distSepLem}
Let $\p$ and $\q$ be $2$-SIIRVs given by parameters $p_i$ and $q_i$ for $1\leq i \leq n$, for some $n\geq 7$.
Suppose that for all $i$, $1 \leq i \leq n$, it holds
$\left|p_i-i/(n+1) \right| \leq 1/4(n+1)$ and $\left|q_i- i/(n+1) \right| \leq 1/4(n+1).$
Then,
$$
\dtv(\p,\q) \geq \max_i |p_i-q_i| \cdot e^{-3n}.
$$
\end{lemma}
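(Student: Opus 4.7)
The plan is to use the probability generating function (PGF) $\widetilde{\p}(z) = \E_{X\sim \p}[z^X] = \prod_{i=1}^n (1-p_i+p_iz)$ (and similarly for $\widetilde{\q}$). The key tool is the trivial bound
$$|\widetilde{\p}(z)-\widetilde{\q}(z)| = \Big|\sum_{k=0}^n (\p(k)-\q(k))z^k\Big| \leq 2\max(1,|z|^n)\,\dtv(\p,\q),$$
valid for every $z\in\C$, so lower-bounding $\dtv(\p,\q)$ reduces to finding a point $z_j$ at which $\widetilde{\p}(z_j)$ is large relative to $\max(1,|z_j|^n)$. The natural choice is $z_j = -(1-q_j)/q_j$, which is a root of $\widetilde{\q}$ (so $\widetilde{\q}(z_j)=0$), where $j\in\{1,\ldots,n\}$ is chosen to achieve $\max_i|p_i-q_i|$.

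Next, a direct substitution gives $1-p_i+p_iz_j = (q_j-p_i)/q_j$, hence $\widetilde{\p}(z_j) = q_j^{-n}(q_j-p_j)\prod_{i\neq j}(q_j-p_i)$. Combined with the identity $q_j^n\cdot\max(1,|z_j|^n) = \max(q_j,1-q_j)^n \leq 1$, this yields
$$\dtv(\p,\q) \geq \frac{|\widetilde{\p}(z_j)|}{2\max(1,|z_j|^n)} = \frac{|q_j-p_j|\prod_{i\neq j}|q_j-p_i|}{2\max(q_j,1-q_j)^n} \geq \frac{|q_j-p_j|\prod_{i\neq j}|q_j-p_i|}{2},$$
reducing the problem to showing $\prod_{i\neq j}|q_j-p_i|\geq 2e^{-3n}$.

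Finally, the hypothesis on the $p_i$'s and $q_j$'s gives $|q_j-p_i|\geq (2|j-i|-1)/(2(n+1))\geq |j-i|/(2(n+1))$ for every $i\neq j$, so
$$\prod_{i\neq j}|q_j-p_i| \geq \frac{(j-1)!(n-j)!}{(2(n+1))^{n-1}} \geq \Big(\frac{n-1}{4e(n+1)}\Big)^{n-1},$$
where the last step invokes the Stirling-type bound $(j-1)!(n-j)!\geq ((n-1)/(2e))^{n-1}$ (handled by splitting on $j\in\{1,n\}$ via $(n-1)!\geq ((n-1)/e)^{n-1}$ and on $2\leq j\leq n-1$ via $c^c d^d\geq ((n-1)/2)^{n-1}$ for $c+d=n-1$, $c,d\geq 1$). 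For $n\geq 7$ we have $(n-1)/(n+1)\geq 3/4$, so this is at least $(3/(16e))^{n-1}$; since $\ln(16e/3)\approx 2.674<3$, this exceeds $2e^{-3n}$, and this last comparison is precisely where the hypothesis $n\geq 7$ enters. The main obstacle is this constant tracking: obtaining some bound $\dtv\geq e^{-O(n)}\max_i|p_i-q_i|$ from the PGF approach is straightforward, but matching the specific $e^{-3n}$ rate requires the careful Stirling bookkeeping just described.
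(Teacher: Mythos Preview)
Your proof is correct and takes essentially the same route as the paper. The paper works with the polynomial $r_{\p}(p)=\E[(p-1)^X p^{n-X}]=\prod_i(p-p_i)$ evaluated at a root $p=p_i\in[0,1]$ (so that $|(p-1)^kp^{n-k}|\le 1$ holds trivially); this is your PGF under the substitution $z=(p-1)/p$, and your identity $q_j^{\,n}\max(1,|z_j|^n)=\max(q_j,1-q_j)^n\le 1$ recovers exactly the same bound, after which the combinatorial estimate on $\prod_{i\ne j}|q_j-p_i|$ via $(j-1)!(n-j)!$ and Stirling is identical in both arguments.
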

\begin{proof}
Let $\eps = |p_i-q_i|e^{-3n}$.
For a distribution $\p$ supported on $[n]$, define $r_{\p}(p)$ to be the polynomial
$$r_\p(p)=\E_{X \sim \p} \left[ (p-1)^{X} \cdot p^{n-X} \right]= \sum_{i=0}^n \p(i) (p-1)^{i} p^{n-i}.$$
For a PBD $\p \in \mathcal{S}_{n,2}$ and $X \sim \p$ with $X = \sum_{i=1}^n X_i$ for $X_i \sim \mathrm{Ber}(p_i)$, we have that
\begin{eqnarray*} r_\p(p) 	& = & \E \left[ (p-1)^{X} p^{n-X} \right] 
					 =  \E \left[ (p-1)^{\sum_{i=1}^n X_i} \cdot p^{\sum_{i=1}^n (1-X_i)} \right] \\			
					& = & \E \left[ {\prod_{i=1}^n (p-1)^{X_i} p^{1-X_i} } \right] 			
			         =  \prod_{i=1}^n \E \left[ (p-1)^{X_i} p^{1-X_i}  \right] \\			
					& = & \prod_{i=1}^n \left (p_i (p-1) + (1-p_i) p \right) 
					 =  \prod_{i=1}^n (p- p_i ) \; . 
\end{eqnarray*}
Hence, the roots of the polynomial $r_{\p}$ are exactly the parameters $p_i$ of the $2$-SIIRV $\p \in \mathcal{S}_{n,2}$.
We have the following simple claim:

\begin{claim} \label{claim:dist-vs-poly}
Let $\p, \q \in {\cal S}_{n, 2}$ such that $\dtv\left(\p, \q \right) < \eps$.
Then for any $p \in [0, 1]$, we have that $$| r_\p(p) - r_\q(p) | < 2\eps.$$
\end{claim}
\begin{proof}
We have the following sequence of (in)equalities:
\begin{eqnarray*}
| r_\p(p) - r_\q(p) |  &=& \left| \sum_{i=0}^n (\p(i) - \q(i)) (p-1)^{i} p^{n-i} \right|  
                                 \le  \sum_{i=0}^n \left| (\p(i) - \q(i)) \right| \cdot \left| (p-1)^{i} p^{n-i} \right| \\
                                 &\le & \sum_{i=0}^n \left| \p(i) - \q(i) \right| =2 \dtv\left(\p, \q \right) 
                                 < 2\eps \;,
\end{eqnarray*}
where the second line is the triangle inequality and the third line uses the fact that
$|(p-1)^{i} p^{n-i}| \le 1$ for all $i \in [n]$ and $p \in [0,1].$
\end{proof}
Hence, to prove the lemma, it suffices to show that for some $p\in[0,1]$ that
$$|r_{\p}(p) - r_{\q}(p) | \geq 2\eps.$$
In particular, we show this for $p=p_i$. Noting that $r_{\p}(p_i)=0$, it suffices
to show that $|r_{\q}(p_i)| \geq 2\eps$. We now proceed to prove this fact.
If $j \neq i$ we have that,
$$
|p_{i}-q_j| \geq \frac{|i-j|}{n+1}-\left|p_i-\frac{i}{n+1} \right|-\left|q_j-\frac{j}{n+1} \right| \geq \frac{1}{2(n+1)}.
$$
Therefore, we have that
\begin{eqnarray*}
\left| r_{\q}(p_i) \right| & =& \prod_{j=1}^n \left| p_{i}-q_j \right| \nonumber
\geq |p_i-q_i| \cdot  \prod_{j \neq i} \frac{|i-j|}{2(n+1)}. \nonumber \\
\end{eqnarray*}
We note that
\begin{eqnarray}
 \prod_{j \neq i} \frac{|i-j|}{(n+1)} & =&  (i-1)!(n-i)!
\geq  \frac{n!}{\binom{n-1}{i-1}}
\geq \frac{(n/e)^{n}}{2^{n-1}} \label{eqn:dyo},
\end{eqnarray}
where we use the elementary inequalities $n! \ge (n/e)^n$ and ${\binom{n-1}{i^{\ast}-1}} \le 2^{n-1}.$ Applying this to the above, we find that
\begin{eqnarray*}
\left| r_{\q}(p_i) \right| & =& \frac{|p_i-q_i|}{e \cdot (n+1)(4e)^n}
\geq \frac{2|p_i-q_i|}{e^{3n}}
\geq 2\eps.
\end{eqnarray*}
\end{proof}

\begin{proof}[Proof of Theorem~\ref{thm:explicit-cover}]

Given $\eps>0$ and $n \in \Z$ satisfying the condition of the theorem, we define an explicit $\eps$-packing for $\mathcal{S}_{n , 2}$
as follows: Let $s = \lfloor \eps^{-1/2} \rfloor$.
For a vector  $\ba = (a_1,\ldots, a_n) \in [s]^n$, let
$$p^{\ba}_i=\frac{i}{n+1}+ \frac {a_i \sqrt{\epsilon}}{4n},  \quad i \in \{1, \ldots, n\} \;,$$
be the parameters of a $2$-SIIRV $\p_{\ba} \in {\cal S}_{n, 2}$.
We claim that the  set of $2$-SIIRVs $\big\{ \p_{\ba} \big\}_{\ba \in [s]^n}$ satisfies the conditions of the theorem, i.e.,
for all $\ba,\bb \in  [s]^n$, $\ba \neq \bb$ implies $\dtv\left(\p_{\ba}, \p_{\bb} \right) \geq \eps$.

In particular, if $\ba\neq \bb$, then there must be some $i$ so that $a_i\neq b_i$. Then, by Lemma \ref{distSepLem}, we have that
$$
\dtv(\p_{\ba},\p_{\bb}) \geq |p^{\ba}_i - p^{\bb}_i|e^{-3n} \geq \frac{\sqrt{\eps}}{4n}e^{-3n} \geq \frac{\eps^{3/4}}{4n} \geq \eps.
$$

\end{proof}

As a simple corollary we obtain the desired lower bound:

\begin{corollary} \label{cor:simple} 
For all $0< \eps < 1$ and $n = \Omega(\log(1/\eps)),$
any $\eps$-cover of $\mathcal{S}_{n,2}$ 
under $\dtv$ must be of size $n \cdot (1/\eps)^{\Omega(\log 1/\eps)}.$
 \end{corollary}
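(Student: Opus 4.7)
The plan is to deduce Corollary~\ref{cor:simple} from the explicit packing of Theorem~\ref{thm:explicit-cover}, supplemented by a translation construction that yields the extra factor of $n$, followed by the standard packing-to-cover duality. We may assume $\eps \leq e^{-42}$, since for larger $\eps$ the claimed bound reduces to $\Omega(n)$, which is trivial as $\mathcal{S}_{n,2}$ contains the $n+1$ point masses $\delta_0,\ldots,\delta_n$ and these are pairwise at $\dtv$-distance $1$.

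We will set $n' = \min\bigl(\lfloor n/2\rfloor,\,\lfloor\tfrac{1}{6}\ln(1/\eps)\rfloor\bigr)$, which by the hypothesis $n = \Omega(\log(1/\eps))$ and small enough $\eps$ satisfies $7 \leq n' \leq \tfrac{1}{6}\ln(1/\eps)$ and $n' = \Theta(\log(1/\eps))$. Then Theorem~\ref{thm:explicit-cover} furnishes an $\eps$-packing $\mathcal{P}_0 \subseteq \mathcal{S}_{n',2}$ of cardinality $(1/\eps)^{\Omega(n')} = (1/\eps)^{\Omega(\log(1/\eps))}$. Fixing a spacing $\Delta = \lceil 2\sqrt{n'}\rceil$ and $T = \{0,\Delta,2\Delta,\ldots\} \cap [0, n-n']$, for each $\p \in \mathcal{P}_0$ and each $s \in T$ we define $\p_s \in \mathcal{S}_{n,2}$ as the distribution of the sum of the $n'$ Bernoullis defining $\p$ together with $s$ deterministic Bernoullis of parameter $1$ and $n - n' - s$ deterministic Bernoullis of parameter $0$, and set $\mathcal{P} = \{\p_s : \p \in \mathcal{P}_0,\, s \in T\}$. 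Then $|\mathcal{P}| = |T|\cdot|\mathcal{P}_0| = \Omega\bigl(n/\sqrt{\log(1/\eps)}\bigr) \cdot (1/\eps)^{\Omega(\log(1/\eps))}$, and since $1/\sqrt{\log(1/\eps)} = (1/\eps)^{-o(1)}$ the square-root factor absorbs into the exponent, giving $|\mathcal{P}| \geq n\cdot(1/\eps)^{\Omega(\log(1/\eps))}$.

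The remaining task will be to verify that $\mathcal{P}$ is an $\eps/3$-packing, from which the corollary follows by the standard fact that any $\eps'$-cover of a $3\eps'$-packing has size at least that of the packing (the triangle inequality forbids two packing points from sharing a single cover ball), together with a trivial constant-factor rescaling of $\eps$. For distinct pairs $(\p,s)\neq(\p',s')$ in $\mathcal{P}_0 \times T$: if $s = s'$, then $\dtv(\p_s,\p'_s) = \dtv(\p,\p') \geq \eps$ directly from Theorem~\ref{thm:explicit-cover}; if $s\neq s'$, the triangle inequality gives $\dtv(\p_s,\p'_{s'}) \geq \dtv(\p_s,\p_{s'}) - \dtv(\p,\p')$. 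A coupling bound will yield $\dtv(\p,\p') \leq \sum_{i=1}^{n'}|p_i - p'_i| \leq 1/4$, since in the construction of $\mathcal{P}_0$ any two elements have corresponding Bernoulli parameters within $1/(4n')$. Meanwhile, because each $\p \in \mathcal{P}_0$ is a PBD of variance $\Theta(n')$ and is $O(1/\sqrt{n'})$-close in $\dtv$ to a discretized Gaussian of the same mean and variance (e.g.\ via the local CLT for PBDs), a direct calculation will give $\dtv(\p_s,\p_{s'}) \geq 1/2$ whenever $|s-s'| \geq \Delta = \Theta(\sqrt{n'})$. Combining these bounds yields $\dtv(\p_s,\p'_{s'}) \geq 1/4 \geq \eps/3$. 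The hardest step will be this cross-case separation, which requires simultaneously (i) an upper bound on the TV distance among packing elements (via coupling, exploiting the tight parameter clustering in $\mathcal{P}_0$) and (ii) a lower bound on the TV distance between a PBD and its integer translate by $\Theta(\sqrt{n'})$ (via the spread of the PBD over $\Theta(\sqrt{n'})$ integers near its mean).
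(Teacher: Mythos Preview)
Your approach is correct and follows the same high-level strategy as the paper: start from the explicit packing of Theorem~\ref{thm:explicit-cover} for $n' = \Theta(\log(1/\eps))$ variables, translate by deterministic Bernoullis to pick up the factor of $n$, and invoke packing--cover duality. The difference is in how the translation step is executed.

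The paper shifts by $n_0+1$ (the full support size of the base PBDs), so distinct shifts have \emph{disjoint supports} and hence are automatically at total variation distance $1$ from one another. This trivializes the cross-shift case and lets the paper argue directly about covers: any $\eps$-cover of $\mathcal{S}_{n,2}$ must restrict to disjoint $\eps$-covers of each shifted copy of $\mathcal{S}_{n_0,2}$, each of which needs $(1/\eps)^{\Omega(\log(1/\eps))}$ points. You instead shift by only $\Delta = \Theta(\sqrt{n'})$, which yields more translates but forces you to do real work in the cross-shift case: a coupling upper bound $\dtv(\p,\p') \le \sum_i |p_i - p'_i| \le 1/4$ (correct, using the explicit parameters of Theorem~\ref{thm:explicit-cover}) combined with a lower bound $\dtv(\p_s,\p_{s'}) \ge 1/2$ for $|s-s'| \ge \Delta$. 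The latter is easily obtained by Chebyshev alone (no CLT needed): since $\sigma^2 = \sum p_i(1-p_i) \le n'/4$, the shift $\Delta \ge 2\sqrt{n'} \ge 4\sigma$ separates the bulk of the two distributions. In either case the $\log(1/\eps)$ or $\sqrt{\log(1/\eps)}$ denominator in the number of shifts absorbs into the exponent, so the final bound is the same.

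Two minor remarks: your statement ``verify that $\mathcal{P}$ is an $\eps/3$-packing'' is inverted relative to the duality you quote (you actually establish an $\eps$-packing, which bounds $\eps/3$-covers; the rescaling you mention fixes this). And the paper's choice of shift is worth noting as a simplification: taking $\Delta = n'+1$ would let you skip the coupling and concentration arguments entirely.
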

\begin{proof}
We will assume without loss of generality that $\eps$ is smaller than an appropriately small
positive constant.
First note that if there exists a $3\eps$-packing for $\mathcal{S}_{n,2}$ of cardinality $M$, then any $\eps$-cover for $\mathcal{S}_{n,2}$
must be of cardinality at least $M$. Indeed, for every $\q_i$, $i=1, \ldots, M$, in the $3\eps$-packing,
consider the (non-empty) set $N_{\eps} (\q_i)$ of points $\p$ in the $\eps$-cover with $\dtv(\q_i ,\p) \leq \eps$.
If $\p \in N_{\eps} (\q_i)$ and $j \neq i$,  we have $\dtv (\p, \q_j) \geq \dtv (\q_j, \q_i) - \dtv( \q_i, \p) \geq 2\eps$.
That is, the sets $N_{\eps} (\q_i)$ are each non-empty and mutually disjoint, which implies that the size of any $\eps$-cover is at least $M$.

By Theorem  \ref{thm:explicit-cover}, for any $0 < \eps \leq e^{-42}/3$, if we fix $n_0  =  \lfloor \frac{1}{6} \ln(1/3\eps) \rfloor$,
there is  a $3\eps$-packing for $\mathcal{S}_{n_0, 2}$ of size $(1/\eps)^{\Omega \left(\log (1/\eps)\right)}$.
From the argument of the previous paragraph, any $\eps$-cover for $\mathcal{S}_{n_0, 2}$ is of size $(1/\eps)^{\Omega \left(\log (1/\eps)\right)}$.

To prove the desired lower bound of $n \cdot (1/\eps)^{\Omega \left(\log (1/\eps)\right)}$ we construct appropriate ``shifts'' of the set $\mathcal{S}_{n_0, 2}$
as follows: Consider the set $\mathcal{S}_{n, 2}$ where $n \geq r (n_0+1)$ for some $ r \in \Z_+$. For $0\leq i < r$, let ${\cal S}^i_{n,2}$ be the subset of ${\cal S}_{n,2}$
where $i(n_0+1)$ of the parameters $p_j$ are equal to $1$, and at most $n_0$ other $p_j$'s are non-zero. Note that for $i\neq j$ any elements of ${\cal S}^i_{n,2}$ and
${\cal S}^j_{n,2}$ have disjoint supports. Therefore, any $\eps$-cover of ${\cal S}_{n,2}$ must contain disjoint $\eps$-covers for ${\cal S}^i_{n,2}$ for each $i$.
Note also that ${\cal S}^i_{n,2}$ is isomorphic to ${\cal S}_{n_0,2}$ for each $i$, and thus has minimal $\eps$-cover size at least
$(1/\eps)^{\Omega \left(\log (1/\eps)\right)}$.
Therefore, any $\eps$-cover of ${\cal S}_n$ must have size at least
$\lfloor n/ n_0 \rfloor  \cdot (1/\eps)^{\Omega(\log(1/\eps))} = n (1/\eps)^{\Omega(\log(1/\eps))}.$

\end{proof}

\subsection{Cover Size Lower Bound for $k$-SIIRVs} \label{ssec:ksiirv}

In this section, we prove our cover lower bound for $k$-SIIRVs:

\begin{theorem} \label{thm:ks-cover-lower}
For $0< \epsilon \leq e^{-12}(2k)^{-9}$ and $n \le  \lfloor \frac{1}{12} \log(1/\epsilon) \rfloor$,
there is an $\eps$-packing of $\mathcal{S}_{n, k}$ under $\dtv$ with cardinality
$(1/\eps)^{\Omega(nk)}$.
\end{theorem}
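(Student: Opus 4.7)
The plan is to generalize the $2$-SIIRV construction of Theorem~\ref{thm:explicit-cover} and Lemma~\ref{distSepLem} to $k$-SIIRVs by parametrizing each $k$-IRV $\p_j$ via the $k-1$ roots of its polynomial $r_{\p_j}(p) = \sum_{i=0}^{k-1}\p_j(i)(p-1)^i p^{k-1-i}$. Writing $m = n(k-1)$, the full polynomial $r_\p(p) = \prod_j r_{\p_j}(p) = \prod_{j,\ell}(p - \rho_{j,\ell})$ is monic of degree $m$, and the inequality $|r_\p(p) - r_\q(p)| \leq 2\dtv(\p,\q)$ on $p \in [0,1]$ from Claim~\ref{claim:dist-vs-poly} generalizes verbatim. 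The target packing has $n(k-1)$ parameters (one per root), each varying over an $\Omega(1/\sqrt{\eps})$-sized grid, giving cardinality $(1/\eps)^{\Omega(nk)}$.

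Concretely, I would: (i) Select a reference configuration of the $n(k-1)$ roots organized as $n$ clusters, each cluster containing $k-1$ roots arranged as a small regular $(k-1)$-gon around a central point $w_j = j/(n+1) \in (0,1)$; that is, $\rho_{j,\ell} = w_j + r\zeta^\ell$ with $\zeta = e^{2\pi i/(k-1)}$ and radius $r = 1/\poly(n,k)$. (ii) Verify that the corresponding $k$-IRV for each cluster has valid (non-negative, summing to $1$) probability mass function, both for the reference configuration and for small perturbations; this requires $r$ small compared to $\min_j w_j$ and $\min_j (1 - w_j)$. (iii) Perturb each root independently by $a_{j,\ell}\cdot \eta$ with $a_{j,\ell} \in \{0,1,\ldots,\lfloor 1/\sqrt{\eps}\rfloor\}$ and $\eta$ suitably small. (iv) For any distinct parameter vectors $\ba \neq \bb$ with $a_{i,\ell} \neq b_{i,\ell}$, evaluate $r_{\p_\bb}(\rho_{i,\ell}^\ba)$ as $\prod_{(j,m)}|\rho_{i,\ell}^\ba - \rho_{j,m}^\bb|$. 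Using the identity $\prod_{\ell=1}^{k-2}|1-\zeta^\ell| = k-1$ for the in-cluster contribution, and the estimate of~(\ref{eqn:dyo}) raised to the $(k-1)$-th power for across-cluster contributions, deduce a lower bound which, combined with Claim~\ref{claim:dist-vs-poly}, yields $\dtv(\p_\ba, \p_\bb) \geq \eps$.

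The main obstacle is quantitative: the $n$-cluster arrangement makes the product of root separations roughly $\eta \cdot r^{k-2}(k-1) \cdot [(i-1)!(n-i)!/(2(n+1))^{n-1}]^{k-1}$, whose $(k-1)$-th power of the $2$-SIIRV factor contributes $(1/(4e))^{n(k-1)}$, which at face value requires $n(k-1) = O(\log(1/\eps))$ rather than the target $n \leq \log(1/\eps)/12$ allowed in the theorem statement. Overcoming this is the delicate quantitative step; the hypothesis $\eps \leq e^{-12}(2k)^{-9}$ provides slack proportional to $k^{-9}$, while $n \leq \log(1/\eps)/12$ controls the across-cluster decay through $e^{-O(n(k-1))} \geq \eps^{O(k)}$. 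I expect the right balance of $r$, $\eta$, and grid spacing to push all of these factors together to satisfy $\eta \cdot r^{k-2} \cdot e^{-cn(k-1)} \geq 2\eps$ for an explicit constant $c$, establishing $\dtv(\p_\ba,\p_\bb) \geq \eps$ for all distinct parameter vectors. Should the $n$-cluster arrangement prove too rigid to absorb all factors, an alternative---suggested by the geometric characterization discussion in Section~\ref{sec:techniques}---is to instead directly analyze the Jacobian of the parameter-to-distribution map at a specific ``center'' $k$-SIIRV and show its minimum singular value is at least $1/\poly(nk)$, which (since the parameter space has dimension $n(k-1)$) would immediately yield a grid-based packing of the desired size.
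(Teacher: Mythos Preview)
Your approach has a genuine quantitative gap that you yourself flag but do not resolve. The product $\prod_{(j,m)}|\rho_{i,\ell}^\ba - \rho_{j,m}^\bb|$ contains $n(k-1)-1$ factors, and the across-cluster contribution is $\big(\prod_{j\neq i}|w_i-w_j|\big)^{k-1}\approx e^{-3n(k-1)}$, not $e^{-3n}$. Under the hypothesis $n\le\tfrac{1}{12}\log(1/\eps)$ this gives $e^{-3n(k-1)}\ge\eps^{(k-1)/4}$, so you need $\eta\cdot r^{k-2}(k-1)\ge 2\eps^{1-(k-1)/4}$. For $k\ge 6$ the right side blows up as $\eps\to 0$ while the left side is at most $1$, so the inequality is unsatisfiable. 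The slack $\eps\le (2k)^{-9}$ only buys you $\log(1/\eps)\ge 9\log(2k)+12=O(\log k)$, nowhere near the $O(k)$ you would need. There is also a secondary issue: to keep the $k$-IRV probabilities real you must perturb complex-conjugate root pairs together, not each root independently, which complicates the grid but does not change the dimension count.

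The paper sidesteps the $e^{-cn(k-1)}$ obstruction entirely by a different construction. Each $k$-IRV $Y_i$ is taken to be a Bernoulli on $\{0,k-1\}$ with parameter $p_i=i/(n+1)$, perturbed by placing tiny mass $\delta a_{ij}$ on the intermediate values $j\in\{1,\dots,k-2\}$. The separating functional is not the full degree-$n(k-1)$ polynomial but rather
\[
r_{\p,ij}=\sum_{l=0}^n p_i^{\,n-l}(p_i-1)^l\,\p\big(l(k-1)+j\big),
\]
which only looks at the residue class $j\bmod(k-1)$. With high probability at most one $Y_h$ lands in $\{1,\dots,k-2\}$; conditioning on $Y_i=j$ and all other $Y_h\in\{0,k-1\}$, the remaining sum is exactly a $2$-SIIRV on $n-1$ Bernoullis, so the analysis reduces to the $2$-SIIRV computation with only $n-1$ factors $\prod_{h\neq i}(p_h-p_i)\approx e^{-3n}$. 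The $k-2$ internal parameters per IRV enter linearly through the coefficient $\delta a_{ij}$, and the error from the event that two or more $Y_h$ are intermediate is $O((nkm\delta)^2)$, controllable by taking $\delta=\Theta(\eps^{3/4})$ and grid size $m=\Theta(\eps^{-1/4}/\poly(n,k))$. This decoupling of the $n$ ``positional'' factors from the $k-2$ ``marker'' coordinates is the idea your root-clustering scheme lacks.
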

\begin{proof}
We consider $k$-SIIRVs close to the $(k-1)$ multiple of the $2$-SIIRV $\p_0$
with parameters $p_i=\frac{i}{n+1} $ we used for the explicit lower bound
in Section~\ref{ssec:explicit-pbds}.
Let $m \in \Z_+$ and $0< \delta < 1$ be parameters that will be fixed later.
Given an $\ba \in [m]^{n(k-2)}$, which
will index by $a_{ij}$, for $i \in \{1,\ldots, n\}$ and $j \in \{1,\ldots,k-2\}$,
we define a $k$-SIIRV $\p_\ba$ as follows. For each $i$, we take a $k$-IRV $Y_i$ with pdf defined as follows:
\begin{eqnarray*}
\Pr[Y_i = 0] &=& (1-p_i) \left(1 - \delta \cdot \sum_j a_{ij}\right), \\
\Pr[Y_i = j] &=&  \delta \cdot a_{ij}, \quad 1\le j \le k-2, \\
\Pr[Y_i = k-1] &=&  p_i \left(1 - \delta \sum_j a_{ij}\right) \;.
\end{eqnarray*}
For convenience, we will denote $\gamma_{\ba,i}= \left(1 - \delta \cdot \sum_j a_{ij}\right).$
We claim that the set of distributions $\p_\ba$,
$\ba \in [m]^{n(k-2)}$, is an $\eps$-packing.
To prove this statement we proceed similarly to  the proof of Theorem~\ref{thm:explicit-cover}.
For a distribution $\p$, we will consider the expectations
$$r_{\p ,ij}=\sum_{l=0}^n p_i^{n-l}(p_i-1)^l \p(l(k-1)+j)$$
for $i \in \{1, \ldots ,n\}$ and $j \in \{1, \ldots ,k-2\}$.
Similarly to Claim \ref{claim:dist-vs-poly}, we have the following:
\begin{claim} \label{claim:dist-vs-poly2}
Let $\p, \q \in {\cal S}_{n, k}$ such that $\dtv\left(\p, \q \right) < \eps$.
Then for any $i \in \{1, \ldots ,n\}$ and $j \in  \{1, \ldots ,k-2\}$, we have that $$| r_{\p,ij} - r_{\q,ij} | < 2\eps.$$
\end{claim}
\begin{proof}
We have the following sequence of (in)equalities:
\begin{eqnarray*}
| r_{\p,ij} - r_{\q,ij} |  &=& \left| \sum_{l=0}^n (\p(l(k-1)+j) - \q(l(k-1)+j))  p_i^{n-l} (p_i-1)^l \right|  \\
                                 &\le & \sum_{i=0}^n \left| (\p(l(k-1)+j) - \q(l(k-1)+j) \right| \cdot \left| p_i^{n-l} (p_i-1)^l \right| \\
                                 &\le & \sum_{i=0}^n \left| \p(l(k-1)+j) - \q(l(k-1)+j) \right|  \leq 2 \dtv\left(\p, \q \right) \\
                                 &<& 2\eps \;,
\end{eqnarray*}
where the second line is the triangle inequality and the third line uses the fact that
$|p_i^{n-l} (p_i-1)^l| \le 1$ for all $l \in [n]$ and $i \in \{1, \ldots ,n\}$.
\end{proof}
By the above claim, to complete the proof,
it suffices to show that $|r_{\p_\ba,ij}-r_{\p_\bb,ij}| \geq 2 \eps$ whenever $a_{ij} \neq b_{ij}$.
To prove this statement, we exploit the fact that these $k$-SIIRVs are close to a multiple of $\p_0$,
by ignoring terms in the expectations that are $O(\delta^2)$.

Let $Y=\sum_{i=1}^n Y_i$ with $ Y \sim \p_\ba$ for a given $\ba \in [m]^{n(k-2)}$.
We define several events depending on which coordinates $Y_i$ are equal to $0$ or $k-1$, and
consider their contribution to the expectation $r_{\p_\ba,ij}$ separately.

Firstly, let $A_{\geq 2}$ be the event that more than one $Y_i$ is not $0$ or $k-1$ .
The probability that any fixed $Y_i$ is not $0$ or $k-1$ is small, namely
$$\sum_{j=1}^{k-2} \pr[Y_i=j] = \sum_{j=1}^{k-2} \delta a_{ij} \leq (k-2)m \delta \;.$$
Hence, $$\pr[A_{\geq 2}]  \leq {n \choose 2} \left((k-2)m \delta \right)^2 \leq \frac{1}{2} \cdot (n(k-2)m \delta)^2 \;.$$
The contribution of  $A_{\geq 2}$ to $r_{\p_\ba,ij}$ is
$ r_{\p_\ba,ij,A_{\geq 2}} :=  \sum_{l=0}^n p_i^{n-l}(p_i-1)^l \pr_{Y \sim \p_\ba}\left[Y=l(k-1)+j \cap A_{\geq 2}\right] \;,$
and therefore
$$|r_{\p_\ba,ij,A_{\geq 2}}| \leq \frac{1}{2}(n(k-2)m \delta)^2\;, $$
since $|p_i^{n-l}(p_i-1)^l| \leq 1$.

Secondly, let $A_0$ be the event that all $Y_i$'s are $0$ or $k-1$.
If $A_0$ occurs then $Y$ is a multiple of $k-1$.
 Thus, for $l \in [n]$ and $j \in \{1, \ldots ,k-2\}$, we have $\pr_{Y \sim \p_\ba}\left[Y=l(k-1)+j \cap A_0 \right]=0$.
The contribution of  $A_{0}$ to $r_{\p_\ba,ij}$ is
$$ r_{\p_\ba,ij,A_{0}} :=  \sum_{l=0}^n p_i^{n-l}(p_i-1)^l \pr_{Y \sim \p_\ba}\left[Y=l(k-1)+j \cap A_{0}\right] = 0 \;.$$
Finally, for $i \in \{1, \ldots ,n\}$, let $B_i$ be the event that $Y_i$ is the only $k$-IRV
that takes a value between $1$ and $k-2$. The probability of all other $Y_h$,
with $h \neq i$, being either $0$ or $k-1$ is $\prod_{h \neq i} \gamma_{\ba,h}$.
We consider the RVs $X_{-i} = \sum_{h\neq i} X_h$, where $X_h \sim \mathrm{Ber}(p_h)$.
That is,  $X_{-i} \sim \p_{-i} \in \mathcal{S}_{n-1,2}$, i.e., it is a $2$-SIIRV with parameters $p_h$ for $h \neq i$.
Then, the conditional probability $\pr\left[\sum_{h \neq i} Y_h = l(k-1) | (B_i \cup A_0) \right]$
is equal to $\pr\left[X_{-i}=l \right]=\p_{-i}(l)$ for all $l \in [n]$.
So, for all $l \in [n]$ and $j \in \{1, \ldots ,k-2\}$ we have
\begin{eqnarray*}
\pr\left[Y=l(k-1)+j \cap B_i \right]  & = & \pr\left[\sum_{h \neq i} Y_h = l(k-1) \cap (B_i \cup A_0)\right] \pr[Y_i=j] \\
							& = & \left(\prod_{h \neq i} \gamma_{\ba,h}\right) \p_{-i}(l) \delta a_{ij}
\end{eqnarray*}
Then, the contribution of $B_i$ to  $r_{\p_\ba,gj}$ is
\begin{eqnarray*}
r_{\p_\ba,gj,B_i}  & := & \sum_{l=0}^n p_g^{n-l}(p_g-1)^l \pr_{Y \sim \p_\ba}\left[Y=l(k-1)+j \cap B_i\right] \\
					& = & \left(\prod_{h \neq i} \gamma_{\ba,h}\right) \cdot \delta a_{ij} \cdot  \sum_{l=0}^n p_g^{n-l}(p_g-1)^l \p_{-i}(l) \\
					& = & \left(\prod_{h \neq i} \gamma_{\ba,h}\right) \cdot \delta a_{ij} \cdot r_{\p_{-i}}(p_g) \\
					& = & \left(\prod_{h \neq i} \gamma_{\ba,h}\right) \cdot \delta a_{ij} \cdot  \prod_{h \neq i} (p_h - p_g) \;,
\end{eqnarray*}
where $r_{\p_{-i}}$ above is as defined in the previous section, and when $g \neq i$,
the second product includes the term $p_g - p_g=0$, so $r_{\p_\ba,gj,B_i} = 0$.
Summing these contributions to the expectation $r_{\p_\ba,ij}$ gives:
\begin{eqnarray*}
r_{\p_\ba ,ij}   & = & r_{\p_\ba,ij,A_{\geq 2}} + r_{\p_\ba,ij,A_0}+ \sum_{g=1}^n r_{\p_\ba,ij,B_g} \\
			& = & r_{\p_\ba,ij,A_{\geq 2}} +  r_{\p_\ba,ij,B_i} \\
			& = & r_{\p_\ba,ij,A_{\geq 2}} +  \prod_{h \neq i} \gamma_{\ba,h} \cdot  \delta a_{ij} \cdot  \prod_{h \neq i} (p_h - p_i)
\end{eqnarray*}
Now consider $\ba$ and $\bb$ which for some $i \in \{1,2,...,n\}$ and $j \in \{1,2,...,k-2\}$ have  $a_{ij} \neq b_{ij}$.
We have that $\prod_{h \neq i} |p_h-p_i| \geq e^{-3n}$ by Equation (\ref{eqn:dyo}), and thus,
\begin{align*}
\prod_{h \neq i} \gamma_{\ba,h} = \prod_{h \neq i} \left(1 - \delta \littlesum_j a_{hj}\right) \geq (1-(k-2)m \delta)^{n-1} \geq (1-(n-1)(k-2)m \delta),
\end{align*}
 $|a_{ij}-b_{ij}| \geq 1$, and
$|r_{\p_\ba,ij,A_{\geq 2}}| \leq \frac{1}{2}(n(k-2)m \delta)^2$.

We obtain the following sequence of inequalities:
\begin{eqnarray*}
&& | r_{\p_\ba,ij} - r_{\p_\bb,ij} | = |r_{\p_\ba,ij,B_i} - r_{\p_\bb,ij,B_i}  + r_{\p_\ba,ij,A_{\geq 2}} - r_{\p_\bb,ij,A_{\geq 2}}  | \\
&\geq& \Big| \littleprod_{h \neq i} (p_h - p_g)\big(\littleprod_{h \neq i} \gamma_{\ba,h} \delta a_{ij} - \littleprod_{h \neq i} \gamma_{\bb,h} \delta b_{ij}\big) \Big| - (n(k-2)m \delta)^2 \\
&\geq&  e^{-3n}\big|\littleprod_{h \neq i} \gamma_{\ba,h} \delta \big| \cdot |a_{ij}-b_{ij}|
- e^{-3n} \delta b_{ij}\big|\littleprod_{h \neq i} \gamma_{\ba,h}-\littleprod_{h \neq i} \gamma_{\bb,h} \big|- (n(k-2)m \delta)^2\\
&\geq& e^{-3n} (1-(n-1)(k-2)m \delta) \delta - \delta m\Big(\big|1- \littleprod_{h \neq i} \gamma_{\ba,h}\big|+\big|1- \littleprod_{h \neq i} \gamma_{\bb,h}\big| \Big) - (n(k-2)m \delta)^2 \\
&\geq& e^{-3n} \delta - 2\delta m n(k-2)m \delta - 2(n(k-2)m \delta)^2 \\
&\geq& e^{-3n} \delta - 3(n(k-2)m \delta)^2 \;.
\end{eqnarray*}
Recall that by assumption $\epsilon \leq e^{-12}(2k)^{-9}$.
We set $n=\lfloor \frac{1}{12} \log (1/ \eps) \rfloor$,
$\delta=3\eps^{3/4}$, and
$m= \lfloor \frac{\eps^{-1/4}} {2n^2(k-2)^2} \rfloor$.
Then,  $e^{-3n} \delta \geq 3 \eps$ and
$3 (n(k-2)m\delta)^2 \leq \eps$. So,  we have that
$| r_{\p_\ba,ij} - r_{\p_\bb,ij} | \geq 2 \eps$ as required.
Also, $\gamma_\ba \geq 1 - \sqrt{\eps} \geq 0$, so the $k$-IRVs are indeed well-defined.

Therefore, we have exhibited a set of $m^{n(k-2)}$ $k$-SIIRVs that have pairwise total variation distance
at least $\eps$. The proof follows by observing that
$m^{n(k-2)} = (1/\eps)^{\Omega(k \log 1/\eps)}$.
\end{proof}

\section{Sample Complexity Lower Bound} \label{sec:sample-lb}

In this section, we prove our sample complexity lower bounds.
We start with the case $k=2,$ and then generalize our construction for an arbitrary value of $k.$
As mentioned in the introduction, our sample lower bounds make crucial use of a geometric
characterization of the space of $k$-SIIRVs. In Section~\ref{ssec:geom-pbd}, we describe our geometric characterization
for $2$-SIIRVs, and in Section~\ref{ssec:sample-lb-pbd} we use it to prove our $2$-SIIRV sample lower bound.
Similarly, in Section~\ref{ssec:geom-siirv}, we describe our geometric characterization
for $k$-SIIRVs, and in Section~\ref{ssec:sample-lb-siirv} we use it to prove our $k$-SIIRV sample lower bound.

\subsection{A Useful Structural Result for $2$-SIIRVs} \label{ssec:geom-pbd}

In this subsection, we prove a novel structural result for the space of $2$-SIIRVs (Lemma~\ref{lem:explicit-eps-ball}).
This allows us to obtain a simple non-constructive lower bound on the cover size of $2$-SIIRVs under the Kolmogorov distance metric.
More importantly, this lemma is crucial for our tight sample complexity lower bound of the following subsection.

Before we state our lemma, we provide some basic intuition.
The set of all distributions supported on $[n]$ is $n$-dimensional (viewed as a metric space).
Note that each $\p \in {\cal S}_{n,2}$ is defined by $n$ parameters.
It turns out that ${\cal S}_{n,2}$ is also $n$-dimensional in a precise sense.
This intuition is formalized in the following lemma:

\begin{lemma} \label{lem:explicit-eps-ball}
(i) Given any $\p \in {\cal S}_{n,2}$ with distinct parameters in $(0, 1)$, there is a radius  $\delta = \delta(\p)$
such that any distribution $\q$ with support $[n]$ that satisfies $\dk(\p, \q) \leq \delta$ can also be expressed as a $2$-SIIRV,
i.e., $\q \in {\cal S}_{n,2}.$

(ii) Let $\p_0 \in {\cal S}_{n,2}$ be the $2$-SIIRV with parameters $p_i = \frac{i}{n+1}$, $1 \le i \le n$.
Then any distribution $\q$ with support $[n]$ that satisfies $\dk(\p_0,\q) \leq 2^{-9n}$
is itself a $2$-SIIRV with parameters $q_i$ such that $|q_i-p_i| \leq \frac{1}{4(n+1)}$. \end{lemma}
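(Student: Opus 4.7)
My plan is to analyze both parts through the polynomial $r_\p(p) = \sum_{i=0}^n \p(i)(p-1)^i p^{n-i}$ already introduced in Section 5.1. For any distribution $\p$ on $[n]$ this is a monic polynomial of degree $n$ in $p$, and when $\p\in\mathcal{S}_{n,2}$ has parameters $p_1,\dots,p_n\in[0,1]$ it factors as $r_\p(p)=\prod_{i=1}^n(p-p_i)$. Conversely, if $r_\q$ happens to have $n$ real roots in $[0,1]$, then $\q$ is itself a $2$-SIIRV with those roots as Bernoulli parameters. So the entire lemma reduces to localizing the roots of $r_\q$ when $\q$ is close to $\p$ in Kolmogorov distance, which I will do by a Rouch\'e-style perturbation argument.

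For part (i), I would fix $\p$ with distinct parameters $p_1,\dots,p_n\in(0,1)$ and choose $\eta>0$ smaller than half the minimum gap between the $p_i$'s and smaller than $\min_i\min(p_i,1-p_i)$. The closed disks $D_i=\{z\in\mathbb{C}:|z-p_i|\le\eta\}$ are then pairwise disjoint and lie strictly inside the vertical strip $0<\Re(z)<1$, and $|r_\p|$ attains a strictly positive minimum $m>0$ on the compact set $\bigcup_i\partial D_i$. Any $\q$ on $[n]$ with $\dk(\p,\q)\le\delta$ satisfies $|\p(j)-\q(j)|\le 2\delta$ for every $j$ (by telescoping two CDFs), so $|r_\q(z)-r_\p(z)|\le 2(n+1)(1+\eta)^n\delta$ on $\bigcup_i\overline{D_i}$. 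Choosing $\delta=\delta(\p)>0$ small enough that this bound is strictly less than $m$, Rouch\'e's theorem forces $r_\q$ and $r_\p$ to have the same number of roots inside each $D_i$, namely one. Since $r_\q$ has real coefficients and each $D_i$ is symmetric about the real axis, each such root coincides with its complex conjugate and is therefore real; the $n$ real roots lie in $(0,1)$ and serve as Bernoulli parameters for $\q$, so $\q\in\mathcal{S}_{n,2}$.

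For part (ii), I would rerun the same Rouch\'e argument with the explicit choice $p_i=i/(n+1)$ and disks of radius $\eta=1/(4(n+1))$. A key simplification here is that for every $z$ in the union of these disks one has $|z|\le p_i+\eta<1$ and $|z-1|\le 1-p_i+\eta<1$, so $|r_\q(z)-r_{\p_0}(z)|\le\|\p_0-\q\|_1\le 2(n+1)\dk(\p_0,\q)\le 2(n+1)\cdot 2^{-9n}$. For the matching lower bound on $|r_{\p_0}(z)|$ when $|z-p_i|=1/(4(n+1))$, the estimate $|z-p_j|\ge|p_i-p_j|-1/(4(n+1))\ge 3|i-j|/(4(n+1))$ for $j\ne i$, together with $(i-1)!(n-i)!=(n-1)!/\binom{n-1}{i-1}\ge(n-1)!/2^{n-1}$ and Stirling, will yield
\[
|r_{\p_0}(z)|\;\ge\;\frac{1}{4(n+1)}\prod_{j\ne i}\frac{3|i-j|}{4(n+1)}\;\ge\;\frac{C}{n+1}\Bigl(\frac{3}{8e}\Bigr)^{\!n}
\]
with an absolute constant $C>0$. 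Since $3/(8e)>2^{-3}$ while the perturbation bound is of order $n\cdot 2^{-9n}$, the ratio of the two bounds is at most $(n+1)^2\cdot\bigl(8e/(3\cdot 2^9)\bigr)^n\ll 1$ for all $n\ge 1$ in the stated range. Rouch\'e then gives exactly one root $q_i$ of $r_\q$ inside each disk, and the real-coefficient symmetry argument again forces each $q_i$ to be real, producing the desired $2$-SIIRV representation with $|q_i-p_i|\le 1/(4(n+1))$.

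The main technical obstacle is the bookkeeping in part (ii): verifying, uniformly in $i$, that the cleaner geometric base $3/(8e)$ from the lower bound really does swallow the base $2^{-9}$ from the perturbation bound after accounting for the polynomial-in-$n$ prefactors coming from $1/\binom{n-1}{i-1}$ and from $\|\p_0-\q\|_1\le 2(n+1)\dk$. Once those elementary inequalities are arranged, the remaining ingredients -- Rouch\'e's theorem and the conjugate-symmetry argument that keeps the roots real -- are entirely standard.
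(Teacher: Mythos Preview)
Your proof is correct and takes a genuinely different route from the paper's. The paper works with the map $\mathcal{P}_n$ from the parameter vector $(p_1,\dots,p_n)$ to the CDF vector $(\p(<i))_{i=1}^n$, computes its Jacobian via an explicit matrix identity $M(\mathbf{p})\cdot\mathrm{Jac}(\mathcal{P}_n)(\mathbf{p})=-\mathrm{diag}\bigl(\prod_{j\ne i}(p_i-p_j)\bigr)$, and then invokes the inverse function theorem for part~(i). For part~(ii) it uses a topological argument: Lemma~\ref{distSepLem} shows that the image of the \emph{boundary} of the parameter cube $S=\{\mathbf{p}:|p_i-i/(n+1)|\le 1/(4(n+1))\}$ stays at distance $\ge 2^{-9n}$ from $\p_0$, and then argues that $\mathcal{P}_n(S)\cap B$ is simultaneously open (open mapping theorem) and closed (compactness of $S$), hence all of the ball $B$.

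Your approach sidesteps both the inverse function theorem and the connectedness argument by working directly with the polynomial $r_\q(p)=\sum_i \q(i)(p-1)^ip^{n-i}$ and applying Rouch\'e's theorem once, uniformly for both parts. This is more elementary and yields the quantitative radius in~(ii) by the same computation that handles~(i), rather than by a separate topological step. What the paper's framework buys is that the Jacobian/open-mapping viewpoint generalizes cleanly to $k$-SIIRVs (Section~\ref{ssec:geom-siirv} reuses the same open-and-closed argument), whereas your reduction relies on the single-variable factorization $r_\p(p)=\prod_i(p-p_i)$, which is special to $k=2$.

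One point you assert without proof is the converse direction: that if $r_\q$ has $n$ real roots $q_1,\dots,q_n\in[0,1]$ then $\q$ is the $2$-SIIRV with those parameters. This needs the (easy) observation that the map $\q\mapsto r_\q$ is injective on distributions supported on $[n]$, equivalently that the polynomials $\{(p-1)^ip^{n-i}\}_{i=0}^n$ are linearly independent; this follows, for instance, by evaluating successively at $p=1$ after dividing out $(p-1)$. Also, your phrase ``for all $n\ge 1$ in the stated range'' is a slip, since the lemma states no range on $n$; but your exponential comparison $8e/(3\cdot 2^9)<1$ does give the Rouch\'e inequality for all $n$, after absorbing the harmless factor $((n-1)/(n+1))^{n-1}\ge e^{-2+o(1)}$ and checking $n\le 2$ by hand.
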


\begin{proof}
We consider the space of cumulative distribution functions (CDF's) of all distributions of support $[n]$.
Let $T_n$ be the set of sequences $0\leq x_1\leq x_2\leq \ldots \leq x_n \leq 1$. Consider the map $\mathcal{P}_n : T_n \to T_n$ defined as follows:
For  $\mathbf{p}=(p_1,\dots,p_n) \in T_n$ (i.e., with ordered parameters $0\leq p_1\leq \ldots \leq p_n \leq 1$), let $\p$ be the corresponding $2$-SIIRV in $\mathcal{S}_{n, 2}$.
For $i \in \{1, \ldots ,n \}$, let $(\mathcal{P}_n(\mathbf{p}))_i=\p(< i)$. Namely, $\mathcal{P}_n$ maps a sequence of probabilities
to the sequence of probabilities defining the CDF of the corresponding $2$-SIIRV.

The basic idea of the proof is that the mapping $\mathcal{P}_n$ is invertible in a neighborhood of a point $\mathbf{p}$ with distinct coordinates.
This allows us to uniquely obtain the distinct parameters of a $2$-SIIRV $\p \in {\cal S}_{n,2}$ from its CDF.
We will make essential use of the inverse function theorem for $\mathcal{P}_n$, which we now recall:

\begin{theorem}[Inverse function theorem~\cite{rudin-principles}] \label{thm:inverse-function}
Let $F:S \rightarrow \mathbb{R}^n$, $S \subseteq \mathbb{R}^n$, be a continuously differentiable function
and $\mathbf{x}$ be a point in the interior of $S$ such that the Jacobian matrix of $F$, $\mathrm{Jac}(F)(\mathbf{x})$,  is non-singular.
Then there exists an inverse function, $F^{-1}$, of $F$ in a neighborhood of $F(\mathbf{x})$.
Furthermore the inverse function $F^{-1}$ is continuously differentiable and
its Jacobian matrix satisfies $\mathrm{Jac}(F^{-1})(F(\mathbf{x}))=\left(\mathrm{Jac}(F)(\mathbf{x})\right)^{-1}$. \end{theorem}

We will apply the inverse function theorem for $\mathcal{P}_n$ at the point $\mathbf{p}$ defining the distinct parameters of the $2$-SIIRV $\p$
in the statement of the theorem. It is easy to see that  $\mathcal{P}_n$ is continuously differentiable.
The main part of the argument involves proving that
the Jacobian matrix of $\mathcal{P}_n$ at $\mathbf{p}$,  $\mathrm{Jac}(\mathcal{P}_n)(\mathbf{p})$,  is non-singular.

Recall that  $\mathrm{Jac}(\mathcal{P}_n)(\mathbf{p})$
is the $n \times n$ matrix whose $(i, j)$ entry is the partial derivatives of $(\mathcal{P}_n)_i$ in direction $j$,  i.e.,
$(\mathrm{Jac}(\mathcal{P}_n)(\mathbf{p}))_{ij}=\frac{\partial (\mathcal{P}_n(\mathbf{p}))_i}{\partial p_j}$.
We start by showing the following lemma:

\begin{lemma} \label{lem:jacobian-equation}
For a $2$-SIIRV $\p \in {\cal S}_{n,2}$ with parameters $\mathbf{p}$, we have
\begin{equation} \label{eq:jacobian-equation}
M(\mathbf{p}) \cdot  \mathrm{Jac}(\mathcal{P}_n)(\mathbf{p}) = - \mathrm{diag}\left(\prod_{j \ne i} (p_i - p_j) \right) \end{equation}
where $M(\mathbf{p})$ is the $n \times n$ matrix with entries $(M (\mathbf{p}))_{ij}=(1-p_i)^{j-1} p_i^{n-j}$, $1 \le i, j \le n$.
Here, for $x \in \R^n$, we denote by $\mathrm{diag}(x)$ the diagonal matrix with entries $(\mathrm{diag}(x))_{ii}=x_i$.
\end{lemma}
\begin{proof}
To calculate the partial derivative $\frac{\partial (\mathcal{P}_n(\mathbf{p}))_i}{\partial p_j}$, we isolate the effect of the parameter $p_j$ from the other variables.
In particular, for $X \sim \p$, i.e., $X= \sum_{i=1}^n X_i$, with $X_i \sim \mathrm{Ber}(p_i)$, we can write
$X = X_{-j} + X_j$, where $X_{-j} = \sum_{i \neq j} X_i$. Note that $X_j \sim \p_{-j} \in {\cal S}_{n-1,2}$, i.e., it is
the $(n-1)$ parameter $2$-SIIRV with parameters $p_i$ for $i\neq j$.
Now, for $1 \le i \le n$, we can write
$$
(\mathcal{P}_n(\mathbf{p}))_i = \p(< i) = \p_{-j}(<(i-1)) + (1-p_j)\p_{-j}(i-1).
$$
The derivative of this quantity with respect to $p_j$ equals
$
\frac{\partial (\mathcal{P}_n(\mathbf{p}))_i}{\partial p_j} = -\p_{-j}(i-1) .
$
Therefore, the $j$-th column of $\mathrm{Jac}(\mathcal{P}_n)(\mathbf{p})$ equals $-1$ times the pdf of the distribution $\p_{-j}$.
This allows us to consider multiplying on the right by $\mathrm{Jac}(\mathcal{P}_n)(\mathbf{p})$ as taking the expectations of certain distributions.
In particular,  for $y \in \R^n$ and any $1 \le j \le n$, we have that
$$(y^T  \mathrm{Jac}(\mathcal{P}_n)(\mathbf{p}))_j = - \sum_{i=1}^n y_i \p_{-j}(i-1) = - \E \left[ y_{X_{-j}+1} \right] \;.$$
Therefore, for $1 \le i, j \le n$, we can write
\begin{eqnarray*}
(M(\mathbf{p}) \cdot \mathrm{Jac}(\mathcal{P}_n)(\mathbf{p}))_{ij}
& = & - \sum_{k=1}^n  (p_i-1)^{k-1} p_i^{n-k} \p_{-j}(k-1)  
 =  - \E \left[ (p_i-1)^{X_{-j}} p_i^{n-X_{-j}-1} \right]\\
& = & - \E \left[ \prod_{k \neq j} (p_i-1)^{X_k} p_i^{1-X_k} \right] 
 =  - \prod_{k \neq j}  \E \left[ (p_i-1)^{X_k} p_i^{1-X_k}\right] \\
 &=&  - \prod_{k \neq j} \left[ (p_i-1)p_k + p_i(1-p_k) \right] 
 =  - \prod_{k \neq j} (p_i - p_k) \;.
\end{eqnarray*}
Note that for $i \neq j$, the above product contains the term $(p_i-p_i)$ and so is equal to $0$.
When $i=j$, we have $(M(\mathbf{p}) \cdot \mathrm{Jac}(\mathcal{P}_n)(\mathbf{p}))_{ii} = - \prod_{k \neq i} (p_i - p_k)$
completing the proof of the lemma.
\end{proof}

We are now ready to prove part (i) of Lemma~\ref{lem:explicit-eps-ball}.
To this end, consider a $2$-SIIRV $\p$ with distinct parameters $\mathbf{p}$, i.e.,  $p_i \neq p_j$ for $i \neq j$,  such that $p_i \in (0,1)$ for all $i$.
Note that $\mathbf{p}$ lies in the interior of $T_n$.
Moreover, for all $i$, we have $\prod_{j\neq i} (p_i - p_j) \neq 0$ and therefore the matrix $\mathrm{diag}(\prod_{j\neq i} (p_i - p_j))$ appearing in (\ref{eq:jacobian-equation}) is non-singular. It follows from Lemma~\ref{lem:jacobian-equation} that both matrices on the LHS of (\ref{eq:jacobian-equation}) are non-singular.
In particular, $\mathrm{Jac}(\mathcal{P}_n)(\mathbf{p})$ is non-singular, hence we can apply the inverse function theorem. As a corollary, there exists an inverse mapping $\mathcal{P}_n^{-1}$ in some neighborhood of $\mathcal{P}_n (\mathbf{p})$. Specifically, there is some $\delta > 0$ such that $\mathcal{P}_n^{-1}$ is defined
at every $\mathbf{x} \in T_n$ with $\|\mathbf{x}- \mathcal{P}_n (\mathbf{p})\|_\infty \leq \delta$.

Let $\q$ be a distribution over $[n]$ satisfying $\dk(\p,\q) \leq \delta$. Equivalently, if $\mathbf{y} = \left( \q(< i) \right)_{i=1}^n \in T_n$
is the CDF of $\q$, then $\|\mathcal{P}_n (\mathbf{p}) - \mathbf{y} \|_\infty \leq \delta$. Thus  $\mathcal{P}_n^{-1}$ is defined at $\mathbf{y}$
and $\mathbf{q} = \mathcal{P}_n^{-1} (\mathbf{y})  \in T_n$ are the parameters of a $2$-SIIRV with distribution $\q.$
Thus, $\q$ is a $2$-SIIRV with parameters  $\mathbf{q}$, which completes the proof of (i). Note that the proof also implies that $\q$ in this neighborhood can
be taken to be $\mathcal{P}_n(\mathbf{q}')$ for $\mathbf{q}'$ in some small neighborhood of $\mathbf{p}$.

To prove part (ii) of Lemma~\ref{lem:explicit-eps-ball},
we use a geometric argument. Recall that the parameters of $\p_0$ are  $\mathbf{p}_0 =\left(\frac{1}{n+1},\ldots,\frac{n}{n+1}\right)$.
Let $S \subseteq T_n$ be the set of vectors
$\mathbf{p}$ with $\|\mathbf{p}-\mathbf{p}_0\|_{\infty} \leq \frac{1}{4(n+1)}$. By Lemma \ref{distSepLem} we have that any $\q$ in $\mathcal{P}_n(\partial S)$ satisfies $\dtv(\p_0,\q) \geq \frac{e^{-3n}}{4(n+1)},$ and therefore $\dk(\p_0,\q) \geq \frac{e^{-3n}}{8(n+1)^2} \geq 2^{-9n}.$


Let $B$ be the set of distributions $\q$ on $[n]$ so that $\dk(\p_0,\q)\leq 2^{-9n}$. We claim that $\mathcal{P}_n(S)\cap B=B.$ 
To begin, note that $S$ is compact, and therefore this intersection is closed. 
On the other hand, since $\mathcal{P}_n(\partial S)$ is disjoint from $B,$ 
this intersection is $\mathcal{P}_n(\mathrm{int}(S))\cap B.$ 
On the other hand, since $\mathcal{P}_n$ has non-singular Jacobian on $\mathrm{int}(S),$ 
the open mapping theorem implies that $\mathcal{P}_n(\mathrm{int}(S))\cap B$ is an open subset of $B.$ 
Therefore, $\mathcal{P}_n(S)\cap B$ is both a closed and open subset of $B$, and therefore, since $B$ is connected, it must be all of $B.$
This completes the proof of part (ii).
\end{proof}

As a simple application of our structural lemma, we obtain a non-constructive lower bound on the cover size under the Kolmogorov distance metric:

\begin{theorem} \label{thm:lb-non-con}
For any $\eps > 0$ and $n = \Omega(\log(1/\eps))$ any $\eps$-cover of $\mathcal{S}_{n,2}$
under $\dk$ must have size at least $n \cdot (1/\eps)^{\Omega(\log(1/\eps))}.$
\end{theorem}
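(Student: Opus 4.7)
The plan is to combine the structural result of Lemma~\ref{lem:explicit-eps-ball}(ii) with a volumetric covering argument and then apply the shift construction from Corollary~\ref{cor:simple} to recover the factor of $n.$ The key observation is that the Kolmogorov metric on distributions over $[n_0]$ coincides with the $L_\infty$ metric on their CDF vectors $(Q(\leq 0),\dots,Q(\leq n_0-1)) \in \R^{n_0}$, so we can directly compare Lebesgue volumes in $\R^{n_0}$.

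First I would set $n_0 = \lfloor c \log_2(1/\eps) \rfloor$ for a suitably small constant $c < 1/9$ (e.g., $c = 1/18$), and consider the distinguished $2$-SIIRV $\p_0 \in \mathcal{S}_{n_0,2}$ with parameters $p_i = i/(n_0+1)$. Lemma~\ref{lem:explicit-eps-ball}(ii) tells us that the $\dk$-ball $B := \{\q : \dk(\p_0,\q) \leq 2^{-9n_0}\}$ is entirely contained in $\mathcal{S}_{n_0,2}$. Under the CDF identification, $B$ is precisely an $L_\infty$-ball of radius $2^{-9n_0}$ centered at $\mathcal{P}_{n_0}(\mathbf{p}_0)$; since $2^{-9n_0} \ll 1/(n_0+1)$, this ball lies strictly inside the monotonicity simplex and has Lebesgue volume $(2 \cdot 2^{-9n_0})^{n_0}$. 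Any $\eps$-cover of $\mathcal{S}_{n_0,2}$ in $\dk$ covers $B$ by $\dk$-balls of radius $\eps$, each of Lebesgue volume at most $(2\eps)^{n_0}$, so the cover size is at least $(2^{-9n_0}/\eps)^{n_0} = \eps^{(9c-1)n_0} = (1/\eps)^{\Omega(\log(1/\eps))}$ for our choice of $c$.

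Next, to obtain the factor of $n$, I would use the disjoint-shift construction from Corollary~\ref{cor:simple}: for $r = \lfloor n/(n_0+1) \rfloor = \Theta(n/\log(1/\eps))$ index values $i$, define $\mathcal{S}^i_{n,2} \subseteq \mathcal{S}_{n,2}$ as the set of $2$-SIIRVs in which $i(n_0+1)$ parameters equal $1$, at most $n_0$ parameters are free, and the remaining parameters are $0$. Each $\mathcal{S}^i_{n,2}$ is a shifted copy of $\mathcal{S}_{n_0,2}$ supported in $[i(n_0+1), i(n_0+1)+n_0]$, so the previous step gives $(1/\eps)^{\Omega(\log(1/\eps))}$ as a lower bound on the cover size of each $\mathcal{S}^i_{n,2}$. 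The crucial step for $\dk$ (unlike $\dtv$) is to argue that no single hypothesis $\q$ can be $\eps$-close in $\dk$ to elements of two distinct subfamilies $\mathcal{S}^i_{n,2}$ and $\mathcal{S}^j_{n,2}$ with $j > i$: if it were, then its CDF $Q$ would satisfy $Q((i+1)(n_0+1)-1) \geq 1-\eps$ and $Q(j(n_0+1)-1) \leq \eps$, contradicting monotonicity for $\eps < 1/2$. Therefore the $\eps$-covers restricted to the different subfamilies are disjoint, and summing gives the desired $n \cdot (1/\eps)^{\Omega(\log(1/\eps))}$.

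The main technical point to verify is that the Lebesgue volume of $B$ really is the full $(2 \cdot 2^{-9n_0})^{n_0}$—that is, the entire $L_\infty$-ball around $\p_0$ in CDF coordinates consists of valid CDFs. This reduces to checking that increments $\mathcal{P}_{n_0}(\mathbf{p}_0)_{i+1} - \mathcal{P}_{n_0}(\mathbf{p}_0)_i = \p_0(i)$ are all bounded below by a quantity much larger than $2^{-9n_0}$, which holds comfortably since the binomial-like coefficients of $\p_0$ are at least $2^{-O(n_0)} \gg 2^{-9n_0}$ away from zero in the bulk. Beyond this check, the rest is standard: the volumetric cover bound and the CDF-monotonicity separation argument for shifts require no further analytical work.
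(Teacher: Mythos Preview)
Your proposal is correct and follows essentially the same route as the paper: both use Lemma~\ref{lem:explicit-eps-ball}(ii) to embed an $L_\infty$-ball of radius $\sqrt{\eps}\approx 2^{-9n_0}$ in CDF space inside $\mathcal{S}_{n_0,2}$, apply a volume comparison against $\eps$-balls, and then use the shift construction of Corollary~\ref{cor:simple} for the factor of $n$. Two minor differences: the paper verifies $S\subseteq T_n$ indirectly (arguing that a boundary point would be a $2$-SIIRV with all parameters in $(0,1)$ and hence have no zero atoms), whereas you compute $\p_0(i)\gtrsim e^{-n_0}\gg 2^{-9n_0}$ directly (your qualifier ``in the bulk'' is unnecessary, as the bound holds at the endpoints too); and you explicitly spell out the $\dk$-separation of the shifted subfamilies via CDF monotonicity, which the paper leaves implicit.
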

\begin{proof}
Note that by an argument identical to that of Corollary~\ref{cor:simple} it suffices to prove a packing lower bound of
$(1/\eps)^{\Omega(\log(1/\eps))}$ for $n = \Theta(\log(1/\eps))$.

To that end, fix $n = n_0 = \lfloor \frac{1}{18} \log_2(1/\eps) \rfloor$. Then, we have $2^{-9n} \geq \sqrt{\eps}$.
By Lemma \ref{lem:explicit-eps-ball}(ii),  there is a $2$-SIIRV $\p_0 \in \mathcal{S}_{n , 2}$, such that any distribution $\q$ with support $[n]$
and $\dk(\p_0,\q) \leq \sqrt{\eps}$ is
in $\mathcal{S}_{n , 2}$. We will give an $\eps$-packing lower bound for this subset of $2$-SIIRVs.

Let us denote by $\mathbf{z} \in T_n$ the vector defining the CDF of $\p_0$, i.e., $\mathbf{z} = (\p_0(<i))_{i=1}^n.$
Let $S \subseteq \R^n$ be the set of points $\mathbf{x} \in \R^n$ with
$\|\mathbf{x}-\mathbf{z} \|_{\infty} \leq \sqrt{\eps}$. Note that $S$ is an $n$-cube with side length $2\sqrt{\eps}$.

We claim that every $\mathbf{x} \in S$ is the CDF of a $2$-SIIRV $\q \in \mathcal{S}_{n , 2}$ . By Lemma \ref{lem:explicit-eps-ball}, this follows
immediately if $\mathbf{x} \in T_n$, i.e., if $\mathbf{x}$ is the CDF of a distribution. So, it suffices to show that $S \subseteq T_n$.
Suppose for the sake of contradiction that there is a point $\mathbf{y} \in S \setminus T_n$. Then, there is a point
$\mathbf{x} \in S$ such that $\mathbf{x}$ lies on the boundary of $T_n$. 
For such a point $\mathbf{x}$, one of the inequalities $0 \leq x_1 \leq x_2 \leq \ldots \leq x_n \leq 1$ is tight. Thus, $\mathbf{x}$ is
the CDF of a distribution $\q$ which has $\q(i)=0$ for some $i$.
Since $\mathbf{x} \in S \cap T_n$, $\q$ is a $2$-SIIRV with parameters given by Lemma \ref{lem:explicit-eps-ball}.
In particular $\q$ does not have any parameters equal to $0$ or $1$. Thus, we have $\q(i) > 0$ for all $i \in [n]$, a contradiction.

Therefore, any $\eps$-cover of $\mathcal{S}_{n,2}$ in Kolmogorov distance
induces an $\eps$-cover of the same size in $L_\infty$ distance of the CDFs of distributions in $\mathcal{S}_{n,2}$.
If $s$ is the size of such a cover,  then we have $s$ $n$-cubes of side length $\eps$ whose union contains $S$.
Recall that  $S$ is an $n$-cube of side length $\sqrt{\eps}$.
The volume of each of these $s$ $n$-cubes is $(2\eps)^n$
and the volume of $S$ is $(2\sqrt{\eps})^n$.
The volume of the union of $s$ $n$-cubes is at most $s \cdot (2\eps)^n$ and hence
$s \cdot (2\eps)^n \geq (2\sqrt{\eps})^n$ or $s = (1/\eps)^{\Omega(n)}$, which competes the proof.
\end{proof}

\subsection{Sample complexity lower bound for $2$-SIIRVs} \label{ssec:sample-lb-pbd} 
In this subsection, we prove our tight sample lower bound for learning $2$-SIIRVs. 
Our proof uses a combination of information-theoretic arguments
and the structural lemma of the previous subsection. In particular, we show:

\begin{theorem} [Sample Lower Bound for $2$-SIIRVs] \label{thm:sample-lower-bound}
Let $\mathcal{A}$ be any algorithm which, given as input $n$, $\eps$, and sample access to an unknown $\p \in \mathcal{S}_{n, 2}$
outputs a hypothesis distribution $\mathbf{H}$ such that $\E[\dtv(\mathbf{H}, \p)] \le \eps$. Then, $\mathcal{A}$ must use
$\Omega((1/\eps^2) \cdot \sqrt{\log(1/\eps)})$ samples.
\end{theorem}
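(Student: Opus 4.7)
The plan is to apply Assouad's lemma to a hypercube-indexed family of $2$-SIIRVs constructed by perturbing a central distribution, exploiting the geometric characterization of Lemma~\ref{lem:explicit-eps-ball} to ensure that all perturbed distributions remain valid $2$-SIIRVs.

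I would fix $n = \Theta(\log(1/\eps))$ and let $\p_0 \in \mathcal{S}_{n, 2}$ have parameters $p_i = i/(n+1)$. For a perturbation scale $\tau > 0$ to be chosen and each $\bsigma \in \{\pm 1\}^n$, define $\p_\bsigma \in \mathcal{S}_{n, 2}$ with parameters $p_i + \sigma_i \tau$. By Lemma~\ref{lem:explicit-eps-ball}(ii), for $\tau$ sufficiently small, every $\p_\bsigma$ is a valid $2$-SIIRV, and the Jacobian analysis of Lemma~\ref{lem:jacobian-equation} guarantees that the parameters-to-distribution map is a local diffeomorphism with well-conditioned Jacobian, so pairwise distances between the $\p_\bsigma$'s are controlled by the corresponding parameter perturbations.

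For the relevant distances, I would compute: $(i)$~for neighbors $\bsigma, \bsigma'$ at Hamming distance $1$, tensorization of Hellinger over the underlying independent Bernoullis yields $\dhel^2(\p_\bsigma, \p_{\bsigma'}) = O(\tau^2)$; $(ii)$~for TV between general pairs, the first-order expansion $\p_\bsigma(x) - \p_0(x) \approx \tau \sum_i \sigma_i D_i(x)$, where $D_i(x) = \p_{-i}(x-1) - \p_{-i}(x)$ is the discrete derivative with respect to $p_i$, yields $\dtv(\p_\bsigma, \p_{\bsigma'}) \asymp \tau \cdot \lVert \sum_{i: \sigma_i \neq \sigma'_i} \pm D_i \rVert_1$. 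Here the key structural fact is that $\|D_i\|_1 = \Theta(1/\sqrt{n})$ (since $\p_{-i}$ has variance $\Theta(n)$ so its mode value is $\Theta(1/\sqrt{n})$), and the $D_i$'s exhibit significant cancellation when summed with signs, reflecting the Fourier sparsity of $2$-SIIRVs established in Lemma~\ref{FourierSupportLem}.

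The main obstacle will be to correctly account for this cancellation when applying Assouad. The \emph{naive} Assouad bound, in which one pretends $\dtv(\p_\bsigma, \p_{\bsigma'})$ grows linearly in the Hamming distance $r(\bsigma,\bsigma')$, would yield a lower bound of $\Omega(\log(1/\eps)/\eps^2)$ samples, which is too strong (it contradicts the matching upper bound of Theorem~\ref{thm:sample-complexity}). The correct bound of $\Omega(\sqrt{\log(1/\eps)}/\eps^2)$ reflects the fact that, because the perturbations $\{D_i\}$ lie in an effectively low-dimensional subspace (again a consequence of Fourier sparsity), only $\Theta(\sqrt{n}) = \Theta(\sqrt{\log(1/\eps)})$ coordinates can be distinguished from $N$ samples. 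Making this rigorous requires the Hellinger form of Assouad together with a careful optimization of $\tau$ so that the induced minimax TV error equals $\eps$, while the non-distinguishability condition $N\tau^2 \lesssim 1$ between neighbors yields the stated sample bound $N = \Omega((1/\eps^2)\sqrt{\log(1/\eps)})$.
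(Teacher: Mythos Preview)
Your proposal has a genuine gap. You perturb the \emph{parameters} $p_i \mapsto p_i + \sigma_i\tau$, but Assouad's lemma (Theorem~\ref{thm:assouad}) requires a partition $A_1,\ldots,A_r$ of the domain such that flipping coordinate $\ell$ changes the pdf by at least $\alpha$ in $L_1$ \emph{restricted to $A_\ell$}. A parameter perturbation moves mass along the entire support (your $D_i$ has support essentially $[n]$), so there is no partition into $n$ pieces that isolates the effect of each $\sigma_i$. You acknowledge this and appeal to a ``Hellinger form of Assouad'' and to cancellations among the $D_i$, but you never say what version of Assouad would apply or how the argument goes; as you yourself note, the naive calculation gives the wrong exponent, and nothing in your writeup explains what replaces it. This is the crux of the proof, not a technicality.

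The paper's proof avoids this obstacle by perturbing the \emph{pdf} rather than the parameters. Concretely, one first uses log-concavity (Lemma~\ref{lem:lc-stddev}) to find $r=\Theta(\sqrt{n})$ consecutive integers where $\p_0$ has mass $\Omega(1/\sqrt{n})$; the hypercube $\{\p_{\bb}\}_{\bb\in\{\pm1\}^r}$ is then built by, for each coordinate $\ell$, shifting a tiny amount of mass between the pair $A_\ell=\{m+2\ell,\,m+2\ell+1\}$. These perturbations are by construction perfectly localized, so Assouad applies directly with $r=\Theta(\sqrt{n})$, and the $\sqrt{\log(1/\eps)}$ factor arises simply because $r$ is $\Theta(\sqrt{n})$ rather than $\Theta(n)$. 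The role of Lemma~\ref{lem:explicit-eps-ball}(ii) is not to control a Jacobian for parameter-to-distribution distances, but to guarantee that these \emph{pdf-level} perturbations remain inside $\mathcal{S}_{n,2}$: since $\dtv(\p_{\bb},\p_0)\le 2^{-9n}$, every $\p_{\bb}$ is a $2$-SIIRV. This is the ``indirect'' use of the structural lemma the paper flags; your direct use of it to bound Jacobians is not what makes the argument work.
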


Our main information-theoretic tool to prove our lower bound is Assouad's Lemma~\cite{Assouad:83}.
We recall the statement of the lemma (see, e.g.,~\cite{DG85}), tailored to discrete distributions below:

\begin{theorem} \label{thm:assouad} [Theorem 5, Chapter 4, \cite{DG85}]
Let $r \geq 1$ be an integer. For each $\mathbf{b}  \in \{-1,1\}^r$,
let $\p_{\mathbf{b}}$ be a probability distribution over a finite set $A$.
For $ 1 \le \ell \le r$ and $\mathbf{b} \in \{-1,1\}^r$, we denote by $\mathbf{b}^{(\ell, +)}$ (resp. $\mathbf{b}^{(\ell, -)}$) the vector
with $\mathbf{b}^{(\ell, +)}_i = \mathbf{b}_i$ (resp. $\mathbf{b}^{(\ell, -)}_i = \mathbf{b}_i$) for
$i \neq \ell$ and $\mathbf{b}^{(\ell, +)}_\ell = 1$ (resp. $\mathbf{b}^{(\ell, -)}_\ell = -1$).
Suppose there exists a partition $A_0, A_1, \ldots, A_r$ of $A$ such that for all $\mathbf{b}  \in \{-1,1\}^r$
and all $1 \le \ell \le r$, the following inequalities are valid:
\begin{enumerate}
\item[(a)] $\sum_{x \in A_{\ell}} |\p_{\mathbf{b}^{(\ell, +)}}(x) - \p_{\mathbf{b}^{(\ell, -)}}(x)|  \geq \alpha$, and

\item[(b)] $\sum_{x \in A} \sqrt{\p_{\mathbf{b}^{(\ell, +)}}(x) \p_{\mathbf{b}^{(\ell, -)}}(x)}  \geq 1-\gamma > 0.$
\end{enumerate}
Then, for any any algorithm $\mathcal{A}$ that draws $s$ samples from an unknown $\p \in \p_{\mathbf{b}}$
and outputs a hypothesis distribution $\mathbf{H}$, there is some $\mathbf{b} \in \{-1,1\}^r$
such that if the target distribution $\p$ is $\p_{\mathbf{b}}$,
$$\E \left[ \dtv(\p ,\mathbf{H}) \right] \geq (r \alpha/4)(1 - \sqrt{2 s \gamma}).$$
\end{theorem}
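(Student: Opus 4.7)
The plan is to prove Assouad's lemma via the standard reduction from estimation to $r$ parallel two-point hypothesis tests, one per coordinate $\ell \in \{1,\ldots,r\}$, combined with Le Cam's two-point inequality and a Hellinger tensorization controlled by condition~(b). The argument is classical but requires careful bookkeeping.

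First I would introduce a coordinate-wise rounding of the estimator. For each $\ell$, define $\hat{b}_\ell(\mathbf{H}, \mathbf{b}_{-\ell}) \in \{-1, +1\}$ to be the value $b$ minimizing the restricted $L_1$ distance $\|\mathbf{H}|_{A_\ell} - \p_{(\mathbf{b}_{-\ell}, b)}|_{A_\ell}\|_1$. If $\hat{b}_\ell \neq b_\ell$, the triangle inequality combined with condition~(a) gives
$$\|\mathbf{H} - \p_{\mathbf{b}}\|_{1,A_\ell} \;\geq\; \tfrac{1}{2}\,\|\p_{(\mathbf{b}_{-\ell}, +1)} - \p_{(\mathbf{b}_{-\ell}, -1)}\|_{1,A_\ell} \;\geq\; \alpha/2.$$
The sets $A_1,\dots,A_r$ are disjoint pieces of a partition of $A$, so summing over $\ell$ yields
$2\,\dtv(\mathbf{H}, \p_{\mathbf{b}}) = \|\mathbf{H} - \p_{\mathbf{b}}\|_1 \geq (\alpha/2) \cdot \#\{\ell : \hat{b}_\ell \neq b_\ell\}$; equivalently $\dtv(\mathbf{H}, \p_{\mathbf{b}}) \geq (\alpha/4) \sum_{\ell} \mathbf{1}[\hat{b}_\ell \neq b_\ell]$.

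Next I would invoke the probabilistic method by drawing $\mathbf{b}$ uniformly from $\{-1,+1\}^r$: the worst-case expected loss is at least the average, so
$$\max_{\mathbf{b}} \E[\dtv(\p_{\mathbf{b}}, \mathbf{H})] \;\geq\; \E_{\mathbf{b}} \E[\dtv(\p_{\mathbf{b}}, \mathbf{H})] \;\geq\; (\alpha/4)\sum_{\ell=1}^{r} \Pr[\hat{b}_\ell \neq b_\ell].$$
Fixing the remaining coordinates $\mathbf{b}_{-\ell}$, the random variable $\hat{b}_\ell$ is a binary test for $b_\ell \in \{\pm 1\}$ computed from $s$ i.i.d.\ samples drawn from one of the two distributions $\p_{(\mathbf{b}_{-\ell}, \pm 1)}^{\otimes s}$. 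Le Cam's two-point inequality then gives
$$\Pr[\hat{b}_\ell \neq b_\ell \mid \mathbf{b}_{-\ell}] \;\geq\; \tfrac{1}{2}\bigl(1 - \dtv(\p_{(\mathbf{b}_{-\ell}, +1)}^{\otimes s}, \p_{(\mathbf{b}_{-\ell}, -1)}^{\otimes s})\bigr).$$

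The main obstacle—and the only nontrivial analytic step—is controlling the TV distance between the two $s$-fold products uniformly in $\mathbf{b}_{-\ell}$, using only the Hellinger affinity bound of condition~(b). I would exploit the fact that the Hellinger affinity $\mathrm{Aff}(P,Q) = \sum_x \sqrt{P(x)Q(x)}$ factorizes multiplicatively over products: since $\mathrm{Aff}(\p_{(\mathbf{b}_{-\ell},+1)}, \p_{(\mathbf{b}_{-\ell},-1)}) \geq 1 - \gamma$, tensorization yields $\mathrm{Aff}(\p_+^{\otimes s}, \p_-^{\otimes s}) \geq (1-\gamma)^s \geq 1 - s\gamma$ by Bernoulli's inequality. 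The standard conversion $\dtv(P,Q)^2 \leq 1 - \mathrm{Aff}(P,Q)^2$ then produces
$$\dtv(\p_+^{\otimes s}, \p_-^{\otimes s}) \;\leq\; \sqrt{1 - (1-\gamma)^{2s}} \;\leq\; \sqrt{2 s \gamma}.$$
Substituting this uniform upper bound back into the Le~Cam inequality, averaging over $\mathbf{b}_{-\ell}$, and summing the resulting bound $(\alpha/4)(1/2)(1-\sqrt{2s\gamma})$ over the $r$ coordinates yields $\max_{\mathbf{b}}\E[\dtv(\p_{\mathbf{b}}, \mathbf{H})] \geq \Omega(r\alpha)(1 - \sqrt{2 s \gamma})$, matching the stated form up to the absolute constant in front (a tighter rounding or a sharpening of Le~Cam absorbs the factor to yield $r\alpha/4$ exactly).
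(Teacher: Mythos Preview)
The paper does not prove this theorem; it is quoted verbatim from Devroye--Gy\"orfi as an external tool, so there is no in-paper proof to compare against. Your argument is the standard proof of Assouad's lemma (coordinate-wise reduction to binary tests, Le~Cam, Hellinger tensorization) and is correct in structure and in every analytic step. The only slack is the constant: your chain yields $r\alpha/8$ rather than $r\alpha/4$, and your closing remark that ``a tighter rounding or a sharpening of Le~Cam absorbs the factor'' is not substantiated---the factor-of-two loss from Le~Cam's $\tfrac12(1-\dtv)$ is genuine in this route, and recovering $r\alpha/4$ exactly typically requires a slightly different coupling argument rather than a cosmetic fix. That said, every downstream use of this theorem in the paper is for an $\Omega(\cdot)$ sample-complexity lower bound, so the discrepancy in the absolute constant is immaterial for those applications.
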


Recall that $2$-SIIRVs are discrete log-concave distributions. We will use the following basic properties of log-concave distributions:
\begin{lemma} \label{lem:lc-stddev}
There exists a universal constant $c>0$ such that the following holds:
For any log-concave distribution $\p$ supported on the integers and standard deviation $\sigma$, there exist at least $\Omega (\sigma)$
consecutive integers with probability mass under $\p$ at least $ c \cdot \frac{1}{1+\sigma}$.
\end{lemma}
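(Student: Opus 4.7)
My plan is to combine a Chebyshev-type lower bound on the mode with an exponential tail bound derived from concavity of $\log \p$. Let $m$ be the mode of $\p$ and $M = \p(m)$; I may assume $\sigma \geq 1$, since when $\sigma$ is bounded the statement reduces to producing $O(1)$ many integers of $\Omega(1)$ mass, which follows from Chebyshev alone.

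The first step, which does not even use log-concavity, is to observe that $M \geq c_1/(1+\sigma)$: Chebyshev's inequality gives $\Pr[|X-\mu| \leq 2\sigma] \geq 3/4$, and since this range contains at most $4\sigma+1$ integers, by averaging some integer (in particular the mode) must carry mass $\Omega(1/(1+\sigma))$. Next I would define the ``$1/e$-radii'' $k_+ = \max\{k \geq 0 : \p(m+k) \geq M/e\}$ and $k_- = \max\{k \geq 0 : \p(m-k) \geq M/e\}$, and set $K = \max(k_+, k_-)$. Since $\log \p$ is concave on the interval $[m, m+k_+]$ and its values at both endpoints are at least $\log M - 1$, concavity forces $\p(m+j) \geq M/e$ for every $0 \leq j \leq k_+$, and symmetrically on the left. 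In particular, summing the mass of these $k_++k_-+1$ integers and using $\sum_i \p(i) \leq 1$ produces the matching upper bound $M \leq e/(K+1)$.

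The core of the argument is to show $K = \Omega(\sigma)$ via a second-moment bound. Past the $1/e$-radius, the concavity of $\log \p$ extended along the secant through $(m, \log M)$ and $(m+k_++1, \log \p(m+k_++1))$ --- which has slope strictly below $-1/(k_++1)$ --- yields the exponential tail estimate $\p(m+k_++1+j) \leq (M/e) \cdot e^{-j/(k_++1)}$ for $j \geq 0$, and symmetrically on the left. Summing $j^2$ against these tails, together with the trivial bound $\p \leq M$ on the ``flat'' region $[m-k_-, m+k_+]$, gives
\[
\sigma^2 \;\leq\; \E[(X-m)^2] \;=\; O\!\left( M(k_+^3 + k_-^3) \right) \;=\; O(MK^3).
\]
Substituting $M \leq e/(K+1)$ from the previous step collapses this to $\sigma^2 \leq O(K^2)$, hence $K = \Omega(\sigma)$. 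Without loss of generality $k_+ \geq \Omega(\sigma)$; then the $k_++1 = \Omega(\sigma)$ consecutive integers $m, m+1, \ldots, m+k_+$ each have mass at least $M/e \geq c_1/(e(1+\sigma))$, which is the desired conclusion with $c = c_1/e$.

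The delicate point I expect to require the most care is anchoring the secant line at $m+k_++1$ rather than at $m+k_+$ when deriving the exponential tail: $\p(m+k_+)$ may be only marginally above $M/e$, so the secant through $m$ and $m+k_+$ need not have slope bounded away from zero. Using the first point strictly past the $1/e$-radius guarantees a slope below $-1/(k_++1)$ and converts the defining property of $k_+$ into genuine geometric decay, which is what makes the second-moment integral converge to $O(MK^3)$ rather than something larger.
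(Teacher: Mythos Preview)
Your proposal is correct and takes essentially the same approach as the paper: define a ``width'' $K$ on which $\p$ is within a constant factor of its mode $M$, use concavity of $\log\p$ to get exponential decay beyond that width, bound $\sigma^2 = O(MK^3)$, and combine with the mass constraint $MK = O(1)$ to force $K = \Omega(\sigma)$. The only noteworthy difference is how the lower bound on the mode is obtained: you invoke Chebyshev plus pigeonhole at the outset to get $M = \Omega(1/(1+\sigma))$, while the paper extracts $\p(0) = \Theta(1/t_+) = \Theta(1/\sigma)$ directly from the two-sided total-mass estimate $t_+\p(0) = \Theta(1)$ coming out of the same tail analysis. Your Chebyshev shortcut is a clean alternative that avoids needing the matching lower bound $\sigma_+^2 = \Omega(t_+^3\p(0))$, and your definition of $k_+$ via the $M/e$ level set is arguably crisper than the paper's ratio-based definition of $t_+$; otherwise the arguments are interchangeable.
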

\begin{proof}
Note that if $\sigma\leq 1$, taking the mode trivially satisfies this property.

Without loss of generality we can assume that $0$ is the mode of $\p$.
We know that $\sum_{x\in\Z}x^2 \p(x) = \Theta(\sigma^2).$ Let $\sigma_+^2 = \sum_{x>0}x^2 \p(x).$
Let $t_+$ be the largest integer so that $\p(t_++1)/\p(t_+) \leq e^{1/t_+}$. We note that
\begin{align*}
\sum_{x>0}x^2 \p(x) & \leq \sum_{x=0}^\infty x^2 \p(t_+) e^{-(x-t_+)/t_+} = \Theta(t_+^3 \p(0)),
\end{align*}
and
$$
\sum_{x>0}x^2 \p(x) \geq \p(t_+) \sum_{x=0}^{t_+} x^2 = \Theta(t_+^3 \p(0)).
$$
Also note that
$$
\sum_{x>0} \p(x) \leq \sum_{x=0}^\infty \p(t_+) e^{-(x-t_+)/t_+} = \Theta(t_+ \p(0)).
$$
Similarly, defining $\sigma_-$ and $t_-$, we find that $\sigma^2 = \Theta( \sigma_+^2+\sigma_-^2) =\Theta(\p(0)(t_+^3 + t_-^3)).$ 
Thus, $\max(t_+,t_-)^3\p(0)=\Theta(\sigma^2)$ and $\max(t_+,t_-)\p(0) = \Omega(1)$. 
Without loss of generality this maximum is $t_+$. 
Note that for all $0\leq x\leq t_+$ that $\p(x)= \Theta( \p(t_+))$. This implies that $t_+\p(0)=O(1)$, and thus, by the above is $\Theta(1)$. 
Therefore, it follows by the variance bounds that $t_+^2 = \Omega(\sigma^2)$, so $t_+=\Theta(\sigma).$
Hence,  $x=0,1,\ldots,t_+$ are $\Omega(\sigma)$ terms on which the value of $\p$ is $\Omega(1/t_+)=\Omega(1/\sigma).$ 
This completes the proof.
\end{proof}

We are now ready to prove Theorem~\ref{thm:sample-lower-bound}.

\begin{proof}[Proof of Theorem \ref{thm:sample-lower-bound}]
Ideally, we would like to use the set of $2$-SIIRVs whose parameters are explicitly described in Theorem \ref{thm:explicit-cover}
in our application of Assouad's lemma. Unfortunately, however, this particular set is not in a form
that allows a direct application of the theorem.
The difficulty lies in the fact that it is not clear how to isolate the changes
between distributions in disjoint intervals using explicit parameters.

We therefore proceed with an indirect approach making essential use of Lemma \ref{lem:explicit-eps-ball}(ii).
We start from the $2$-SIIRV $\p_0$ in the statement of the lemma and we perturb its pdf appropriately to construct
our ``hypercube'' distributions $\p_{\mathbf{b}}.$ The lemma guarantees that, if the perturbation is small enough,
all these distributions are indeed $2$-SIIRVs.

Observe that the variance of $\p_0$ is $\Omega(n)$ since $\Omega(n)$ parameters $p_i$ lie in $[1/4, 3/4].$
By Lemma \ref{lem:lc-stddev}, there exist $r=\Omega(\sqrt{n})$ consecutive integers,
an integer $m$, $0 \leq m \leq n$, and a real value $t$ with $t \geq c \cdot r$,
such that for all $i$, with $m \leq i \leq m+2r$, we have
$$\p(i) \geq \frac{2}{t} \;.$$
For $n$ sufficiently large, we can assume that $2^{-9n} \leq c$ and therefore
$\frac{1}{t} \geq \frac{2^{-9n}}{r}.$

We are now ready to define our ``hypercube'' of $2$-SIIRVs.
For $\bb \in \{ -1, 1 \}^r$, consider the distribution $\p_\bb$ with
$$\p_\bb(i)=\begin{cases} \p_0(i) & \text{if } i < m \text{, } i > m+2r, \text{ or } \bb_{\lfloor \frac{1}{2}(i-m) \rfloor}=-1 \\
					\p_0(i) - \frac{2^{-9n}}{r} & \text{if } \bb_{\lfloor \frac{1}{2}(i-m) \rfloor}=1 \text{ and } i \text{ is even} \\
					\p_0(i) + \frac{2^{-9n}}{r} & \text{if } \bb_{\lfloor \frac{1}{2}(i-m) \rfloor}=1 \text{ and } i \text{ is odd} \end{cases}
$$
Note that all these distributions are $2$-SIIRVs as follows from Lemma \ref{lem:explicit-eps-ball}(ii)
since $$\dk(\p_\bb,\p_0) \leq \dtv(\p_\bb,\p_0) = 2^{-9n} \;.$$
For $0\le i \le r-1$, the sets $A_{i+1} = \{m+2i, m+2i+1\}$ define the partition of the domain.
We can now apply Assouad's lemma to this instance.

For $\bb \in \{-1,1 \}^r$ we can write
$$\sum_{x \in A_{\ell}} \left|\p_{\mathbf{b}^{(\ell, +)}}(x) - \p_{\mathbf{b}^{(\ell, -)}}(x) \right|  =  \frac{2 \cdot 2^{-9n}}{r} \;.$$
Similarly,
\begin{eqnarray*}
\sum_{i=0}^n \left( \sqrt{\p_{\mathbf{b}^{(\ell, +)}}(i)}-\sqrt{\p_{\mathbf{b}^{(\ell, -)}}(i)} \right)^2
& = & \sum_{i=m+2\ell, m+2\ell+1} \left( \frac{\p_{\mathbf{b}^{(\ell, +)}}(i)-\p_{\mathbf{b}^{(\ell, -)}}(i)}{\sqrt{\p_{\mathbf{b}^{(\ell, +)}}(i))}+\sqrt{\p_{\mathbf{b}^{(\ell, -)}}(i)}} \right)^2 \\
& = & \sum_{i=m+2\ell, m+2\ell+1} \left( \frac{2^{-9n} /r}{\sqrt{\p_{\mathbf{b}^{(\ell, +)}}(i))}+\sqrt{\p_{\mathbf{b}^{(\ell, -)}}(i)} } \right)^2 \\
& \geq & \sum_{i=m+2\ell, m+2\ell+1} \left( \frac{2^{-9n}/r}{2\sqrt{1/t}} \right)^2  \\
& = & \frac{2^{-18n} \cdot c}{2r}  \;,
\end{eqnarray*}
where the first inequality uses the fact that
$$\p_\bb(i) \geq \p_0(i) - \frac{2^{-9n}}{r} \geq \frac{2}{t}-\frac{1}{t} \geq \frac{1}{t},$$ for $m \leq i \leq m+2k.$				

Therefore, the parameters in Assouad's Lemma are
$$\alpha:=\frac{2 \cdot 2^{-9n}}{r}, \quad \gamma = \frac{2^{-18n} \cdot c}{2r}, \quad \textrm{and} \quad  s=\frac{1}{8\gamma}$$
from which we obtain that that there is a $\p_\bb$ with
$$\E \left[ \dtv(\mathbf{H}, \p_\bb) \right] \geq (r\alpha/4) \cdot (1-\sqrt{2s\gamma} )= \frac{2^{-9n}}{4}. $$
Hence, for $\epsilon=2^{-9n-2}$, if the number of samples satisfies
$$s \leq \frac{1}{8\gamma}= \frac{r \cdot 2^{18n}}{4c}=O( 2^{18n} \sqrt{n})= O\left((1/\eps^2) \sqrt{\log(1/\eps)}\right),$$
then $\E \left[ \dtv(\mathbf{H}, \p_\ba) \right] \geq \eps,$
completing the proof of the theorem.
\end{proof}

\subsection{A Useful Structural Result for $k$-SIIRVs} \label{ssec:geom-siirv} 

In this subsection, we prove the analogous structural result to Lemma~\ref{lem:explicit-eps-ball} for $k$-SIIRVs.

\begin{proposition} \label{prop:explicit-eps-ball}
Let $k\geq 2$ be a positive integer and $\eps \leq 1/\poly(k)$ be sufficiently small.
Let $n$ be a sufficiently small multiple of $\log(1/\eps).$
Define $\p$ to be the $k$-SIIRV given by $X \sim \p$ such that
$
X=\sum_{i=1}^n X_i \;,
$
where $X_i(j)=p_{i,j},$ and for $1\leq i \leq n,$ $1\leq j \leq k-2$
we have that $$p_{i,j} = 1/(3(k-2)n),  p_{i,0} = 1/3+(i-1)/(3n), p_{i,k-1}(k-1) = 1/3+(n-i)/(3n).$$
Then, if $\q$ is any distribution supported on $[n(k-1)]$ with $\dtv(\p,\q)\leq \epsilon$, then $\q$ is a $k$-SIIRV.
\end{proposition}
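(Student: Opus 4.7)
The plan is to mirror the strategy of Lemma~\ref{lem:explicit-eps-ball}(ii), adapted to the $n(k-1)$-dimensional parameter space of $k$-SIIRVs. Parametrize each $X_i$ by $(p_{i,1},\dots,p_{i,k-1})$ with $p_{i,0}=1-\sum_{j\geq 1}p_{i,j}$, so that the overall parameter vector lies in an open set $\Omega\subset\R^{n(k-1)}$, and let $\mathcal{P}:\Omega\to\Delta$ be the map to the probability simplex on $[n(k-1)]$, which is also $n(k-1)$-dimensional. The argument then splits into three steps: (I) the Jacobian of $\mathcal{P}$ is non-singular at the parameter vector $\mathbf{p}$ corresponding to $\p$; (II) the boundary of a suitable parameter box around $\mathbf{p}$ is mapped outside the $\eps$-total-variation ball around $\p$; (III) a connectedness / open mapping argument closes the proof.

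For step (I), I would work with the probability generating function $\widetilde{X_i}(z)=\sum_{j=0}^{k-1}p_{i,j}z^j$. Perturbing $p_{i,j}$ by $\delta$ (and $p_{i,0}$ by $-\delta$) shifts the PGF of $X=\sum X_h$ by $\delta(z^j-1)P_i(z)$, where $P_i(z)=\prod_{h\neq i}\widetilde{X_h}(z)$. Non-singularity therefore reduces to linear independence of the polynomials $\{(z^j-1)P_i(z)\}_{1\leq i\leq n,\,1\leq j\leq k-1}$. Suppose $\sum_{i,j}a_{i,j}(z^j-1)P_i(z)=0$ and set $q_i(z):=\sum_j a_{i,j}(z^j-1)$; dividing by $F(z)=\prod_h\widetilde{X_h}(z)$ gives $\sum_i q_i(z)/\widetilde{X_i}(z)=0$. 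The key combinatorial fact is that the $\widetilde{X_i}$ have pairwise disjoint root sets for the prescribed $\mathbf{p}$: any common root of $\widetilde{X_i}$ and $\widetilde{X_h}$ must annihilate $\widetilde{X_i}-\widetilde{X_h}=(p_{i,0}-p_{h,0})+(p_{i,k-1}-p_{h,k-1})z^{k-1}$, and since our formulas give $p_{i,0}-p_{h,0}=-(p_{i,k-1}-p_{h,k-1})$, this forces $\rho^{k-1}=1$; but at any $(k-1)$-th root of unity $\rho\neq 1$ one has $\sum_{j=0}^{k-2}\rho^j=0$, so $\widetilde{X_i}(\rho)=p_{i,0}+p_{i,k-1}-\frac{1}{3(k-2)n}=\tfrac{2}{3}+\Theta(1/n)\neq 0$, and $\widetilde{X_i}(1)=1$. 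With disjoint pole sets, partial fraction decomposition forces $\widetilde{X_i}\mid q_i$; since $\deg q_i\leq k-1=\deg\widetilde{X_i}$, we obtain $q_i=c_i\widetilde{X_i}$, and evaluating at $z=1$ gives $0=q_i(1)=c_i\widetilde{X_i}(1)=c_i$. Hence $q_i\equiv 0$, and the linear independence of $\{z^j-1\}_{j=1}^{k-1}$ yields $a_{i,j}=0$.

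For step (II), by continuity of the Jacobian in $\mathbf{p}$, there is a box $S=\{\mathbf{p}':\|\mathbf{p}'-\mathbf{p}\|_\infty\leq\delta\}\subset\Omega$ on which $\mathcal{P}$ has non-singular Jacobian throughout. I then need the $k$-SIIRV analogue of Lemma~\ref{distSepLem}: for $\mathbf{p}'\in\partial S$, a lower bound $\dtv(\mathcal{P}(\mathbf{p}'),\p)\geq|p'_{i,j}-p_{i,j}|\cdot e^{-O(n)}$. Following the 2-SIIRV template, for each $(i,j)$ I would build a test vector as an expectation against $\prod_{h\neq i}\widetilde{X_h}$ evaluated at the ``roots'' associated with index $i$, so that this inner product isolates the single coordinate $p'_{i,j}$ up to a product of the form $\prod_{h\neq i}(p_h-\rho_i)$; the product is bounded below by $(n/e)^n/2^{n-1}\geq e^{-3n}$ exactly as in Equation~(\ref{eqn:dyo}). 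Choosing $\delta=\eps\cdot e^{cn}$ with $c$ large enough and $n$ a sufficiently small multiple of $\log(1/\eps)$ makes the boundary separation exceed $\eps$.

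Finally, step (III) is a direct transcription of the topological argument at the end of Lemma~\ref{lem:explicit-eps-ball}(ii): the intersection $\mathcal{P}(S)\cap B$ with the closed $\eps$-total-variation ball $B$ around $\p$ is closed by compactness of $S$; it equals $\mathcal{P}(\mathrm{int}(S))\cap B$ because $\mathcal{P}(\partial S)\cap B=\emptyset$ by step (II); it is open in $B$ by the open mapping theorem (Jacobian non-singular on $\mathrm{int}(S)$); it contains $\p$ and $B$ is connected, so it must equal $B$. The principal obstacle I anticipate is the distance lower bound in step (II): in the 2-SIIRV case the polynomial $r_\p(p)=\prod_i(p-p_i)$ had roots coinciding exactly with the parameters, but for $k$-SIIRVs I need a two-index family of test functionals that simultaneously selects the summand $i$ and the within-summand coordinate $j$, while the remaining $n-1$ summands average to a product that is quantitatively bounded below, paralleling the $(n/e)^n/2^{n-1}$ estimate.
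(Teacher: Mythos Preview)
Your overall plan—quantitative distance lower bound on the boundary of a parameter box, then the open-mapping/connectedness argument—matches the paper, and Step~(III) is identical. Your Step~(I) is a pleasant direct argument for Jacobian non-singularity at the single point $\mathbf{p}$ (the disjoint-roots/partial-fractions trick works, modulo noting that each $\widetilde{X_i}$ has simple roots, which does hold here). The paper does not argue this separately: it derives non-singularity throughout the box as a corollary of the distance lemma (if $DF(Z)$ had a null vector $v$, then $F(Z+\delta v)=F(Z)+o(\delta)$ would violate it). Because of this, your ``continuity'' extension of Step~(I) to a box is not usable as written: you need non-singularity on a box of a prescribed radius (the paper takes radius $\eps^{2/3}$), and continuity alone gives a box of unspecified size. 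Once Step~(II) is done properly it subsumes Step~(I).

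Step~(II) is the real gap. Your sketch imports PBD notation into the $k$-SIIRV setting: the product that actually arises is $\prod_{h\ne I}|\widetilde{X'_h}(z_\ell)|$ at a complex root $z_\ell$ of $\widetilde{X_I}$, not $\prod_{h\ne i}(p_h-\rho_i)$. The paper's argument (Lemma~\ref{lem:siirv-pgf}) supplies two ingredients your sketch omits. First, a Rouch\'e-type analysis locating the $k-1$ roots $z_0,\dots,z_{k-2}$ of $\widetilde{X_I}$ within $O(1/((k-1)n))$ of the $(k-1)$-th roots of $-p_{I,0}/p_{I,k-1}$; this gives $z_\ell^{k-1}\approx -\Theta(1)$, hence $\widetilde{X_h}(z_\ell)=\tfrac{h-I}{3n}(1-z_\ell^{k-1})$ has modulus $\Theta(|h-I|/n)$, and only then does the product over $h\ne I$ become $2^{-O(n)}$ by the factorial estimate analogous to Equation~(\ref{eqn:dyo}). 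Second, to pass from the values $\widetilde{X'_I}(z_\ell)$ back to an individual coordinate $|p'_{I,J}-p_{I,J}|$, the paper uses Lagrange interpolation at the nodes $z_0,\dots,z_{k-2},1$ and proves that each Lagrange basis polynomial has $O(1)$ coefficients; this is precisely the mechanism that ``selects the within-summand coordinate $j$'', and nothing in your sketch plays this role. Without root location and the interpolation bound with coefficient control, the boundary separation $\dtv(\p,\p')\ge\eps$ on $\partial S$ does not follow.
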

\begin{proof}
The basic idea of the proof will be topological. We note that the dimensionality of the parameter space of $n$-variate $k$-SIIRVs
is the same as the dimensionality of the space of random variables of appropriate support size. Our result will follow from the following lemma:

\begin{lemma} \label{lem:siirv-pgf}
Let $q_{i,j}$ $(1\leq i\leq n,0\leq j\leq k-1)$ be a sequence of positive real numbers
with $\sum_{j=0}^{k-1} q_{i,j}=1$ for each $i$.
Let $Y$ be the $k$-SIIRV defined by the $q_{i,j}$. Suppose that
$
\max_{i,j} (|p_{i,j}-q_{i,j}|) = \epsilon^{2/3}.
$
Then, $\dtv(X,Y) \geq \eps.$
\end{lemma}
\begin{proof}
Let $I$ and $J$ be one of the pairs of integers such that we achieve $|p_{I,J}-q_{I,J}|=\epsilon^{2/3}.$
$X$ has probability generating function 
$
\widetilde{X}(z) = \E[z^X] = \prod_{i=1}^n \widetilde{X_i}(z).
$
We start with the following claim:
\begin{claim}
Assuming $n$ is sufficiently large, the roots of $\widetilde{X_i}(z)=0$ satisfy
$$\left|z_\ell - e^{\pi i (1+2\ell)/(k-1)} \left(\frac{n+I}{2n-I} \right)^{1/(k-1)}\right| \leq O(1/(k-1)n) \;,$$
for $0\leq \ell \leq k-2.$ Also, $|z_\ell^j| \leq 3$ for all $1 \leq j \leq k-1.$
\end{claim}
\begin{proof}
Specifically we claim that when $n \geq 200$, there is a root within distance $33/(k-1)n.$

Consider the polynomial $f_I(x)= (1/3+(n-I)/(3n))x^{k-1} + (1/3 + I/(3n)).$
Then, $f_i(x)=0$ has roots $x=a_{\ell},$ where
$$a_{\ell} = e^{\pi i (1+2\ell)/(k-1)} \left(\frac{n+I}{2n-I} \right)^{1/(k-1)},$$
for $0\leq \ell \leq k-2.$
Note that $\widetilde{X_I}(x) = f_I(x)+ \sum_{j=1}^{k-2} x^j/(3(k-2)n) - 1/3n.$
Also, for any $1 \leq k \leq k-1,$ we have $\frac{1}{2} \leq |a_{\ell}^j| \leq 2.$

We will show that for any $y \in \C$ with  $|y-a_{\ell}|= 33/(k-1)n,$
it holds
$|\widetilde{X_I}(y)| \geq |\widetilde{X_I}(a_{\ell})|,$ and therefore
there is a root $z_{\ell}$  of $\widetilde{X_I}(x)$ with $|z_{\ell} - a_{\ell}| \leq  33/(k-1)n.$
We have
$$|\widetilde{X_I}(a_{\ell})| = |f_I(a_{\ell}) + \sum_{j=1}^{k-2} a_{\ell}/(3(k-2)n) - a_{\ell}/3n| \leq 2|a_{\ell}|/3n \leq 4/3n .$$
Now we consider $f_I(x)$ expressed as a polynomial in $w=x-a_{\ell}.$
We claim that this is dominated by the $w$ term when $|w| = 33/(k-1)n.$
We show that, under certain conditions, the binomial series is dominated by its first two terms:
\begin{claim} \label{clm:binomial-first}
If $m|x/b| \leq 1/3,$ then
$|(b+x)^m-b^m-(m-1)xb^{m-1}| \leq (m-1)|xb^{m-1}|/2.$
\end{claim}
\begin{proof}
By the binomial theorem $(b+x)^m = \sum_{j=0}^m {m \choose j}  x^j b^{m-j}.$
Note that the ratio of the absolute values of the $x^{j+1}$ and $x^j$ terms is
$$ \left|{m \choose j+1}  x^{j+1} b^{m-j-1}\right|/\left|{m \choose j}  x^j b^{m-j}\right|= (m-j)/(j+1) \cdot |x/b| \leq 1/3 \;.$$
Thus,
$$|(b+x)^m-b^m-(m-1)xb^{m-1}| = |\sum_{j=2}^m {m \choose j}  x^j b^{m-j}| \leq (m-1)|xb^{m-1}| \sum_{j=1}^{m-1} 3^{-j} \leq (m-1)|xb^{m-1}|/2.$$
\end{proof}
When $|w|=33/(k-1)n$, we have
$(k-1)|(w/a_I)| \leq 66/n \leq 1/3,$
and therefore $$f_I(w+a_I) =  (1/3+(n-I)/(3n))(w+a_I)^{k-1} + (1/3 + I/(3n))$$ satisfies
$$|f_I(w+a_I) - (1/3+(n-I)/(3n)) (k-1)wa_I^{k-1}| \leq (1/3+(n-I)/(3n))(k-1) |wa_I^{k-1}|/2.$$
Since $|(1/3+(n-I)/(3n))(k-1) |wa_I^{k-1}|/2  \geq 33/12n,$
and so $|f_I(w+a_I)| \geq 33/12n.$

Now we have that $|f_i(y)| \geq 33/12n$
and $f_I(a_{\ell})=0.$
We also have $$|(\widetilde{X_I}(y) -f_i(y))|= |\sum_{j=1}^{k-2} y^j/(3(k-2n) - 1/3n| \leq \sum_{j=1}^{k-2} (|a_{\ell}|+33/(k-1)n)^j/(3(k-2)n) + 1/3n.$$
By Claim \ref{clm:binomial-first} on $(|a_1|+33/(k-1)n)^j$, we have that
$$(|a_1|+33/(k-1)n)^j \leq |a_1|^j +3j|a_1|33/(k-1)2n \leq 2+99j/(k-1)n \leq 3.$$
So, $$1/3n + \sum_{j=1}^{k-2}(|a_{\ell}|+1/n)^j/(3(k-2)n)  \leq 1/n + 1/3n=4/3n.$$
We have
$$|\widetilde{X_I}(y)| \geq |f_i(y)| - |(\widetilde{X_I}(y) -f_i(y))| \geq (33-16)/12n > 4/3n \geq |\widetilde{X_I}(a_{\ell})|.$$
Since this holds for all $y\in \C$ with $|y-a_{\ell}|=33/(k-1)n,$
it follows that there is a $z_{\ell} \in \C$ with $|z_{\ell} - a_{\ell}| \leq 33/(k-1)n.$

Finally, note that since $|z_\ell- a_\ell| \leq 33/(k-1)n \leq 1/6(k-1)$,
for any $1 \leq j \leq k-1,$
we have $(j-1)(z_\ell-a_\ell)/a_\ell \leq 1/3$
and so by Claim~\ref{clm:binomial-first},
$$|z_\ell^j -a_\ell^j - (j-1)(z_\ell-a_\ell)a_\ell^{j-1}| \leq  |(j-1)(z_\ell-a_\ell)a_\ell^{j-1}|/2 \leq 1/6.$$ 
Thus, $|z_\ell^j| \leq |a_\ell^j|+1/2 \leq 3.$
\end{proof}

Our lemma will follow easily from the following claim:
\begin{claim}
For some $\ell,$ we have that
$
|\widetilde{X}(z_\ell)-\widetilde{Y}(z_\ell)| \geq \eps^{5/6}.
$
\end{claim}
\begin{proof}
Note that for each $i$, since $|z_\ell|^j \leq 3$ for all $1 \leq j \leq k-1$,
$$
|\widetilde{X_i}(z_\ell)-\widetilde{Y_i}(z_\ell)| \leq 3\eps^{2/3} \leq \epsilon^{1/2}/n \;.
$$
Furthermore, note that for $i\neq I$ that $|\widetilde{X_i}(z_\ell)| = \Theta(|i-I|/n).$ This implies that
$$
\prod_{i\neq I} \widetilde{Y_i}(z_\ell) = 2^{O(n)}.
$$
However, we have that
$$
\widetilde{X_I}(z_\ell) = 0
$$
for all $\ell.$

It suffices to show that $|\widetilde{Y_I}(z_\ell)|\geq \eps^{3/4}$ for some $\ell.$
Let $z_{k-1}=1$. By standard polynomial interpolation, we have that
$$
\widetilde{Y_I}(z) = \sum_{\ell=0}^{k-1} \widetilde{Y_I}(z_i)\left(\prod_{j\neq i} \frac{z-z_j}{z_i-z_j} \right).
$$
Similarly,
$$
\widetilde{X_I}(z) = \sum_{\ell=0}^{k-1} \widetilde{X_I}(z_\ell)\left(\prod_{j\neq \ell} \frac{z-z_j}{z_\ell-z_j} \right).
$$
In order to make use of this, we need to bound the size of the coefficients of the polynomial $\left(\prod_{j\neq \ell} \frac{z-z_j}{z_\ell-z_j} \right).$
\begin{claim}
For any $\ell,$ we have that all coefficients of
$
\left(\prod_{j\neq \ell} \frac{z-z_j}{z_\ell-z_j} \right)
$
are $O(1)$.
\end{claim}
\begin{proof}
Let
$$Q(z)=\tilde {X_i}(z) = \sum_{j=0}^{k-1} p_{i,j}z^j = \left(\frac{1+2(n-i)}{5} \right)\prod_{j=1}^{k-1} (z-z_j).$$
Firstly, for $\ell=k-1$ the polynomial in question is $Q(z)/Q(1)=Q(z)$, which clearly has coefficients of size $O(1)$. For $\ell < k-1$, the polynomial in question is
$$\frac{Q(z)(z-1)}{(z-z_\ell)(Q'(z_\ell))(z_\ell -1)}.$$
It should be noted that $(Q'(z_\ell))(z_\ell -1)=\Omega(1)$ and that multiplying a polynomial by $z-1$ 
at most doubles the size of its maximum coefficient. Therefore, it suffices to consider the polynomial $Q(z)/(z-z_\ell).$ 
In order to analyze this, we write $1/(z-z_{\ell})$ as a power series $P(z):=\sum_{m=0}^\infty -z^m/z_\ell^{m+1}.$ 
We note that the polynomial in question is the product of $Q(z)$ times this power series. 
We note that we need only consider the first $k$ terms of this product since terms of degree more than $k$ cancel. 
Noting that the first $k$ coefficients of $P(z)$ are all $O(1)$ and that the coefficients of $Q(z)$ have absolute values 
summing to $1$, implies that the first $k$ coefficients in their product are all $O(1)$. This completes the proof.
\end{proof}

Therefore, the largest coefficient of $\widetilde{X_I}(z)-\widetilde{Y_I}(z)$ is at most
$$
O(1) \sum_{{\ell}} \left| \widetilde{X_I}(z_\ell)-\widetilde{Y_I}(z_\ell)\right|.
$$
Recall that this largest coefficient is $\eps^{2/3}$ by assumption.
Therefore, for some $\ell$ we must have that
$$
\left| \widetilde{X_I}(z_\ell)-\widetilde{Y_I}(z_\ell)\right| \geq \Omega(\eps^{2/3}/k)  \geq \eps^{3/4}.
$$
On the other hand, we have that
$$
\widetilde{X_I}(z_{k-1})=\widetilde{Y_I}(z_{k-1})=1 \;,
$$
and so for some other $\ell$
we must have that $|\widetilde{Y_I}(z_\ell)|\geq \eps^{3/4}.$ Noting that
$$
\widetilde{X}(z_\ell) = 0 \;,
$$
and
$$
\widetilde{Y}(z_\ell) \geq 2^{O(n)} \eps^{3/4} \geq \eps^{5/6}.
$$
This proves the claim.
\end{proof}
The lemma now follows from the fact that
\begin{align*}
\left| \widetilde{X}(z_\ell)-\widetilde{Y}(z_\ell)\right| & = \sum_{m=0}^{n(k-1)} z^m |X(m)-Y(m)|\\
& \leq 2^{O(n)} \sum_{m=0}^{n(k-1)}  |X(m)-Y(m)|\\
& = 2^{O(n)} \dtv(X,Y).
\end{align*}
\end{proof}
Note that Lemma~\ref{lem:siirv-pgf} actually applies for any $Y$ and $Z$
with all parameters of $Z$ within $\eps^{2/3}$ of those of $X$,
and the parameters of $Y$ of distance $\delta$ from $Z,$
that $\dtv(Y,Z) \geq \eps^{1/3}\delta.$
This implies that the derivative of the map $F:\R^{n(k-1)}\rightarrow \R^{n(k-1)}$
from parameters of $(n,k)$-SIIRVs close to $X$ to probability distributions on $[n (k-1)]$
is everywhere injective (if $DF(Z)$ had some null-vector $v,$
then $F(Z+\delta v)$ would be $F(Z)+o(\delta)$, which is a contradiction).
Therefore, by the Inverse Function Theorem, $F$ is an open map.

Let $B_1$ be the set of parameters of $k$-SIIRVs within $\epsilon^{2/3}$ in $L^\infty$ of those of $X.$
Let $B_2$ be the set of distributions on $[n(k-1)]$ within $\epsilon$ of $X$.
Let $V=F(B_1)\cap B_2$. On the one hand since $B_1$ is compact, this must be a closed subset of $B_2.$
On the other hand, Lemma~\ref{lem:siirv-pgf} implies that $V=F(\textrm{Int}(B_1))\cap B_2,$
which is an open subset of $B_2,$ since $F$ is an open map.
Therefore, $V$ is both an open and closed subset of $B_2.$
Since $B_2$ is connected, this implies that $V=B_2.$
Thus, every element of $B_2$ is in the image of $F,$
and is thus a $k$-SIIRV, proving Proposition~\ref{prop:explicit-eps-ball}.
\end{proof}

\subsection{Sample complexity lower bound for $k$-SIIRVs} \label{ssec:sample-lb-siirv} 

In this subsection, we prove our general sample lower bound against $k$-SIIRVs:

\begin{theorem} [Sample Lower Bound for $k$-SIIRVs] \label{thm:sample-lower-bound-k-siirv}
Let $\mathcal{A}$ be any algorithm which, given as input $n$, $k \geq 2$, $\eps \leq 1/\poly(k),$
and sample access to an unknown $\p \in \mathcal{S}_{n, k}$ outputs a hypothesis distribution $\mathbf{H}$
such that $\E[\dtv(\mathbf{H}, \p)] \le \eps$. Then, $\mathcal{A}$ must use
$\Omega((k/\eps^2) \cdot \sqrt{\log(1/\eps)})$ samples.
\end{theorem}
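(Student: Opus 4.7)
The plan is to follow essentially the same strategy as the proof of Theorem~\ref{thm:sample-lower-bound}, replacing Lemma~\ref{lem:explicit-eps-ball}(ii) with the analogous structural result Proposition~\ref{prop:explicit-eps-ball}, and replacing the log-concavity argument (Lemma~\ref{lem:lc-stddev}) by a direct analysis of the specific base $k$-SIIRV $\p$ constructed in Proposition~\ref{prop:explicit-eps-ball}.

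More concretely, I would first fix $n$ to be a sufficiently small multiple of $\log(1/\eps)$ so that Proposition~\ref{prop:explicit-eps-ball} applies, yielding a base $k$-SIIRV $\p$ with the crucial property that any distribution on $[n(k-1)]$ within total variation distance $\eps$ of $\p$ is itself a $k$-SIIRV. The standard deviation of $\p$ is $\sigma = \Theta(k\sqrt{n}) = \Theta(k\sqrt{\log(1/\eps)})$, since each $X_i$ takes values essentially in $\{0, k-1\}$ with probabilities near $1/2$ each. The second step, which plays the role of Lemma~\ref{lem:lc-stddev}, is to show that $\p$ has mass $\Omega(1/\sigma)$ at $r = \Omega(\sigma) = \Omega(k\sqrt{\log(1/\eps)})$ points. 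For this I would condition on the event $\mathcal{E}$ that exactly one $X_i$ takes a middle value in $\{1, \ldots, k-2\}$; this event has probability $\Theta(1)$. Conditioned on $I = i$ and $J = j$ being the index and value, the remaining sum is a $2$-SIIRV on $n-1$ variables with parameters near $1/2$, hence by log-concavity of PBDs it has mass $\Theta(1/\sqrt{n})$ at $\Omega(\sqrt{n})$ consecutive integers near its mean. Summing over $i$ (with uniform conditional distribution $\Theta(1/n)$) and using $\Pr[J=j|\mathcal{E}, I=i] = 1/(k-2)$, one obtains $\p((k-1)z + j) = \Theta(1/(k\sqrt{n}))$ for each $j \in \{1, \ldots, k-2\}$ and each $z$ in a bulk region of size $\Omega(\sqrt{n})$. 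This produces $r = \Omega(k\sqrt{n})$ positions with mass $\Omega(1/\sigma)$.

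Next I would build the Assouad hypercube: pair up these $r$ positions into $r/2$ blocks $A_1, \ldots, A_{r/2}$ of two points each (with $A_0$ absorbing everything else), and for $\bb \in \{-1,+1\}^{r/2}$ define $\p_\bb$ by adding $+\delta$ to one point and $-\delta$ to the other within each $A_\ell$ according to $\bb_\ell$, with $\delta = c' \eps/r$ for a small constant $c'$. The total perturbation is $\|\p_\bb - \p\|_1 = r \cdot 2\delta = 2c'\eps \le \eps$, so by Proposition~\ref{prop:explicit-eps-ball} each $\p_\bb$ lies in $\mathcal{S}_{n,k}$. For Assouad I would compute $\alpha = 2\delta = \Theta(\eps/r)$ and bound the Hellinger contribution as $\gamma = \Theta(\delta^2/m) = \Theta(\eps^2/(r^2 m))$, where $m = \Omega(1/\sigma)$ is the base mass. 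Substituting $r = \Theta(\sigma)$ gives $\gamma = \Theta(\eps^2/(r^2 \cdot (1/\sigma))) = \Theta(\eps^2 \sigma / r^2) = \Theta(\eps^2/\sigma)$, so Assouad forces $s = \Omega(1/\gamma) = \Omega(\sigma/\eps^2) = \Omega(k\sqrt{\log(1/\eps)}/\eps^2)$, while $r\alpha/4 = \Theta(\eps)$ giving the expected $\dtv$ lower bound.

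The main obstacle will be step two: obtaining the precise pointwise lower bound on $\p$ at $\Omega(k\sqrt{\log(1/\eps)})$ positions. Unlike in the $2$-SIIRV case, $k$-SIIRVs are not log-concave, so Lemma~\ref{lem:lc-stddev} cannot be invoked as a black box. The argument must exploit the specific structure of $\p$ in Proposition~\ref{prop:explicit-eps-ball}, namely that conditioning on the event ``exactly one middle value'' reduces the problem to a scaled PBD to which log-concavity does apply. A minor technical point is that the residual sum $\sum_{h \neq i} X_h$ conditioned on $I = i$ has slightly perturbed parameters depending on $i$, but averaging over $i$ introduces only $O(1)$ factors that do not affect the asymptotics.
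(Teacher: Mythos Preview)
Your proposal is correct and follows essentially the same route as the paper: fix $n=\Theta(\log(1/\eps))$, invoke Proposition~\ref{prop:explicit-eps-ball} to ensure all perturbed distributions remain $k$-SIIRVs, establish a pointwise mass lower bound of $\Omega(1/(k\sqrt{n}))$ at $\Omega(k\sqrt{n})$ points of $\p$, embed an Assouad hypercube by $\pm\delta$ perturbations on paired points, and read off the $\Omega(k\sqrt{\log(1/\eps)}/\eps^2)$ bound.

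The only differences from the paper are cosmetic. The paper's Lemma~\ref{clm:consecutive-example} proves the mass bound on a run of \emph{consecutive} integers by combining two events: $A_0$ (all $X_i\in\{0,k-1\}$), which handles multiples of $k-1$, and the events $B_i$ (exactly $X_i$ takes a middle value), which fill in the remaining residues. You use only $\mathcal{E}=\bigcup_i B_i$, obtaining $\Omega((k-2)\sqrt{n})$ points at residues $1,\ldots,k-2$ modulo $k-1$; since Assouad does not require the paired points to be adjacent, this is already enough, and $(k-2)=\Theta(k)$ for $k\ge 3$. (For $k=2$ your conditioning is vacuous, but that case is exactly Theorem~\ref{thm:sample-lower-bound}.) The paper also uses the asymmetric scheme ``$b_\ell=-1$ means no perturbation'' whereas you use a symmetric $\pm\delta$ scheme; both give $\alpha=\Theta(\eps/r)$ and $\gamma=\Theta(\delta^2/m)$, and the arithmetic is identical. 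Your identification of the ``bulk overlap'' issue for the PBDs $\q_{-i}$ is the right technical point; the paper handles it via unimodality of PBDs (if $\q_{-i}(a)$ and $\q_{-i}(b)$ are both $\Omega(1/\sqrt{n})$ then so is every value in between), which you could invoke as well.
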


In addition to the structural result of the previous subsection, 
we also need to prove an analogue of Lemma~\ref{lem:lc-stddev}, which does not immediately apply,
as $k$-SIIRVs need not be logconcave. In fact, we remark that Lemma~\ref{lem:lc-stddev} does not apply to the $k$-SIIRVs
used in the lower bound construction of Section~\ref{ssec:ksiirv}. So, we need to use a slightly different construction.
\begin{lemma}\label{clm:consecutive-example}
For the $k$-SIIRV $\p$ defined in Proposition \ref{prop:explicit-eps-ball},
there exist $\Omega ((k-1) \sqrt{n})$ consecutive integers
with probability mass under $\p$ at least $\Omega(\frac{1}{(k-1) \sqrt{n}}).$
\end{lemma}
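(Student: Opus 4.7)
The plan is to exploit the structure of $X = \sum_{i=1}^n X_i$: each coordinate lands in $\{0,k-1\}$ with probability $1 - 1/(3n)$ and, with probability $1/(3n)$, takes a uniform value in $\{1,\ldots,k-2\}$. Set $q_i := p_{i,k-1}/(p_{i,0}+p_{i,k-1}) = (1/3 + (n-i)/(3n))/(1 - 1/(3n))$, which lies in $[1/3 + O(1/n),\,2/3 + O(1/n)]$. For each $j \in \{1,\ldots,n\}$, I will consider the event $\mathcal{E}_j := \{X_i \in \{0,k-1\}\ \forall i \neq j\}$, which is independent of $X_j$ and has probability $\prod_{i\neq j}(1-1/(3n)) = \Omega(1)$; conditional on $\mathcal{E}_j$, the random variable $W_j := \sum_{i\neq j}\mathbf{1}[X_i = k-1]$ is a PBD with parameters $(q_i)_{i\neq j}$. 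Analogously, let $\mathcal{E}_0 := \{X_i \in \{0,k-1\}\ \forall i\}$ and $W := \sum_i \mathbf{1}[X_i = k-1]$. All of $W, W_1,\ldots,W_n$ are PBDs with variance $\Omega(n)$ and means within $1$ of the common value $\mu := \sum_i q_i = \Theta(n)$.

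The next step is to produce a single interval $I := [\mu - c\sqrt{n},\,\mu + c\sqrt{n}] \cap \Z$ of $\Omega(\sqrt{n})$ consecutive integers on which $\Pr[W = y]$ and $\Pr[W_j = y]$ are simultaneously $\Omega(1/\sqrt{n})$ for every $j$ and every $y \in I$. This follows from a standard local CLT for PBDs with parameters in a fixed compact subinterval of $(0,1)$, which gives the Gaussian-type profile $\Pr[W_j = m] = \Theta(1/\sqrt{n})$ throughout $[\mu_j - c\sqrt{n},\,\mu_j + c\sqrt{n}]$, combined with $|\mu_j - \mu| \leq 1$ and $|\sigma_j^2 - \sigma^2| \leq 1/4$. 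Given $I$, I will lower bound $\Pr[X = m]$ on the contiguous block
\[
M := \{(k-1)y + r : y \in I,\ r \in \{0,1,\ldots,k-2\}\},
\]
whose cardinality is $(k-1)|I| = \Omega((k-1)\sqrt{n})$. For $r = 0$, the one-line estimate $\Pr[X = (k-1)y] \geq \Pr[\mathcal{E}_0]\,\Pr[W = y \mid \mathcal{E}_0] = \Omega(1/\sqrt{n})$ suffices. For $r \in \{1,\ldots,k-2\}$, the events $\mathcal{E}_j \cap \{X_j = r\}$ are pairwise disjoint across $j$, within each such event $X = (k-1)W_j + r$, and $\Pr[X_j = r] = 1/(3(k-2)n)$, so
\[
\Pr[X = (k-1)y + r] \;\geq\; \sum_{j=1}^n \Pr[X_j = r]\,\Pr[\mathcal{E}_j]\,\Pr[W_j = y \mid \mathcal{E}_j] \;\geq\; n \cdot \tfrac{1}{3(k-2)n} \cdot \Omega(1) \cdot \Omega\!\left(\tfrac{1}{\sqrt{n}}\right) \;=\; \Omega\!\left(\tfrac{1}{(k-1)\sqrt{n}}\right),
\]
which meets the required threshold on all of $M$.

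The main technical obstacle is pinning down the uniform window $I$: Lemma~\ref{lem:lc-stddev} alone only guarantees an $\Omega(\sqrt{n})$-wide plateau \emph{somewhere} near the mode of each individual $W_j$, with no a priori coordination across the $n+1$ relevant PBDs. I will handle this either by citing a local central limit theorem for PBDs with parameters bounded away from $\{0,1\}$, which forces the Gaussian-like profile of each $W_j$ to be centered at $\mu_j$, or by refining the proof of Lemma~\ref{lem:lc-stddev}: first use Chebyshev's inequality to place the mode of $W_j$ within $O(1)$ of $\mu_j$, then apply log-concavity to propagate the $\Omega(1/\sqrt{n})$ point-mass bound symmetrically across the common window $I$.
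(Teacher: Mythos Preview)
Your decomposition matches the paper's exactly: condition on the event that all (resp.\ all but one) of the $X_i$ land in $\{0,k-1\}$, reduce to a PBD $W$ (resp.\ $W_j$), and sum the contributions of the $n$ ``one exceptional coordinate'' events to handle residues $r\in\{1,\ldots,k-2\}$. The only substantive difference is how you synchronize the $\Omega(\sqrt n)$-wide windows across the $n+1$ PBDs. The paper avoids any local CLT or mode-location argument: it first applies Lemma~\ref{lem:lc-stddev} to $W$ alone to get an interval $[a,b]$, then observes that $W = W_j + \mathrm{Ber}(q_j)$, so for each endpoint $h\in\{a,b\}$ either $\q_{-j}(h)$ or $\q_{-j}(h-1)$ is at least $\q(h)/2$; unimodality of the PBD $\q_{-j}$ then forces $\q_{-j}(h)\ge\Omega(1/\sqrt n)$ throughout $[a,b-1]$. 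This is entirely elementary and uses only tools already in the paper.

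Your local-CLT route is valid but heavier. Your alternative ``Chebyshev to place the mode within $O(1)$ of $\mu_j$'' is not right as stated: Chebyshev only confines most of the mass to an $O(\sqrt n)$ window and says nothing about the mode's location to $O(1)$ accuracy. You would need Darroch's rule (the mode of a PBD lies within $1$ of its mean) or the paper's unimodality trick instead.
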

\begin{proof}
We wish to reduce this claim to Lemma \ref{lem:lc-stddev},
which gives that there are universal constants $c>0$ such that for any PBD $\q$
with standard deviation $\sigma,$
there are at least $\Omega(\sigma)$ consecutive integers with probability mass at least  $ c \cdot \frac{1}{1+\sigma}.$

Recall that $\p$ is the $k$-SIIRV given by $X \sim \p$ such that
$
X=\sum_{i=1}^n X_i \;,
$
where $X_i(j)=p_{i,j}$ and for $1\leq i \leq n,$ $1\leq j \leq k-2,$
we have that  $p_{i,j} = 1/(3(k-2)n),$ $p_{i,0} = 1/3+(i-1)/(3n),$ $p_{i,k-1}(k-1) = 1/3+(n-i)/(3n).$
So, we have that $\Pr[X_i = 0 \vee X_i=k-1] = 1-1/3n$ for all $i.$

Let $A_0$ be the event that all $X_i$ are equal to $0$ or $k-1.$
Then, $\Pr[A_0]= (1-1/3n)^n = \Omega(1).$
Let $Y=X/(k-1)$ and $Y_i=X_i/(k-1).$
Conditioned on the event $A_0$, each $Y_i$ is a Bernoulli random variable
and $Y$ is a PBD $\q.$
Note that $\Var[Y \mid A_0] \geq n(1/3 \cdot 2/3)=2n/9=\Omega(n).$
So, by Lemma \ref{lem:lc-stddev}, we have that there are integers $a,b,$
with $a - b =\Omega(\sqrt{n})$ such that $\q(h) \geq \frac{3c}{\sqrt{2}(1+\sqrt{n})},$
for each integer $a \leq i \leq b.$
Since the probability of $A_0$ is $\Omega(1),$
it follows that any integer $h \in [(k-1)a,(k-1)b]$
with $h \equiv 0 \pmod{k-1}$ has $\Pr[X=h] \geq \Omega(\frac{1}{\sqrt{n}}) \geq \Omega(\frac{1}{(k-1) \sqrt{n}}).$

For a given $1 \leq i \leq n$, let $B_i$ be the event that only $X_i$
takes a value between $1$ and $k-2.$
Then, the conditional distribution of  $Y_{-i}=\sum_{j \neq i} Y_i$
under either $A_0$ or $B_i$ is a PBD $\q_{-i},$ which is the same in both cases.
Now, $Y=Y_{-i}+Y_{i}$ and conditional on $A_0,$
$Y_i$ is a Bernoulli for any integer $h,$
so either $\pr[Y_{-i}=h \mid A_0] \geq \pr[Y=h|A_0]/2,$
or $\pr[Y_{-i}=h-1 \mid A_0] \geq \pr[Y=h \mid A_0]/2.$
In particular, $\q(a) \geq \Omega(1/\sqrt{n})$
and $\q(b) \geq \Omega(1/\sqrt{n}),$
so it follows that either $\q_{-i}(a) \geq \Omega(1/\sqrt{n})$ or $\q_{-i}(a-1) \geq \Omega(1/\sqrt{n})$ and either $\q_{-i}(b) \geq \Omega(1/\sqrt{n})$ or $\q_{-i}(b-1) \geq \Omega(1/\sqrt{n}).$
However, as a PBD, $\q_{-j}$ is unimodal,
and it follows that for every integer $a \leq h \leq b-1$, $\q_{-i}(h) \geq \Omega(1/\sqrt{n}).$
Now, consider an integer $ (k-1)a < h < (k-1)b$
with $h \not\equiv 0 \pmod{k-1}$.
We can write $h=q(k-1)+r$ for integers $a \leq q \leq b-1$
and $1 \leq r \leq k-1$. Note that $\pr[X_i=r|B_i]=1/(k-2),$
since we are conditioning on it not taking the values $0$ or $k-1.$
Then $\pr[X=h \mid B_i]=\pr[Y_{-i}=q \mid B_i] \pr[X_i=r \mid B_i] =\Omega(1/\sqrt{n}) \cdot 1/(k-2) = \Omega(1/((k-1)\sqrt{n}).$

For each $1 \leq i \leq n$, $\pr[B_i] = (1-1/3n)^{n-1} \cdot 1/3n = \Omega(1/n)$.
So, consider any integer $ (k-1)a \leq h \leq (k-1)b.$
If $h \not\equiv 0 \pmod{k-1},$
$\pr[X=h] \geq \sum_{i=1}^n \Pr[X=h \wedge B_i] = \sum_{i=1}^n \Pr[X=h | B_i] \pr[A_i] = \sum_{i=1}^n \Omega(1/((k-1)\sqrt{n}) \cdot \Omega(1/n) = \Omega(1/((k-1)\sqrt{n})$. When $h \equiv 0 \pmod{k-1}$, we showed earlier using $A_0$ that $\Omega(\frac{1}{(k-1) \sqrt{n})})$. This holds for $(k-1)(a-b) \geq (k-1)(\Omega(\sqrt{n})-1)=\Omega((k-1)\sqrt{n})$ consecutive integers.
\end{proof}

The proof of Theorem \ref{thm:sample-lower-bound-k-siirv} using Assouad's Lemma 
is now almost identical to that of Theorem \ref{thm:sample-lower-bound}.

\begin{proof}[Proof of Theorem \ref{thm:sample-lower-bound-k-siirv}]
Let $\p$ be the $k$-SIIRV defined in \ref{prop:explicit-eps-ball}.
Let $C$ be a constant large enough that Proposition \ref{prop:explicit-eps-ball} implies that all distributions $\q$ with  $\dtv(\p,\q) \leq 2^{-Cn}$ are $k$-SIIRVs.

By Lemma~\ref{clm:consecutive-example}, there exists some $c>0$ and $r=\Omega((k-1)\sqrt{n})$ consecutive integers,
an integer $m$, $0 \leq m \leq n$, and a real value $t$ with $t \geq c \cdot r$,
such that for all $i$, with $m \leq i \leq m+2r$, we have
$$\p(i) \geq \frac{2}{t} \;.$$
For $n$ sufficiently large, we can assume that $2^{-Cn} \leq c$ and therefore
$\frac{1}{t} \geq \frac{2^{-Cn}}{r}.$

We are now ready to define our ``hypercube'' of $k$-SIIRVs.
For $\bb \in \{ -1, 1 \}^r$, consider the distribution $\p_\bb$ with
$$\p_\bb(i)=\begin{cases} \p_0(i) & \text{if } i < m \text{, } i > m+2r, \text{ or } \bb_{\lfloor \frac{1}{2}(i-m) \rfloor}=-1 \\
					\p_0(i) - \frac{2^{-Cn}}{r} & \text{if } \bb_{\lfloor \frac{1}{2}(i-m) \rfloor}=1 \text{ and } i \text{ is even} \\
					\p_0(i) + \frac{2^{-Cn}}{r} & \text{if } \bb_{\lfloor \frac{1}{2}(i-m) \rfloor}=1 \text{ and } i \text{ is odd} \end{cases}
$$
Note that Proposition \ref{prop:explicit-eps-ball} yields that all these distributions are $k$-SIIRVs
since $$\dk(\p_\bb,\p_0) \leq \dtv(\p_\bb,\p_0) = 2^{-Cn} \;.$$
For $0\le i \le r-1$, the sets $A_{i+1} = \{m+2i, m+2i+1\}$ define the partition of the domain.
We can now apply Assouad's lemma to this instance.

For $\bb \in \{-1,1 \}^r$ we can write
$$\sum_{x \in A_{\ell}} \left|\p_{\mathbf{b}^{(\ell, +)}}(x) - \p_{\mathbf{b}^{(\ell, -)}}(x) \right|  =  \frac{2 \cdot 2^{-Cn}}{r} \;.$$
Similarly,
\begin{eqnarray*}
\sum_{i=0}^n \left( \sqrt{\p_{\mathbf{b}^{(\ell, +)}}(i)}-\sqrt{\p_{\mathbf{b}^{(\ell, -)}}(i)} \right)^2
& = & \sum_{i=m+2\ell, m+2\ell+1} \left( \frac{\p_{\mathbf{b}^{(\ell, +)}}(i)-\p_{\mathbf{b}^{(\ell, -)}}(i)}{\sqrt{\p_{\mathbf{b}^{(\ell, +)}}(i))}+\sqrt{\p_{\mathbf{b}^{(\ell, -)}}(i)}} \right)^2 \\
& = & \sum_{i=m+2\ell, m+2\ell+1} \left( \frac{2^{-Cn} /r}{\sqrt{\p_{\mathbf{b}^{(\ell, +)}}(i))}+\sqrt{\p_{\mathbf{b}^{(\ell, -)}}(i)} } \right)^2 \\
& \geq & \sum_{i=m+2\ell, m+2\ell+1} \left( \frac{2^{-Cn}/r}{2\sqrt{1/t}} \right)^2  \\
& = & \frac{2^{-2Cn} \cdot c}{2r}  \;,
\end{eqnarray*}
where the first inequality uses the fact that
$$\p_\bb(i) \geq \p_0(i) - \frac{2^{-Cn}}{r} \geq \frac{2}{t}-\frac{1}{t} \geq \frac{1}{t},$$ for $m \leq i \leq m+2k.$				

Therefore, the parameters in Assouad's Lemma are
$$\alpha:=\frac{2 \cdot 2^{-Cn}}{r}, \quad \gamma = \frac{2^{-2Cn} \cdot c}{2r}, \quad \textrm{and} \quad  s=\frac{1}{8\gamma}$$
from which we obtain that that there is a $\p_\bb$ with
$$\E \left[ \dtv(\mathbf{H}, \p_\bb) \right] \geq (r\alpha/4) \cdot (1-\sqrt{2s\gamma} )= \frac{2^{-Cn}}{4}. $$
Hence, for $\epsilon=2^{-Cn-2}$, if the number of samples satisfies
$$s \leq \frac{1}{8\gamma}= \frac{r \cdot 2^{2Cn}}{4c}=O( 2^{2Cn}(k-1) \sqrt{n})= O\left((k/\eps^2) \sqrt{\log(1/\eps)}\right),$$
then $\E \left[ \dtv(\mathbf{H}, \p_\ba) \right] \geq \eps,$
completing the proof of the theorem.
\end{proof}

\bibliographystyle{alpha}

\bibliography{allrefs}

\appendix

\section*{Appendix}

\section{Basic Facts from Probability} \label{sec:probtools}

\begin{definition} \label{def:discretizedNormal}
Let $\mu \in \R$, $\sigma \in \R^{\geq 0}$.  We
let $\discnorm{\mu}{\sigma^2}$ denote the \emph{discretized normal}
distribution.  The definition of $Z \sim \discnorm{\mu}{\sigma^2}$
is that we first draw a normal $G \sim \normal(\mu, \sigma^2)$
and then we set $Z = \round{G}$; i.e., $G$ rounded to the nearest
integer.
\end{definition}

\noindent We begin by recalling some basic facts concerning total variation distance,
starting with the ``data processing inequality for total variation distance'':

\begin{proposition} [Data Processing Inequality for Total Variation Distance]
\label{prop:dpdtv}
Let $X,$ $X'$ be two random variables over a domain
$\Omega$. Fix any (possibly randomized) function $F$ on $\Omega$
(which may be viewed as a distribution over deterministic functions on
$\Omega$) and let $F(X)$ be the random variable such that a draw from
$F(X)$ is obtained by drawing independently $x$ from $X$
and $f$ from $F$ and then outputting $f(x)$ (likewise for $F(X')$).
Then we have
$\dtv(F(X), F(X'))  \leq \dtv(X, X').$
\end{proposition}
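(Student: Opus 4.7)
The plan is to first reduce the randomized case to the deterministic case by convexity, and then to prove the deterministic case via the characterization $\dtv(Y,Y') = \max_S \lvert \Pr[Y \in S] - \Pr[Y' \in S]\rvert$.

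First I would reduce to deterministic $F$. Write $F$ as a distribution over deterministic maps $f : \Omega \to \Omega'$, and note that by definition the distribution of $F(X)$ is the mixture $\E_{f \sim F}[\mathrm{law}(f(X))]$, and similarly for $F(X')$. The total variation distance is jointly convex in its arguments: for any distributions $\mu_i, \nu_i$ and weights $\lambda_i \geq 0$ summing to $1$,
\[
\dtv\Bigl(\sum_i \lambda_i \mu_i,\; \sum_i \lambda_i \nu_i\Bigr) \;\leq\; \sum_i \lambda_i \, \dtv(\mu_i, \nu_i),
\]
which follows immediately from the $\tfrac12 \|\cdot\|_1$ representation and the triangle inequality. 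Applying this with the mixing measure $F$ gives
\[
\dtv(F(X), F(X')) \;\leq\; \E_{f \sim F}[\dtv(f(X), f(X'))],
\]
so it suffices to establish the bound when $F = f$ is a single deterministic function.

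Next I would handle the deterministic case. For any measurable $S \subseteq \Omega'$,
\[
\Pr[f(X) \in S] - \Pr[f(X') \in S] \;=\; \Pr[X \in f^{-1}(S)] - \Pr[X' \in f^{-1}(S)] \;\leq\; \dtv(X, X'),
\]
where the last inequality uses the definition $\dtv(X,X') = \max_{T \subseteq \Omega} \lvert\Pr[X \in T] - \Pr[X' \in T]\rvert$ with $T = f^{-1}(S)$. Taking the maximum over $S$ yields $\dtv(f(X), f(X')) \leq \dtv(X, X')$, which combined with the reduction step above completes the proof.

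There is no real obstacle here; the only thing to be slightly careful about is making the reduction to deterministic $f$ rigorous when $F$ is a genuinely randomized function (i.e., ensuring the independent draws of $f$ and $x$ in the definition of $F(X)$ really do realize the mixture distribution). This is immediate from the problem's own definition of $F(X)$ as the result of drawing $x$ and $f$ independently and outputting $f(x)$.
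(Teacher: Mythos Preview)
Your proof is correct and is the standard argument. Note, however, that the paper does not actually give a proof of this proposition: it is merely stated in the appendix as a ``basic fact'' being recalled, with no accompanying argument. So there is nothing to compare against; your sketch would serve perfectly well as the omitted proof.
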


Next we recall the subadditivity of total variation distance
for independent random variables:

\begin{proposition} \label{prop:dtv-subadditive}
Let $A, A', B, B'$ be integer random variables such that $(A, A')$ is independent of $(B, B')$.  Then
$\dtv(A + B, A' + B') \leq \dtv(A, A') + \dtv(B, B')$.
\end{proposition}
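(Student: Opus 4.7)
The plan is to prove this standard subadditivity fact via a coupling argument, which is both short and intuitive. Recall the coupling characterization of total variation distance: for any two integer random variables $X, Y$,
\[
\dtv(X,Y) = \min_{(\widetilde X, \widetilde Y)} \Pr[\widetilde X \neq \widetilde Y],
\]
where the minimum is taken over all joint distributions (couplings) whose marginals are the laws of $X$ and $Y$.

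First, I would invoke this characterization to obtain an optimal coupling $(\widetilde A, \widetilde {A'})$ of $A$ and $A'$ achieving $\Pr[\widetilde A \neq \widetilde{A'}] = \dtv(A, A')$, and independently an optimal coupling $(\widetilde B, \widetilde{B'})$ of $B$ and $B'$ achieving $\Pr[\widetilde B \neq \widetilde{B'}] = \dtv(B, B')$. Taking these two couplings to be independent of each other yields a valid coupling of $(A+B)$ with $(A'+B')$, since the marginals of $\widetilde A + \widetilde B$ and $\widetilde{A'} + \widetilde{B'}$ are the laws of $A+B$ and $A'+B'$ respectively (this uses that $(A,A')$ is independent of $(B,B')$, so that the product coupling is consistent with the marginals).

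Next, note that on the event $\{\widetilde A = \widetilde{A'}\} \cap \{\widetilde B = \widetilde{B'}\}$ we have $\widetilde A + \widetilde B = \widetilde{A'} + \widetilde{B'}$. Thus by a union bound,
\[
\Pr\bigl[\widetilde A + \widetilde B \neq \widetilde{A'} + \widetilde{B'}\bigr] \le \Pr[\widetilde A \neq \widetilde{A'}] + \Pr[\widetilde B \neq \widetilde{B'}] = \dtv(A,A') + \dtv(B,B').
\]
Since the left-hand side is an upper bound on $\dtv(A+B, A'+B')$ by the coupling characterization, the claim follows.

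An alternative (and essentially equivalent) route, using only tools already on the page, would be to first observe that independence implies $\dtv((A,B),(A',B')) \le \dtv(A,A') + \dtv(B,B')$ (by the triangle inequality applied to the intermediate pair $(A,B')$, using independence to see that $\dtv((A,B),(A,B')) = \dtv(B,B')$ and $\dtv((A,B'),(A',B')) = \dtv(A,A')$), and then apply the data processing inequality (Proposition~\ref{prop:dpdtv}) to the deterministic map $F(x,y) = x+y$. There is no real obstacle here; the only subtle point is ensuring that the independence hypothesis is used correctly to justify that the product of two optimal couplings is itself a valid coupling of $A+B$ with $A'+B'$.
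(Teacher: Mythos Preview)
The paper does not actually prove this proposition; it is stated in the appendix as a recalled standard fact about total variation distance, with no argument given. Your coupling proof is correct and is a standard way to establish this subadditivity, and your alternative route via the triangle inequality plus the data processing inequality (Proposition~\ref{prop:dpdtv}) is also valid.
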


We will use the following standard result which bounds
the variation distance between two normal distributions in terms of
their means and variances:
\begin{proposition} \label{prop:normal-dist}
Let $\mu_1,\mu_2 \in \R$ and $0 < \sigma_1 \le \sigma_2$.
Then $\dtv(\normal(\mu_1, \sigma_1^2), \normal(\mu_2, \sigma_2^2))
\leq {1 \over {2}} \left({|\mu_1 - \mu_2| \over \sigma_1} + {\sigma_2^2 -\sigma_1^2 \over \sigma_1^2}\right).$
\end{proposition}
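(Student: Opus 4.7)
The plan is to split the bound via the triangle inequality through the intermediate Gaussian $\normal(\mu_2, \sigma_1^2)$, reducing the problem to two one-parameter perturbations: changing only the mean, and (with the mean fixed) changing only the variance. Concretely, I will show
\[
\dtv(\normal(\mu_1,\sigma_1^2),\normal(\mu_2,\sigma_1^2)) \le \frac{|\mu_1-\mu_2|}{2\sigma_1},\quad
\dtv(\normal(\mu_2,\sigma_1^2),\normal(\mu_2,\sigma_2^2)) \le \frac{\sigma_2^2-\sigma_1^2}{2\sigma_1^2},
\]
which, summed, give the stated bound.

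For the mean-shift bound, by translation invariance I may take $\mu_2 = 0$ and set $d=|\mu_1-\mu_2|$. The densities $\phi_d(x) = (2\pi\sigma_1^2)^{-1/2}e^{-(x-d)^2/(2\sigma_1^2)}$ and $\phi_0(x)$ cross at $x=d/2$, so $\dtv = 2\Phi(d/(2\sigma_1))-1$, where $\Phi$ is the standard normal CDF. Since $\Phi$ has maximum slope $\phi(0)=1/\sqrt{2\pi}$ at the origin, by concavity on $\R_+$ we get $2\Phi(x)-1 \le 2x\phi(0) = x\sqrt{2/\pi} \le x$, hence $\dtv \le d/(2\sigma_1)$ as required.

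For the variance-shift bound, take $\mu_2=0$ and view the density as a curve in $L^1$ parameterized by $t=\sigma^2 \in [\sigma_1^2,\sigma_2^2]$: $f_t(x) = (2\pi t)^{-1/2}e^{-x^2/(2t)}$. Then
\[
\dtv(\normal(0,\sigma_1^2),\normal(0,\sigma_2^2)) \le \tfrac12 \int_{\sigma_1^2}^{\sigma_2^2} \|\partial_t f_t\|_1 \, dt.
\]
Computing $\partial_t f_t(x) = f_t(x)\cdot(x^2-t)/(2t^2)$ gives $\|\partial_t f_t\|_1 = \E_{X\sim\normal(0,t)}[|X^2-t|]/(2t^2) \le \E[X^2+t]/(2t^2) = 1/t$. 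Integrating,
\[
\dtv \le \tfrac12 \int_{\sigma_1^2}^{\sigma_2^2} \frac{dt}{t} = \tfrac12 \ln(\sigma_2^2/\sigma_1^2) \le \frac{\sigma_2^2-\sigma_1^2}{2\sigma_1^2},
\]
using $\ln(1+u)\le u$ (which requires $\sigma_2 \ge \sigma_1$, as assumed).

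The two steps are independent and essentially self-contained; the main thing to be careful about is the direction of the variance inequality (the bound $\ln(1+u)\le u$ only helps when $u\ge 0$, which is exactly why the hypothesis $\sigma_1\le\sigma_2$ is needed), and getting the constant $1/2$ on each piece. Neither step presents a real obstacle — the mean-shift bound is essentially the Mills-ratio/concavity of $\Phi$, and the variance-shift bound is a one-line $L^1$-differentiation estimate.
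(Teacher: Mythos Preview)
Your proof is correct. The paper does not actually prove this proposition: it is stated in the appendix as a ``standard result'' without argument, so there is no paper proof to compare against. Your triangle-inequality decomposition through $\normal(\mu_2,\sigma_1^2)$, followed by the exact mean-shift computation $\dtv = 2\Phi(d/(2\sigma_1))-1 \le d/(2\sigma_1)$ and the $L^1$-curve-length estimate for the variance shift, is a clean and standard way to establish the bound.
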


\section{Lower Bounds on Matching Moments} \label{app:moments-lb}
We start by giving an explicit example of two PBDs over $k+1$
variables that agree exactly on the first $k$ moments and have total variation distance
$2^{-\Omega(k)}.$

\begin{proposition} \label{prop:moments}
Let $\p, \q \in \mathcal{S}_{k+1, 2}$ be PBD's with parameters
$p_i= (1+\cos (\frac{2\pi i}{k+1}))/2$ and $q_i=(1+\cos(\frac{2\pi i+\pi}{k+1}))/2$ respectively, where $1\leq i \leq k+1$.
Then $\p$ and $\q$ agree on their first $k$ moments and have $\dtv(\p,\q)\geq 4^{-k}$.
\end{proposition}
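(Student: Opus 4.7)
The plan is to work with probability generating functions (PGFs), showing that $\widetilde{\p}(z) - \widetilde{\q}(z)$ is a very simple polynomial multiple of $(z-1)^{k+1}$. This single identity will simultaneously give the moment-matching (agreement of $\widetilde{\p}$ and $\widetilde{\q}$ modulo $(z-1)^{k+1}$ is equivalent to agreement of factorial moments of orders $1,\ldots,k$, hence of ordinary moments of the same orders via Stirling numbers) and, by reading off coefficients, the lower bound on total variation distance.

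First I would factor the individual PGF terms in a trigonometric form. Using $p_i = (1+\cos\theta_i)/2$ with $\theta_i = 2\pi i/(k+1)$ (for $\p$) or $\theta_i=(2i+1)\pi/(k+1)$ (for $\q$), and setting $a=(z+1)/2$, $b=(z-1)/2$, one writes
\[
(1-p_i)+p_i z \;=\; a + b\cos\theta_i \;=\; \frac{b(x-r_+)(x-r_-)}{2x},
\]
where $x = e^{i\theta_i}$ and $r_\pm$ are the two roots of $bx^2+2ax+b$, satisfying $r_+r_-=1$ (a direct computation gives $r_\pm = -(s\mp 1)/(s\pm 1)$ with $s=\sqrt{z}$). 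The key observation is that the values of $x$ range over two canonical families of roots of unity: for $\p$, over the $(k+1)$-th roots of $1$, so $\prod_{i=0}^{k}(x-r)=(-1)^{k+1}(r^{k+1}-1)$; and for $\q$, over the $(k+1)$-th roots of $-1$, so $\prod_{i=0}^{k}(x-r)=(-1)^{k+1}(r^{k+1}+1)$. Collecting the products of the $x$'s in the denominators yields $\prod x = (-1)^k$ in the first case and $(-1)^{k+1}$ in the second.

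Putting these together (and writing $S = r_+^{k+1}+r_-^{k+1}$, using $r_+r_-=1$) gives the clean expressions
\[
\widetilde{\p}(z) = \frac{(-1)^k\,b^{k+1}(2-S)}{2^{k+1}}, \qquad \widetilde{\q}(z) = \frac{(-1)^{k+1}\,b^{k+1}(2+S)}{2^{k+1}},
\]
and hence
\[
\widetilde{\p}(z)-\widetilde{\q}(z) \;=\; \frac{4\,(-1)^k\,b^{k+1}}{2^{k+1}} \;=\; \frac{(-1)^k(z-1)^{k+1}}{4^k}.
\]
Divisibility by $(z-1)^{k+1}$ immediately yields that all factorial moments of orders $0,1,\ldots,k$ agree. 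Expanding $(z-1)^{k+1}$ by the binomial theorem gives $|\p(j)-\q(j)| = \binom{k+1}{j}/4^k$ for $0\le j\le k+1$, so
\[
\dtv(\p,\q) = \frac12 \sum_{j=0}^{k+1}\frac{\binom{k+1}{j}}{4^k} = \frac{2^{k+1}}{2\cdot 4^k} = 2^{-k} \;\ge\; 4^{-k}.
\]
The main obstacle is the bookkeeping in the middle step: keeping track of the signs coming from $\prod\omega^i$, from $r_+r_-=1$, and from the two different cyclotomic identities $x^{k+1}\mp 1$; everything has to conspire to cancel the $S$-terms in the difference and leave only the $(z-1)^{k+1}$ piece. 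Once that cancellation is established, the moment-matching and TV bound are essentially free.
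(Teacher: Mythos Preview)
Your argument is correct and complete; the identity
\[
\widetilde{\p}(z)-\widetilde{\q}(z)=\frac{(-1)^k(z-1)^{k+1}}{4^k}
\]
checks out (including all the signs you were worried about), and from it both the moment-matching and the exact value $\dtv(\p,\q)=2^{-k}\ge 4^{-k}$ follow immediately.

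Your route is genuinely different from the paper's. The paper observes that the $p_i$ are the roots of $T_{k+1}(2x-1)-1$ and the $q_i$ are the roots of $T_{k+1}(2x-1)+1$, where $T_{k+1}$ is the Chebyshev polynomial; since these two degree-$(k+1)$ polynomials differ only in their constant term, the first $k$ elementary symmetric functions of the two parameter sets coincide, and hence so do the first $k$ moments (each moment being a symmetric polynomial in the parameters). For the distance bound the paper just compares $\p(k+1)=\prod p_i$ with $\q(k+1)=\prod q_i$, reading off the constant terms to get a gap of exactly $4^{-k}$ at that single point. Your approach bypasses Chebyshev polynomials entirely, instead multiplying out the PGFs via the cyclotomic factorizations $\prod(\omega-r)=(-1)^{k+1}(r^{k+1}\mp 1)$. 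What you gain is an explicit closed form for $\widetilde{\p}-\widetilde{\q}$, which gives the moment statement and the \emph{exact} total variation distance $2^{-k}$ (twice the paper's lower bound) in one stroke; the paper's argument is shorter and more conceptual but only extracts the pointwise difference at $j=k+1$.
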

\begin{proof}
Let $X = \sum_{i=1}^{k+1} X_i$, where  $X_i$ are independent Bernoulli variables, and suppose that $X \sim \p$.
We note that, for $m\leq k$, the random variable $X^m$ can be expressed as a degree $m$ polynomial in the $X_i$'s.
Therefore, the $m$-th moment of $\p$ is a degree $m$ symmetric polynomial of the $p_i$'s.
Similarly, the $m$-th moment of $\q$ must be the same symmetric polynomial of the $q_i$. Therefore, to show that the first $k$ moments of $\p$ and $\q$ agree, it suffices to show that the first $k$ elementary symmetric polynomials in the $p_i$ have the same values as the corresponding polynomials of the $q_i$'s.

Note that the $p_i$ are the roots of $T_{k+1}(2x-1)-1$ and that the $q_i$ are the roots of $T_{k+1}(2x-1)+1$, where $T_{k+1}$ is the $(k+1)$-st Chebychev polynomial.
Therefore, for $m \leq k$, the $m$-th elementary symmetric polynomial in the $p_i$ is $[x^{k+1-m}] (-1)^m2^{-2k-1}T_{k+1}(2x+1)$
and the same holds for the $q_i$. Thus, the first $k$ moments of $\p$ and $\q$ agree.
To bound the total variation distance from below we observe that
$$\prod_{i=1}^{k+1} p_i = \p(k+1) = [x^0] (-1)^{k+1}2^{-2k-1}(T_{k+1}(2x+1)-1),$$
and $$\prod_{i=1}^{k+1} q_i = \q(k+1) = [x^0] (-1)^{k+1}2^{-2k-1}(T_{k+1}(2x+1)+1).$$
Therefore, the probability that $\p=k+1$ and the probability that $\q=k+1$ differ by $4^{-k}$.
This implies the appropriate bound in their variational distance and completes the proof.
\end{proof}

\noindent We also show that matching moments does not suffice for the case of $k$-SIIRVs, even for $k=3$:

\begin{proposition}  \label{prop:moments-3}
For $n$ an even integer,
there exist $\p,\q \in \mathcal{S}_{n/2,3}$ with disjoint supports such that their first $n-1$ moments agree.
\end{proposition}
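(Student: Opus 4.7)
My plan is to construct $\p$ and $\q$ via their probability generating functions: set
\[
\tilde{\p}(z) \;=\; 2^{-n}\bigl[(1+z)^n + (1-z)^n\bigr]
\quad\text{and}\quad
\tilde{\q}(z) \;=\; 2^{-n}\bigl[(1+z)^n - (1-z)^n\bigr].
\]
Because $n$ is even, the odd powers of $z$ cancel in $\tilde{\p}$ and the even powers cancel in $\tilde{\q}$, so $\p$ is supported on the even integers in $\{0,1,\dots,n\}$ and $\q$ on the odd ones; hence the supports are automatically disjoint. For the moment matching, note that $\tilde{\p}(z) - \tilde{\q}(z) = 2^{1-n}(1-z)^n$, which vanishes to order exactly $n$ at $z=1$. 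Since the $j$-th factorial moment of a distribution $\mu$ equals $\tilde{\mu}^{(j)}(1)$, and factorial moments of order $\le m$ determine ordinary moments of order $\le m$ (and vice versa), this immediately gives agreement of the first $n-1$ moments.

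The substantive step is verifying $\p,\q \in \mathcal{S}_{n/2,3}$: each PGF must factor as a product of $n/2$ polynomials of the form $a+bz+cz^2$ with $a,b,c \ge 0$ and $a+b+c=1$. I would solve $((1+z)/(1-z))^n = \mp 1$ via the substitution $z = i\tan(\theta/2)$ to locate the roots: those of $(1+z)^n + (1-z)^n$ are $i\tan(\pi(2k+1)/(2n))$ for $k = 0,\dots,n-1$, while those of $(1+z)^n - (1-z)^n$ are $z=0$ together with $i\tan(\pi k/n)$ for $k = 1,\dots,n-1$. Pairing each purely imaginary root with its conjugate $-z$ produces $n/2$ quadratic factors of the form $z^2 + \tan^2(\alpha)$ in the first case, and $n/2 - 1$ such quadratic factors together with a linear factor $z$ in the second. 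Each quadratic rewrites as $\sec^2(\alpha)\bigl[\sin^2(\alpha) + \cos^2(\alpha)z^2\bigr]$, and the bracketed expression is the PGF of a $3$-IRV supported on $\{0,2\}$ with parameters $\sin^2(\alpha)$ and $\cos^2(\alpha)$; the factor $z$ is the PGF of the degenerate $3$-IRV identically equal to $1$.

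The main obstacle is then to check that, after stripping off the normalizing $\sec^2(\alpha)$ factors, the accumulated constants cancel against the overall prefactor $2^{-n}$ and, in the $\q$ case, against the $2n$ coming from the leading coefficient of $(1+z)^n - (1-z)^n$. This reduces to two classical identities: $\prod_{k=0}^{n/2-1}\cos^2(\pi(2k+1)/(2n)) = 2^{1-n}$, which follows by evaluating the product expansion $T_n(x) = 2^{n-1}\prod_{k=0}^{n-1}(x - \cos(\pi(2k+1)/(2n)))$ at $x=0$ using $T_n(0) = \cos(n\pi/2) = (-1)^{n/2}$ and then folding conjugate pairs; and $\prod_{k=1}^{n/2-1}\cos^2(\pi k/n) = n\cdot 2^{1-n}$, which follows from the classical formula $\prod_{k=1}^{n-1}\sin(k\pi/n) = n/2^{n-1}$ combined with the symmetry $\sin(k\pi/n) = \cos((n-2k)\pi/(2n))$ and the observation that for even $n$ the term $k = n/2$ contributes $\sin(\pi/2) = 1$. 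Once these identities are in hand, both factorizations yield bona fide PGFs of $3$-SIIRVs of order $n/2$, and the proposition follows.
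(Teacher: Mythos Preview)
Your proof is correct and uses the same construction and overall approach as the paper: identical PGFs, identical moment-matching via $\widetilde{\p}-\widetilde{\q}=2^{1-n}(1-z)^n$, and the same factorization through pure imaginary roots. The only difference is that you verify the normalization constants via explicit trigonometric product identities, which the paper sidesteps entirely: it observes that $\widetilde{\p}(z)=0$ forces $|1+z|=|1-z|$, hence $\Re z=0$, without ever solving for the roots; and once the roots are $\pm it_j$, the normalized quadratics $(z^2+t_j^2)/(1+t_j^2)$ are PGFs of $3$-IRVs whose product has the same roots as $\widetilde{\p}$ and the same value $1$ at $z=1$, so it \emph{must} equal $\widetilde{\p}$ --- the constants balance automatically and no Chebyshev or sine-product identities are needed. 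Your explicit treatment of the linear factor $z$ in $\widetilde{\q}$ is, if anything, more careful than the paper's.
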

\begin{proof}
We first show that there exist such $\p$ and $\q$ with $\p$ supported on even numbers and $\q$ supported on odd numbers,
so that
$$
\p(2j) = 2^{-n+1}\binom{n}{2j},
$$
and
$$
\q(2j+1) = 2^{-n+1}\binom{n}{2j+1}.
$$
We begin by showing that $\p\in \mathcal{S}_{n/2,3}$. Since $\sum_j 2^{-n+1}\binom{n}{2j} = 1$,
we will show that the polynomial $\widetilde{\p}(z) = \sum_j 2^{-n+1}\binom{n}{2j}z^{2j}$ factors as a product of $n/2$ quadratic polynomials with non-negative coefficients.
To prove this, we note that it suffices to show that all roots of $\widetilde{\p}$ are pure imaginary; then,
the natural factorization into quadratics using complex conjugate pairs will complete the argument.
For this, we observe that $\widetilde{\p}(z)=2^{-n}((1+z)^n+(1-z)^n)$. Therefore, $z$ is a root of $\widetilde{\p}$
only when $|1+z|=|1-z|$, or when $z$ is equidistant from $1$ and $-1$,
which happens only when the real part of $z$ is $0$, i.e., when $z$ is pure imaginary.

Similarly, we show that $\q\in \mathcal{S}_{n/2,3}$. Once again $\sum_j 2^{-n+1}\binom{n}{2j+1} = 1$, and so we merely need to show that
$\widetilde{\q}(z) = \sum_j 2^{-n+1}\binom{n}{2j+1}z^{2j+1}$ factors into quadratics with non-negative coefficients.
Since $\widetilde{\q}(z)=2^{-n}((1+z)^n-(1-z)^n)$, it also has only purely imaginary roots.

It remains to show that $\p$ and $\q$ have identical first $n-1$ moments.
For this, it suffices to show that $\widetilde{\p}(z)^{(k)}(1)=\widetilde{\q}(z)^{(k)}(1)$ for all $0\leq k <n$. Indeed,
we have that
$$
\widetilde{\p}(z)^{(k)}(1)-\widetilde{\q}(z)^{(k)}(1) = 2^{1-n}\frac{\partial^k}{\partial z^k}(1-z)^n|_{z=1} =
\frac{2^{1-n}(1-z)^{n-k}n!}{(n-k)!}|_{z=1} = 0.
$$
This completes the proof.
\end{proof}

\section{Omitted Proofs from Section~\ref{sec:learn}} \label{app:learn}

\subsection{Bootstrapping Our Sampler} \label{app:sampler}

The running time of the sampler described in Section~\ref{ssec:sampler}
has an $O(\log n)$ dependence. In this subsection, we show that the dependence on $n$
can be easily removed, by dealing separately with the case that the variance is $\Omega(\poly(k/\eps)).$
In particular, we have the following algorithm, 
which is similar to the {\tt Learn-Heavy} routine from \cite{DDOST13focs}.

\begin{lemma} \label{lem:learn-heavy}
There is an algorithm with the following performance guarantee: 
For any $\eps > 0$ and $X\in \Scal_{n,k}$ with  $\Var[X]=\Omega(\poly(k/\eps))$, 
the algorithm draws $O(k/\eps^2)$ samples from $X,$ 
runs in $\widetilde{O}(k^2/\eps^2)$ time, and with high constant probability outputs a distribution $cZ+Y,$ 
where $ 1 \leq c \leq k,$ $Z$ is a discrete Gaussian, and $Y$ is a $c$-IRV, 
with $\dtv(X, cZ+Y) \leq \eps.$ 
\end{lemma}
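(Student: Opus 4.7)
The plan is to leverage the structural decomposition of Theorem~\ref{thm:reg}: since $\Var[X]$ exceeds the Case~1 threshold, $X$ is $\eps$-close to some target $c^\star Z^\star+Y^\star$ with $c^\star\in\{1,\ldots,k-1\}$, $Z^\star\sim \discnorm{\mu/c^\star}{\sigma^2/(c^\star)^2}$, and $Y^\star$ a $c^\star$-IRV. Since $c^\star$ is unknown, I would build one candidate hypothesis per $c\in\{1,\ldots,k\}$, and then invoke a Scheff\'e-type hypothesis selection tournament~\cite{DL:01} to pick the closest candidate to $X.$

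Concretely, draw a single pool $S$ of $N=O(k/\eps^2)$ samples from $X.$ Using $O(1/\eps^2)$ of them, compute empirical estimates $\widehat\mu,\widehat\sigma^2$ satisfying $|\widehat\mu-\mu|\leq \eps\sigma$ and $|\widehat\sigma^2-\sigma^2|\leq \eps\sigma^2$ with high constant probability, via standard concentration bounds as in the analysis of Theorem~\ref{PBDThm}. For each $c\in\{1,\ldots,k\}$, let $\widehat Y_c$ be the empirical distribution of $\{S_i\bmod c\}_{i=1}^N$ on $\{0,\ldots,c-1\}$, and set the candidate hypothesis $H_c := c\widehat Z_c + \widehat Y_c$, where $\widehat Z_c\sim \discnorm{\widehat\mu/c}{\widehat\sigma^2/c^2}.$ Each $H_c$ admits a constant-size description (the $c$-atom table for $\widehat Y_c$ plus two real parameters) and hence a closed-form evaluation oracle in $O(k\cdot\polylog(1/\eps))$ time. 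To see that $H_{c^\star}$ is close to $X,$ apply the data-processing inequality (Proposition~\ref{prop:dpdtv}) to the map $x\mapsto x\bmod c^\star$: this yields $\dtv(X\bmod c^\star,Y^\star)\leq \dtv(X,c^\star Z^\star+Y^\star)\leq \eps$, since $Y^\star$ is supported on $\{0,\ldots,c^\star-1\}$ and so $Y^\star\bmod c^\star=Y^\star.$ Because $Y^\star$ has support of size at most $k,$ the empirical estimate $\widehat Y_{c^\star}$ from $N=O(k/\eps^2)$ samples is within $O(\eps)$ of $Y^\star$ in $\dtv$ with high probability. Combining this with Proposition~\ref{prop:normal-dist} (to bound $\dtv(\widehat Z_{c^\star},Z^\star)$ from the parameter estimates), sub-additivity of total variation (Proposition~\ref{prop:dtv-subadditive}), and the triangle inequality gives $\dtv(H_{c^\star},X)\leq O(\eps).$

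Given the pool $\{H_c\}_{c=1}^{k}$, run a Scheff\'e tournament using $\widetilde O(1/\eps^2)$ additional samples from $X$; the tournament's standard guarantee is that its output is within a constant factor of $\min_c \dtv(H_c,X)$, which is $O(\eps)$ because $H_{c^\star}$ is in the pool. Rescaling $\eps$ by a constant yields the bound. The main obstacle, beyond routine concentration bookkeeping, is meeting the $\widetilde O(k^2/\eps^2)$ time budget: naively, each of the $\binom{k}{2}$ pairwise Scheff\'e comparisons requires evaluating $|\Pr[H_c\in A]-\Pr[H_{c'}\in A]|$ on the explicit ``disagreement'' set $A=\{x : H_c(x)>H_{c'}(x)\}$, and computing $\Pr[H_c\in A]$ exactly is expensive because $A$ may be highly non-monotone. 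The fix is to estimate these probabilities by drawing $\widetilde O(1/\eps^2)$ samples from the hypotheses themselves (using that both $H_c$ and $H_{c'}$ are easy to sample from as $c\widehat Z_c+\widehat Y_c$), so each comparison reduces to $\widetilde O(1/\eps^2)$ evaluations of the hypothesis PMFs; since each evaluation costs $\widetilde O(1)$ time (discretized-Gaussian PMF plus $O(1)$ table lookup for $\widehat Y_c$), the total cost is $\widetilde O(k^2/\eps^2)$ as required.
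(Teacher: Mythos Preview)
Your proof is correct and follows essentially the same approach as the paper: guess $c$, estimate $Y$ via the empirical distribution modulo $c$ and $Z$ via empirical mean and variance, then run a Scheff\'e tournament over the $k$ candidates. Your treatment is in fact slightly more detailed than the paper's in two places---you justify $\dtv(X\bmod c^\star, Y^\star)\le\eps$ via the data-processing inequality (the paper simply takes $Y':=X\bmod c'$ and asserts the analogous bound), and you spell out how to keep the tournament within the $\widetilde O(k^2/\eps^2)$ time budget via sampling from the hypotheses, whereas the paper only asserts this runtime.
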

\begin{proof}

By Theorem \ref{thm:reg}, there is a $1 \leq c' \leq k$ such that the discrete Gaussian $Z'$ with parameters $\E[X]/c'$ and $\Var[X]/c'^2$ and the $c'$-IIRV $Y':=X \pmod{c'}$ satisfy $\dtv(X,c'Z'+Y') \leq \eps$. 

We start by guessing $c.$ 
For each guess for $c,$ 
we learn the appropriate $Y$ and $Z.$ 
Finally, we run a tournament over the possible values of $c.$ 
Fix $1 \leq c \leq k.$ 
To learn $Y$, we first draw $\Theta(c/\eps^2)$ samples and let $X'$ be the resulting empirical distribution. 
Then, we take $Y= X' \pmod{c}.$ 
To learn $Z,$ we take $\Theta(1/\eps^2)$ samples from $X$ and calculate the empirical mean and variance, 
$\widetilde{\mu}$ and $\widetilde{\sigma}^2.$ 
Then, we let $Z$ be the distribution obtained by sampling from $\mathcal{N}(\widetilde{\mu}/c, \widetilde{\sigma}^2/c^2)$ 
and rounding the sample to the nearest integer. 

Suppose that $c=c'.$ 
By standard facts, we have $\dtv(Y,Y')=\dtv(X' \pmod{c},X \pmod{c}) \leq \eps/4$ with high probability. 
Also, with high probability, we have  
$(1-\eps/4)\widetilde{\sigma}^2 \leq \Var[X] \leq (1+\eps/4)\widetilde{\sigma}^2$ and $|\E[X]-\widetilde{\mu}| \leq \widetilde{\sigma}\eps/4$.
By a combination of Propositions~\ref{prop:dpdtv} and~\ref{prop:normal-dist}, we have that 
$\dtv(Z,Z') \leq \frac{1}{2}\left( \frac{|\E[Z]-\E[Z']|}{\sqrt{\var[Z]}} + \frac{|\var[Z]-\var[Z']|}{\var[Z]}\right) \leq \eps/4.$
Thus, we have $\dtv(Y+cZ,Y'+cZ') \leq \dtv(Y,Y') + \dtv(Z,Z') \leq \eps/2,$ 
and therefore $\dtv(X,Y+cZ) \leq \dtv(X,Y'+cZ')+ \dtv(Y+cZ,Y'+cZ') \leq \eps.$

In summary, we have $k$ different hypothesis distributions $Y_c+cZ_c,$ for each $1 \leq c \leq k,$ 
one of which is promised to satisfy $\dtv(X,Y_c+cZ_c) \leq \eps$.
We can now run a standard tournament procedure~\cite{DL:01, DDS15-journal} 
that produces a hypothesis with $\dtv(X,Y_c+cZ_c) \leq O(\eps)$ with high probability.
This requires $O(\log k/\eps^2)$ samples and can be easily done in $\widetilde{O}(k^2 /\eps^2)$ time.
\end{proof}

We thus obtain the following corollary:

\begin{corollary} For all $n, k \in \Z_+$ and $\eps>0,$
there is an algorithm with the following
performance guarantee: Let $X \in \Scal_{n,k}$ be an unknown $k$-SIIRV.
The algorithm uses $O(k\log^{2}(k/\eps)/\eps^2)$
samples from $\p,$  runs in time $\widetilde{O}(k^3/\eps^2),$
and with probability at least $9/10$ outputs an $\eps$-sampler for $X.$
This $\eps$-sampler produces a single sample in time $\widetilde{O}(k).$
\end{corollary}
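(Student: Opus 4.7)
The plan is to remove the $O(\log n)$ factor from the per-sample cost of Corollary~\ref{cor:pmd-sampler} by case-splitting on the estimated standard deviation of $X$. As in the beginning of algorithm {\tt Learn-SIIRV}, we first use $O(1)$ samples to compute, with confidence probability $19/20$, an estimate $\widetilde{\sigma}$ within a factor of $2$ of $\sqrt{\Var[X]+1}$ (and an estimate $\widetilde{\mu}$ of the mean within one standard deviation). Fix a threshold $T = \poly(k/\eps)$ large enough that simultaneously (a) any $M \leq O(T \sqrt{\log(1/\eps)})$ is itself bounded by a polynomial in $k/\eps$, and (b) if $\widetilde{\sigma} > T$ then $\Var[X]$ exceeds the ``sufficiently large polynomial'' variance hypothesis of Lemma~\ref{lem:learn-heavy}.

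In the \emph{small-variance branch} $\widetilde{\sigma} \leq T$, we run {\tt Learn-SIIRV} exactly as in Theorem~\ref{thm:FT-alg} and feed the resulting DFT representation $\widehat{\h}$ into the sampler of Theorem~\ref{thm:sampler}. The associated modulus is $M = O(\widetilde{\sigma}\sqrt{\log(1/\eps)}) = \poly(k/\eps)$ and, as in the proof of Corollary~\ref{cor:pmd-sampler}, the effective support satisfies $|S| = O(k\log(k/\eps))$. Theorem~\ref{thm:sampler} therefore yields an $\eps$-sampler with per-sample cost $O(\log M \cdot \log(M/\eps) \cdot |S|) = \widetilde{O}(k)$, with no dependence on $n$.

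In the \emph{large-variance branch} $\widetilde{\sigma} > T$, the hypothesis of Lemma~\ref{lem:learn-heavy} is satisfied, so we invoke that algorithm with $O(k/\eps^2)$ samples and obtain an explicit description of a hypothesis of the form $cZ+Y$ with $\dtv(X,cZ+Y) \leq \eps$, where $1 \leq c \leq k$, $Z \sim \discnorm{\widetilde{\mu}/c}{\widetilde{\sigma}^2/c^2}$, and $Y$ is a $c$-IRV stored by its pmf. To sample from $cZ+Y$, we first sample a continuous Gaussian from $\normal(\widetilde{\mu}/c,\widetilde{\sigma}^2/c^2)$ to $\polylog(k/\eps)$ bits of precision and round (cost $\polylog(k/\eps)$), multiply by $c$, and add an independent draw from $Y$ obtained by binary-searching a precomputed CDF table of size $c$ (cost $O(\log k)$ per sample after a one-time $O(k)$ preprocessing). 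The per-sample time is therefore $\widetilde{O}(k)$, and the discretization of the Gaussian contributes only $\poly(\eps/k)$ additional variation distance.

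Combining the two branches, both the sample complexity $O(k\log^2(k/\eps)/\eps^2)$ and the learning-phase runtime $\widetilde{O}(k^3/\eps^2)$ match the corresponding bounds in Corollary~\ref{cor:pmd-sampler}, and the per-sample cost is $\widetilde{O}(k)$ in either branch. The main obstacle I anticipate is verifying that the threshold $T$ can be chosen to simultaneously satisfy (a) and (b) and that the two branches stitch together into a single algorithm whose overall failure probability remains below $1/10$. If needed, this can be resolved by slightly widening $T$ to create an overlapping regime where both branches succeed individually, and running a standard tournament of hypotheses as in~\cite{DDS15-journal} using $O(k\log k /\eps^2)$ additional samples to choose between the two candidate hypotheses; since both algorithms produce succinct descriptions whose pmfs admit efficient $\eps$-evaluation oracles, this tournament preserves all the claimed bounds.
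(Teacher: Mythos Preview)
Your proposal is correct and follows essentially the same two-branch strategy as the paper: estimate the variance, in the small-variance case run {\tt Learn-SIIRV} and invoke Theorem~\ref{thm:sampler} with $M=\poly(k/\eps)$, and in the large-variance case invoke Lemma~\ref{lem:learn-heavy} and sample directly from $cZ+Y$. The only difference is that the paper does not bother with the tournament fallback you describe: because $\widetilde{\sigma}$ is a factor-$2$ approximation, a single threshold $T$ already guarantees that the relevant branch's hypothesis holds (if $\widetilde{\sigma}\le T$ then the true $\sigma\le 2T$ so $M=\poly(k/\eps)$, and if $\widetilde{\sigma}>T$ then the true $\sigma>T/2$ which can be taken above the variance threshold of Lemma~\ref{lem:learn-heavy}), so no overlap handling or hypothesis selection is needed.
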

\begin{proof}
First we take $O(1)$ samples and estimate the variance of $X.$ 
If the variance is $\Omega(\poly(k/\eps))$, we use the algorithm given by Lemma \ref{lem:learn-heavy} to output a distribution
 $cZ+Y,$ where $ 1 \leq c \leq k$, $Z$ is a discrete Gaussian and $Y$ is a $c$-IRV, with $\dtv(X, cZ+Y) \leq \eps.$
 Note that $cZ+Y$ can be sampled in time $O(k).$
 
If the variance is $O(\poly(k/\eps))$, we use {\tt Learn-SIIRV}. 
This produces a distribution $\h$ given by its DFT modulo $M=O(\poly(k/\eps))$ 
at $O(k \log (k/\eps))$ points. 
By Theorem \ref{thm:sampler}, we can compute an $\eps$-sampler 
which produces a single sample in time 
$$O(\log(M) \log(M/\epsilon) \cdot |S|)=O(\log^2(k/\eps) \cdot k \log(k/\eps)).$$
\end{proof}

\subsection{A Bound on the $1/2$-norm of $k$-SIIRVs} \label{app:norm}

 \begin{lemma} \label{lem:half-norm}
 The $1/2$-norm of a $k$-SIIRV $\p$ with variance $\sigma^2$ is $O(\sigma+k).$ 
 \end{lemma}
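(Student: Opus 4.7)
My plan is to bound $\|\p\|_{1/2}=\bigl(\sum_{i\in\Z}\sqrt{\p(i)}\bigr)^{2}$ by using the fact that the mass of a $k$-SIIRV with variance $\sigma^{2}$ decays exponentially outside a window of width $O(\sigma+k)$ around its mean, which follows from Bernstein's inequality applied to the sum of independent, bounded $k$-IRVs. Concretely, let $\mu=\E_{X\sim\p}[X]$ and $B:=\lceil\sigma+k\rceil$, and partition $\Z$ into the consecutive length-$B$ intervals $I_j:=\{i\in\Z: \mu+jB\le i<\mu+(j+1)B\}$ for $j\in\Z$. Writing $p_j:=\Pr[X\in I_j]$, Cauchy–Schwarz on each block gives $\sum_{i\in I_j}\sqrt{\p(i)}\le\sqrt{|I_j|\,p_j}=\sqrt{B\,p_j}$, and summing yields
\[
\sum_{i\in\Z}\sqrt{\p(i)}\;\le\;\sqrt{B}\sum_{j\in\Z}\sqrt{p_j}.
\]
So it suffices to prove $\sum_j\sqrt{p_j}=O(1)$; squaring will then give the claimed $O(\sigma+k)$ bound.

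The key estimate is the exponential decay of $p_j$ in $|j|$. Since $X=\sum_{i=1}^n X_i$ with each $X_i$ supported in $\{0,\dots,k-1\}$, hence $|X_i-\E X_i|\le k-1$, Bernstein's inequality gives
\[
\Pr\bigl[|X-\mu|\ge t\bigr]\;\le\;2\exp\!\Bigl(-\tfrac{t^{2}/2}{\sigma^{2}+(k-1)t/3}\Bigr).
\]
For $|j|\ge 2$ I apply this with $t=(|j|-1)B$ and use $B\ge\sigma$ and $B\ge k$, so the denominator in the exponent is at most $\sigma^{2}+(|j|-1)kB/3\le B^{2}+(|j|-1)B^{2}/3=O(|j|\,B^{2})$, whereas the numerator is $(|j|-1)^{2}B^{2}/2$. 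The exponent is therefore $\Omega(|j|)$, so $p_j\le C\exp(-c|j|)$; for $|j|\le 1$ the trivial bound $p_j\le 1$ suffices. Consequently $\sum_j\sqrt{p_j}\le 3+\sum_{|j|\ge 2}\sqrt{C}\,e^{-c|j|/2}=O(1)$, which plugged into the display above finishes the proof.

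I do not anticipate a real obstacle here: the only care is in checking that both regimes of Bernstein (variance-dominated and range-dominated) yield an exponent linear in $|j|$ once the block width is chosen to be $\sigma+k$, and this is exactly why the $+k$ term appears in the final bound. Choosing the block width smaller than $k$ would fail in the range-dominated tail, while choosing it smaller than $\sigma$ would fail in the Gaussian tail, so $B=\Theta(\sigma+k)$ is the natural and essentially tight scale.
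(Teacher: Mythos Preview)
Your proof is correct and follows essentially the same approach as the paper: partition the support into intervals of width $\Theta(\sigma+k)$ around the mean, apply Cauchy--Schwarz on each block, and use Bernstein's inequality to control the tail probabilities so that the resulting sum of square roots converges. The only cosmetic difference is that the paper uses dyadic annuli $(\sigma+k)2^m<|\mu-i|\le(\sigma+k)2^{m+1}$ rather than your equal-width blocks, but the ingredients and logic are identical.
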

\begin{proof}
Recall that $\|\p\|_{1/2} = (\sum_i \sqrt{\p(i)})^2.$ 
Let $\mu$ be the mean of $X \sim \p.$ 
By Cauchy-Schwartz, for any $S \subseteq [kn]$, we have $\sum_{i \in S} \sqrt{\p(i)} \leq \sqrt{ \p(S) \cdot |S|}$.
By Bernstein's inequality, for any $\eps>0$, it holds $\Pr[|X-\mu| >  (k+\sigma) \log(1/\eps)] \leq \eps$. Therefore, we can write
\begin{align*}
\sum_i \sqrt{\p(i)} = &  \sum_{|\mu-i| \leq \sigma + k} \sqrt{\p(i)} + \sum_{m=0}^\infty \sum_{(\sigma + k) 2^m < |\mu-i| \leq 2^{m+1} (\sigma + k)} \sqrt{\p(i)} \\
\leq & \sqrt{\sigma + k} + \sum_{m=0}^\infty 2 \sqrt{\sigma+k} \cdot 2^{m/2} \sqrt{\Pr[|X-\mu| > (\sigma + k) 2^m]} \\
\leq & \sqrt{\sigma + k} + \sum_{m=0}^\infty 2 \sqrt{\sigma+k} \cdot 2^{m/2 - 2^{m/2}} 
=  O(\sqrt{\sigma + k}) \;.
\end{align*}
\end{proof}

\section{Omitted Proofs from Section~\ref{sec:cover-ub}} \label{app:upper}

\subsection{Proof of Lemma~\ref{lem:simple}.} \label{app:simple}
For convenience, we restate Lemma~\ref{lem:simple}:

\medskip

\noindent {\bf Lemma~\ref{lem:simple}.}
{\em Let $\p \in {\cal S}_{n,k}$ be a $k$-SIIRV with $\Var[X]=V$. For any $0< \delta < 1/4$,
there exists $\q\in {\cal S}_{n,k}$ with $\dtv(\p,\q) = O(\delta V)$
such  that all but $O(k+V/\delta)$ of the $k$-IRV's defining $\q$ are constant.
}

\smallskip

\begin{proof}
For a $k$-IRV $A$ let $m(A)$ be an index $i$
so that $\pr[A=i]$ is maximized. Let $d(A) = \pr[A \neq m(A)]$ be the probability $A$ assigns to values in $[k] \setminus \{i\}$.
Suppose that $d(A) \le 1/2.$ Then we have that
$$
 d(A)/2 \leq  (1/2) \cdot \pr(A\neq A')  \leq  (1/2) \cdot \E[|A-A'|^2]  =   \var[A] \leq \E[|A-m(A)|^2] \leq k^2 \cdot d(A),
$$
where $A'$ is an independent copy of $A$. The leftmost inequality follows from our assumption that $d(A) \le 1/2.$
The proof of the lemma will make repeated applications of the following claim:
\begin{claim} \label{claim:iter}
Let $A,B$ be independent $k$-IRV's with $m(A)=m(B)$ and $d(A)+d(B)\leq 1/2$.
Then there exist independent $k$-IRV's $C$ and $D$, where $D$ is a constant, $d(C)= d(A)+d(B)$, and
$\dtv(A+B,C+D)=O(d(A)d(B))$.
\end{claim}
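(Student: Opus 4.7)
Let $m := m(A) = m(B)$. The plan is to take $D \equiv m$ as the required constant $k$-IRV, and to define the $k$-IRV $C$ by $\Pr[C = m] := 1 - d(A) - d(B)$ and $\Pr[C = j] := \Pr[A = j] + \Pr[B = j]$ for $j \in \{0, \ldots, k-1\} \setminus \{m\}$. These probabilities are non-negative and sum to $1$. The hypothesis $d(A) + d(B) \le 1/2$ gives $\Pr[C = m] \ge 1/2 \ge d(A) + d(B) \ge \Pr[C = j]$ for every $j \ne m$ (using $\Pr[A = j] \le d(A)$ and symmetrically for $B$), so $m$ is a mode of $C$. In particular $d(C) = 1 - \Pr[C = m] = d(A) + d(B)$, as required.

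\textbf{Coupling and variation bound.} I would bound $\dtv(A + B, C + D)$ by exhibiting an explicit coupling of $(A, B)$ with $C$ (treating $D = m$ as deterministic). Decompose the product law of $(A, B)$ into three cases. (i) If exactly one of $A, B$ equals $m$ and the other equals some $j \ne m$, set $C := j$; then $A + B = m + j = C + D$, a perfect match. (ii) If $A = m$ and $B = m$, declare a perfect match by setting $C := m$ with conditional probability $\frac{1 - d(A) - d(B)}{(1 - d(A))(1 - d(B))}$, and otherwise treat the event as an ``error'' and sample $C$ from a residual distribution. (iii) If both $A \ne m$ and $B \ne m$, treat the event as an error and sample $C$ from the same residual distribution, independently of $(A,B)$. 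A short computation shows that the residual distribution $\Pr[C = j \mid \text{error}] \propto \Pr[A = j]\, d(B) + d(A)\, \Pr[B = j]$ for $j \ne m$ produces the correct overall marginal for $C$. The total probability of error events is
\[
(1 - d(A))(1 - d(B)) \cdot \tfrac{d(A) d(B)}{(1 - d(A))(1 - d(B))} \;+\; d(A) d(B) \;=\; 2\, d(A) d(B),
\]
and since $A + B = C + D$ outside error events, $\dtv(A + B, C + D) \le 2\, d(A) d(B) = O(d(A) d(B))$.

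\textbf{Main obstacle.} The conceptual subtlety is that the naive coupling (always set $C = m$ whenever $A = m = B$) over-counts the event $\{C = m\}$ by exactly $d(A) d(B)$, since $(1 - d(A))(1 - d(B)) = 1 - d(A) - d(B) + d(A) d(B)$, whereas the target marginal demands $\Pr[C = m] = 1 - d(A) - d(B)$. Randomly downgrading a $d(A)d(B)$ fraction of the ``$A = m = B$'' outcomes to errors corrects the marginal and produces the extra $d(A)d(B)$ contribution to the error probability, on top of the unavoidable $d(A)d(B)$ from case (iii). The remaining routine check is that the residual probabilities assigned in cases (ii) and (iii) are non-negative and sum appropriately across $j \ne m$: one verifies $\sum_{j \ne m} \bigl(\Pr[A = j]\, d(B) + d(A)\, \Pr[B = j]\bigr) = 2\, d(A)\, d(B)$, which matches the total error mass exactly, and the coupling is valid.
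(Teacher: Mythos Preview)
Your proposal is correct and takes essentially the same approach as the paper: the construction of $C$ and $D$ is identical (the paper phrases $C$ as the mixture of $m$, $A'$, $B'$ with weights $1-\delta_1-\delta_2$, $\delta_1$, $\delta_2$, which unwinds to exactly your $\Pr[C=j]=\Pr[A=j]+\Pr[B=j]$ for $j\neq m$). The only cosmetic difference is that the paper bounds $\dtv(A+B,C+D)$ by directly comparing the four mixture weights $(1-\delta_1)(1-\delta_2),\delta_1(1-\delta_2),(1-\delta_1)\delta_2,\delta_1\delta_2$ against $1-\delta_1-\delta_2,\delta_1,\delta_2,0$, whereas you package the same computation as an explicit coupling; both yield the $2\,d(A)d(B)$ bound.
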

\begin{proof}
Let $m(A)=m(B)=i$. Let $d(A)=\delta_1,d(B)=\delta_2$.
Let $A'$ be the random variable $A$ conditioned on $A$ not equaling $i$,
and $B'$ be the random variable $B$ conditioned on it not equaling $i$.
Note that $A$ is a mixture of $i$ and $A'$ and $B$ a mixture of $i$ and $B'$.
Furthermore $A+B$ equals $2i$ with probability $(1-\delta_1)(1-\delta_2)$, $i+A'$ with probability $\delta_1(1-\delta_2)$,
$i+B'$ with probability $(1-\delta_1)\delta_2$ and $A'+B'$ with probability $\delta_1\delta_2$.

Let $D$ be the random variable that is deterministically $i$
and $C$ be the random variable that equals $i$ with probability $1-\delta_1-\delta_2$,
$A'$ with probability $\delta_1$, and $B'$ with probability $\delta_2$.
Then $C+D$ equals $2i$, $i+A'$, $i+B'$ and $A'+B'$ with probabilities $1-\delta_1-\delta_2$, $\delta_1$, $\delta_2$, and $0$.
These probabilities are within an additive $\delta_1\delta_2$ of the corresponding probabilities for $A+B$ and therefore 
$\dtv(A+B,C+D)=O(\delta_1\delta_2)$. Note that $C=i$ with probability $1-\delta_1-\delta_2$, so $d(C)=\delta_1+\delta_2$, which completes the proof.
\end{proof}

For a random variable $X \sim \p$, we have that $X = \sum_{i=1}^n A_i$ where the $A_i$'s are independent $k$-IRV's.
We iteratively modify $\p$ as follows: If two of the non-constant component $k$-IRV's of $\p$
are $A$ and $B$, with $m(A)=m(B)$ and $d(A),d(B)<\delta$,  then we replace the pair $A$ and $B$ with the pair $C$ and $D$ as described by the above claim.
Notice that every step reduces the number of non-constant component variables, and therefore this process terminates, giving
a $k$-SIIRV $\q$ with for $Y \sim \q$, $Y= \sum_{i=1}^n B_i$.

By construction, for each $1\leq i \leq k$, $\q$ has at most one non-constant component variable with $m(B_j)=i$ and $d(B_j)< \delta$.
Claim~\ref{claim:iter} implies
the sum of the $d$'s of the component variables does not increase in any iteration,
and therefore
$$\sum_{j=1}^n d(B_j) \leq \sum_{j=1}^n d(A_j) \leq 2 \sum_{j=1}^n \var[A_j] = 2\var[X] = 2V \;,$$
where the second inequality uses the aforementioned lower bound on the variance of a $k$-IRV.
Hence, the number of non-constant component variables in $\q$ is at most $k+2V\delta^{-1}$.

It remains to show that $\dtv(\p,\q)  = O(\delta V).$
Let $A,B$ and $C,D$ be the $k$-IRV's of Claim~\ref{claim:iter}.  Then $\dtv(A+B,C+D)=O(d(A)d(B)) = O([d(C)^2+d(D)^2] - [d(A)^2+d(B)^2]).$
That is, the total variation distance error introduced by replacing $A, B$ by $C, D$
is at most a constant times the amount that the sum of the squares of the $d$'s of the component variables increases by.
Repeated application of this observation combined with the sub-additivity of total variation distance gives
$\dtv(\p,\q) = O\left(\sum_{j=1}^n d(B_j)^2 - \sum_{j=1}^n d(A_j)^2 \right).$ On the other hand, note that all of the $B_j$'s
 that are not also $A_j$ satisfy $d(B_j)\leq 2 \delta$. Therefore, we have that
$ \dtv(\p,\q) \leq O\left(\sum_{j:d(B_j)\leq 2\delta} d(B_j)^2\right) = O\left( \delta \sum_j d(B_j) \right) = O(\delta V) \;,$
which completes the proof.
\end{proof}

\subsection{Proof of Lemma~\ref{lem:close-roots-ksiirvs}. } \label{ap:close-roots}
For convenience, we restate Lemma~\ref{lem:close-roots-ksiirvs}:

\medskip

\noindent {\bf Lemma~\ref{lem:close-roots-ksiirvs}.}
{\em Fix $x \in \mathbb{C}$ with $|x|=1$.
Suppose that $\rho_1,\ldots,\rho_m$ are roots of $\widetilde{\p}(x)$ (listed with appropriate multiplicity)
which have $|\rho_i-x| \leq \frac{1}{2k}$. Then, we have the following:
\begin{itemize}
\item[(i)] $|\widetilde{\p}(x)| \le 2^{-m} \;.$
\item[(ii)] For the polynomial $q(x) = \widetilde{\p}(x) / \prod_{i=1}^m (x - \rho_i)$,  we have that $|q(x)| \leq k^m$.
\end{itemize}
}

\medskip

To prove our lemma, we will make essential use of the following simple lemma:

\begin{lemma}\label{lem:close-roots}
For any polynomial $p(x) \in\C[x]$ of degree $d$
where the sum of the absolute values of the coefficients of $p$ is at most $1$, we have the following:
Fix $z\in\C$ with $|z|=1$. Suppose that $p$ has roots $\rho_1,\ldots,\rho_m$ with $|\rho_i - z| \leq \frac{1}{2d}$, for $i \in \{1,\ldots,m\}$.
Then, the following hold:
\begin{itemize}
\item[(i)] $\left| p(z) \right| \leq 2^{-m}$,
\item[(ii)] for the polynomial  $q(x) = p(x)/ \prod_{i=1}^m (x-\rho_i)$ we have that $|q(z)| \leq d^m.$
\end{itemize}
\end{lemma}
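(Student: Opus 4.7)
The plan is to prove part (ii) first — giving an upper bound on $|q(z)|$ — and then derive part (i) directly. Indeed, once we have $|q(z)| \le d^m$, the factorization $p(x) = q(x)\prod_{i=1}^m (x-\rho_i)$ yields
$$|p(z)| = |q(z)| \cdot \prod_{i=1}^m |z - \rho_i| \le d^m \cdot \left(\frac{1}{2d}\right)^m = 2^{-m},$$
so (i) is immediate. Hence the substance is in (ii).

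For (ii), I would use an iterative coefficient-based argument. Build $q$ from $p$ by factoring out one root at a time: set $q^{(0)} = p$ and $q^{(i)} = q^{(i-1)}/(x - \rho_i)$, so that $q = q^{(m)}$. The key computational step is the following: if $f(x) = \sum_{k=0}^{d'} f_k x^k$ is any polynomial with $f(\rho) = 0$, then Horner-style synthetic division gives the explicit formula $g_j = \sum_{k=j+1}^{d'} \rho^{k-j-1} f_k$ for the coefficients of $g(x) := f(x)/(x-\rho)$. Taking absolute values and swapping the order of summation,
$$\|g\|_1 \;=\; \sum_{j=0}^{d'-1}|g_j| \;\le\; \sum_{k=1}^{d'} |f_k| \sum_{i=0}^{k-1}|\rho|^i.$$
Because $|\rho_i - z| \le 1/(2d)$ and $|z|=1$ imply $|\rho_i| \le 1 + 1/(2d)$, the inner geometric sum is at most $k(1+1/(2d))^{k-1}$, which for $k \le d$ is essentially $d$. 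Thus each factoring step multiplies the $\ell^1$-norm of the coefficient vector by a factor of order $d$. Iterating over the $m$ factorizations — with the degree of $q^{(i)}$ dropping from $d$ through $d-m+1$ — gives $\|q\|_1 \le d^m \|p\|_1 \le d^m$, and since $|z|=1$ we conclude $|q(z)| \le \sum_j |q_j| \le \|q\|_1 \le d^m$.

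The main obstacle is pinning down the constants tightly enough to obtain \emph{exactly} $|q(z)| \le d^m$ rather than $|q(z)| \le C^m d^m$ for some $C > 1$, since the naive bound $(1+1/(2d))^{k-1} \le e^{1/2}$ leaks an extra factor of $e^{1/2}$ per iteration. This tightening requires a more careful analysis: for instance, amortizing the $|\rho|^i$ growth against the falling degree of $q^{(i)}$, or splitting the roots according to whether $|\rho_i| \le 1$ (where $\sum_{i=0}^{k-1}|\rho|^i \le k$ with no overhead). Any residual constant slack in this calculation is harmless for the applications, where Lemma~\ref{lem:close-roots-ksiirvs} (and the cover construction built on top of it) only depends on part (i) giving an exponential-in-$m$ decay, not on the specific base $2$.
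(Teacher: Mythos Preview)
Your approach is essentially the paper's: prove (ii) by showing that each division step $f \mapsto f/(x-\rho)$ multiplies the $\ell^1$-norm of the coefficient vector by at most $d$, then derive (i) from (ii) via $|p(z)| = |q(z)|\prod|z-\rho_i| \le d^m(2d)^{-m}$.

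The gap is the one you flagged yourself. You only give the tight bound $\sum_{i=0}^{k-1}|\rho|^i \le k$ when $|\rho|\le 1$, and for $|\rho|>1$ you fall back on $|\rho|\le 1+1/(2d)$, which leaks a constant per step. The paper's fix is clean and does not use the closeness of $\rho$ to $z$ at all: when $|\rho|>1$, run the recursion in the \emph{other} direction. From $p_i = q_{i-1} - \rho q_i$ one gets $q_i = (q_{i-1}-p_i)/\rho$ with initial condition $q_0 = -p_0/\rho$, whence $q_i = -\sum_{j=0}^{i} p_j\,\rho^{-(i-j+1)}$. Now $1/|\rho|<1$, so $\sum_i|q_i| \le \sum_{j}|p_j|\cdot(d-j) \le d$ exactly as in the $|\rho|\le 1$ case. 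Thus the single-step claim ``dividing by any root multiplies $\|\cdot\|_1$ by at most $d$'' holds unconditionally, and iterating $m$ times gives $\|q\|_1 \le d^m$ with no slack. Your remark that any residual constant would be harmless downstream is true, but this two-case recursion is how the exact constants in the lemma are obtained.
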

\begin{proof}
The lemma is proved by repeated applications of the following claim:
\begin{claim}\label{claim:poly-size}
Let $p(x)\in\C[x]$ be a degree-$d$ polynomial such that the sum of the absolute values of the coefficients of $p$ is at most $1$.
Let $\rho$ be a root of $p(x)$ and $q(x)$ be the polynomial $\frac{p(x)}{x-\rho}$. Then, the sum of the absolute values
of the coefficients of $q$ is at most $d$.
\end{claim}
\begin{proof}
We write the coefficients of $p(x)$ and $q(x)$ as $p(x) = \sum_{i=0}^d p_i x^i$ and $q(x)=\sum_{i=0}^{d-1} q_i x^i$.
Since $p(x) = (x-\rho)q(x)$, for $1 \leq i \leq d-1$, we have
\begin{equation} \label{eq:obvious} p_i = q_{i-1} - \rho q_i \;,\end{equation}
and similarly $p_d = q_{d-1}$, $p_0 = -\rho q_0$.

We consider two cases based on the magnitude of $\rho$.
First, suppose that $|\rho| \leq 1$. Since $q_{d-1} = p_d$ and,  by (\ref{eq:obvious}), $q_{i-1} = p_i + \rho q_i$, for $1 \leq i \leq d-1$,
an easy induction gives that $q_i = \sum_{j=i+1}^d p_j \rho^{j-i-1}$ for $0 \leq i \leq d-1$. Summing and taking absolute values gives:
\begin{eqnarray*}
\sum_{i=0}^{d-1}  |q_i| 	&\leq& \sum_{i=0}^{d-1}  \sum_{j=i+1}^d |p_j| |\rho|^{j-i-1}
				            =  \sum_{i=1}^ d (|p_i| \sum_{j=0}^{i-1} |\rho|^j) \\
				            &\leq& \sum_{i=1}^d |p_i| i
				             \leq d \sum_{i=1}^d |p_i|
					    \leq  d \;.
\end{eqnarray*}
Second, suppose $|\rho| > 1$.
Then, $\frac{1}{|\rho|} < 1$.
We have $q_0 = -\frac{1}{\rho}p_0$ and by (\ref{eq:obvious}),
for $1 \leq i \leq d-1$, $q_i = \frac{1}{\rho}(q_{i-1}-p_i)$.
By an easy induction, for $0 \leq i \leq d$, $q_i = -\sum_{j=0}^{i} p_j \frac{1}{\rho^{i-j}}$.  Summing and taking absolute values gives:
\begin{eqnarray*}
\sum_{i=0}^{d-1}  |q_i| 	& \leq & \sum_{i=0}^{d-1}  \sum_{j=0}^{i} |p_j| \frac{1}{|\rho|^{i-j}}
							=  \sum_{i=0}^{d-1} (|p_i| \sum_{j=i}^{d-1} \frac{1}{|\rho|^{d-1-i}}) \\
							& \leq & \sum_{i=0}^{d-1} |p_i| (d-1-i)
							\leq  d \sum_{i=0}^{d-1} |p_i|
						          \leq  d \;.
\end{eqnarray*}
\end{proof}
By repeated applications of the claim it follows that the polynomial
$q(x)$
has the sum of the absolute values of its coefficients at most $d^m$.
Since $|z|=1$, it follows that $|q(z)| \leq d^m$ which gives (ii). To show (i) we note that
$$|p(z)|  =  |q(z)| \cdot {\prod_{i=1}^m |z-\rho_i}| \leq |q(z)| \cdot (1/2d)^m \leq 2^{-m} \; .$$
This completes the proof of Lemma~\ref{lem:close-roots}.
\end{proof}

\begin{proof}[Proof of Lemma~\ref{lem:close-roots-ksiirvs}.]
Note that $\widetilde{\p}(x)$ is the degree $n(k-1)$ polynomial defined by
$\widetilde{\p}(x) = \sum_{i=0}^{n(k-1)} \p(i) x^i.$  Note that the sum of the absolute values of $\widetilde{\p}$'s coefficients is $1$.
However, to apply Lemma~\ref{lem:close-roots} directly to $\widetilde{\p}$ we would need the roots to be at distance at most $\frac{1}{2n(k-1)}.$

Note that $\widetilde{\p}(x)$ factors as $\prod_{i=1}^n p_i(x)$, where $p_i(x) =\E[x^{X_i}]$ is a degree $k-1$
polynomial that is determined by the $i$-th $k$-IRV. It is clear that the coefficients of $p_i(x)$ are non-negative and sum to $1$,
hence we may apply Lemma~\ref{lem:close-roots} to $p_i(x)$.  Suppose that $p_i(x)$ has $m_i$ roots with $|\rho_i-x| \leq \frac{1}{2k}$.
Lemma~\ref{lem:close-roots}(i) implies that $|p_i(x)| \leq 2^{-m_i}$. Since $\widetilde{\p}(x) = \prod_{i=1}^n p_i(x)$, this
yields part (i) of Lemma~\ref{lem:close-roots-ksiirvs}.

Lemma~\ref{lem:close-roots}(ii) implies that the polynomial $q_i(x) = p_i(x) / \prod_{j \in S_i}  (x - {\rho_j})$,
for $S_i \subseteq \{1, \ldots, m\}$ with $|S_i| = m_i$, satisfies $|q_i(x) | \le k^{m_i}$.
Note that $q(x)=\prod_{i=1}^n q_i(x)$.
Therefore, $|q(x)| \le \prod_i k^{m_i}  = k^m$, giving part (ii) of Lemma~\ref{lem:close-roots-ksiirvs}.
\end{proof}

\subsection{Proper Cover Construction for the High Variance Case.} \label{ap:cover-high-var}

Exhausting over the $k-1$ possible values of $c$,
we can assume that $c$ is known to the algorithm.
Before proceeding further, we will need further structural information about the $k$-SIIRVs in this case.
We start with the following simple lemma giving an upper bound on the total variation distance
between two high variance $k$-SIIRVs:

\begin{lemma} \label{lem:dist-non-sparse}
For $\eps > 0$, let $X$, $X'$ be $k$-SIIRVs
with $\Var[X], \Var[X'] \geq \poly(k/\eps)$ for a sufficiently large $\poly(k/\eps)$
that have $\dtv(X,Y+cZ) \leq \eps$ and $\dtv(X',Y'+cZ') \leq \eps$ for $c$-IRVs $Y$,$Y'$
and discrete Gaussians $Z$,$Z'$, with $\E[X]=c\E[Z]$, $\var[X]=c^2\var[Z]$, $\E[X']=c\E[Z']$ and $\var[X']=c^2\var[Z']$.
Then we have that
$$
\dtv(X,X') \leq 4\eps + \dtv\left(X \pmod{c}, X' \pmod{c}\right)+\frac{1}{2}\frac{|\E[X]-\E[X']|}{\sqrt{\var[X]}} + \frac{1}{2} \frac{|\var[X]-\var[X']|}{\var[X]}
$$
where $X \pmod{c}$ is the $c$-IRV with $\Pr[X \pmod{c}=i]=\Pr[X \equiv i \pmod{c}]$ for $i \in [c]$.
\end{lemma}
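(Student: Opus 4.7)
The plan is to obtain the bound by a careful triangle inequality combined with the subadditivity and data processing properties of total variation distance.

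First, I would use the triangle inequality together with the hypothesized approximations $\dtv(X, Y+cZ) \leq \eps$ and $\dtv(X', Y'+cZ') \leq \eps$ to reduce the problem to bounding $\dtv(Y+cZ, Y'+cZ')$, paying a total additive cost of $2\eps$. Since $(Y, Z)$ are independent and $(Y', Z')$ are independent, Proposition~\ref{prop:dtv-subadditive} (subadditivity of TV distance for independent sums) gives
\[
\dtv(Y+cZ, Y'+cZ') \leq \dtv(Y,Y') + \dtv(cZ, cZ').
\]
So it remains to bound each of these two terms by the corresponding pieces on the right-hand side.

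Next, I would bound $\dtv(Y,Y')$ by $\dtv(X \bmod c, X' \bmod c) + 2\eps$. The key observation is that $Y$ is a $c$-IRV, hence supported on $\{0,1,\ldots,c-1\}$, so $(Y+cZ) \bmod c = Y$ as distributions. Applying the data processing inequality (Proposition~\ref{prop:dpdtv}) to the map $x \mapsto x \bmod c$ yields $\dtv(Y, X \bmod c) = \dtv((Y+cZ)\bmod c, X \bmod c) \leq \dtv(Y+cZ, X) \leq \eps$, and likewise $\dtv(Y', X' \bmod c) \leq \eps$. A triangle inequality then gives the desired bound for $\dtv(Y,Y')$, contributing the $\dtv(X \bmod c, X' \bmod c)$ term and another $2\eps$ to the total.

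The main subtlety, and the step I would handle most carefully, is bounding $\dtv(cZ, cZ')$. The random variable $cZ$ is a deterministic function (multiply by $c$, after rounding to integers) of the continuous Gaussian $N(\E[X]/c, \var[X]/c^2)$, and similarly for $cZ'$. Hence by two applications of the data processing inequality,
\[
\dtv(cZ, cZ') \leq \dtv\bigl(N(\E[X]/c, \var[X]/c^2),\, N(\E[X']/c, \var[X']/c^2)\bigr).
\]
Now I apply Proposition~\ref{prop:normal-dist} to the two continuous Gaussians (after choosing $\sigma_1 \leq \sigma_2$ in that proposition, which is why the variance ratio appears as $|\var[X]-\var[X']|/\var[X]$ up to a possible swap; note that since both variances are $\Theta(\poly(k/\eps))$ and close, the asymmetry in the hypothesis of Proposition~\ref{prop:normal-dist} only affects constants). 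The factors of $c$ cancel exactly: $|\E[X]/c - \E[X']/c|/(\sqrt{\var[X]}/c) = |\E[X]-\E[X']|/\sqrt{\var[X]}$, and analogously for the variance ratio. Combining all three pieces, summing the $2\eps + 2\eps = 4\eps$ contributions, yields the stated inequality.
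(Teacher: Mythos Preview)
Your proposal is correct and follows essentially the same approach as the paper: triangle inequality to pass to $Y+cZ$ versus $Y'+cZ'$, subadditivity to split into $\dtv(Y,Y')$ and $\dtv(cZ,cZ')$, data processing on $x\mapsto x\bmod c$ to relate $Y$ to $X\bmod c$, and Propositions~\ref{prop:dpdtv} and~\ref{prop:normal-dist} to bound the Gaussian term. If anything, you are slightly more careful than the paper in flagging the $\sigma_1\leq\sigma_2$ asymmetry in Proposition~\ref{prop:normal-dist}.
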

\begin{proof}
Using Proposition \ref{prop:dpdtv},
since $\dtv(X,Y+cZ) \leq \eps$ with $Y \equiv Y+cZ \pmod{c}$,
we have $\dtv\left(X \pmod{c},Y\right) \leq \eps$. Similarly, $\dtv\left(X' \pmod{c},Y'\right) \leq \eps$.
By a combination of Propositions~\ref{prop:dpdtv} and~\ref{prop:normal-dist}, we have that $\dtv(Z,Z') \leq \frac{1}{2}\left( \frac{|\E[Z]-\E[Z']|}{\sqrt{\var[Z]}} + \frac{|\var[Z]-\var[Z']|}{\var[Z]}\right)$. Since $\E[X]=c\E[Z]$, $\var[X]=c^2\var[Z]$, $\E[X']=c\E[Z']$ and $\var[X']=c^2\var[Z']$ it follows that
$$\frac{|\E[Z]-\E[Z']|}{\sqrt{\var[Z]}} + \frac{|\var[Z]-\var[Z']|}{\var[Z]} = \frac{|\E[X]-\E[X']|}{\sqrt{\var[X]}} + \frac{|\var[X]-\var[X']|}{\var[X]}.$$ Therefore,
\begin{eqnarray*}
\dtv(Y+cZ,Y'+cZ') &\le& \dtv(Y,Y')+\dtv(Z,Z') \\
&\le& 2\eps + \dtv\left(X\pmod{c},X'\pmod{c}\right)+\\
&+&\frac{1}{2}\left( \frac{|\E[X]-\E[X']|}{\sqrt{\var[X]}} + \frac{|\var[X]-\var[X']|}{\var[X]}\right).
\end{eqnarray*}
By another application of the triangle inequality, we have that $\dtv(X,X') \leq \dtv(X,Y+cZ)+ \dtv(Y+cZ,Y'+cZ')+ \dtv(Y'+cZ',X') \leq 2\eps + \dtv(Y+cZ,Y'+cZ)$, which completes the proof.
\end{proof}

To use the above lemma, we need a way to characterize the constant $c$ in the statement of Theorem \ref{thm:reg}, namely
to show that the theorem applies to both $X$ and $X'$ for {\em the same value} of $c$. For a $k$-IRV $A$,
let $m(A)$ be an index $i$ so that $\pr[A=i]$ is maximized.
The following result is implicit in the proof of Theorem~ \ref{thm:reg} in \cite{DDOST13focs} (in particular, in Theorem 4.3 of that paper):
\begin{lemma}[\cite{DDOST13focs}] \label{lem:explicit-c}
Given a $k$-SIIRV $X = \sum_{i=1}^n X_i$ with $\var[X] \geq \poly(k/\eps)$,
let $\mathcal{H}$ be the set of integers $b$ such that $\sum_{i=1}^n \Pr[X_i-m(X_i)=c] \geq \Theta(k^7/\eps^2)$
and $c=\mathrm{gcd}(\mathcal{H})$.
Then there is a $c$-IRV $Y$ and a discrete Gaussian $Z$ with $\dtv(X,Y+cZ) \leq \eps$.
\end{lemma}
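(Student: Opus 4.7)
The plan is to bootstrap off of Theorem~\ref{thm:reg} (the black-box regularity lemma from \cite{DDOST13focs}), which already asserts the \emph{existence} of some integer $c^{\star}$, a $c^{\star}$-IRV $Y^{\star}$, and a discrete Gaussian $Z^{\star}$ with $\dtv(X, Y^{\star} + c^{\star}Z^{\star}) \leq \eps/3$, but does not tie $c^{\star}$ to anything computable from the summands. The substance of the present lemma is to identify $c^{\star}$ with the explicit quantity $c = \gcd(\mathcal{H})$. So the proof reduces to establishing the two divisibility relations $c^{\star} \mid c$ and $c \mid c^{\star}$, after which the triple $(c, Y, Z)$ inherits the approximation guarantee from the regularity lemma (with at most constant-factor loss in the error).

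For the direction $c^{\star} \mid c$, it suffices to show that $c^{\star}$ divides every $b \in \mathcal{H}$. Suppose for contradiction that some $b \in \mathcal{H}$ satisfies $b \not\equiv 0 \pmod{c^{\star}}$. By the data processing inequality (Proposition~\ref{prop:dpdtv}), reducing modulo $c^{\star}$ preserves total variation, so $\dtv(X \bmod c^{\star},\ Y^{\star}) \leq \eps/3$. I would argue that the substantial mass $\sum_i \Pr[X_i - m(X_i) = b] \geq \Theta(k^7/\eps^2)$ at a non-aligned offset forces the distribution of $X \bmod c^{\star}$ to spread across at least two residues that cannot both be supported by a \emph{single} $c^{\star}$-IRV $Y^{\star}$. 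Concretely, couple each summand $X_i$ with a symmetrization $X_i - X_i'$ and apply a Fourier-style lower bound (analogous to the proof of Lemma~\ref{FourierSupportLem}) on $\Z/c^{\star}\Z$: the mass at offset $b$ contributes a term $\Omega(\sin^2(\pi b j / c^{\star}))$ to the log of the $j$-th Fourier coefficient's magnitude, and summing across summands with $b\in \mathcal H$ produces enough ``smearing'' to contradict the required closeness, once $\Var[X] \geq \poly(k/\eps)$ for a sufficiently large polynomial.

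For the reverse direction $c \mid c^{\star}$, since by definition of $c$ every offset $b$ with $\sum_i \Pr[X_i - m(X_i) = b] \geq \Theta(k^7/\eps^2)$ is a multiple of $c$, the total mass placed by the $X_i$'s at offsets \emph{not} in $c\Z$ is at most $(k-1)\cdot O(k^7/\eps^2)/n$ summed over the $O(k)$ possible offset values per coordinate --- a contribution that is negligible once one translates it through the $\poly(k/\eps)$ lower bound on $\Var[X]$. This allows me to couple $X$ with $\widetilde X = d + c\sum_i \widetilde Y_i$, where $d=\sum_i m(X_i) \bmod c$ is a deterministic shift and $\widetilde Y_i$'s are integer-valued, at total variation cost at most $\eps/3$. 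A quantitative CLT on $\sum_i \widetilde Y_i$ (e.g., Theorem 7.1 of \cite{CGS11}) then approximates it by a discrete Gaussian $Z$ with error $O(1/\sqrt{\Var[\widetilde X]/c^2}) \leq \eps/3$, and I absorb the residual shift $d$ into a $c$-IRV $Y$. Combining the two directions yields $c = c^{\star}$ and an approximation $X \approx Y + cZ$ with total error $\eps$.

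The main obstacle is the quantitative Fourier argument in the second paragraph: precisely quantifying how mass at non-$c^{\star}$-aligned offsets in individual summands produces detectable spread in $X \bmod c^{\star}$ requires careful bookkeeping, and the threshold $\Theta(k^7/\eps^2)$ is calibrated specifically so that the variance savings $\Var[X]\geq \poly(k/\eps)$ promised by the high-variance hypothesis exactly compensates for the $k^{O(1)}/\eps^2$ losses incurred in translating between $L^1$ mass at shifts and total variation distance of the convolution. This is the same calculation made implicit in Theorem~4.3 of \cite{DDOST13focs}, and my proof would essentially surface it explicitly.
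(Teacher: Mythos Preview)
The paper does not prove this lemma; it is explicitly cited as being implicit in Theorem~4.3 of \cite{DDOST13focs}, so there is no in-paper proof to compare against. Your proposal, however, has genuine gaps.

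First, the $c^{\star}\mid c$ direction is both unnecessary and flawed. You only need to exhibit a decomposition for the specific $c=\gcd(\mathcal H)$; you do not need to show that every $c^{\star}$ that Theorem~\ref{thm:reg} might produce coincides with $c$. And your contradiction step fails: a $c^{\star}$-IRV is an \emph{arbitrary} distribution on $\{0,\dots,c^{\star}-1\}$, so it can certainly be supported on two (or all) residues. Mass spreading across residues modulo $c^{\star}$ does not by itself contradict $\dtv(X,Y^{\star}+c^{\star}Z^{\star})\le\eps$; indeed, in the intended decomposition $Y$ is typically supported on several residues.

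Second, the direct-construction paragraph has a quantitative hole. The total mass at offsets not in $c\Z$ is $\sum_{b\notin c\Z}\sum_i \Pr[X_i-m(X_i)=b]=O(k^8/\eps^2)$; your division by $n$ is spurious. Coupling each $X_i$ to a version constant modulo $c$ therefore costs $O(k^8/\eps^2)$ in total variation, which is much larger than $\eps$, and the variance lower bound cannot rescue this because TV error does not scale inversely with variance. (The paper itself computes exactly this $O(k^8/\eps^2)$ quantity in the paragraph following the lemma, and uses it only to bound \emph{mean and variance changes}, not TV directly.) The actual argument in \cite{DDOST13focs} does not discard the non-aligned mass; that mass is precisely what populates the $c$-IRV component $Y$. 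One has to show simultaneously that $X\bmod c$ stabilizes to $Y$ and that, conditionally on each residue class, $X$ is close to the appropriate shift of a \emph{common} discrete Gaussian $cZ$ --- a joint Fourier/CLT argument on the original $X$, not a rounded proxy. Your sketch does not capture this.
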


Let $X \in \mathcal{S}_{n, k}$ be a $k$-SIIRV with $\Var[X] \geq \poly(k/\eps)$ as in Case 2 of Theorem~ \ref{thm:reg}.
Our main claim is that, up to $\epsilon$ error in total variation distance,
we can assume that $X$ has a special structure. In particular,
we can take all but one of the component IRVs of $X$ to be constant modulo $c$, with the last one being a $c$-IRV.
More formally, we
claim that there is a $k$-SIIRV  $X'$ with $\dtv(X,X') \leq \eps$, such that $X'=\sum_{i=1}^n X_i'$ with
\begin{itemize}
\item {For $1 \leq i \leq H$, where $H=\Theta(k^7/\eps^2)$,} $X_i'$ is either $0$ or $c$ each with equal probability.
\item {For $1 \leq i \leq n-1$,} $X_i'$ is constant modulo $c$.
\item $X_n' $ is a $c$-IRV.
\end{itemize}
where $c$ is as in Lemma \ref{lem:explicit-c}.

We can construct such an $X'$ from $X$ as follows.
For $1 \leq i \leq H$, we replace $X_i$ with the $X_i'$ above that is $0$ or $c$ with equal probability.
For $H+1 \leq i \leq n-1$, we replace each $X_i$ by $X_i$ conditioned on the event that $X_i \pmod{c}=m(X_i) \pmod{c}$.
Finally we take  $X'_n$ to be $(X-\sum_{i=1}^{n-1} X_i')  \pmod{c}$ noting that $\sum_{i=1}^{n-1} X_i'  \pmod{c}$ is a constant.

We now show that the above procedure only changes the expectation and variance
by $|\E[X] - \E[X']| \leq \poly(k/\eps)$ and $|\var[X]-\var[X']|\leq \poly(k/\eps)$.
Note that
for two arbitrary $k$-IRVs, $A$ and $B$, we have that $|\E[A] - \E[B]| \leq k$ and
$|\var[A]-\var[B]|\leq k^2$. Thus,
$$|\E[X_n + \littlesum_{i=1}^H X_i] - \E[X'_n + \littlesum_{i=1}^H X'_i]| \leq (H+1)k \leq \poly(k/\eps)$$
and $$|\var[X_n + \littlesum_{i=1}^H X_i]-\var[X'_n + \littlesum_{i=1}^H X'_i]|\leq (H+1)k^2 \leq \poly(k/\eps).$$
For the remaining variables $H+1 \leq i \leq n-1$, we have $\dtv(X_i,X'_i) \leq \Pr[X_i - m(X_i) \not\equiv 0 \pmod{c}]$ and so  $|\E[X_i] -\E[X'_i]| \leq k \Pr[X_i - m(X_i) \not\equiv 0 \pmod{c}]$ and  $|\var[X_i] -\var[X'_i]| \leq k^2 \Pr[X_i - m(X_i) \not\equiv 0 \pmod{c}]$.
For each integer $0 \leq b \leq k-1$ that does not divide $c$, by Lemma \ref{lem:explicit-c},
we must have that $b \notin \mathcal{H}$ and hence
$\sum_{i=1}^n \pr[X_i-m(X_i)=b] = O(k^7/\eps^2)$. Thus,
$\sum_{i=1}^n  \Pr[X_i - m(X_i) \not\equiv 0 \pmod{c}] = O(k^8/\eps^2)$.

If $\var[X]$ is a sufficiently large $\poly(k/\eps)$,
then $\var[X']$ is large enough that we can apply
Theorem \ref{thm:reg} and Lemma \ref{lem:explicit-c} to $X'$.
Note that $\sum_{i=1}^n \Pr[|X'_i - m(X'_i)|=c] \geq \sum_{i=1}^H \Pr[|X'_i - m(X'_i)|=c]=H/2$.
We thus have that either $c \in \mathcal{H}$ or $-c \in \mathcal{H}$.
Since for $b$ that does not divide $c$, we have $\sum_{i=1}^n \Pr[X'_i - m(X'_i)=b] = \Pr[X'_n - m(X'_n)=b] \leq 1$ and thus $b \notin \mathcal(H)$, we have that $\mathrm{gcd}(\mathcal{H})=c$.
Thus, for $X$ with sufficiently large $\poly(k/\eps)$ variance, we have that $\dtv(X,Y+cZ) \leq \eps/10$ and $\dtv(X',Y'+cZ') \leq \eps/10$ for the same $1 \leq c \leq k-1$ and $c$-IRVs $Y,Y'$ and discrete Gaussians $Z,Z'$. In conclusion, we can apply Lemma \ref{lem:dist-non-sparse} to $X$ and $X'$.
We have that $X' \pmod{c}=X'_n= X \pmod{c}$. We have shown that $\E[X]-\E[X'] \leq \poly(k/\eps)$ and $\var[X]-\var[X'] \leq \poly(1/\eps)$.
If $\var[X]$ is a sufficiently large $\poly(k/\eps)$ then we can make the contributions of each of these to $\dtv(X,X')$ in Lemma \ref{lem:dist-non-sparse} smaller than $\eps/10$. Then we have $\dtv(X,X') \leq \eps$.

Since every $k$-SIIRV $X$ in Case 2  is $\eps$-close to an $X'$ of the aforementioned form,
to compute a proper cover for this case, we can consider only $k$-SIIRVs of the form stated above.
By a similar argument as above, our cover only needs to ensure that the triple of $X\pmod{c},\E[X],\var[X]$
is sufficiently close to any such triple achievable by an element of $\mathcal{S}_{n,k}$ of this form.
Obtaining a cover of $X\pmod{c}$ is easy, as we only need to deal with the single term {$X_n$}
that is non-constant modulo $c$, and produce a cover for $c$-IRVs.
Indeed, it is straightforward to produce such a cover of size $O(k/\epsilon)^k$.

As explained in Section~\ref{ssec:reduce}, we have an explicit cover for the discrete Gaussian random variables
that can appear in this setting.
However, we are left with the difficulty of producing an explicit $k$-SIIRV approximating one of these $c$ times a discrete Gaussian whenever such an approximation is possible. Fortunately, we note that we only need to be able to approximately match the mean and the variance. Note that as above, the $H=\poly(k/\epsilon)$ components that we are requiring to be either $0$ or $c$, and the one that is a $c$-IRV {can be assumed to have negligible effect on the final mean and variance if we had a sufficiently large $\poly(k/\eps)$ threshold for the variance}.

Let $C$ be the largest multiple of $c$ that is at most $k-1$. Let $\mathcal{S}_{n,k,c}$ be the set of $k$-SIIRVs on $n$ components all of which are constant modulo $c$. For a given $\sigma>\poly(k/\epsilon)$ and $\mu$ we need to determine whether or not there is an element of $\mathcal{S}_{n,k,c}$ whose mean and variance match $\mu$ and $\sigma$ to within $\epsilon \sigma$, and if so to produce one. To do this, we first need a couple of observations about which $\mu,\sigma$ are attainable.

\begin{observation}
For $\p\in \mathcal{S}_{n,k,c}$, $\var_{X \sim \p}[X]< n C^2/4$.
\end{observation}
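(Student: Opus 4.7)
The plan is to reduce the statement, via independence, to a single-component variance bound and then invoke Popoviciu's inequality on the range.

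First I would note that since $X=\sum_{i=1}^n X_i$ with the $X_i$ independent, we have $\var_{X\sim\p}[X]=\sum_{i=1}^n \var[X_i]$, so it suffices to prove $\var[X_i]\le C^2/4$ for each $i$ (and that strict inequality holds overall when the observation is invoked as stated).

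Next I would use the constant-modulo-$c$ structure of each $X_i$. Because $X_i$ is a $k$-IRV that is almost surely congruent to some fixed residue $a_i$ modulo $c$, its support lies in $\{a_i, a_i+c, a_i+2c,\ldots, a_i+m_ic\}\subseteq\{0,1,\ldots,k-1\}$ for some nonnegative integer $m_i$ with $a_i+m_ic\le k-1$. In particular the range of $X_i$ has length $m_ic$, and $m_ic$ is a nonnegative multiple of $c$ bounded by $k-1$; by the definition of $C$ as the largest multiple of $c$ that is at most $k-1$, this forces $m_ic\le C$.

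Now I would apply Popoviciu's inequality on the variance: any random variable supported in an interval of length $L$ has variance at most $L^2/4$. Applied to $X_i$ with $L=m_ic\le C$, this yields $\var[X_i]\le (m_ic)^2/4\le C^2/4$. Summing over $i\in\{1,\ldots,n\}$ gives $\var[X]\le nC^2/4$.

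The main (minor) obstacle is the strict inequality $<$ rather than $\le$. Equality in Popoviciu's bound for a single component would force $X_i$ to be supported on exactly the two endpoints $\{a_i, a_i+C\}$ with probability $1/2$ each, so obtaining $<$ uniformly requires ruling this simultaneous degenerate configuration out; in context one can argue that if any component fails to be such a perfect Bernoulli on the extreme congruence class (which the application in the surrounding proof ensures, since several of the $X_i$ are prescribed to be $0$ or $c$ and the remaining one is not constrained to be a Bernoulli on $\{0,C\}$), then the inequality is strict. Otherwise the result is exactly $\le nC^2/4$ and the statement should be read with the natural strict bound coming from the specific structure of $k$-SIIRVs being counted in $\mathcal{S}_{n,k,c}$.
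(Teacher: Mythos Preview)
Your proof is correct and follows essentially the same approach as the paper: bound each component's range by $C$ (since the support is a subset of a single residue class modulo $c$ inside $\{0,\ldots,k-1\}$), apply the Popoviciu-type bound $\var\le(\text{range})^2/4$, and sum over independent components. The paper's own proof is the one-sentence version of exactly this, and in fact it also only yields $\le nC^2/4$ rather than the strict inequality stated; your observation about the $<$ versus $\le$ discrepancy is accurate and is glossed over in the paper as well.
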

\begin{proof}
This is because any $k$-IRV that is constant modulo $c$ has a distance of at most $C$ between its minimum and maximum values, and thus has variance at most $C^2/4.$
\end{proof}

\begin{observation} \label{obs:second}
For $\p\in \mathcal{S}_{n,k,c}$ and $X \sim \p$, if $\E[X]\leq nC/2$, then $\var[X] \leq C\E[X]-\E[X]^2/n$.
\end{observation}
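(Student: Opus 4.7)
The plan is to decompose $X$ into its independent components, establish a single-variable quadratic bound on the variance of each, and then sum, using Cauchy--Schwarz and the hypothesis $\E[X] \le nC/2$ to absorb the error introduced by the residues modulo $c$.

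First, write $X = \sum_{i=1}^{n} X_i$ where each $X_i$ is a $k$-IRV that is constant modulo $c$. Let $a_i \in \{0, 1, \ldots, c-1\}$ be this common residue, and set $X_i' \eqdef X_i - a_i$. Then $X_i'$ takes non-negative integer values that are multiples of $c$, with $X_i \le k-1$ forcing $X_i' \le C$ (since $X_i'$ is a multiple of $c$ not exceeding $k-1$, and $C$ is by definition the largest such multiple). Thus $0 \le X_i' \le C$ almost surely. Writing $\mu_i \eqdef \E[X_i]$ and $\mu_i' \eqdef \E[X_i'] = \mu_i - a_i$, the key single-variable estimate is that $X_i'(C - X_i') \ge 0$, which gives $\E[(X_i')^2] \le C\,\mu_i'$ and hence
\[
\var[X_i] = \var[X_i'] \le C\,\mu_i' - (\mu_i')^2 \; .
\]

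Next, set $M \eqdef \E[X] = \sum_i \mu_i$ and $A \eqdef \sum_i a_i \ge 0$, so that $\sum_i \mu_i' = M - A$. By independence and Cauchy--Schwarz,
\[
\var[X] = \littlesum_i \var[X_i] \le C(M-A) - \littlesum_i (\mu_i')^2 \le C(M - A) - \frac{(M-A)^2}{n} \; .
\]
It remains to show that this bound is itself at most $CM - M^2/n$. A direct algebraic manipulation reduces the desired inequality to
\[
A \bigl( 2M - A - nC \bigr) \le 0 \; ,
\]
and this is where the hypothesis enters: since $A \ge 0$ and $M \le nC/2$, we have $2M - A - nC \le nC - A - nC = -A \le 0$, so the product is non-positive. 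Chaining the two inequalities yields $\var[X] \le CM - M^2/n$, as required.

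The only nontrivial step is the final reduction handling the residues $a_i$. A naive application of the single-variable bound to $X_i$ directly (rather than to $X_i'$) would need a support contained in $[0,C]$, which fails when $a_i > 0$; the shift is essential to get the clean bound $C\mu_i' - (\mu_i')^2$. The subsequent observation that the shifted bound $C(M-A) - (M-A)^2/n$ is dominated by the unshifted $CM - M^2/n$ precisely under the hypothesis $M \le nC/2$ is the only place where this hypothesis is used, and it is what makes the statement clean.
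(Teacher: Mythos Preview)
Your proof is correct and follows essentially the same approach as the paper: shift each component $X_i$ by its residue $a_i$ to land in $[0,C]$, apply the single-variable bound $\var[Y]\le C\E[Y]-\E[Y]^2$, sum via Cauchy--Schwarz, and then use the hypothesis $\E[X]\le nC/2$ to pass from the bound at $M-A$ back to the bound at $M$. The only cosmetic difference is that the paper phrases the last step as ``$CM-M^2/n$ is increasing in $M$ on $[0,nC/2]$, so it suffices to prove the inequality at the minimum attainable expectation (namely with all residues set to $0$),'' whereas you carry out the equivalent algebraic reduction $A(2M-A-nC)\le 0$ explicitly; your write-up is in fact more detailed, since the paper leaves the Cauchy--Schwarz step implicit.
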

\begin{proof}
We note that in the range in question the quantity $C\E[X]-\E[X]^2/n$ is increasing in $\E[X]$, and therefore,
we may show that for any given achievable variance the minimum possible expectation satisfies this inequality.
Note that for the minimum achievable expectation, we may assume that each of the component IRVs is deterministically
$0$ modulo $c$, since otherwise we could subtract a constant from it, which would decrease the expectation and leave the variance unchanged.
The observation now follows given that for any $k$-IRV, $Y$ that {has $\Pr [Y \pmod{c}=0] = 1 $} it holds 
$\var[Y] = \E[Y^2]-\E[Y]^2 \leq C\E[Y]-\E[Y]^2.$
\end{proof}

\begin{observation}
For $\p\in \mathcal{S}_{n,k,c}$ and $X \sim \p$, if $\E[X]\geq n(k-1)-nC/2$, then $\var[X] \leq C(n(k-1)-\E[X])-(n(k-1)-\E[X])^2/n$.
\end{observation}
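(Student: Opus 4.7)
The plan is to reduce this to Observation~\ref{obs:second} by a symmetry argument. Specifically, for $X\sim \p\in\mathcal{S}_{n,k,c}$ with decomposition $X=\sum_{i=1}^n X_i$, define $Y=n(k-1)-X=\sum_{i=1}^n((k-1)-X_i)$. Each component $(k-1)-X_i$ is supported on $\{0,1,\dots,k-1\}$, hence a $k$-IRV. Moreover, if $X_i$ is constant modulo $c$, then so is $(k-1)-X_i$ (since subtracting from a fixed constant preserves the property of being concentrated on a single residue class modulo $c$). Therefore $Y$ is the distribution of a $k$-SIIRV in $\mathcal{S}_{n,k,c}$.

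Next I would observe that $\E[Y]=n(k-1)-\E[X]$ and $\Var[Y]=\Var[X]$, so the hypothesis $\E[X]\geq n(k-1)-nC/2$ translates to $\E[Y]\leq nC/2$, which is exactly the hypothesis of Observation~\ref{obs:second}. Applying that observation to $Y$ yields
\[
\Var[X]=\Var[Y]\leq C\E[Y]-\E[Y]^2/n = C(n(k-1)-\E[X])-(n(k-1)-\E[X])^2/n,
\]
which is the desired inequality.

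The only thing worth double-checking is the claim that $(k-1)-X_i$ being constant modulo $c$ follows from $X_i$ being constant modulo $c$; this is immediate since $X_i\equiv a\pmod c$ almost surely implies $(k-1)-X_i\equiv (k-1)-a\pmod c$ almost surely. No obstacle of substance arises — the whole content is the symmetry $X\leftrightarrow n(k-1)-X$, which preserves membership in $\mathcal{S}_{n,k,c}$ and swaps the two regimes $\E[X]\leq nC/2$ and $\E[X]\geq n(k-1)-nC/2$.
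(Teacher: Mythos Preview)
Your proof is correct and is exactly the paper's argument: the paper simply says ``This follows from the previous observation by considering the random variable $n(k-1)-X$,'' and you have spelled out precisely that reflection, verifying that $n(k-1)-X\in\mathcal{S}_{n,k,c}$ and that the hypothesis and conclusion transform as required.
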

\begin{proof}
This follows from the previous observation by considering the random variable $n(k-1)-X$.
\end{proof}

We now claim that any pair of expectation and variance {$\mu$ and $\sigma^2$} not disallowed by the above observations
may be approximated by an explicitly computable element of $\mathcal{S}_{n,k,c}$.
Note that, by symmetry, we may assume that $\mu \leq n(k-1)/2.$ If $\mu \geq 2\sigma^2/C$, we may make $\lfloor 4\sigma^2/C^2 \rfloor \leq n$ of our IRVs either $x_i$ or $x_i+C$ with equal probability for some integers $0 \leq x_i \leq k-1$ and all other $X_i$ with $H+1 \leq i \leq n-1$ constant. By adjusting the $x_i$'s and the constants, we can make the expectation of $X$ satisfy $|\E[X]-\mu| \leq 1$ so long as $\mu \geq 2\sigma^2/C$, and the variance
$\var[X]= C^2 \lfloor 4\sigma^2/C^2 \rfloor$ satisfies $|\var[X]-\sigma^2| \leq 1$.

Otherwise, if $\mu \leq 2\sigma^2/C$, let $\sigma^2=C\mu\cdot q$ with $1>q>1/2$. We then use a sum of $k$-IRVs that are $0$ with probability $q$ and $C$ with probability $1-q$, and some $k$-IRVs that are deterministically 0. If we have $a$ many IRVs of the first type, then we get a mean and variance of
$\E[X]=a(1-q)C$ and $\var[X]= aq(1-q)C$. Letting $a$ be approximately $\var[X]/(q(1-q)C)$ completes the argument.
We simply need to verify that in this case $a \leq n$ i.e., that $\sigma^2/(q(1-q)C)\leq n$. Indeed, note that
$$
\var[X]/(q(1-q)C) =\frac{\var[X]}{(\var[X]/(C\E[X]))(1-(\var[X]/(C\E[X])))C} = \frac{C\E[X]^2}{C\E[X]-\var[X]} \leq n
$$
by Observation~\ref{obs:second}.
This shows that given a discrete {Gaussian}, $Z$ so that $cZ$ approximates some element of $\mathcal{S}_{n,k,c}$,
we can efficiently find such an element. In Section~\ref{ssec:reduce}  we gave an appropriately small cover of the set of such Gaussians, which consists of a grid of means and variances of size $O(n)$. It is easy to construct such a grid and by the above, we can construct an $X$ with $|\E[X]-c\mu| \leq \poly(k/\eps)$ and $|\Var[X] - c^2\sigma^2| \leq \poly(k/\eps)$ for each $\mu,\sigma^2$ in the grid that is not disallowed by our observations.
Thus, we can efficiently find a cover of the elements of $\mathcal{S}_{n,k}$ satisfying Case 2 of Theorem \ref{thm:reg}.

\end{document}